\def\confversion{0}
\def\ifconf{\ifnum\confversion=1}
\def\ifnotconf{\ifnum\confversion=0}
\def\showauthornotes{0}
\def\showkeys{0}
\def\showdraftbox{0}
\definecolor{darkred}{rgb}{0.5,0,0}
\definecolor{darkgreen}{rgb}{0,0.35,0}
\definecolor{darkblue}{rgb}{0,0,0.55}
\newcommand{\Authornote}[2]{{\sf\small\color{red}{[#1: #2]}}}
\newcommand{\Authorcomment}[2]{{\sf \small\color{gray}{[#1: #2]}}}
\newcommand{\Authorfnote}[2]{\footnote{\color{red}{#1: #2}}}
\newcommand{\Authornote}[2]{}
\newcommand{\Authorcomment}[2]{}
\newcommand{\Authorfnote}[2]{}
\newcommand{\draftbox}{\begin{center}
  \fbox{%
    \begin{minipage}{2in}%
      \begin{center}%
        \begin{Large}%
          \textsc{Working Draft}%
        \end{Large}\\
        Please do not distribute%
      \end{center}%
    \end{minipage}%
  }%
\end{center}
\vspace{0.2cm}}
\newcommand{\draftbox}{}
\newtheorem{theorem}{Theorem}[section]
\newtheorem{observation}[theorem]{Observation}
\newtheorem{definition}[theorem]{Definition}
\newtheorem{lemma}[theorem]{Lemma}
\newtheorem{remark}[theorem]{Remark}
\newtheorem{proposition}[theorem]{Proposition}
\newtheorem{corollary}[theorem]{Corollary}
\newtheorem{claim}[theorem]{Claim}
\newtheorem{fact}[theorem]{Fact}
\newtheorem{example}[theorem]{Example}
\newtheorem{algo}[theorem]{Algorithm}
\def\FullBox{\hbox{\vrule width 6pt height 6pt depth 0pt}}
\def\qed{\ifmmode\qquad\FullBox\else{\unskip\nobreak\hfil
\penalty50\hskip1em\null\nobreak\hfil\FullBox
\parfillskip=0pt\finalhyphendemerits=0\endgraf}\fi}
\def\qedsketch{\ifmmode\Box\else{\unskip\nobreak\hfil
\penalty50\hskip1em\null\nobreak\hfil$\Box$
\parfillskip=0pt\finalhyphendemerits=0\endgraf}\fi}
\def\to{\rightarrow}
\def\eps{\varepsilon}
\def\epsilon{\varepsilon}
\def\eps{\epsilon}
\def\phi{\varphi}
\def\cal{\mathcal}
\def\psdgeq{\succeq} 
\newcommand{\defeq}{:=}
\newcommand{\ie}{i.e.,\xspace}
\newcommand{\etal}{et al.\xspace}
\newcommand{\mper}{\,.}
\newcommand{\mcom}{\,,}
\newcommand{\R}{{\mathbb R}}
\newcommand{\E}{{\mathbb E}}
\newcommand{\N}{{\mathbb{N}}}
\newcommand{\pmone}{\{-1,1\}\xspace}
\newcommand{\abs}[1]{\ensuremath{\left\lvert #1 \right\rvert}}
\newcommand{\norm}[1]{\ensuremath{\left\lVert #1 \right\rVert}}
\newcommand{\Varsymb}{\mathrm{Var}}
\def\Var#1{%
    \ProbabilityRender{\Varsymb}{#1}%
}
\def\ProbabilityRender#1#2{
  \@ifnextchar\bgroup%
  {\renderwithdist{#1}{#2}}
   {\singlervrender{#1}{#2}}
}
\def\singlervrender#1#2{%
   \ensuremath{\mathchoice
       {{#1}\left[ #2 \right]}
       {{#1}[ #2 ]}
       {{#1}[ #2 ]}
       {{#1}[ #2 ]}
   }
}
\def\renderwithdist#1#2#3{%
   \@ifnextchar\bgroup
   {\superfancyrender{#1}{#2}{#3}}
   {\ensuremath{\mathchoice
      {\underset{#2}{#1}\left[ #3 \right]}
      {{#1}_{#2}[ #3 ]}
      {{#1}_{#2}[ #3 ]}
      {{#1}_{#2}[ #3 ]}
     }
   }
}
\def\superfancyrender#1#2#3#4#5{
   \ensuremath{\mathchoice
      {\underset{#1}{{#1}}\left#4 #3 \right#5}
      {{#1}_{#2}#4 #3 #5}
      {{#1}_{#2}#4 #3 #5}
      {{#1}_{#2}#4 #3 #5}
   }
}
\newfont{\inhead}{eufm10 scaled\magstep1}
\newcommand{\calD}{{\cal D}}
\newcommand{\calF}{{\cal F}}
\newcommand{\calH}{{\cal H}}
\newcommand{\calL}{{\cal L}}
\newcommand{\calM}{{\cal M}}
\newcommand{\calP}{{\cal P}}
\newcommand{\calS}{{\cal S}}
\newcommand{\polylog}{{\mathrm{polylog}}}
\DeclareMathOperator{\tr}{\operatorname {tr}}
\newcommand{\el}{\ensuremath{\ell} }
\renewcommand{\iff}{\ensuremath{\Leftrightarrow}}
\newcommand{\1}[1]{\mathds{1}\left[#1\right]}
\newcommand{\littleoh}{\operatorname{o}}
\newcommand{\res}{{\sf Res}}
\newcommand{\Erdos}{Erd\H{o}s\xspace}
\newcommand{\Renyi}{R\'enyi\xspace}
\newcommand{\Lovasz}{Lov\'asz\xspace}
\newtheorem*{claim*}{Claim}
\DeclareMathOperator{\pE}{\widetilde{\E}}
\newcommand{\T}{\intercal}
\DeclareMathOperator{\Bernoulli}{Bernoulli}
\DeclareMathOperator{\Aut}{Aut}
\DeclareMathOperator{\Int}{Int}
\DeclareMathOperator{\conn}{conn}
\DeclareMathOperator{\linspan}{span}
\newcommand{\indset}[1]{1_{\overline{#1}}}
\newcommand{\psdleq}{\preceq}
\newcommand{\symfourier}{\textsf{sym}}
\newcommand{\mul}{{\sf mul}}
\newcommand{\Lribbon}{\calL}
\newcommand{\Mribbon}{\calM}
\newcommand{\midint}{\text{Mid}}
\newcommand{\EE}{\mathbb{E}}
\newcommand{\sig}{\sigma}
\newcommand{\al}{\alpha}
\newcommand{\lda}{\lambda}
\newcommand{\lam}{\lambda}
\newcommand{\Lda}{\Lambda}
\newcommand{\Lam}{\Lambda}
\newcommand{\gam}{\gamma}
\newcommand{\dsos}{D_{\text{SoS}}}
\newcommand{\id}{\textup{Id}}
\newcommand{\Id}{\textup{Id}}
\newcommand{\truncparam}{C\dsos\log n}
\newcommand{\anote}[1]{{\sf\small\color{orange}{ [Aaron: #1] }}}
\newcommand{\cnote}[1]{{\sf\small\color{blue}{ [Chris: #1] }}}
\newcommand{\gnote}[1]{{\sf\small\color{red}{ [Goutham: #1] }}}
\newcommand{\jnote}[1]{{\sf\small\color{violet}{ [Jeff: #1] }}}
\newcommand{\mnote}[1]{{\sf\small\color{olive}{ [Madhur: #1] }}}
\newcommand{\anote}[1]{}
\newcommand{\cnote}[1]{}
\newcommand{\gnote}[1]{}
\newcommand{\jnote}[1]{}
\newcommand{\mnote}[1]{}
\begin{document}

\title{Sum-of-Squares Lower Bounds for Sparse Independent Set}
\author{Chris Jones, Aaron Potechin, Goutham Rajendran, Madhur Tulsiani, Jeff Xu}

\author{
  Chris Jones\thanks{{\tt University of Chicago}. {\tt csj@uchicago.edu}. Supported in part by NSF grant CCF-2008920.}
  \and
  Aaron Potechin\thanks{{\tt University of Chicago}. {\tt potechin@uchicago.edu}. Supported in part by NSF grant CCF-2008920.}
  \and  
  Goutham Rajendran\thanks{{\tt University of Chicago}. {\tt goutham@uchicago.edu}. Supported in part by NSF grant CCF-1816372.}
  \and
  Madhur Tulsiani\thanks{{\tt Toyota Technological Institute at Chicago}. {\tt madhurt@ttic.edu}. Supported by NSF grant CCF-1816372.}
  \and
  Jeff Xu\thanks{{\tt Carnegie Mellon University}. {\tt jeffxusichao@cmu.edu}. Supported by NSF CAREER Award \#2047933.}
}

\maketitle
 \thispagestyle{empty}
\draftbox

\thispagestyle{empty}

\begin{abstract}
The Sum-of-Squares (SoS) hierarchy of semidefinite programs is a powerful algorithmic paradigm which captures state-of-the-art algorithmic guarantees for a wide array of problems. In the average case setting, SoS lower bounds provide strong evidence of algorithmic hardness or information-computation gaps. Prior to this work, SoS lower bounds have been obtained for problems in the ``dense" input regime, where the input is a collection of independent Rademacher or Gaussian random variables, while the sparse regime has remained out of reach. We make the first progress in this direction by obtaining strong SoS lower bounds for the problem of Independent Set on sparse random graphs.
We prove that with high probability over an \Erdos-R{\'e}nyi random graph $G\sim G_{n,\frac{d}{n}}$ with average degree $d>\log^2 n$, degree-$\dsos$ SoS fails to refute the existence of an independent set of size 
$k = \Omega\left(\frac{n}{\sqrt{d}(\log n)(\dsos)^{c_0}} \right)$ 
in $G$ (where $c_0$ is an absolute constant), whereas the true size of the largest independent set in $G$ is $O\left(\frac{n\log d}{d}\right)$.

Our proof involves several significant extensions of the techniques used for proving SoS lower bounds in the dense setting. Previous lower bounds are based on the pseudo-calibration heuristic of Barak \etal [FOCS 2016] which produces a candidate SoS solution using a planted distribution indistinguishable from the input distribution via low-degree tests. In the sparse case the natural planted distribution \emph{does} admit low-degree distinguishers, and we show how to adapt the pseudo-calibration heuristic to overcome this. 

Another notorious technical challenge for the sparse regime is the quest for matrix norm bounds. 
In this paper, we obtain new norm bounds for \emph{graph matrices} in the sparse setting.
While in the dense setting the norms of graph matrices are characterized by the size of the minimum vertex separator of the corresponding graph, this turns not to be the case for sparse graph matrices. Another contribution of our work is developing a new combinatorial understanding of structures needed to understand the norms of sparse graph matrices.
\end{abstract}
\newpage
\ifnotconf
\pagenumbering{roman}
\setcounter{tocdepth}{2}
	\tableofcontents
\clearpage
\fi

\pagenumbering{arabic}
\setcounter{page}{1}

\section{Introduction}
The Sum-of-Squares (SoS) hierarchy is a powerful convex programming
technique that has led to successful approximation and recovery algorithms for various problems in the past decade.
SoS captures the best-known approximation algorithms for several classical combinatorial optimization problems.
Some of the additional successes of SoS also include Tensor PCA~\cite{HSS15, MSS16} and Constraint Satisfaction Problems with additional structure ~\cite{BRS11, GS11}. 
SoS is a family of convex relaxations parameterized by  \textit{degree}; by taking larger degree, one gets a better approximation to the true optimum at the  expense of a larger SDP instance. Thus we are interested in the tradeoff between degree and approximation quality. For an introduction to Sum-of-Squares algorithms, see~\cite{BS16:sos-course, FKP19:survey}.

The success of SoS on the upper bound side has also conferred on it an important role for the investigation of algorithmic hardness.
Lower bounds for the SoS hierarchy provide strong unconditional hardness results for several optimization problems and are of particular interest when NP-hardness results are unavailable.
An important such setting is the study of average case complexity of optimization problems, where relatively few techniques exist for establishing NP-hardness results~\cite{ABB19}.
In this setting, a study of the SoS hierarchy not only provides a powerful benchmark for average-case complexity, but also helps in understanding the structural properties of the problems: what makes them algorithmically challenging? 
Important examples of such results include an improved understanding of sufficient conditions for average-case hardness of CSPs \cite{KMOW17} and lower bounds for the planted clique problem \cite{BHKKMP16}.

An important aspect of previous lower bounds for the SoS hierarchy is that they apply for the so-called \emph{dense setting}, which corresponds to cases when the input distribution can be specified by a collection of independent Rademacher or Gaussian variables. 
In the case of planted clique, this corresponds to the case when the input is a random graph distributed according to $G_{n, \frac12}$ i.e. specified by a collection of $\binom{n}{2}$ independent Rademacher variables. In the case of CSPs, one fixes the structure of the lower bound instance and only considers an instance to be specified by the signs of the literals, which can again be taken as uniformly random $\pmone$ variables. Similarly, recent results by a subset of the authors \cite{PR20} for tensor PCA apply when the input tensor has independent Rademacher or Gaussian entries. 
The techniques used to establish these lower bounds have proved difficult to extend to the case when the input distribution naturally corresponds to a \emph{sparse graph} (or more generally, when it is specified by a collection of independent sub-gaussian variables, with Orlicz norm $\omega(1)$ instead of $O(1)$).

In this paper we are interested in extending lower bound technology for SoS to the \textit{sparse setting}, where the input is a graph with average degree $d \leq n/2$.
We use as a case study the fundamental combinatorial optimization problem of independent set. For the dense case $d=n/2$, finding an independent set is equivalent to finding a clique and the paper~\cite{BHKKMP16} shows an average-case lower bound
against the Sum-of-Squares algorithm. We extend the techniques introduced there, namely pseudocalibration,
graph matrices, and the approximate decomposition into positive semidefinite matrices, in order to show the first average-case lower bound for the sparse setting.
We hope that the techniques developed in this paper offer a gateway for the analysis of SoS on other sparse problems. \cref{sec:open-problems} lists several such problems that are likely to benefit from an improved understanding of the sparse setting.

Sample $G \sim G_{n,\frac{d}{n}}$ as an \Erdos-\Renyi random
graph\footnote{Unfortunately our techniques do not work for a random $d$-regular
graph. See the open problems~(\cref{sec:open-problems}).} with average degree $d$,
where we think of $d \ll n$.
Specializing to the problem of independent set, a maximum independent set in $G$ has size:
\begin{fact}[\cite{COE15, DM11, DSS16}]
    W.h.p. the max independent set in $G$ has size $(1+o_d(1)) \cdot \frac{2\ln d}{d} \cdot n$.
\end{fact}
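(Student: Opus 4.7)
The plan is to prove matching upper and lower bounds, both quantifying the independence number up to a $(1+o_d(1))$ factor around the predicted threshold $k^* \defeq \frac{2 n \ln d}{d}$.

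\textbf{Upper bound via first moment.} Let $X_k$ denote the number of independent sets of size $k$ in $G \sim G_{n,d/n}$. Then
\[
\Ex{X_k} \;=\; \binom{n}{k}\inparen{1 - \tfrac{d}{n}}^{\binom{k}{2}} \;\leq\; \inparen{\tfrac{en}{k}}^{k}\exp\inparen{-\tfrac{d}{n}\cdot \tfrac{k(k-1)}{2}}\mper
\]
For $k = (1+\varepsilon)\,k^*$, taking logarithms and using $d > \log^2 n$ (so $\ln d \to \infty$), the exponent per element is $1 + \ln(n/k) - \tfrac{d(k-1)}{2n} = (1 - \varepsilon)\ln d + O(\log\log d)$, which is strongly negative. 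Hence $\Ex{X_k} \to 0$, and Markov's inequality yields $\Pr[X_k \geq 1] \to 0$. Taking a union over the discrete values of $k$ gives the matching upper bound $\alpha(G) \leq (1+o_d(1))\,k^*$ w.h.p.

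\textbf{Lower bound via second moment.} Fix $k = (1-\varepsilon)\,k^*$ and aim to show $\Pr[X_k \geq 1] \to 1$. The standard Paley--Zygmund inequality gives
\[
\Pr[X_k > 0] \;\geq\; \frac{(\Ex{X_k})^2}{\Ex{X_k^2}}\mper
\]
Decomposing by intersection size,
\[
\Ex{X_k^2} \;=\; \sum_{j=0}^{k}\binom{n}{k}\binom{k}{j}\binom{n-k}{k-j}\inparen{1 - \tfrac{d}{n}}^{2\binom{k}{2} - \binom{j}{2}}\mper
\]
The ratio $\Ex{X_k^2}/(\Ex{X_k})^2$ is the sum of terms $r_j \defeq \binom{k}{j}\binom{n-k}{k-j}/\binom{n}{k} \cdot (1-d/n)^{-\binom{j}{2}}$. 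I would split into small $j$ and large $j$. For small $j$ (say $j = o(k)$), the factor $(1-d/n)^{-\binom{j}{2}}$ is essentially $1$ and the binomial ratio is bounded by the Poisson-like tail of intersection sizes of random $k$-subsets, summing to $1 + o(1)$. For intermediate $j$, one shows via direct estimation (in log-scale) that $r_j$ is exponentially small in $k$, uniformly in $j$, exactly because $k$ is bounded below the first-moment threshold by a factor $(1-\varepsilon)$. This is the classical argument (see Frieze for the fixed-$d$ case, extended by Coja-Oghlan--Efthymiou to the growing-$d$ regime relevant here), and the condition $d > \log^2 n$ gives ample slack.

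\textbf{Concentration.} Paley--Zygmund alone only gives a constant probability; to upgrade to w.h.p., one appeals to Talagrand's inequality (or a martingale argument via vertex exposure) to show that $\alpha(G)$ is concentrated on an interval of width $o(k^*)$, using that $\alpha$ is $1$-Lipschitz under vertex revelation. Combined with the constant-probability lower bound, this yields the matching w.h.p.\ lower bound $\alpha(G) \geq (1-o_d(1))\,k^*$.

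\textbf{Main obstacle.} The hard step is controlling the intermediate-intersection terms in the second moment: the exponential $(1-d/n)^{-\binom{j}{2}}$ competes against the binomial decay in $j$, and one must verify that their maximum is attained at the endpoints for all $d \in (\log^2 n, n/2)$. This is precisely where the cited works invest the most effort, and a clean way to organize the computation is to analyze the log-density of $r_j$ as a function of the normalized intersection fraction $j/k$, showing it is strictly concave with a negative maximum.
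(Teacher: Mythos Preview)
The paper does not prove this statement; it is stated as a known fact with citations to \cite{COE15, DM11, DSS16} and serves as background for the main results. Your sketch follows the standard route taken in those references (first moment for the upper bound, second moment plus a concentration inequality for the lower bound), so there is nothing to compare against within the paper itself.

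One small arithmetic slip: in your upper bound, with $k = (1+\varepsilon)k^*$ the per-vertex log term is
\[
1 + \ln\!\frac{n}{k} - \frac{d(k-1)}{2n} \;=\; -\varepsilon\,\ln d - \ln\ln d + O(1),
\]
not $(1-\varepsilon)\ln d + O(\log\log d)$ as you wrote; your stated expression is positive, though your conclusion that it is strongly negative is the correct one.
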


The value of the degree-2 SoS relaxation for independent set
equals the \Lovasz $\vartheta$ function, which is an upper bound on the independence number $\alpha(G)$, by virtue of being a relaxation.
For random graphs $G \sim G_{n, d/n}$ this value is larger by a factor of about $\sqrt{d}$ than the true value of $\alpha(G)$ with
high probability.

\begin{fact}[\cite{CO05}]
    W.h.p. $\vartheta(G) =\Theta(\frac{n}{\sqrt{d}} )$.
\end{fact}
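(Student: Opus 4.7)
The plan is to bound $\vartheta(G)$ from both sides by exhibiting explicit primal and dual SDP certificates, with both reducing to the single spectral input $\|A - \tfrac{d}{n}J\|_{\mathrm{op}} = O(\sqrt{d})$ (equivalently, $\|A|_{\mathbf{1}^\perp}\|_{\mathrm{op}} = O(\sqrt{d})$), which holds w.h.p.\ for $d > \log^2 n$ by Feige--Ofek--type concentration. Throughout, $J$ denotes the all-ones matrix and $\overline{A} = J - I - A$ is the adjacency matrix of the complement graph.

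For the upper bound $\vartheta(G) = O(n/\sqrt{d})$, I would use the dual formulation
\[\vartheta(G) = \min\bigl\{\lambda_{\max}(M) : M_{ii}=1,\; M_{ij}=1 \text{ for all } ij\notin E(G)\bigr\}\]
and plug in the candidate $M = J - \beta A$, which is admissible since the constraints only restrict diagonal and non-edge entries. Writing $A = \tfrac{d}{n}J + E$ with $\|E\|_{\mathrm{op}} = O(\sqrt{d})$ gives $M = (1 - \tfrac{\beta d}{n})J - \beta E$, so $\lambda_{\max}(M) \le (n - \beta d)_+ + \beta \cdot O(\sqrt{d})$. Choosing $\beta \approx n/d$ makes the first term vanish and gives $\vartheta(G) \le O(n/\sqrt{d})$.

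For the matching lower bound, I would use the primal formulation
\[\vartheta(G) = \max\bigl\{\langle J, X\rangle : X \succeq 0,\; \mathrm{tr}(X)=1,\; X_{ij}=0 \text{ for all } ij\in E(G)\bigr\}\]
and take $X = \tfrac{1}{n}(I + \alpha \overline{A})$. By construction $\mathrm{tr}(X) = 1$ and, since $\overline{A}_{ij} = 0 = I_{ij}$ on edges, the edge-vanishing constraint holds automatically. For $X \succeq 0$ we need $\alpha |\lambda_{\min}(\overline{A})| \le 1$. The matrix $\overline{A}$ has a single large positive eigenvalue near $n$ along $\mathbf{1}$ and, by the spectral input, its spectrum on $\mathbf{1}^\perp$ lies in $[-1 - O(\sqrt{d}),\, -1 + O(\sqrt{d})]$, so $\alpha = \Theta(1/\sqrt{d})$ is admissible. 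The objective evaluates to $\tfrac{1}{n}\bigl(n + 2\alpha|\overline{E}(G)|\bigr) = \Theta(\alpha n) = \Theta(n/\sqrt{d})$, since $|\overline{E}(G)| = \Theta(n^2)$ for $d \ll n$.

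The only real technical ingredient---and therefore the main obstacle---is the spectral concentration $\|A|_{\mathbf{1}^\perp}\|_{\mathrm{op}} = O(\sqrt{d})$. In the paper's regime $d > \log^2 n$ this is immediate from Feige--Ofek; to push to even sparser $d$ one would have to regularize the few atypically high-degree vertices (as in Le--Levina--Vershynin), but no such surgery is needed here. Once the spectral estimate is in hand, both directions collapse to the short linear-algebra computations above, and $\vartheta(G) = \Theta(n/\sqrt{d})$ w.h.p.\ follows.
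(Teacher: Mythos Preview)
The paper does not prove this statement; it is stated as a cited fact attributed to Coja-Oghlan~\cite{CO05}, with no argument given in the text. Your sketch is a correct outline of the standard spectral proof: both directions reduce cleanly to the single input $\|A - p(J-I)\|_{\mathrm{op}} = O(\sqrt{d})$, which holds w.h.p.\ for $d \gtrsim \log n$ (and certainly for $d > \log^2 n$) by Feige--Ofek. The upper bound via $M = J - \beta A$ with $\beta = n/d$ and the lower bound via $X = \tfrac{1}{n}(I + \alpha \bar A)$ with $\alpha = \Theta(1/\sqrt{d})$ are exactly the certificates used in the literature. One small imprecision: in the lower-bound step $\mathbf{1}$ is not literally an eigenvector of $\bar A$, so the phrase ``spectrum on $\mathbf{1}^\perp$'' is informal; but writing $\bar A = (1-p)(J-I) - (A - p(J-I))$ and applying Weyl's inequality gives $\lambda_{\min}(\bar A) \ge -(1-p) - O(\sqrt{d}) = -O(\sqrt{d})$ directly, which is all you need.
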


We will prove that the value of higher-degree SoS is also on the order
of $n/\sqrt{d}$, rather than $n/d$, and thereby demonstrate that the information-computation gap against basic SDP/spectral algorithms persists against higher-degree SoS.

\subsection{Our main results}
\label{sec:formal-results}
The solution to the convex relaxations obtained via the SoS hierarchy can be specified by the so-called ``pseudoexpectation operator". 
\begin{definition}[Pseudoexpectation]\label{def:pseudoexpectation}
A degree-$D$ pseudoexpectation operator $\pE$ is a linear functional on polynomials of degree at most $D$ (in $n$ variables) such that $\pE[1] = 1$ and $\pE[f^2] \geq 0$ for every polynomial $f$ with degree at most $D/2$.
A pseudoexpectation is said to satisfy a polynomial constraint $g = 0$ if $\pE[f \cdot g] = 0$ for all polynomials $f$ when $\deg(f \cdot g) \leq D$.
\end{definition}
In considering relaxations for independent set of a graph $G = (V,E)$, with variables $x_v$ being the $0/1$ indicators of the independent set, the SoS relaxation searches for pseudoexpectation operators satisfying the polynomial
constraints
\[
\forall v \in V. \quad x_v^2 = x_v 
\qquad
\quad \text{and} \quad 
\qquad
\forall (u,v) \in E. \quad x_u x_v = 0 \mper
\]
The objective value of the convex relaxation is given by the quantity $\pE[\sum_{v \in V} x_v] = \sum_{v \in V} \pE[x_v]$.
For the results below, we say that an event occurs with high probability (w.h.p.) when it occurs with probability at least $1-O(1/n^c)$ for some $c>0$.
The following theorem states our main result.
%
%
%
%
%
%
\begin{theorem}
    \label{thm:main}
There is an absolute constant $c_0 \in \N$ such that for sufficiently large $n \in \N$ and ${d \in [(\log n)^2, n^{0.5}]}$, and parameters $k, \dsos$ satisfying
\[
k ~\leq~ \frac{n}{\dsos^{c_0}\cdot \log n \cdot d^{1/2}} \mcom
\]
it holds w.h.p. for $G = (V,E) ~\sim~ G_{n,~d/n}$ that there exists a degree-$\dsos$ pseudoexpectation satisfying
\[
\forall v \in V. \quad x_v^2 = x_v 
\qquad
\quad \text{and} \quad 
\qquad
\forall (u,v) \in E. \quad x_u x_v = 0 \mcom
\]
and objective value $ \pE[\sum_{v \in V} x_v] ~\geq~ (1-o(1)) k$.    
\end{theorem}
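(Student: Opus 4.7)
The plan is to explicitly construct a candidate degree-$\dsos$ pseudoexpectation operator $\pE$ via pseudocalibration with respect to a planted distribution, and then verify the required constraints and PSD-ness conditions. I would take the planted distribution in which a set $S$ of size $k$ is chosen uniformly from the vertices, no edges appear inside $S$, and every edge outside $S$ is present independently with probability $d/n$; the pseudoexpectation of a multilinear monomial $x_T$ is then defined by matching a truncated collection of low-degree Fourier coefficients of the planted indicator $\one[T \subseteq S]$ with respect to an appropriate basis for the $G_{n,d/n}$ measure on $\{0,1\}^{\binom{n}{2}}$. The polynomial constraints $x_v^2 = x_v$ and $x_u x_v = 0$ for $(u,v)\in E$ would hold essentially by construction because the planted $S$ is $\{0,1\}$-valued and contains no edges.

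The first nontrivial obstacle is that, unlike in the dense case, the natural planted distribution \emph{is} distinguishable from $G_{n,d/n}$ by low-degree tests (the edge count alone decreases by roughly $\binom{k}{2} d/n$ in expectation). To salvage pseudocalibration, I would shift to an edge-density-matched planted distribution, reweighting or conditioning so that short-subgraph counts agree with the null distribution, and then perform Fourier truncation not only by the number of vertices in a shape (as in the dense case) but also by a finer structural complexity measure tailored to the sparse Fourier basis, whose characters have Orlicz norm $\omega(1)$. The resulting $\pE$ would admit an expansion of the form $\pE[x_I x_J] = \sum_\alpha \lambda_\alpha (M_\alpha)_{I,J}$ in sparse graph matrices, with coefficients $\lambda_\alpha$ decaying in the complexity of $\alpha$.

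To certify that $\pE$ is a valid pseudoexpectation I need to show the moment matrix $M = (\pE[x_I x_J])_{I,J}$ is PSD, which I would do via the approximate factorization framework from \cite{BHKKMP16}: write $M \approx L Q L^\T$, where $L$ is built from left-shapes, $Q$ is block-diagonal over middle-shapes $\tau$, and each block is expressed as $Q_\tau = c_\tau(\id + \Delta_\tau)$ with $\|\Delta_\tau\| < 1$. The dominant mass is captured by $L L^\T$, which is manifestly PSD, and PSDness of $M$ reduces to showing that both each per-block error $\Delta_\tau$ and the global residual $E = M - L Q L^\T$ are spectrally small. Both reductions in turn reduce to norm bounds on sparse graph matrices.

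The hardest step, and the main technical contribution, is establishing these sparse graph matrix norm bounds. In the dense setting $\|M_\alpha\|$ is controlled by $n^{(|V(\alpha)|-|S_{\min}(\alpha)|)/2}$ for $S_{\min}(\alpha)$ a minimum vertex separator of the shape, but this characterization fails when the Fourier characters are sparse Bernoulli rather than bounded: low-degree vertices and \emph{thin} substructures of $\alpha$ can inflate the norm well beyond the vertex-separator bound, because higher moments of sparse characters are much larger than their variance. I would attack this via a trace-moment expansion of $\E \tr (M_\alpha M_\alpha^\T)^q$, isolating the leading contributions using a refined combinatorial invariant (a \emph{sparse separator} that charges low-degree legs and excess edges against the density $d$ and the shape complexity), and then feed the resulting bounds into the charging argument of the decomposition to choose the absolute constant $c_0$ so that, at $k = n/(\dsos^{c_0} \log n \sqrt{d})$, all $\|\Delta_\tau\|$ and $\|E\|$ are $o(1)$ compared to the smallest eigenvalue of $L Q L^\T$, which yields the pseudoexpectation and objective value claimed in the theorem.
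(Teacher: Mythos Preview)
Your overall architecture (pseudocalibration, expansion in graph matrices, approximate PSD factorization into left/middle/right shapes, and new trace-method norm bounds for the sparse setting) matches the paper. However, the step where you propose to ``shift to an edge-density-matched planted distribution, reweighting or conditioning so that short-subgraph counts agree with the null distribution'' is a genuine gap: no such quiet planted distribution is known for sparse independent set, and the paper explicitly flags finding one as an open question. The paper's actual fix is different and is one of its main conceptual contributions: it keeps the na\"ive planted distribution (which \emph{is} low-degree distinguishable) and instead applies a \emph{connected truncation}, retaining in $\pE[x_T]$ only those Fourier coefficients $\chi_\alpha$ for which every vertex of $\alpha$ is connected to $T$. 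This kills the subgraph-counting distinguishers (which require shapes with components disconnected from $U_\alpha\cup V_\alpha$) without requiring any modification of the planted law, and it is what makes $\pE[1]=1$ exactly and the objective value concentrate.

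A second missing ingredient is the \emph{conditioning} step. The sparse norm bound you allude to, which the paper makes precise as $\norm{M_\alpha}\lesssim\max_S\sqrt{n}^{|V(\alpha)|-|S|}\sqrt{(1-p)/p}^{\,|E(S)|}$ over vertex separators $S$, is by itself only strong enough to recover an $O(\log d)$-degree lower bound (the paper's Theorem~1.5). To push the SoS degree up to $n^{\Omega(1)}$, the paper additionally conditions on the high-probability event that $G$ contains no small dense subgraphs, and uses this (via an identity rewriting $\chi_E$ as a signed sum over proper subsets when $E\not\subseteq E(G)$, together with a Frobenius-norm argument) to replace dense shapes by sparse ones. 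This conditioning is what controls the combinatorial blowup in the number of shapes and is essential for the parameter regime in Theorem~1.1; without it, your charging argument would not close for $\dsos$ beyond $O(\log d)$.
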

\begin{remark}
This is a non-trivial lower bound whenever $\dsos ~\leq~ \left(\frac{d^{1/2}}{\log n}\right)^{1/c_0}$.
\end{remark}
\begin{remark}
It suffices to set $c_0 = 20$ for our current proof.
We did not optimize the tradeoff in $\dsos$ with $k$, but we did optimize
the log factor (with the hope of eventually removing it).
\end{remark}

\begin{remark}
Using the same technique, we can prove an $n^{\Omega(\eps)}$ SoS-degree lower bound for all $d \in [\sqrt{n}, n^{1-\eps}]$.
\end{remark}

For $n^\eps \leq d \leq n^{0.5}$, the theorem gives a polynomial $n^\delta$ SoS-degree lower bound.
For smaller $d$, the bound is still strong against low-degree SoS, but
it becomes trivial as $\dsos$ approaches $(d^{1/2}/\log n)^{1/c_0}$ or $d$ approaches $(\log n)^2$ since $k$ matches the
size of the maximum independent set in $G$, hence there is an actual
distribution over independent sets of this size (the expectation operator for which is trivially is also a pseudoexpectation operator).

The above bound says nothing about the ``almost dense'' regime $d \in [n^{1-\eps}, n/2]$. To handle this regime,
we observe that our techniques, along with the ideas from the $\Omega(\log n)$-degree SoS bound from \cite{BHKKMP16} for the dense case,
prove a lower bound for any degree $d \geq n^\eps$.
\begin{theorem}
    \label{thm:informal-logd}
    For any $\eps_1, \eps_2 >0$ there is $\delta > 0$, such that for $d \in [n^{\eps_1}, n/2]$ and $k \leq \frac{n}{d^{1/2+\eps_2}}$, it holds w.h.p. for $G = (V,E) ~\sim~ G_{n,~d/n}$ that there exists a degree-$(\delta \log d)$ pseudoexpectation satisfying
\[
\forall v \in V. \quad x_v^2 = x_v 
\qquad
\quad \text{and} \quad 
\qquad
\forall (u,v) \in E. \quad x_u x_v = 0 \mcom
\]
and objective value $\pE[\sum_{v \in V}  x_v] ~\geq~ (1-o(1))k$.    
\end{theorem}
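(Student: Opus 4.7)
The approach is to combine the sparse-regime machinery developed in the proof of Theorem~\ref{thm:main} with the proof template of the planted clique SoS lower bound of~\cite{BHKKMP16}, which already handles the dense endpoint $d = \Theta(n)$ at logarithmic degree. Morally, Theorem~\ref{thm:informal-logd} interpolates between these two results: for $d$ polynomially close to $n$, the combinatorial enumeration of shapes in the graph matrix expansion contributes a factor roughly $n^{O(D)} = d^{O(D)}$ for shapes of size $D$, which must be dominated by the slack factor $d^{\epsilon_2 \cdot D}$ arising from the bound $k \leq n / d^{1/2 + \epsilon_2}$; balancing these forces the degree $D = \Theta(\log d)$ with a constant $\delta$ depending on $\epsilon_1$ and $\epsilon_2$.

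First, I would build the candidate pseudoexpectation via the modified pseudocalibration heuristic of this paper, taking the planted distribution to be $G_{n,d/n}$ conditioned on containing a uniformly random independent set of size $k$. As emphasized in the introduction, the natural planted distribution admits low-degree distinguishers from the null when $d = o(n)$, so the Fourier coefficients of $\pE$ must be truncated exactly as in the proof of Theorem~\ref{thm:main}. By construction $\pE$ has objective value $(1-o(1)) k$, and satisfies $\pE[f \cdot (x_v^2 - x_v)] = 0$ and $\pE[f \cdot x_u x_v] = 0$ for $(u,v) \in E$; the only non-trivial thing to verify is positive semidefiniteness of the degree-$D$ moment matrix $M$.

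Next, I would expand $M$ as a linear combination of sparse graph matrices $M_\alpha$ indexed by shapes $\alpha$ of size at most $D = \delta \log d$, and invoke the new sparse graph matrix norm bounds developed in this paper. Following the template of~\cite{BHKKMP16}, I would write $M = M_{\mathrm{main}} + M_{\mathrm{err}}$, where $M_{\mathrm{main}}$ is assembled from the identity shape and its trivial relatives and is manifestly bounded below in spectrum, while $M_{\mathrm{err}}$ is a sum over non-trivial shapes. An approximate PSD decomposition then absorbs $M_{\mathrm{err}}$ into $M_{\mathrm{main}}$ provided the shape-by-shape norms, multiplied by the combinatorial counting factors, sum to less than the spectral floor of $M_{\mathrm{main}}$.

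The main obstacle is the absorption step in the regime $d \to n$. The sparse graph matrix norm bounds must be sharp enough that every shape of size $s$ contributes at most $d^{O(s)}$, rather than the looser $n^{O(s)}$ one would get from ignoring sparsity, and the shape enumeration must also be kept to $d^{O(s)}$. Since the~\cite{BHKKMP16} analysis already achieves this balance at $d = n/2$ using the minimum-vertex-separator characterization of dense graph matrix norms, the remaining work is to verify that the sparse norm bounds, which no longer reduce cleanly to the minimum-vertex-separator, still give $d^{O(s)}$ per shape throughout $d \in [n^{\epsilon_1}, n/2]$. Carefully re-tracing the new combinatorial structures identified in the paper for the sparse case, and confirming that the $d^{\epsilon_2 \cdot s}$ slack beats the combined per-shape and enumeration factors at $s = \Theta(\log d)$, is the technically delicate part.
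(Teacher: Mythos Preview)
Your proposal is correct in outline and takes essentially the same high-level approach as the paper: use the connected-truncation pseudocalibration, expand the moment matrix in graph matrices, invoke the sparse norm bounds, and run the~\cite{BHKKMP16} PSD decomposition template. The paper's argument (Section~5.2.3) does exactly this.

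However, you miss the paper's key shortcut, which makes the ``technically delicate part'' you flag essentially disappear. You propose to directly verify that the sparse norm bounds combined with the coefficients give $d^{O(s)}$ per shape, re-tracing the new separator combinatorics in the sparse setting. The paper instead observes a clean \emph{dense proxy reduction}: if one drops the third (edge-charging) term in the bound for $\lambda_\tau \norm{M_\tau}$ from Proposition~\ref{prop:informal-middle-shapes} and keeps only the first two, the resulting expression at $k = n/d^{1/2+\eps_2}$ is
\[
\widetilde{O}\left(\left(\frac{1}{d^{\eps_2}}\right)^{|V(\tau)| - |S_\tau|} \left(\frac{1}{d^{1/2+\eps_2}}\right)^{|S_\tau| - \frac{|U_\tau|+|V_\tau|}{2}}\right),
\]
where $S_\tau$ is the \emph{dense} MVS. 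This is literally the quantity $\lambda_\tau \norm{M_\tau}$ one would compute for planted clique on $G_{d,1/2}$ with clique size $d^{1/2-\eps_2}$. The same identity holds for the intersection terms $\lambda_{\gamma\circ\tau\circ\gamma'^\T}\norm{M_{\tau_P}}$ via the intersection tradeoff lemma. Hence~\cite{BHKKMP16} applies \emph{as a black box} up to SoS degree $\Omega(\log d)$, with no need to re-verify the shape-by-shape accounting or to engage with the sparse separator combinatorics at all. (The $\log$ factors match since $d \geq n^{\eps_1}$ implies $\log n = O(\log d)$.)

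Your direct-verification route would presumably also succeed, but it is more laborious and obscures why the answer is exactly $\Theta(\log d)$ rather than some other function interpolating between the regimes.
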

In particular, these theorems rule out polynomial-time certification (i.e. constant
degree SoS) for any $d \geq \polylog(n)$.



\subsection{Our approach}
Proving lower bounds for the case of sparse graphs requires extending the previous techniques for SoS lower bounds in multiple ways. The work closest to ours is the planted clique lower bound of \cite{BHKKMP16}.
The idea there is to view a random graph $G \sim G_{n, ~1/2}$ as a random input in $\pmone^{\binom{n}{2}}$, and develop a canonical method called ``pseudocalibration" for obtaining the pseudoexpectation $\pE$ as a function of $G$. The pseudocalibration method takes the low-degree Fourier coefficients of $\mu$ based on a \emph{different} distribution on inputs $G$ (with large planted cliques), and takes higher degree coefficients to be zero. This is based on the heuristic that distribution $G_{n, ~1/2}$ and the planted distribution are indistinguishable by low-degree tests. The pseudoexpectation obtained via this heuristic is then proved to be PSD (\ie to satisfy $\pE[f^2] \geq 0$) by carefully decomposing its representation as a (moment) matrix $\Lambda$. One then needs to estimate the norms of various terms in this decomposition, known as ``graph matrices", which are random matrices with entries as low-degree polynomials (in $\pmone^{\binom{n}{2}}$), and carefully group terms together to form PSD matrices.

Each of the above components require a significant generalization in the sparse case. To begin with, there is no good planted distribution to work with, as the natural planted distribution (with a large planted independent set) \emph{is distinguishable from $G_{n, ~d/n}$ via low-degree tests}! While we still use the natural planted distribution to compute \emph{some} pseudocalibrated Fourier coefficients, we also truncate (set to zero) several low-degree Fourier coefficients, in addition to the high-degree coefficients as in \cite{BHKKMP16}. In particular, when the Fourier coefficients correspond to subgraphs where certain vertex sets are disconnected (viewed as subsets of $\binom{n}{2}$), we set them to zero.
This is perhaps the most conceptually interesting part of the proof, and we hope
that the same ``connected truncation'' will be useful for other integrality gap constructions.

The technical machinery for understanding norm bounds, and obtaining PSD terms, also requires a significant update in the sparse case.
Previously, norm bounds for graph matrices were understood in terms of minimum vertex separators for the corresponding graphs, and arguments for obtaining PSD terms required working with the combinatorics of vertex separators~\cite{AMP20}. 
However, the number of vertices in a vertex separator turns out to be insufficient to control the relevant norm bounds in the sparse case. This is because of the fact that unlike random $\pm 1$ variables, their $p$-biased analogs no longer have Orlicz norm $O(1)$ but instead $O(\frac{1}{\sqrt{p}})$, which results in both the vertices as well as edges in the graph playing a role in the norm bounds. 
We then characterize the norms of the relevant random matrices in terms of vertex separators, where the cost of a separator depends on the number of vertices and also the number of induced edges.
Moreover, the estimates on spectral norms obtained via the trace power method can fluctuate significantly due to rare events (presence of some dense subgraphs), and we need to carefully condition on the absence of these events.
A more detailed overview of our approach is presented in \cref{sec:proof-overview}.

\subsection{Related work}
Several previous works prove SoS lower bounds in the dense setting, when the inputs can be viewed as independent Gaussian or Rademacher random variables. Examples include the planted clique lower bound of Barak \etal \cite{BHKKMP16}, CSP lower bounds of Kothari \etal \cite{KMOW17}, and the tensor PCA lower bounds \cite{hop17, PR20}.
The technical component of decomposing the moment matrix in the dense case, as a sum of PSD matrices, is developed into a general ``machinery" in a recent work by a subset of the authors \cite{PR20}. 
A different approach than the ones based on pseudocalibration, which also applies in the dense regime, was developed by Kunisky~\cite{Kunisky20}.
%

For the case of independent set in random sparse graphs, many works have considered the \emph{search} problem of finding a large independent set in a random sparse graph. Graphs from $G_{n, d/n}$ are known to have independent sets of size $(1+o_d(1)) \cdot \frac{2\ln d}{d} \cdot n$ with high probability,
and it is possible to find an independent set of size $(1~+~o_d(1))\cdot \frac{\ln d}{d}~\cdot~n$, 
either by greedily taking a maximal independent set in the dense case~\cite{GM75} or by using a local algorithm in the sparse case~\cite{W95:indset}.
This is conjectured to be a computational phase transition, with concrete 
lower bounds
against search beyond $\frac{\ln d}{d}\cdot n$ for local algorithms \cite{RV17} and
low-degree polynomials~\cite{wein2020optimal}. 
The game in the search problem is all about the constant 1 vs 2, whereas our work
shows that the integrality gap of SoS is significantly worse, on the order of $\sqrt{d}$.
Lower bounds against search work in the regime of constant $d$ (though in principle they could be extended to at least some $d = \omega(1)$ with additional technical work), while our techniques require
$d \geq \log(n)$.
For search problems, the \emph{overlap distribution} of two high-value solutions has emerged
as a heuristic indicator of computational hardness,
whereas for certification problems it is unclear how the overlap distribution plays a role.

Norm bounds for sparse graph matrices were also obtained using a different method of matrix deviation inequalities, by a subset of the authors~\cite{RT20}.

The work~\cite{BBKMW20} constructs a \textit{computationally quiet} planted distribution
that is a candidate for pseudocalibration. However, their distribution
is not quite suitable for our purposes. \footnote{\cite{BBKMW20} provide evidence that their distribution is hard
to distinguish from $G_{n,d/n}$ with probability $1-o(1)$ (it is not ``strongly detectable''). However,
their distribution \emph{is} distinguishable with probability $\Omega(1)$, via
a triangle count (it is ``weakly detectable''). In SoS pseudocalibration, this manifests as $\pE[1] = \Theta_d(1)$. We would like the low-degree distinguishing probability to be $o(1)$ i.e. $\pE[1] = 1+o_d(1)$ so that normalizing
by $\pE[1]$ does not affect the objective value.}
\footnote{Another issue is that their planted distribution introduces noise by adding a small number of edges inside the planted independent set.}

A recent paper by Pang \cite{Pang21} fixes a technical shortcoming of \cite{BHKKMP16}
by constructing a pseudoexpectation operator that satisfies  ``$\sum_{v \in V} x_v = k$'' as a polynomial constraint (whereas the shortcoming was $\pE[\sum_{v \in V} x_v] \geq (1-o(1))k$ like we have here).


\subsection{Organization of the paper}
In \cref{sec:preliminaries}, we introduce the technical preliminaries and notation required for our arguments.
A technical overview of our proof strategy is presented in 
\cref{sec:proof-overview}.
\cref{sec:pseudocalibration} describes our modification of the pseudo-calibration heuristic using additional truncations.
Norm bounds for sparse graph matrices are obtained in \cref{sec:sparse_graph_matrices_and_conditioning}.  
\cref{sec:psdness} then proves the PSD-ness of our SoS solutions broadly using the machinery developed in \cite{BHKKMP16} and \cite{PR20}.

\section{Technical Preliminaries}
\label{sec:preliminaries}
    
\subsection{The Sum-of-Squares hierarchy}

The Sum-of-Squares (SoS) hierarchy is a hierarchy of semidefinite programs parameterized by its degree $D$. We will work with two equivalent definitions of a degree-$D$ SoS
solution: a pseudoexpectation operator $\pE$ (\cref{def:pseudoexpectation}) and a moment matrix. 
For a degree-$D$ solution to be well defined, we
need $D$ to be at least the maximum degree of all constraint polynomials.
The degree-$D$ SoS algorithm checks feasibility of a polynomial system by 
checking whether or not a degree-$D$ pseudoexpectation operator exists. This can be done in time $n^{O(D)}$ via semidefinite programming 
(ignoring some issues of bit complexity~\cite{RW17:sos}). To
show an SoS lower bound, one must construct a pseudoexpectation 
operator that exhibits the desired integrality gap.

\subsubsection{Moment matrix}\label{sec:moment-mtx}

We define
the moment matrix associated with a degree-$D$
pseudoexpectation $\pE$.
\begin{definition}[Moment Matrix of $\pE$]
  The moment matrix $\Lda=\Lda(\pE)$ associated to a pseudoexpectation $\pE$ is a
  $\binom{[n]}{\leq D/2} \times \binom{[n]}{\leq D/2}$ matrix with rows and columns indexed
  by subsets of $I, J \subseteq [n]$ of size at most $D/2$ and defined as
  \[
  \Lda[I, J] \defeq \pE\left[ x^I \cdot x^J \right].
  \]
\end{definition}

To show that a candidate pseudoexpectation satisfies $\pE[f^2] \geq 0$ in~\cref{def:pseudoexpectation}, we will rely on the following standard fact.
\begin{fact}
  In the definition of pseudoexpectation, \cref{def:pseudoexpectation}, the condition $\pE[f^2] \geq 0$ for all $\deg(f) \leq D/2$ is equivalent to $\Lda \succeq 0$.
\end{fact}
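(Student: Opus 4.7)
The plan is to recognize this as the standard bilinear-form reformulation: $\pE$ restricted to polynomials of degree $\leq D$ is a symmetric bilinear form, and $\Lda$ is its Gram matrix with respect to the monomial basis $\{x^I : I \subseteq [n], |I| \leq D/2\}$. So the equivalence reduces to the textbook fact that a symmetric bilinear form is positive semidefinite iff its Gram matrix in any basis is PSD. The only care needed is that the index set is \emph{subsets}, not multisets; this is harmless because the pseudoexpectation arising in the Independent Set relaxation satisfies the Boolean constraints $x_v^2 = x_v$, so every polynomial of degree $\leq D/2$ collapses (under $\pE$) to a linear combination of multilinear monomials $x^I$ with $|I| \leq D/2$.

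For the forward direction $(\Rightarrow)$, I would take an arbitrary vector $v \in \RR^{\binom{[n]}{\leq D/2}}$ and define the polynomial $f_v = \sum_{I} v_I\, x^I$, which has degree at most $D/2$. Linearity of $\pE$ gives
\[
v^{\T} \Lda\, v \;=\; \sum_{I,J} v_I v_J \,\pE[x^I x^J] \;=\; \pE\!\left[\bigl(\textstyle\sum_I v_I x^I\bigr)^{\!2}\right] \;=\; \pE[f_v^2] \;\geq\; 0,
\]
where the last inequality is the hypothesis. Since $v$ was arbitrary, $\Lda \succeq 0$.

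For the backward direction $(\Leftarrow)$, I would take an arbitrary polynomial $f$ of degree at most $D/2$ and reduce it to its multilinear representative $\tilde f = \sum_I v_I\, x^I$ by iteratively replacing $x_v^2$ by $x_v$. Because $\pE$ satisfies $x_v^2 = x_v$ as a polynomial constraint, $\pE[f^2]=\pE[\tilde f^2]$, and then
\[
\pE[f^2] \;=\; \pE[\tilde f^2] \;=\; \sum_{I,J} v_I v_J\, \pE[x^I x^J] \;=\; v^{\T} \Lda\, v \;\geq\; 0,
\]
using $\Lda \succeq 0$ in the final step. This establishes both directions of the equivalence. There is no genuine obstacle: the only step requiring even a moment's thought is justifying the reduction to multilinear form, which is immediate from the Boolean constraints already imposed on $\pE$ in the Independent Set relaxation (and in a non-Boolean setting one would simply index $\Lda$ by multisets instead, and the same proof goes through verbatim).
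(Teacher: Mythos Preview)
Your proof is correct and is the standard argument. The paper itself does not prove this statement --- it is stated as a ``standard fact'' and left without proof --- so there is no paper proof to compare against. You correctly identified and handled the one subtlety (that the moment matrix is indexed by \emph{subsets} rather than multisets, so the equivalence as stated relies on the Boolean constraints $x_v^2 = x_v$ to reduce arbitrary $f$ to its multilinear representative); this is exactly the implicit assumption in the paper's context.
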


\subsection{$p$-biased Fourier analysis}
Since we are interested in sparse Erd\"os-R\'enyi graphs in this work, we will resort to $p$-biased Fourier analysis \cite[Section 8.4]{ODonnell14}.
Formally, we view the input graph $G \sim G_{n, p}$ as a vector in
$\{0,1\}^{\binom{n}{2}}$ indexed by sets $\{i, j\}$ for $i, j \in [n], i \neq j$, where each entry is independently sampled from the $p$-biased Bernoulli distribution, $\Bernoulli(p)$.
Here, by convention $G_e = 1$ indicates the edge $e$ is present, which happens with probability $p$.
The Fourier basis we use for analysis on $G$
is the set of $p$-biased Fourier
characters (which are naturally indexed by graphs $H$ on $[n]$).
\begin{definition}
    $\chi$ denotes the $p$-biased 
    Fourier character,
    \[\chi(0) = \sqrt{\frac{p}{1-p}}, \qquad \chi(1) = -\sqrt{\frac{1-p}{p}}.\]
    For $H$ a subset or multi-subset of $\binom{[n]}{2}$, let $\chi_H(G) \defeq \prod_{e \in H}\chi(G_e)$.
\end{definition}

We will also need the function $1-G_e$ which indicates that an edge is not present. 
\begin{definition}
    For $H \subseteq \binom{[n]}{2}$, let 
    $\indset{H}(G) = \prod_{e \in H}(1-G_e)$.
\end{definition}

When $H$ is a clique, this is the \textit{independent set indicator} for the
vertices in $H$.

\begin{proposition}
    \label{lem:independent-set-indicator}
    For $e \in \{0,1\}$, $1 + \sqrt{\frac{p}{1-p}}\chi(e) = \frac{1}{1-p}(1-e)$.
    Therefore, for any $H \subseteq \binom{[n]}{2}$,
    \[ \sum_{T \subseteq H} \left(\frac{p}{1-p}\right)^{\abs{T}/2}\chi_T(G) = \frac{1}{(1-p)^{\abs{H}}}\cdot \indset{H}(G).\]
\end{proposition}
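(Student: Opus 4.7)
The plan is to prove the two parts in sequence, with the first being a pointwise verification on $\{0,1\}$ and the second being a product expansion built from the first.

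For the first identity, I would simply check both values of $e$. When $e=0$, we have $\chi(0) = \sqrt{p/(1-p)}$, so $1 + \sqrt{p/(1-p)} \cdot \sqrt{p/(1-p)} = 1 + p/(1-p) = 1/(1-p)$, which matches $(1-0)/(1-p)$. When $e=1$, $\chi(1) = -\sqrt{(1-p)/p}$, so $1 + \sqrt{p/(1-p)} \cdot (-\sqrt{(1-p)/p}) = 1 - 1 = 0$, matching $(1-1)/(1-p) = 0$. This is a two-line calculation and requires no cleverness.

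For the second identity, the key move is to apply the first identity edge-by-edge and take a product over $e \in H$. Concretely, by the first part,
\[
\frac{1}{(1-p)^{|H|}} \indset{H}(G) \;=\; \prod_{e \in H} \frac{1-G_e}{1-p} \;=\; \prod_{e \in H} \left( 1 + \sqrt{\tfrac{p}{1-p}}\,\chi(G_e)\right).
\]
Expanding the product over $e \in H$ and collecting terms according to which factor one picks the $\sqrt{p/(1-p)}\,\chi(G_e)$ piece from gives
\[
\prod_{e \in H} \left( 1 + \sqrt{\tfrac{p}{1-p}}\,\chi(G_e)\right) \;=\; \sum_{T \subseteq H} \left(\tfrac{p}{1-p}\right)^{|T|/2} \prod_{e \in T} \chi(G_e) \;=\; \sum_{T \subseteq H} \left(\tfrac{p}{1-p}\right)^{|T|/2} \chi_T(G),
\]
using the definition $\chi_T(G) = \prod_{e \in T} \chi(G_e)$. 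Rearranging yields the claim.

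There is no real obstacle: the proposition is just the multiplicative expansion of the independent-set indicator in the $p$-biased Fourier basis. The only step to be careful about is the sign convention for $\chi$ and the bookkeeping of the $\sqrt{p/(1-p)}$ factors in the expansion, but both are immediate from the definitions.
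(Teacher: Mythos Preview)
Your proof is correct and is exactly the standard verification the paper leaves implicit (the paper states the proposition without proof). The pointwise check for $e\in\{0,1\}$ followed by the product expansion over $e\in H$ is the intended argument.
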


\subsection{Ribbons and graph matrices}
A degree-$D$ pseudoexpectation operator is a vector in $\R^{\binom{[n]}{\leq D}}$.
The matrices we consider will have rows and columns indexed by all subsets of $[n]$.
We express the moment matrix $\Lam$ in terms of the Fourier basis on $G_{n, p}$. A particular Fourier character in a particular matrix entry is 
identified by a combinatorial structure called a \textit{ribbon}.

\begin{definition}[Ribbon]
\label{def:ribbon}
    A ribbon is a tuple $R = (V(R), E(R), A_R, B_R)$,
    where $(V(R), E(R))$ is an undirected multigraph without self-loops,
    $V(R) \subseteq [n]$, and $A_R, B_R \subseteq V(R)$.
    Let $C_R \defeq V(R) \setminus (A_R \cup B_R)$.
\end{definition}

\begin{definition}[Matrix for a ribbon]
    For a ribbon $R$, the matrix $M_R$ has rows and columns 
    indexed by all subsets of $[n]$ and has a single nonzero entry,
    \[ M_R[I,J] = \begin{cases}
        \displaystyle\chi_{E(R)}(G) 
        & I = A_R, J = B_R\\
        0 & \text{Otherwise}
    \end{cases}\]
\end{definition}
\begin{definition}[Ribbon isomorphism]
Two ribbons $R, S$ are isomorphic, or have the same shape, if there is a bijection between $V(R)$ and $V(S)$ which is a multigraph isomorphism 
between $E(R), E(S)$ and 
is a bijection from $A_R$ to $A_S$ and $B_R$ to $B_S$.
Equivalently, letting $S_n$ permute the vertex labels of a ribbon, the two ribbons are in the same $S_n$-orbit.
\end{definition}

If we ignore the labels on the vertices of a ribbon, what remains is the
\textit{shape} of the ribbon. 

\begin{definition}[Shape]
\label{def:shape}
    A shape is an equivalence class of ribbons with the same shape. Each shape 
    has associated with it
    a representative $\alpha = (V(\alpha), E(\alpha), U_\alpha, V_\alpha)$, where $U_\alpha,V_\alpha \subseteq V(\alpha)$.
    Let $W_\alpha \defeq V(\alpha) \setminus (U_\alpha \cup V_\alpha).$
\end{definition}

\begin{definition}[Embedding]
    Given a shape $\alpha$ and an injective function $\varphi: V(\alpha) \to [n],$ we let $\varphi(\alpha)$ be the ribbon obtained by labeling $\alpha$ in the natural way.
\end{definition}

\begin{definition}[Graph matrix]
    For a shape $\alpha$, the graph matrix $M_\alpha$ is
    \[M_\alpha = \displaystyle\sum_{\text{injective }\phi: V(\alpha) \to [n]} M_{\phi(\alpha)}. \]
\end{definition}

Injectivity is an important property of graph matrices. On the one hand,
we have a finer partition of ribbons than allowing all assignments, and this
allows more control. On the other hand,
injectivity introduces technically challenging ``intersection terms'' into graph matrix 
multiplication.
A graph matrix is essentially a sum over all ribbons with shape $\alpha$ (this is not entirely accurate as each ribbon will
be repeated $\abs{\Aut(\alpha)}$ times).

\begin{definition}[Automorphism]
    For a shape $\alpha$, $\Aut(\al)$ is the group of bijections from $V(\al)$ to itself such that $U_\al$ and $V_\al$ are fixed as sets
    and the map is a multigraph automorphism on $E(\al)$.
    Equivalently, $\Aut(\al)$ is the stabilizer subgroup (of $S_n$) of any ribbon of shape $\al$.
\end{definition}
\begin{fact}
    \label{lem:injective-graph-matrices}
    \[M_\al = \sum_{\text{injective }\phi:V(\al) \to [n]} M_{\phi(\al)} = \abs{\Aut(\al)}\sum_{R\text{  ribbon of shape }\al} M_R\]
\end{fact}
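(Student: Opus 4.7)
The plan is to prove the identity as essentially a counting argument, using the orbit-stabilizer relationship between injective embeddings $\phi : V(\alpha) \to [n]$ and ribbons of shape $\alpha$. The first equality is immediate from the definition of $M_\alpha$, so all the work is in the second equality, which says that each ribbon $R$ of shape $\alpha$ is produced by exactly $|\Aut(\alpha)|$ injective embeddings.

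First, I would reorganize the sum $\sum_\phi M_{\phi(\alpha)}$ by grouping embeddings according to the ribbon they produce: $\sum_\phi M_{\phi(\alpha)} = \sum_{R \text{ of shape } \alpha} N_R \cdot M_R$, where $N_R := |\{\phi : \phi(\alpha) = R\}|$. Since $M_R$ depends only on $R$, it suffices to show $N_R = |\Aut(\alpha)|$ for every ribbon $R$ of shape $\alpha$.

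Next, I would fix a ribbon $R$ of shape $\alpha$ and pick one injective embedding $\phi_0 : V(\alpha) \to [n]$ with $\phi_0(\alpha) = R$ (which exists by definition of the equivalence class). Given any other injective embedding $\phi$ with $\phi(\alpha) = R$, the composition $\sigma := \phi_0^{-1} \circ \phi$ is a well-defined bijection on $V(\alpha)$. Because $\phi$ and $\phi_0$ both realize $R$ as the image ribbon, $\sigma$ must fix the designated sets $U_\alpha$ and $V_\alpha$ (setwise) and must act as a multigraph automorphism on $E(\alpha)$; that is, $\sigma \in \Aut(\alpha)$. Conversely, given any $\sigma \in \Aut(\alpha)$, the embedding $\phi_0 \circ \sigma$ is injective and produces the same ribbon $R$ (since applying $\sigma$ first permutes $V(\alpha)$ in a way that preserves all the structural data of $\alpha$, and then $\phi_0$ relabels to get $R$). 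This establishes a bijection between the set $\{\phi : \phi(\alpha) = R\}$ and $\Aut(\alpha)$, hence $N_R = |\Aut(\alpha)|$.

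Plugging this back in gives $\sum_\phi M_{\phi(\alpha)} = |\Aut(\alpha)| \sum_R M_R$, as desired. There is no real obstacle here; the only mild subtlety is being careful about the definition of ribbon isomorphism vs.\ automorphism (verifying that $\sigma$ indeed fixes $U_\alpha, V_\alpha$ setwise rather than pointwise), which is built into \cref{def:ribbon,def:shape} and the definition of $\Aut(\alpha)$ just above. Everything else is a direct application of the orbit-stabilizer correspondence.
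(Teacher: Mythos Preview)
Your proposal is correct and is exactly the standard orbit-stabilizer argument one would give; the paper itself states this result as a fact without supplying a proof, so there is nothing to compare against. Your care in checking that $\sigma = \phi_0^{-1}\circ\phi$ fixes $U_\alpha, V_\alpha$ setwise (matching the definition of $\Aut(\alpha)$) is the only point requiring attention, and you handle it.
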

\begin{figure}
    \centering
    \includegraphics[scale=0.5]{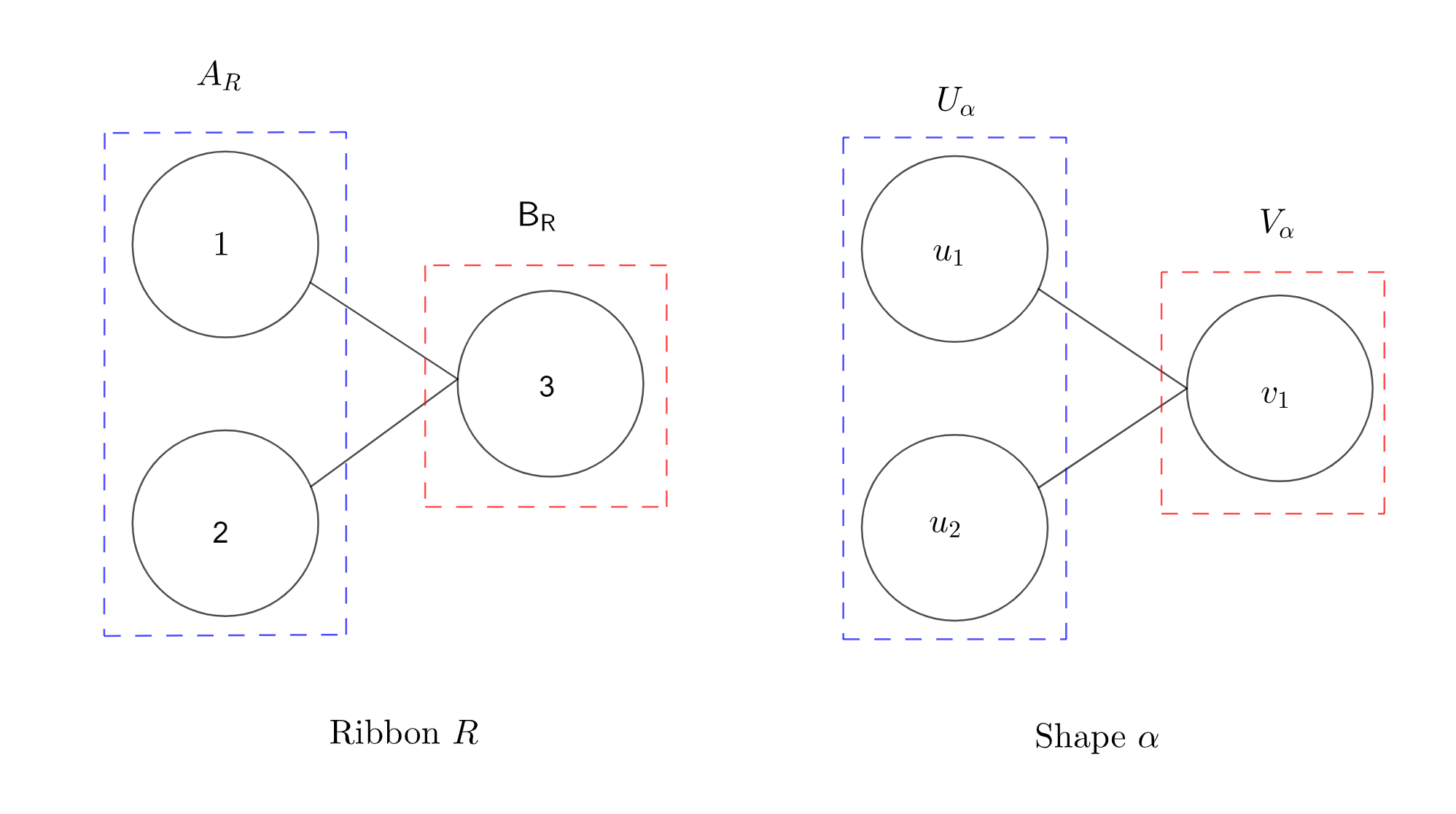}
    \caption{An example of a ribbon and a shape}
    \label{fig: ribbon_shape}
\end{figure}

\begin{example}[Ribbon]
    As an example, consider the ribbon in \cref{fig: ribbon_shape}. We have $A_R = \{1, 2\}, B_R = \{3\}, V(R) = \{1, 2, 3\}, E(R) = \{\{1, 2\}, \{2, 3\}\}$. The Fourier character is $\chi_{E_R} = \chi_{1, 3}\chi_{2, 3}$. And finally, $M_R$ is a matrix with rows and columns indexed by subsets of $[n]$, with exactly one nonzero entry $M_R(\{1, 2\}, \{3\}) = \chi_{\{1, 3\}}\chi_{\{2, 3\}}$. Succinctly, \[M_R =
  \begin{blockarray}{rl@{}c@{}r}
    & & \makebox[0pt]{column $\{3\}$} \\[-0.5ex]
    & & \,\downarrow \\[-0.5ex]
    \begin{block}{r(l@{}c@{}r)}
    & \makebox[3.1em]{\Large $0$\bigstrut[t]} & \vdots &\makebox[4.2em]{\Large $0$} \\[-0.2ex]
    \text{row }\{1, 2\} \to \mkern-9mu & \raisebox{0.5ex}{\makebox[3.2em][l]{\dotfill}} & \chi_{1, 3}\chi_{2, 3} & \raisebox{0.5ex}{\makebox[4.2em][r]{\dotfill}} \\[+0ex]
    & \makebox[3.1em]{\Large $0$} & \vdots &\makebox[4.2em]{\bigstrut\Large $0$} \\
    \end{block}
  \end{blockarray}\]
\end{example}

\begin{example}[Shape]
    In \cref{fig: ribbon_shape}, consider the shape $\al$ as shown. We have $U_{\al} = \{u_1, u_2\}, V_{\al} = \{v_1\}, W_{\al} = \emptyset, V(\al) = \{u_1, u_2, v_1\},$ and $E(\al) = \{\{u_1, v_1\}, \{u_2, v_1\}\}$. $M_{\al}$ is a matrix with rows and columns indexed by subsets of $[n]$. The nonzero entries will have rows and columns indexed by $\{a_1, a_2\}$ and $b_1$ respectively for all distinct $a_1, a_2, b_1 \in [n]$, with the corresponding entry being $M_{\al}(\{a_1, a_2\}, \{b_1\}) = \chi_{a_1, b_1}\chi_{a_2, b_1}$. Succinctly, \[M_{\al} =
  \begin{blockarray}{rl@{}c@{}r}
    & & \makebox[0pt]{column $\{b_1\}$} \\[-0.5ex]
    & & \,\downarrow \\[-0.5ex]
    \begin{block}{r(l@{}c@{}r)}
    &  & \vdots & \\[-0.2ex]
    \text{row }\{a_1, a_2\} \to \mkern-9mu & \raisebox{0.5ex}{\makebox[3.2em][l]{\dotfill}} & \chi_{a_1, b_1}\chi_{a_2, b_1} & \raisebox{0.5ex}{\makebox[4.2em][r]{\dotfill}} \\[+.5ex]
    &  & \vdots & \\
    \end{block}
  \end{blockarray}\]
\end{example}

\begin{definition}[Proper]
    A ribbon or shape is proper if it has no multi-edges. Otherwise, it is improper.
    Let $\mul_\al(e)$ be the multiplicity of edge $e$ in ribbon or shape $\al$.
\end{definition}

An improper ribbon or shape with an edge $e$ of multiplicity 2, e.g., 
has a squared Fourier character
$\chi_e^2$.
Since this is a function on $\{0,1\}$, by expressing it in the Fourier basis an improper ribbon or shape can be decomposed in a unique way 
into a linear combination of proper ones, which we call \emph{linearizations}.
\begin{definition}[Linearization]
    Given an improper ribbon or shape $\al$, a linearization $\beta$ is a proper ribbon or shape such that $\mul_\beta(e) \leq \mul_\al(e)$ for all $e \in E(\al)$.
\end{definition}

\begin{definition}[Isolated vertex]
    For a shape $\al$, an isolated vertex is a degree-0 vertex in $W_\al$.
    Let $I_\al$ denote the set of isolated vertices in $\al$.
    Similarly, for a ribbon $R$, the isolated vertices are denoted $I_R$.
\end{definition}
We stress that an isolated vertex never refers to degree-0 vertices inside $U_\al \cup V_\al$.

\begin{definition}[Trivial shape]
    A shape $\al$ is trivial if $V(\alpha) = U_\al = V_\al$ and
    $E(\al) = \emptyset$.
\end{definition}

$M_\al$ for a trivial $\al$ is the identity matrix restricted to the degree-$\abs{U_\al}$ block.

\begin{definition}[Transpose]
    Given a ribbon $R$ or shape $\alpha$, we define its transpose by swapping $A_R$ and $B_R$ (resp. $U_\alpha$ and $V_\alpha$).
    Observe that this transposes the matrix for the ribbon/shape.
\end{definition}
\section{An overview of the proof techniques}\label{sec:proof-overview}

Here, we will give a sketch of the proof techniques that we utilize in our SoS lower bound. Recall that we are given a graph $G \sim G_{n, p}$ where $d = pn$ is the average degree and our goal is to show that for any constant $\eps>0$, $\dsos \approx n^\delta$ for some $\delta>0$, degree $\dsos$ SoS thinks there exists an independent set of size $k \approx \frac{n}{d^{1/2+\eps}(\dsos\log n)^{c_0} }$ whereas the true independent set has size $\approx \frac{n \log d}{d}$ for some absolute constant $c_0$.

To prove the lower bound, we review the Planted Clique lower bound~\cite{BHKKMP16} and describe
the obstacles that need to be overcome in the sparse setting.

\subsection{Modified pseudocalibration}
Since SoS is a convex program, the goal of an SoS lower bound is to construct a dual object:
a set of \textit{pseudomoments} $\pE[x^S]$ for each small $S \subseteq V(G)$, which
are summarized in the \textit{moment matrix}.
The moment matrix must (i) obey the problem constraints (ii) be SoS-symmetric, and (iii) be positive semidefinite (PSD). 
Following the recipe of pseudocalibration introduced by~\cite{BHKKMP16}, we
can produce a candidate moment matrix which is guaranteed to satisfy the
first two conditions, while like all other SoS lower bounds, the hard work remains in verifying the PSDness of the moment matrix. Pseudocalibration has been successfully exploited in a multitude of SoS lower bound applications, e.g., \cite{BHKKMP16, KMOW17, MRX20, GJJPR20, PR20}. So, this is a natural starting point for us.

\paragraph{Failure of pseudocalibration}
The first obstacle we overcome is the lack of a planted distribution.
Pseudocalibration requires a planted and random
distribution which are hard to distinguish using the \textit{low-degree, 
likelihood ratio test} (i.e. $\widetilde{\E}[1]$ is bounded whp)~\cite{hop17, hop18}. In the case of sparse independent set, we have the following natural hypothesis testing problem with which one may hope to pseudocalibrate.

\begin{itemize}
    \item Null Hypothesis: Sample a graph $G \sim G_{n, p}$.
    \item Alternate Hypothesis: Sample a graph $G\sim G_{n, p}$. Then, sample a subset $S \subseteq [n]$ where each vertex is chosen with probability $\frac{k}{n}$. Then, plant an independent set in $S$, i.e. remove all the edges inside $S$.
\end{itemize}

In the case of sparse independent set,
the na\"ive planted distribution \textit{is} distinguishable from a random
instance via a simple low-degree test -- counting 4-cycles.
In all uses of pseudocalibration that we are aware of,
the two distributions being compared are
conjecturally hard to distinguish
by all polynomial-time algorithms.
We are still searching for a suitable planted 
distribution for sparse independent 
set, and we believe this is an interesting question on its own.

\paragraph{Fixing pseudocalibration via connected truncation}
To get around with this issue, we close our eyes and ''pretend'' the planted distribution is quiet, ignoring the obvious distinguisher, and make a ``connected truncation'' of the moment matrix
to remove terms which correspond to counting subgraphs in $G$.
What remains is that $\pE[x^S]$ is essentially independent of the
global statistics of $G$. It should be pointed out here that this is inherently distinct from the local truncation for weaker hierarchies (e.g. Sherali-Adams) where the moment matrix is an entirely local function \cite{DBLP:journals/corr/abs-1804-07842}. 
In contrast, our $\pE[x^S]$ may depend on parts of the graph
that are far away from $S$, in fact, even up to radius $n^\delta$, exceeding the diameter of the random graph!

At this point, the candidate moment matrix can be written as follows.

\begin{align*}
\Lambda &\defeq \displaystyle\sum_{\alpha \in \calS} \left(\frac{k}{n}\right)^{\abs{V(\alpha)}} \cdot \left(\frac{p}{1-p}\right)^{\frac{\abs{E(\alpha)}}{2}} M_\alpha.
\end{align*}

Here, $\calS$ ranges over all proper shapes $\al$ of appropriately bounded size such that \emph{all vertices of $\al$ are connected to $U_{\al} \cup V_{\al}$}. The latter property is the important distinction from standard pseudocalibration and will turn out to be quite essential for our analysis.

Using connected objects to take advantage of correlation decay is also a theme
in the cluster expansion from statistical physics~(see Chapter~5 of \cite{StatMechBook}). Although not formally
connected with connected truncation, the two methods share some similar characteristics.

\subsection{Approximate PSD decompositions, norm bounds and conditioning}
Continuing, to show the moment matrix is PSD, Planted Clique~\cite{BHKKMP16} performs an approximate factorization of the moment matrix in terms
of graph matrices.
A crucial part of this approach is to identify "dominant" and "non-dominant" terms in the approximate PSD decomposition. Then, the dominant terms are shown to be PSD and the non-dominant terms are bounded against the dominant terms.
In this approach, a crucial component in the latter step is to control the norms of graph matrices.

\paragraph{Tighter norm bounds for sparse graph matrices}
Existing norm bounds in the literature \cite{medarametla2016bounds, AMP20} for graph matrices have focused exclusively on the dense setting $G_{n, 1/2}$. Unfortunately, while these norm bounds apply for the sparse setting, they're too weak to be useful. Consider the case where we sample $G \sim G_{n, p}$ and try to bound the spectral norm of the centered adjacency matrix. Existing norm bounds give a bound of $\tilde{O}(\frac{\sqrt{n(1 - p)}}{\sqrt{p}})$ whereas the true norm is ${O}(\sqrt{n})$ regardless of $d$. This is even more pronounced when we use shapes with more vertices. So, our first step is to tighten the existing norm bounds in the literature for sparse graph matrices.

For a shape $\al$, a vertex separator $S$ is a subset of vertices such that there are no paths from $U_{\al}$ to $V_{\al}$ in $\al \setminus S$. It is known from previous works that in the dense case the spectral norm is controlled by the number of vertices in the minimum vertex separator between $U_\al$ and $V_\al$. Assuming $\al$ does not have isolated vertices and $U_\al\cap V_\al=\emptyset$ for simplicity, the norm of $M_\al$ is given by the following expression, up to polylog factors and the leading coefficient of at most $\abs{V(\al)}^{\abs{V(\al)}}$,\[ 
\norm{M_\al} \leq  \max_{\text{vertex separator } S} \sqrt{n}^{|V(\al)|-|V(S)|}
\]
However, it turns out this is no longer the controlling quantity if the underlying input matrix is sparse, and tightly determining this quantity arises as a natural task for our problem, and for future attack on SoS lower bounds for other problems in the sparse regime. 
To motivate the difference, we want to point out this is essentially due to the following simple observation. For $k\geq 2$,
\begin{align*}
    &|\E[X^k]| = 1\\
    &|\E[Y^k]| \leq \left(\sqrt{\frac{1-p}{p}}\right)^{k-2}\approx \left(\sqrt{\frac{n}{d}}\right)^{k-2} 
\end{align*}
for $X$ a uniform $\pm 1$ bit and $Y$ a $p$-biased random variable $\text{Ber}(p)$. This suggests that in the trace power method, there will be a preference among vertex separators of the same size if some contain more edges inside the separator (because vertices inside vertex separators are ''fixed'' in the dominant term in the trace calculation, and thus edges within the separator will contribute some large power of $\sqrt{\frac{n}{d}}$, creating a noticeable influence on the final trace). Finally, this leads us to the following characterization for sparse matrix norm bounds, up to polylog factors and the leading coefficient of at most $\abs{V(\al)}^{\abs{V(\al)}}$,
\[ 
\norm{M_\al} \leq \max_{\text{vertex separator } S} \sqrt{n}^{\abs{V(\al)}-\abs{V(S)} } \left(\sqrt{\frac{1-p}{p}}\right)^{\abs{E(S)}}
\]

The formal statement is given in \cref{sec:probabilistic-norm-bounds}. We prove this via an application of the trace method followed by a careful accounting of the large terms.

The key conceptual takeaway is that we need to redefine the weight of a vertex separator to also incorporate the edges within the separator, as opposed to only considering the number of vertices. We clearly distinguish these with the terms \textit{Dense Minimum Vertex Separator} (DMVS) and \textit{Sparse Minimum Vertex Separator} (SMVS). When $p = \frac{1}{2}$, these two bounds are the same up to lower order factors.

\paragraph{Approximate PSD decomposition}
We then perform an approximate PSD decomposition of the graph
matrices that make up $\Lam$. The general factoring strategy is
the same as~\cite{BHKKMP16}, though in the sparse regime we
must be very careful about what kind of combinatorial factors we
allow.
Each shape comes with a natural ``vertex decay'' coefficient arising from the fractional size of the independent set
and an ``edge decay'' coefficient arising from the sparsity of the graph. The vertex decay coefficients can be analyzed in a method
similar to Planted Clique (which only has vertex decay).
For the edge decay factors, we use novel charging arguments.
At this point, the techniques are strong enough to prove \cref{thm:informal-logd},
an SoS-degree $\Omega(\log n)$ lower bound for $d \geq n^\eps$. The remaining techniques are needed to push the SoS degree
up and the graph degree down.
 
\paragraph{Conditioning}
In our analysis, it turns out that to obtain strong SoS lower bounds in the sparse regime, a norm bound from the vanilla trace method is not quite sufficient. Sparse random matrices' spectral norms are fragile with respect to the influence of an unlikely event, exhibiting deviations away from the expectation with polynomially small probability (rather than exponentially small probability, like what is obtained from a good concentration bound). 
These ``bad events'' are small dense subgraphs present in a graph sampled from $G_{n,p}$.

To get around this, we condition on the high probability event that $G$
has no small dense subgraphs. For example, for $d = n^{1-\eps}$ whp every small subgraph $S$ has $O(\abs{S})$ edges (even up to size $n^{\delta}$).
For a \textbf{shape} which is dense (i.e. $v$ vertices and more than $O(v)$ edges)
we can show that its norm falls off extremely rapidly
under this conditioning.
This allows us to throw away dense shapes, which is 
critical for controlling combinatorial factors that would otherwise dominate
the moment matrix.

This type of conditioning is well-known: a long line of work showing tight norm bounds for the simple adjacency matrix appeals to a similar conditioning argument within the trace method \cite{BLM, Bor19, FM17, deshpande2019threshold}. 
We instantiate the conditioning in two ways. The first is through the following identity.
    
\begin{observation}
\label{obs:conditioning}
Given a set of edges $E \subseteq \binom{[n]}{2}$, if we know that not all of the edges of $E$ are in $E(G)$ then 
\[
\chi_E(G) = \sum_{E' \subseteq E: E' \neq E}{\left(\sqrt{\frac{p}{1-p}}\right)^{|E| - |E'|}\chi_{E'}(G)}
\]
    \end{observation}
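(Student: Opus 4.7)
The observation is essentially a one-line algebraic identity that follows from $p$-biased Fourier expansion once the conditioning hypothesis is recognized as the vanishing of a single polynomial in $G$.

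The plan is to apply Fourier expansion to the monomial indicator $\prod_{e \in E} G_e$, which evaluates to $1$ exactly when $E \subseteq E(G)$ and to $0$ otherwise; the hypothesis ``not all edges of $E$ are in $E(G)$'' is thus exactly the statement $\prod_{e \in E} G_e = 0$. First, I would verify the single-bit identity $G_e = p - \sqrt{p(1-p)}\,\chi(G_e)$ by plugging in $G_e \in \{0,1\}$ and using the conventions $\chi(0) = \sqrt{p/(1-p)}$ and $\chi(1) = -\sqrt{(1-p)/p}$. Then I would multiply out,
\[
0 \;=\; \prod_{e \in E} \bigl( p - \sqrt{p(1-p)}\,\chi(G_e) \bigr) \;=\; \sum_{E' \subseteq E} p^{|E|-|E'|}\bigl(-\sqrt{p(1-p)}\bigr)^{|E'|}\chi_{E'}(G),
\]
isolate the $E' = E$ term (whose coefficient $(-\sqrt{p(1-p)})^{|E|}$ is nonzero for $p \in (0,1)$), and divide through to solve for $\chi_E(G)$ as a linear combination of $\chi_{E'}(G)$ with $E' \subsetneq E$ and coefficients of the form $\pm(\sqrt{p/(1-p)})^{|E|-|E'|}$, as in the stated identity.

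There is no substantive obstacle: the entire argument is a direct Fourier expansion of a single monomial followed by a linear rearrangement. The only point requiring care is tracking the signs $(-1)^{|E'|}$ arising from the binomial expansion of the product and then normalizing by the coefficient of $\chi_E(G)$; these signs should absorb cleanly into the coefficients written in the displayed identity. As a sanity cross-check, one can derive a sibling identity by applying \Cref{lem:independent-set-indicator} to $\indset{E}(G) = \prod_{e \in E}(1-G_e)$, but that route corresponds to the opposite hypothesis ``at least one edge of $E$ is in $E(G)$'' and produces coefficients of the form $\sqrt{(1-p)/p}$ rather than $\sqrt{p/(1-p)}$; this confirms that expanding $\prod_{e \in E} G_e$ (and not $\prod_{e \in E}(1-G_e)$) is the correct route for the hypothesis stated here.
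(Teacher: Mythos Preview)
Your proposal is correct and essentially the same as the paper's proof. The paper (in \cref{cor:conditioning-factorization}) uses the single-bit identity $1_{e \in E(G)}\chi_e = -\sqrt{p(1-p)} + (1-p)\chi_e$ and expands the product $1_{E \subseteq E(G)}\chi_E = \prod_e (1_{e\in E(G)}\chi_e)$, whereas you expand $1_{E \subseteq E(G)} = \prod_e G_e$ directly via $G_e = p - \sqrt{p(1-p)}\chi_e$; both routes are the same Fourier expansion of the vanishing indicator, and your sign remark is on point---the precise formula carries the signs $-(-\sqrt{p/(1-p)})^{|E|-|E'|}$ as in the paper's formal lemma.
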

This simple observation whose proof is deferred to \cref{sec: conditioning} can be applied recursively to replace a dense shape $\al$ by a sum of its sparse subshapes $\{\beta\}$. 
The second way we eliminate dense shapes is by using a bound on the Frobenius
norm which improves on the trace calculation for dense shapes.
After conditioning, we can restrict our attention to sparse shapes, which allows us to avoid several combinatorial factors which would otherwise overwhelm the ''charging'' argument.

Handling the subshapes $\{\beta\}$ requires some care. Destroying edges from a 
shape can cause its norm to either go up or down: the vertex separator gets
smaller (increasing the norm), but if we remove edges from inside the SMVS,
the norm goes down.
An important observation is we do not necessarily have to apply \cref{obs:conditioning} on the entire set of edges of a shape, but we can also just apply it on some of the edges. We will choose 
a set of edges $\res(\al)\subseteq E(\al)$ that ``protects the minimum vertex separator'' and only apply conditioning on edges outside $\res(\al)$. In this way
the norm of subshapes $\beta$ will be guaranteed to be less than $\al$. The fact that it's possible to reserve such edges is shown separately for the different kind of shapes we encounter in our analysis (see \cref{sec:reserved}).

Finally,
we are forced to include certain dense shapes that encode
indicator functions of independent sets. These shapes must be
factored out and tracked separately throughout the analysis.\footnote{One may ask if we could definitionally get rid of dense shapes, like
we did for disconnected shapes via the connected truncation, but these dense
shapes are absolutely necessary.}
After handling all of these items we have shown that $\Lam$ is PSD.

\section{Pseudocalibration with connected truncation}
\label{sec:pseudocalibration}

\subsection{Pseudo-calibration}\label{subsec: pseudocalibration}

Pseudo-calibration is a heuristic introduced by \cite{BHKKMP16} to construct candidate moment matrices to exhibit SoS lower bounds. Here, we formally describe the heuristic briefly.

To pseudocalibrate, a crucial starting point is to come up with a hard hypothesis testing problem.
Let $\nu$ denote the null distribution and $\mu$ denote the alternative distribution. Let $v$ denote the input and $x$ denote the variables for our SoS relaxation. 
The main idea of pseudo-calibration is that, for an input $v$ sampled from $\nu$ and any polynomial $f(x)$ of degree at most the SoS degree, and for any low-degree test $g(v)$, the correlation of $\pE[f(x)]$ with $g$ as functions of $v$ should match in the planted and random distributions. That is,
\[\EE_{v \sim \nu}[\pE[f(x)]g(v)] = \EE_{(x, v) \sim \mu}[f(x)g(v)]\]

Here, the notation $(x, v) \sim \mu$ means that in the planted distribution $\mu$, the input is $v$ and $x$ denotes the planted structure in that instance. For example, in independent set, $x$ would be the indicator vector of the planted independent set.

Let $\calF$ denote the Fourier basis of polynomials for the input $v$. By choosing $g$ from $\calF$ such that the degree is at most the truncation parameter $D_V = \truncparam$ (hence the term ``low-degree test''), the pseudo-calibration heuristic specifies all lower order Fourier coefficients for $\pE[f(x)]$ as a function of $v$. The heuristic suggests that the higher order coefficients are set to be $0$. In total the candidate pseudoexpectation operator can be written as
\[\pE [f(x)] = \sum_{\substack{g \in \calF:\\deg(g) \le D_V}} \EE_{v \sim \nu}[\pE[f(x)]g(v)] g(v) = \sum_{\substack{g \in \calF:\\deg(g) \le D_V}} \EE_{(x, v) \sim \mu}[f(x)g(v)] g(v)\]

The Fourier coefficients $\EE_{(x, v) \sim \mu}[f(x)g(v)]$ can be explicitly computed in many settings, which therefore gives an explicit pseudoexpectation operator $\pE$.

An advantage of pseudo-calibration is that this construction automatically satisfies some nice properties that the pseudoexpectation $\pE$ should satisfy. It's linear in $v$ by construction. For all polynomial equalities of the form $f(x) = 0$ that are satisfied in the planted distribution, it's true that $\pE[f(x)] = 0$. For other polynomial equalities of the form $f(x, v) = 0$ that are satisfied in the planted distribution, the equality $\pE[f(x, v)] = 0$ is approximately satisfied. In many cases, $\pE$ can be mildly adjusted to satisfy these exactly.

The condition $\pE[1] = 1$ is not automatically satisfied.
In all previous successful applications of pseudo-calibration, $\pE[1] = 1 \pm \littleoh(1)$. Once we have this, we simply set our final pseudoexpectation operator to be $\pE'$ defined as $\pE'[f(x)] = \pE[f(x)] / \pE[1]$.
We remark that the condition $\pE[1] = 1 \pm \littleoh(1)$ has been quite successful in predicting distinguishability of the
planted/null distributions~\cite{hop17, hop18}.

\subsection{The failure of "Just try pseudo-calibration"}
Despite the power of pseudocalibration in guiding the construction of a pseudomoment matrix in SoS lower bounds, it heavily relies upon a planted distribution that is hard to distinguish from the null distribution. Unfortunately, a ''quiet'' planted distribution remains on the search in our setting.


Towards this end, we will consider the following "na\"ive" planted distribution $\calD_{pl}$ that likely would have been many peoples' first guess:\begin{enumerate}[(1)]
    \item Sample a random graph $G\sim G_{n,p}$;
    \item Sample a subset $S \subseteq [n]$ by picking each vertex with probability $\frac{k}{n}$;
    \item Let $\Tilde{G}$ be $G$ with edges inside $S$ removed, and output $(S,\Tilde{G})$.
\end{enumerate}

\begin{proposition}
    For all $d = O(\sqrt{n})$ and $k = \Omega(n/d)$, the na\"ive planted distribution $\calD_{pl}$
    is distinguishable in polynomial time from $G_{n,p}$ with $\Omega(1)$ probability.
\end{proposition}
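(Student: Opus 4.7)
The plan is to use the number of $4$-cycles in the observed graph, $C_4(\tilde G)$, as a polynomial-time statistical test and apply Chebyshev's inequality to distinguish $\mu = \calD_{pl}$ from $\nu = G_{n,p}$. I will show that the gap $|\E_\nu[C_4] - \E_\mu[C_4]|$ is at least a constant multiple of $\sqrt{\mathrm{Var}_\nu[C_4]}$, so that thresholding $C_4$ at the midpoint of these two expectations yields a test with $\Omega(1)$ advantage.

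First I compute the expectation shift. A potential $4$-cycle $C$ on four fixed vertices survives in $\tilde G$ iff all four edges of $C$ lie in $G$ \emph{and} no two cyclically-adjacent vertices of $C$ both lie in $S$. Summing over the seven independent-set configurations of the cycle graph $C_4$, the latter probability equals
\[ q(t) = (1-t)^4 + 4 t (1-t)^3 + 2 t^2 (1-t)^2 = 1 - 4 t^2 + O(t^3), \qquad t = k/n. \]
Hence by linearity $\E_\nu[C_4] - \E_\mu[C_4] = 3 \binom{n}{4} p^4 \bigl(1 - q(t)\bigr) = \Theta(k^2 n^2 p^4) = \Theta(k^2 d^4 / n^2)$, which is $\Omega(d^2)$ whenever $k = \Omega(n/d)$.

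Next I bound the variance. Decompose $\mathrm{Var}_\nu[C_4] = \sum_{(C,C')} \mathrm{Cov}(\mathbf{1}_C, \mathbf{1}_{C'})$ by the number of edges shared by the two potential $4$-cycles. Pairs with no shared edges contribute zero. The self-variance ($C = C'$) contributes $\Theta(n^4 p^4) = \Theta(d^4)$, and pairs sharing exactly one edge contribute $\Theta(n^6 p^7) = \Theta(d^7/n)$, with the remaining overlap patterns contributing lower order. In the regime $d = O(n^{1/3})$ the self-variance dominates, giving $\sqrt{\mathrm{Var}_\nu[C_4]} = O(d^2)$, which is matched by the signal $\Omega(d^2)$; Chebyshev then closes the argument.

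The main obstacle will be extending this to $d$ near the upper limit $O(\sqrt n)$, where the single-shared-edge covariance $\Theta(d^7/n)$ dominates and degrades the signal-to-noise ratio of the plain $4$-cycle count. To cover this regime I would either pass to a centered (signed) variant of the cycle count (whose variance reduces to the clean bound $\Theta(n^4)$ but whose signal profile is different), replace $C_4$ by a different small subgraph motif whose signal-to-noise ratio stays favorable, or combine several such tests; in every case the overall structure---expectation gap versus variance followed by Chebyshev---remains the same.
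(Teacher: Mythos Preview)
Your approach is the same as the paper's: count $4$-cycles, compare the expectation shift to the standard deviation under $G_{n,p}$, and invoke Chebyshev. Your expectation computation matches the paper's, and your variance analysis is in fact more careful: the paper simply asserts that the variance of $C_4$ under $G_{n,p}$ is $O(d^4)$, whereas you correctly isolate the single-shared-edge contribution $\Theta(n^6 p^7) = \Theta(d^7/n)$, which dominates once $d \gg n^{1/3}$.

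The obstacle you flag near the top of the range is real, and the paper's proof does not address it either---its asserted variance bound is only valid for $d = O(n^{1/3})$. Your sketched fixes, however, do not close the gap. For the centered count $\sum_C \prod_{e\in C}(\mathbf{1}_e - p)$ the variance under $G_{n,p}$ is $\Theta(n^4 p^4)=\Theta(d^4)$ (not $\Theta(n^4)$), but the signal collapses: conditioned on $S$, the product has nonzero mean only when all four vertices of $C$ lie in $S$, so the planted expectation is $\Theta(n^4 p^4 (k/n)^4)$, which at $k=\Theta(n/d)$ is $\Theta(1)$ and gives signal-to-noise $\Theta(d^{-2})$. Replacing $C_4$ by another fixed-size subgraph count hits the same $\sqrt n/d^{3/2}$ barrier. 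In the regime the paper actually uses for pseudocalibration, $k = \Theta(n/\sqrt d)$, the raw $4$-cycle (or even the edge) count already gives diverging signal-to-noise, so the statement is unproblematic there; it is only the corner $k=\Theta(n/d)$, $d=\Theta(\sqrt n)$ that is delicate, and neither your proposal nor the paper's proof resolves it.
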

\begin{proof}
    The number of labeled 4-cycles in $G_{n,d/n}$ has expectation $\E[C_4] = \frac{d^4}{n^4}\cdot n(n-1)(n-2)(n-3)$ and variance $\Var{C_4} = O(d^4)$.
    In $\calD_{pl}$ the expected number of labeled 4-cycles is 
    \begin{align*}
    \E_{pl}[C_4] &= \frac{d^4}{n^4}\cdot n(n-1)(n-2)(n-3)\cdot \left(\left(1 - \frac{k}{n}\right)^4 +  
    4\frac{k}{n}\left(1-\frac{k}{n}\right)^3 + 2\left(\frac{k}{n}\right)^2\left(1-\frac{k}{n}\right)^2\right)\\
    &= \frac{d^4}{n^4}\cdot n(n-1)(n-2)(n-3)\cdot \left(1 - O(k^2/n^2)\right)\\
    &= \E[C_4] - O(d^4k^2/n^2) = \E[C_4] - O(d^2)
    \end{align*}
    Since this is less than $\E[C_4]$ by a factor on the order of $\sqrt{\Var{C_4}}$,
    counting 4-cycles succeeds with constant probability.
\end{proof}
\begin{remark}
    To beat distinguishers of this type, it may be possible to construct the 
    planted distribution from a sparse quasirandom graph (in the sense that all small subgraph counts match $G_{n,p}$ to leading order) 
    which has an independent set size of $\Omega(n/\sqrt{d})$.
    In the dense setting, the theory of quasirandom graphs states that if the planted distribution and $G_{n,1/2}$ match the subgraph count of $C_4$, this is sufficient for all subgraph counts to match;
    in the sparse setting this is no longer true and the situation is more complicated~\cite{CG02}.
\end{remark}

\begin{remark}
    Coja-Oghlan and Efthymiou \cite{COE15} show that a slight modification of this planted distribution (correct the expected number of edges to be $p\binom{n}{2}$) is indistinguishable from the random distribution
    provided $k$ is slightly smaller than the size of the maximum independent set in $G_{n,p}$.
    This is not useful for pseudocalibration because we are trying to plant an independent set with larger-than-expected size.
\end{remark}

The Fourier coefficients of the planted distribution are:
\begin{lemma}
    Let $x_T(S)$ be the indicator function for $T$ being in the planted solution i.e. $T \subseteq S$. Then, for all $T \subseteq [n]$ and $\al \subseteq \binom{[n]}{2}$,
    \begin{align*}
        \E_{(S,\tilde{G}) \sim \calD_{pl}}[x_T(S)\cdot \chi_\alpha(\tilde{G})] = 
    \left(\frac{k}{n}\right)^{|V(\alpha)\cup T|}\left(\frac{p}{1-p}\right)^{\frac{|E(\alpha)|}{2}}
\end{align*}
\end{lemma}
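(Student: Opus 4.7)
The plan is to compute the expectation by first conditioning on the random subset $S$, and then taking expectation over $S$. Given $S$, the edges $\tilde{G}_e$ for distinct $e$ are mutually independent, so
\[
\E[\chi_\alpha(\tilde G) \mid S] \;=\; \prod_{e \in E(\alpha)} \E[\chi(\tilde G_e)\mid S] ,
\]
and the task reduces to analyzing each factor in this product.

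Next, I would split the edges of $\alpha$ into two classes according to whether both endpoints lie in $S$. For an edge $e$ with both endpoints in $S$ the construction sets $\tilde G_e = 0$ deterministically, so by definition of $\chi$,
\[
\E[\chi(\tilde G_e)\mid S] \;=\; \chi(0) \;=\; \sqrt{\tfrac{p}{1-p}}.
\]
For an edge $e$ with at least one endpoint outside $S$, we have $\tilde G_e = G_e \sim \Bernoulli(p)$, hence
\[
\E[\chi(\tilde G_e)\mid S] \;=\; (1-p)\sqrt{\tfrac{p}{1-p}} \;+\; p\cdot\bigl(-\sqrt{\tfrac{1-p}{p}}\bigr) \;=\; 0.
\]
Therefore the whole product vanishes unless \emph{every} edge of $\alpha$ lies inside $S$, equivalently, $V(\alpha)\subseteq S$. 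Combining the two cases,
\[
\E[\chi_\alpha(\tilde G)\mid S] \;=\; \indset{}{\![V(\alpha)\subseteq S]}\cdot \bigl(\tfrac{p}{1-p}\bigr)^{|E(\alpha)|/2}.
\]

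Finally, I would multiply by $x_T(S) = \mathbf{1}[T\subseteq S]$ and take expectation over $S$. Since $\{T\subseteq S\}\cap\{V(\alpha)\subseteq S\} = \{T\cup V(\alpha)\subseteq S\}$ and each vertex is independently included in $S$ with probability $k/n$,
\[
\Pr[T\cup V(\alpha)\subseteq S] \;=\; \bigl(\tfrac{k}{n}\bigr)^{|V(\alpha)\cup T|},
\]
which yields the claimed identity. There is no real obstacle here: the computation is a direct application of conditional independence together with the vanishing of $\E[\chi(G_e)]$ under the $p$-biased measure; the only thing to be mildly careful about is interpreting $V(\alpha)$ as the set of endpoints of the edges in $\alpha$ (so that edges entirely inside $S$ is the same as $V(\alpha)\subseteq S$) and noting that this is consistent with how the identity will be applied later to shapes whose isolated vertices are absorbed into the $T$ factor.
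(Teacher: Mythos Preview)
Your proof is correct and follows essentially the same approach as the paper: condition on $S$, observe that any edge with an endpoint outside $S$ has mean-zero Fourier character (so the conditional expectation vanishes unless $V(\alpha)\cup T\subseteq S$), and then use independence of vertex inclusions to get the $(k/n)^{|V(\alpha)\cup T|}$ factor. Your write-up is, if anything, slightly more explicit in separating the two edge cases.
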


\begin{proof}
First observe that if any vertex of $V(\al) \cup T$ is outside $S$, then the expectation is $0$. This is because either $T$ is outside $S$, in which case $x_T = 0$, or a vertex of $\al$ is outside $S$, in which case the expectation of any edge incident on this vertex is $0$ so the entire expectation is $0$ using independence. Now, each vertex of $V(\al) \cup T$ is in $S$ independently with probability $\left(\frac{k}{n}\right)^{|V(\alpha)\cup T|}$. Conditioned on this event happening, the character is simply $\chi_{\al}(0) = \left(\frac{p}{1-p}\right)^{\frac{|E(\alpha)|}{2}}$. Putting them together gives the result.
\end{proof}

This planted distribution motivates the following incorrect definition of the pseudo-expectation operator.
\begin{definition}[Incorrect definition of $\pE$]
For $S \subseteq [n], \abs{S} \leq \dsos$,
\[ \pE[x^S] \defeq \sum_{\substack{\alpha \subseteq \binom{[n]}{2}:\\\abs{\alpha} \leq D_V}} \left(\frac{k}{n}\right)^{\abs{V(\al) \cup S}} \left(\frac{p}{1-p}\right)^{\frac{\abs{E(\al)}}{2}}\chi_\al(G)\]
\end{definition}
For this operator, $\pE[1] = 1 + \Omega(1)$. More generally, tail bounds
of graph matrix sums are not small, which ruins our analysis technique.

\subsection{Salvaging the doomed}

We will now discuss our novel truncation heuristic. The next paragraphs are for discussion and the technical proof resumes at \cref{def:moment-matrix}. Letting $\calS$ be the set of shapes $\alpha$ that contribute to the moment $\pE[x^I]$, the ``connected truncation'' restricts $\calS$ to shapes $\al$ such that all vertices in $\al$ have a path to $I$ (or in the shape view, to $U_\al \cup V_\al$).

Why might this be a good idea? Consider the planted distribution. 
The only tests we know of to distinguish the random/planted distributions 
are counting the number of edges or counting occurrences of small subgraphs.
These tests cannot be implemented using only connected Fourier characters; 
shapes with disconnected middle vertices are needed to count small 
subgraphs.
For example, suppose we fix a particular vertex $v\in V(G)$,
and we consider the set of functions on $G$ which are allowed to depend
on $v$ but are otherwise symmetric in the vertices of $G$.
A basis for these functions can be made by taking shapes $\alpha$ such 
that $U_\al$ has a single vertex, $V_\al = \emptyset$, then
fixing $U_\al = v$ (i.e. take the vector entry in row $v$).
Let $\symfourier_v$ denote this set of functions. 
\begin{proposition}
    Let $T(G)$ be the triangle counting function on $G$. 
    Let $\conn(\symfourier_v)$ be the subset of $\symfourier_v$ such that
    all vertices have a path to $v$. Then $T(G) \not \in\linspan(\conn(\symfourier_v))$.
\begin{proof}
    $T(G)$ has a unique representation in $\symfourier_v$ which requires disconnected
    Fourier characters.
    For example, one component of the function is $T_{\overline{v}}(G) = $ number of triangles not containing $v$.
    This is three vertices $x,y,z$ outside of $U_\al$ with $1_{(x,y) \in E(G)}1_{(x,z) \in E(G)}1_{(y,z) \in E(G)}$.
    The edge indicator function is implemented by~\cref{lem:independent-set-indicator}.
    But there are no edges between $x,y,z$ and $U_\al = v$ in these shapes.
    
    It's not even possible to implement $T_v(G) = $ number of triangles containing $G$,
    as a required shape is two vertices $x,y$ outside of $U_\al$ connected with an edge.
\end{proof}
\end{proposition}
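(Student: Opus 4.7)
The plan is to identify an explicit basis element in the shape-indexed expansion of $\symfourier_v$ that (i) appears with nonzero coefficient in $T(G)$ and (ii) is disconnected from $v$, and then conclude by uniqueness of the expansion.

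First, I would verify that the shapes $\alpha$ with $U_\alpha = \{u\}$ and $V_\alpha = \emptyset$, each evaluated at $u = v$, form a linear basis of $\symfourier_v$. Linear independence is inherited from the orthogonality of the $p$-biased Fourier characters $\{\chi_H(G)\}_{H \subseteq \binom{[n]}{2}}$, combined with the observation that distinct such shapes correspond to disjoint orbits of Fourier characters under the symmetric group $S_{[n]\setminus\{v\}}$. Spanning follows because any function in $\symfourier_v$ admits a Fourier expansion which, after averaging over $S_{[n]\setminus\{v\}}$, becomes a sum of shape-basis elements.

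Next, I would expand $T(G) = \sum_{\{a,b,c\}} G_{ab}G_{ac}G_{bc}$ explicitly by substituting $G_e = p + \sqrt{p(1-p)}\,\chi_e$ (from the definition of $\chi$). From each triple $\{x,y,z\} \subseteq [n]\setminus\{v\}$, taking the non-constant part of every edge yields a contribution $(p(1-p))^{3/2}\chi_{xy}\chi_{xz}\chi_{yz}$. Summing and symmetrizing, this produces a nonzero multiple of the basis element associated with the shape $\alpha^\star$ defined by $V(\alpha^\star) = \{v,x,y,z\}$, $U_{\alpha^\star} = \{v\}$, $V_{\alpha^\star} = \emptyset$, and $E(\alpha^\star)$ equal to the triangle on $\{x,y,z\}$. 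In $\alpha^\star$ the vertices $x,y,z$ have no path to $U_{\alpha^\star}=\{v\}$, so $\alpha^\star \notin \conn(\symfourier_v)$. The other source of four-vertex contributions, triples of the form $\{v,x,y\}$, produce shapes on at most three vertices and thus cannot interfere with the coefficient of $\alpha^\star$.

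Because the shape-basis expansion of a function in $\symfourier_v$ is unique, and $\alpha^\star$'s coefficient in $T(G)$ is a nonzero multiple of $(p(1-p))^{3/2}$, the function $T(G)$ cannot lie in $\linspan(\conn(\symfourier_v))$. The main obstacle I anticipate is purely bookkeeping: verifying carefully that the proposed shape collection is both linearly independent and spanning inside $\symfourier_v$, which amounts to checking that the $S_{[n]\setminus\{v\}}$-orbits of Fourier characters $\chi_H$ over graphs $H$ rooted at $v$ are pairwise disjoint and jointly exhaust the $S_{[n]\setminus\{v\}}$-symmetric subspace of $L^2(\{0,1\}^{\binom{[n]}{2}})$.
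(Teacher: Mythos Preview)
Your proposal is correct and follows essentially the same approach as the paper: both arguments expand $T(G)$ in the shape basis of $\symfourier_v$, exhibit a disconnected shape carrying nonzero coefficient (the triangle on three vertices $x,y,z$ with $v$ isolated), and invoke uniqueness of the expansion. One minor point: with the paper's sign convention $\chi(1)=-\sqrt{(1-p)/p}$, the identity is $G_e = p - \sqrt{p(1-p)}\,\chi_e$ rather than $+$, so the coefficient on $\alpha^\star$ is $-(p(1-p))^{3/2}$; this is immaterial to the argument. The paper additionally remarks that even the restricted count $T_v(G)$ of triangles through $v$ already forces a disconnected shape (the single edge $\{x,y\}$ disconnected from $v$, arising from the $\chi_{xy}$ term in the expansion of $1_{vx}1_{vy}1_{xy}$), which is a slightly sharper observation but not needed for the statement as written.
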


Despite the truncation above, our pseudocalibration 
operator is not a ``local function" of the graph.
Our $\pE[x^S]$ can depend on vertices that are far away from $S$, but in an attenuated
way. The graph matrix for $\alpha$ is a sum of all ways of overlaying the vertices of $\al$
onto $G$. The edges do not need to be overlaid. If an edge ``misses'' in $G$,
then we can use this edge to get far away from $S$, but we take a decay factor of $\chi_e(0) = \sqrt{\frac{p}{1-p}} = \sqrt{\frac{d}{n-d}}$.

The ``local function'' property of $\pE[x^S]$ is also a connected truncation, but
it is a connected truncation in a different basis.
The basis is the 0/1 basis of $1_H(G)$ for $H \subseteq \binom{n}{2}$.
A reasonable definition of graph matrices in this basis is
\[M_\al = \sum_{\text{injective }\sigma : V(\al) \to [n]} 1_{\sigma(E(\al))}(G) \]
which sums all ways to embed $\al$ into $G$.
$\pE[x^S]$ is a local function if and only if in this basis it is a sum of shapes $\al$ satisfying the (same) condition that all vertices are connected to $S = U_\al \cup V_\al$.

For sparse graphs, the two bases are somewhat heuristically interchangeable since:
\[ \frac{1}{p}1_{e}(G) = 1 - \sqrt{\frac{1-p}{p}}\chi_e(G) \approx - \sqrt{\frac{1-p}{p}}\chi_e(G).\]

Comparing the 0/1 basis and the Fourier basis, the 0/1 basis expresses 
combinatorial properties such as subgraph counts more nicely, while spectral analysis is only feasible in the Fourier basis.
In the proof (see \cref{def:augmented-ribbon}), we will augment ribbons so that they may also contain 0/1 indicators, and this flexibility helps us overcome both the spectral and
combinatorial difficulties in the analysis.

Formally we define the candidate moment matrix as:
\begin{definition}[Moment matrix]
\label{def:moment-matrix}
\begin{align*}
\Lambda &\defeq \displaystyle\sum_{\alpha \in \calS} \left(\frac{k}{n}\right)^{\abs{V(\alpha)}} \cdot \left(\frac{p}{1-p}\right)^{\frac{\abs{E(\alpha)}}{2}} \frac{M_\alpha}{\abs{\Aut(\al)}}.
\end{align*}
where $\calS$ is the set of proper shapes such that\begin{enumerate}
    \item $\abs{U_\alpha}, \abs{V_\alpha} \leq \dsos$.
    \item $\abs{V(\alpha)} \leq D_V$ where $D_V = \truncparam$ for a sufficiently large constant $C$.
    \item Every vertex in $\alpha$ has a path to $U_\alpha \cup V_\alpha$.
\end{enumerate}
\end{definition}
We refer to $\left(\frac{k}{n}\right)^{\abs{V(\alpha)}}$ as the ``vertex
decay factor'' and $\left(\frac{p}{1-p}\right)^{\frac{\abs{E(\alpha)}}{2}}$ as
the ``edge decay factor''.
The candidate moment matrix is the principal submatrix indexed by $\binom{[n]}{\leq \dsos/2}$.
The non-PSDness properties of $\Lam$ are easy to verify:



\begin{lemma}
    $\Lam$ is SoS-symmetric.
\end{lemma}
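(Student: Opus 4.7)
The plan is to show that $\Lambda[I,J]$ depends on $(I,J)$ only through the union $I \cup J$, whenever $|I|,|J| \leq \dsos/2$. This is the SoS-symmetry condition, since the Boolean relations $x_v^2 = x_v$ make $x^I x^J$ and $x^{I'}x^{J'}$ equal as polynomials exactly when $I \cup J = I' \cup J'$ as sets.

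First I would expand $\Lambda[I,J]$ as a sum over ribbons by applying \cref{lem:injective-graph-matrices}, which gives $\frac{M_\alpha}{|\Aut(\alpha)|} = \sum_{R \text{ of shape } \alpha} M_R$. Exchanging the sums yields
\[
\Lambda[I,J] \;=\; \sum_{R \in \calR(I,J)} c_R \,\chi_{E(R)}(G),
\]
where $\calR(I,J)$ is the set of ribbons $R$ with $A_R = I$, $B_R = J$, and $\text{shape}(R) \in \calS$, and $c_R \defeq (k/n)^{|V(R)|}(p/(1-p))^{|E(R)|/2}$. The crucial point here is that $c_R$ depends on $R$ only through $|V(R)|$ and $|E(R)|$, and not on how $V(R)$ is partitioned into $A_R$, $B_R$, and $V(R) \setminus (A_R \cup B_R)$.

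Next I would construct a bijection $\Phi : \calR(I,J) \to \calR(I',J')$ whenever $I \cup J = I' \cup J'$. Given $R \in \calR(I,J)$, define $\Phi(R)$ to have the same vertex set $V(\Phi(R)) = V(R)$, the same multi-edge set $E(\Phi(R)) = E(R)$, but with $A_{\Phi(R)} = I'$ and $B_{\Phi(R)} = J'$. This is well-defined because $A_{\Phi(R)} \cup B_{\Phi(R)} = I' \cup J' = I \cup J = A_R \cup B_R \subseteq V(R)$. To check $\text{shape}(\Phi(R)) \in \calS$, I verify the three conditions from \cref{def:moment-matrix}: (i) $|U_{\text{shape}(\Phi(R))}| = |I'| \leq \dsos/2 \leq \dsos$ and similarly for $V_{\text{shape}(\Phi(R))}$, since $I',J'$ index entries of the principal submatrix; (ii) $|V(\Phi(R))| = |V(R)| \leq D_V$; and (iii) the underlying multigraph is unchanged and the union $A_{\Phi(R)} \cup B_{\Phi(R)} = A_R \cup B_R$ is the same set of vertex labels, so the set of vertices with a path to the boundary is identical, preserving the connected-truncation condition. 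The map $\Phi$ is evidently a bijection, with inverse swapping $(I',J')$ back to $(I,J)$.

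Finally, under the bijection $\chi_{E(R)}(G) = \chi_{E(\Phi(R))}(G)$ and $c_R = c_{\Phi(R)}$ since $|V|$ and $|E|$ are preserved, so matching terms sum to give $\Lambda[I,J] = \Lambda[I',J']$. There is no real obstacle: the coefficients $(k/n)^{|V(\alpha)|}(p/(1-p))^{|E(\alpha)|/2}$ were chosen to depend symmetrically on the underlying graph structure rather than on the $U_\alpha/V_\alpha$ split, and the only non-trivial membership condition in $\calS$, namely the connectivity requirement, references $U_\alpha \cup V_\alpha$ as a single set. Hence the design of $\calS$ and the pseudocalibrated coefficients guarantees SoS-symmetry essentially by construction.
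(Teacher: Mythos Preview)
Your proof is correct and takes essentially the same approach as the paper: both observe that the coefficient $(k/n)^{|V(\alpha)|}(p/(1-p))^{|E(\alpha)|/2}$ and the defining conditions of $\calS$ (in particular the connectivity-to-$U_\alpha \cup V_\alpha$ condition) depend only on the union $U_\alpha \cup V_\alpha$ and not on how it is split. The paper compresses this into a single sentence, whereas you spell out the ribbon-level bijection explicitly, which is a perfectly fine elaboration of the same idea.
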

\begin{proof}
    This is equivalent to the fact that the coefficient of $\alpha$ 
    does not depend on how $U_\al \cup V_\al$ is
    partitioned into $U_\al$ and $V_\al$ for $\abs{U_\al \cup V_\al} \leq \dsos$.
\end{proof}

\begin{lemma}
$\pE[1] = 1$.
\end{lemma}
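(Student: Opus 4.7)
The plan is to read off $\pE[1]$ directly as a single entry of the moment matrix and use the connected truncation to cut the sum down to one term. Concretely, $\pE[1]$ equals $\Lambda[\emptyset,\emptyset]$, so I only need to identify which shapes $\alpha \in \calS$ have a nonzero contribution in that entry.

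A ribbon $R$ contributes a nonzero value to the $(I,J)$-entry only when $A_R = I$ and $B_R = J$, and consequently a shape $\alpha$ contributes to $\Lambda[\emptyset,\emptyset]$ only through ribbons with $A_R = B_R = \emptyset$. Since any embedding $\varphi : V(\alpha) \to [n]$ sends $U_\alpha$ to $A_{\varphi(\alpha)}$ and $V_\alpha$ to $B_{\varphi(\alpha)}$, this forces $U_\alpha = V_\alpha = \emptyset$. At this point I invoke the third condition in the definition of $\calS$ (the connected truncation): every vertex of $\alpha$ must have a path to $U_\alpha \cup V_\alpha = \emptyset$. Since no vertex can have a path to the empty set, the only shape satisfying all the constraints is the trivial empty shape $\alpha_\emptyset$ with $V(\alpha_\emptyset) = E(\alpha_\emptyset) = \emptyset$.

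For $\alpha_\emptyset$, the associated matrix $M_{\alpha_\emptyset}$ is the identity on the empty index, so $M_{\alpha_\emptyset}[\emptyset,\emptyset] = 1$, and $|V(\alpha_\emptyset)| = |E(\alpha_\emptyset)| = 0$, $|\Aut(\alpha_\emptyset)| = 1$. Hence the single surviving term contributes
\[
\left(\frac{k}{n}\right)^{0}\left(\frac{p}{1-p}\right)^{0} \cdot \frac{1}{1} \cdot 1 \;=\; 1,
\]
and $\pE[1] = 1$ as claimed. The only possible obstacle would be an edge case in the bookkeeping — e.g., shapes with isolated middle vertices that are technically permitted by the first two items of $\calS$ — but the connectedness requirement kills them immediately, which is precisely the reason the truncation was designed to act on vertex sets rather than on edges alone. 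So the verification is essentially a one-line consequence of the truncation.
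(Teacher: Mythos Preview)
Your proof is correct and follows essentially the same approach as the paper: both observe that for $U_\alpha = V_\alpha = \emptyset$ the connected truncation forces the only surviving shape to be the empty shape, whose contribution is $1$. Your version simply spells out the bookkeeping (reading off $\Lambda[\emptyset,\emptyset]$, checking the coefficients and automorphism factor) more explicitly than the paper's one-line argument.
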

\begin{proof}
    For $U_\al = V_\al = \emptyset$, the only shape in which all vertices
    are connected to $U_\al\cup V_\al$ is the empty shape. Therefore $\pE[1] = 1$.
\end{proof}

\begin{lemma}
$\pE$ satisfies the feasibility constraints.
\end{lemma}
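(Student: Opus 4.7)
The two constraints to verify are $x_v^2 = x_v$ for every $v \in V$ and $x_u x_v = 0$ for every edge $(u,v) \in E(G)$. The first is immediate from the construction: the moment matrix $\Lambda$ is indexed by subsets of $[n]$ rather than multisets, so $\pE[x^T]$ for a multiset $T$ is evaluated by reducing $T$ to its underlying set, which is exactly the rewriting rule $x_v^2 = x_v$. Hence $\pE[f \cdot (x_v^2 - x_v)] = 0$ holds identically for every $f$ of bounded degree. All of the work lies in establishing the edge constraint.

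For each edge $(u,v) \in E(G)$ and each subset $T \supseteq \{u,v\}$ of size at most $\dsos$, writing $T = I \sqcup J$, I would show $\Lambda[I, J] = 0$. Using \cref{lem:injective-graph-matrices}, the $1/\abs{\Aut(\alpha)}$ factor in the definition of $\Lambda$ cancels the automorphism count in $M_\alpha$, and the entry can be rewritten as a sum over ribbons:
\[
\Lambda[I,J] = \sum_{\substack{R \,:\, A_R = I,\, B_R = J \\ \mathrm{shape}(R) \in \calS}} \left(\frac{k}{n}\right)^{\abs{V(R)}} \left(\frac{p}{1-p}\right)^{\abs{E(R)}/2} \chi_{E(R)}(G).
\]
The plan is to define a fixed-point-free involution on this ribbon index set that toggles the presence of the edge $(u,v)$, and to show that paired contributions cancel exactly.

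Concretely, each ribbon $R$ with $(u,v) \notin E(R)$ is paired with the ribbon $R^+$ obtained by adjoining the edge $(u,v)$; since every ribbon in the sum is proper, this is a bijection onto the ribbons in the sum having $(u,v) \in E(R)$. The key technical check is that this involution preserves membership in $\calS$: it leaves $V(R)$, $A_R$, $B_R$ unchanged, keeps shapes proper (edge multiplicity remains in $\{0,1\}$), and preserves the connectivity condition, because $u$ and $v$ already lie in $U_\alpha \cup V_\alpha$---any path from $w \in W_\alpha$ to $U_\alpha \cup V_\alpha$ that traverses the edge $(u,v)$ can be truncated at whichever of $u, v$ it reaches first, giving an alternative path that avoids the edge. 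The cancellation is then direct: adjoining the edge multiplies the scalar coefficient by $\sqrt{p/(1-p)}$, while $\chi_{(u,v)}(G) = -\sqrt{(1-p)/p}$ on graphs with $(u,v) \in E(G)$, producing a net factor of $-1$. Summing over the involution yields $\Lambda[I,J] = 0$. The only nontrivial step is verifying that the connected-truncation defining $\calS$ is preserved under the involution, and this is precisely where the fact that $u, v \in U_\alpha \cup V_\alpha$ (rather than in $W_\alpha$) is essential.
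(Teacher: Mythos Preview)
Your proof is correct and takes essentially the same approach as the paper: both observe that toggling an edge with both endpoints in $A_R \cup B_R$ preserves membership in $\calS$ (in particular the connected truncation, for exactly the reason you give), and that the coefficient-times-character factor $\sqrt{p/(1-p)}\cdot \chi_{(u,v)}(G)$ equals $-1$ when $(u,v)\in E(G)$, forcing cancellation. The only cosmetic difference is that the paper toggles all $\binom{|S|}{2}$ potential edges inside $S$ simultaneously and packages the resulting sum as the independent-set indicator of \cref{lem:independent-set-indicator}, whereas you pair ribbons by the single edge $(u,v)$; your version is the one-edge special case of theirs.
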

\begin{proof}
We must show that $\pE[x^S] = 0$ whenever $S$ is not an independent set in $G$.
Observe that if ribbon $R$ contributes to $\pE[x^S]$, then if we modify the set
of edges inside $S$, the resulting ribbon still contributes to $\pE[x^S]$.
In fact, each edge also comes with a factor of $\sqrt{\frac{p}{1-p}}$.
By \cref{lem:independent-set-indicator}, we can group these ribbons into an indicator function $\frac{1}{(1-p)^{\binom{|S|}{2}}}{1}_{\overline{E(S)}}$.
That is, $\pE[x^S] = 0$ if $S$ has an edge.
\end{proof}

\begin{lemma}
    With probability at least $1-o_n(1)$, $\pE$ has objective value $\pE[\sum x_i] \geq (1-o(1))k$.
\end{lemma}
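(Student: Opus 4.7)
The approach is to apply Chebyshev's inequality to $Y := \pE[\sum_{v \in V} x_v]$ viewed as a random variable over $G \sim G_{n,p}$. I would show $\E[Y] = k$ exactly and $\Var(Y) = o(k^2)$; Chebyshev then yields $\Pr[Y \leq (1-o(1))k] = o(1)$, giving the high-probability lower bound.

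\textbf{Expected value.} Unfolding the moment matrix and swapping the sum over $v$ with the sum over $\phi$,
\[
Y \;=\; \sum_{\alpha} c_\alpha \sum_{\phi: V(\alpha) \to [n] \text{ injective}} \chi_{\phi(E(\alpha))}(G),
\]
where the outer sum ranges over shapes $\alpha \in \calS$ with $U_\alpha = \emptyset$ and $V_\alpha = \{u\}$, and $c_\alpha = (k/n)^{\abs{V(\alpha)}}(p/(1-p))^{\abs{E(\alpha)}/2}/\abs{\Aut(\alpha)}$. Since every nonempty $p$-biased Fourier character has mean zero, only shapes with $E(\alpha) = \emptyset$ survive in expectation. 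By the connected-truncation condition, such a shape must satisfy $V(\alpha) = V_\alpha = \{u\}$, leaving only the trivial shape; its contribution is $c_{\text{triv}}\cdot n = (k/n)\cdot n = k$.

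\textbf{Variance.} Writing $Y - k = \sum_\alpha c_\alpha Y_\alpha$ over non-trivial shapes with $Y_\alpha := \sum_\phi \chi_{\phi(E(\alpha))}(G)$, orthonormality of the $p$-biased characters gives
\[
\Var(Y) \;=\; \sum_{\alpha,\beta} c_\alpha c_\beta \cdot N(\alpha,\beta), \qquad N(\alpha,\beta) := \abs{\{(\phi_1,\phi_2): \phi_1(E(\alpha)) = \phi_2(E(\beta))\}}.
\]
The key observation is that the connected-truncation condition combined with $\abs{V_\alpha}=1$ forces every non-trivial contributing shape to be a rooted-connected graph with $\abs{V(\alpha)} \geq 2$ and $\abs{E(\alpha)} \geq \abs{V(\alpha)} - 1$. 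This is exactly enough edge decay to cancel the embedding count $n^{\abs{V(\alpha)}}$: with $K := \dsos^{c_0}\log n$ so that $k \leq n/(d^{1/2}K)$, one verifies
\[
c_\alpha^2 \cdot N(\alpha,\alpha) \;\leq\; \frac{n^{\abs{V(\alpha)}-\abs{E(\alpha)}}\,d^{\abs{E(\alpha)}-\abs{V(\alpha)}}}{K^{2\abs{V(\alpha)}}\,\abs{\Aut(\alpha)}}.
\]
For trees ($\abs{E} = \abs{V} - 1$) this is $n/(dK^{2\abs{V}}\abs{\Aut})$; denser shapes pick up extra powers of $d/n$ which, combined with $d \leq n^{1/2}$, keep them subdominant. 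After grouping cross-shape contributions by the isomorphism class of the underlying unrooted graph (the number of compatible shapes is at most $O(D_V)$), summing via Cayley-type counts $\sum_{\alpha \text{ tree}, \abs{V(\alpha)}=v} 1/\abs{\Aut(\alpha)} \leq e^v$ yields $\Var(Y) = O(\poly(D_V)\cdot n/(dK^4))$. Since $k^2 = \Omega(n^2/(dK^2))$, we get $\Var(Y)/k^2 = O(\poly(D_V)/(nK^2)) = o(1)$ under the assumptions on $d$ and $\dsos$; Chebyshev with any slowly vanishing threshold $\eta \to 0$ concludes.

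\textbf{Main obstacle.} The delicate part is the variance bound: I must carefully execute the combinatorial summation over all shapes in $\calS$ while tracking the $\abs{\Aut(\alpha)}$ weights and the decay factors. The connected truncation is indispensable here---without it, disconnected shapes with $\abs{V} \gg \abs{E}$ would produce variance terms whose embedding count $n^{\abs{V}}$ would dominate the vertex-decay factor, breaking the argument. Dense shapes ($\abs{E} \gg \abs{V}$) are controlled by the edge-decay factor $(d/n)^{\abs{E}}$ together with $d \leq n^{1/2}$; for $d$ close to $n$ (or when pushing to high SoS degree) one would invoke the conditioning of \cref{sec:sparse_graph_matrices_and_conditioning} restricting to $G$ without small dense subgraphs. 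The cross-shape terms $\alpha \neq \beta$ require grouping by the isomorphism class of the underlying graph to match the single-shape bound up to a mild combinatorial loss.
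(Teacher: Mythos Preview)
Your proposal is correct and follows essentially the same approach as the paper: compute $\E[Y]=k$ by noting that only the trivial single-vertex shape survives, bound $\Var(Y)$ using orthonormality of the $p$-biased characters together with the connected-truncation fact that every nontrivial contributing shape satisfies $\abs{E(\alpha)}\geq\abs{V(\alpha)}-1$ (so trees dominate), and conclude via Chebyshev. The paper's version computes $\Var[\pE[x_i]]$ for a single $i$ and then passes to the sum, but the combinatorics (tree counts $\leq 4^v$, geometric decay in $v$ driven by $k\sqrt{d}/n \ll 1$) are the same as yours.
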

\begin{proof}
\begin{claim}\[
\E[\pE[x_i]]=\frac{k}{n}
\]
\end{claim}
\begin{proof}
The only shape that survives under expectation is the shape with one vertex, and it comes with coefficient $\frac{k}{n}$.
\end{proof}
\begin{claim}
\[ 
\Var{\pE[x_i]}\leq d^{-\Omega(\eps)}
\]
\cnote{update this proof to not use $d^\eps$ vertex decay}
\end{claim}
\begin{proof}
Let Count$(v,e)$ be the number of shapes with $|U_\al| = 1, V_\al = \emptyset$, $v$ vertices and $e$ edges,
\begin{align*}
    \Var{\pE[x_i]}&=\sum_{\al\neq \emptyset:\text{connected to $i$}} \left(\left(\frac{k}{n}\right)^{\abs{V(\al)}}\left( \frac{p}{1-p}\right)^{\abs{E(\al)}/2}\right)^2\\
    &= \sum_{v=2}^{D_V} \sum_{v-1\leq e\leq v^2} 
    \left(\frac{k}{n}\right)^{2v}\left(\frac{p}{1-p}\right)^{e}\cdot \text{Count}(v,e)\cdot n^{v}\\
    &\leq O(1)\sum_{v=2}^{D_V}\left(\frac{4k\sqrt{d}}{(1-p)n}\right)^{v}\\
    &\leq O\left(\frac{k\sqrt{d}}{n}\right)
\end{align*}
where the first inequality follows by observing the dominant term is tree-like and bounding the number of trees (up-to-isomorphism) with $v$ vertices by $2^{2v}$.
\end{proof}
Hence, \[ 
\Pr[\abs{\sum \pE[x_i]-k}\geq t]\leq \frac{\Var{\sum \pE[x_i]}}{t^2}=\frac{k\sqrt{d} }{t^2}\leq o_n(1)
\]
where the last inequality follows by picking $t=n^{\frac{1}{2}+\delta}=o(k)$ for $\delta>0$.
\end{proof}

What remains is to show $\Lam \psdgeq 0$. To do this it's helpful to renormalize the matrix entries by multiplying the degree-($k,l)$ block 
by a certain factor.
\begin{definition}[Shape coefficient]
For all shapes $\alpha$, let 
\[\lambda_{\alpha} \defeq \left(\frac{k}{n}\right)^{\abs{V(\alpha)} - \frac{\abs{U_\al} + \abs{V_\al}}{2}} \cdot \left(\frac{p}{1-p}\right)^{\frac{\abs{E(\alpha)}}{2}}.\]
\end{definition}
It suffices to show $\sum_{\alpha \in \calS} \lambda_\alpha \frac{M_\alpha}{\abs{\Aut(\al)}} \psdgeq 0$ because left and right multiplying by a rank-1 matrix and its transpose returns $\Lda$.

\begin{proposition}
    \label{lem:coefficient-factor}
    If $\alpha, \beta$ are composable shapes, then 
    $\lam_\alpha \lam_\beta = \lambda_{\alpha \circ \beta}.$
\end{proposition}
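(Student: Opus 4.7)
The plan is a direct computation unpacking the definitions of $\lambda_\alpha$ and of shape composition. Composition identifies $V_\alpha$ with $U_\beta$ (which requires $|V_\alpha| = |U_\beta|$), disjointly unions the rest of the vertex sets and the edge sets, and sets $U_{\alpha\circ\beta} = U_\alpha$, $V_{\alpha\circ\beta} = V_\beta$. Since $\lambda_\alpha$ is a product of a ``vertex factor'' $(k/n)^{|V(\alpha)|-(|U_\alpha|+|V_\alpha|)/2}$ and an ``edge factor'' $(p/(1-p))^{|E(\alpha)|/2}$, it suffices to check multiplicativity of each exponent separately.

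First I would handle the edge exponent. Because $E(\alpha\circ\beta)$ is the disjoint union $E(\alpha)\sqcup E(\beta)$, we have $|E(\alpha\circ\beta)| = |E(\alpha)|+|E(\beta)|$, so the edge factors clearly multiply: $(p/(1-p))^{|E(\alpha)|/2}\cdot(p/(1-p))^{|E(\beta)|/2} = (p/(1-p))^{|E(\alpha\circ\beta)|/2}$.

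Next I would handle the vertex exponent. By inclusion-exclusion with $V_\alpha$ identified with $U_\beta$,
\[
|V(\alpha\circ\beta)| \;=\; |V(\alpha)|+|V(\beta)|-|V_\alpha| \;=\; |V(\alpha)|+|V(\beta)|-|U_\beta|,
\]
using $|V_\alpha|=|U_\beta|$. Therefore
\[
|V(\alpha\circ\beta)|-\tfrac{|U_{\alpha\circ\beta}|+|V_{\alpha\circ\beta}|}{2} \;=\; |V(\alpha)|+|V(\beta)|-|V_\alpha|-\tfrac{|U_\alpha|+|V_\beta|}{2}.
\]
On the other hand, adding the two individual exponents and using $|V_\alpha|=|U_\beta|$ gives
\[
\Bigl(|V(\alpha)|-\tfrac{|U_\alpha|+|V_\alpha|}{2}\Bigr) + \Bigl(|V(\beta)|-\tfrac{|U_\beta|+|V_\beta|}{2}\Bigr) \;=\; |V(\alpha)|+|V(\beta)|-|V_\alpha|-\tfrac{|U_\alpha|+|V_\beta|}{2},
\]
which matches. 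Multiplying the edge and vertex factors together yields $\lambda_\alpha\lambda_\beta = \lambda_{\alpha\circ\beta}$.

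There is no real obstacle here; the only thing to be careful about is the bookkeeping convention that $V_\alpha$ and $U_\beta$ are identified (and so not double-counted) in $V(\alpha\circ\beta)$, and that $U_{\alpha\circ\beta}=U_\alpha$, $V_{\alpha\circ\beta}=V_\beta$ (the ``middle'' vertices of the composition are no longer boundary). With those conventions fixed, the identity is a one-line algebraic check and explains why $\lambda_\alpha$ is normalized with a half-count of boundary vertices in the first place: this is exactly the adjustment needed for multiplicativity under composition.
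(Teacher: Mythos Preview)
Your proof is correct and is exactly the intended direct computation; the paper states this proposition without proof, treating it as immediate from the definitions, and your write-up supplies precisely the bookkeeping that justifies it.
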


\section{Norm bounds and factoring}\label{sec:sparse_graph_matrices_and_conditioning}

As explained in \cref{sec:proof-overview}, existing norm bounds in the literature are not tight enough for sparse graph matrices. We first obtain tighter norm bounds for sparse graph matrices.

\subsection{Norm bounds}


In this section, we show the spectral norm bounds for $M_{\al}$ in terms of simple combinatorial factors involving $\al$. With only log factor loss the norm bounds hold with very high probability (all but probability $n^{-\Omega(\log n)}$).
This is too tight of a probabilistic bound since it allows for polynomially rare events such as small dense subgraphs. 
We will need to use conditioning to improve
the norm bound for shapes with a lot of edges.

\begin{definition}
    For a shape $\al$, define a vertex separator (or a separator) $S$ to be a subset of vertices such that there is no path from $U_{\al}$ to $V_{\al}$ in the graph $\al \setminus S$.
\end{definition}

Roughly, the norm bounds for a proper shape $\al$ are:
\[ \norm{M_\al} \leq \widetilde{O}\left(\max_{\text{vertex separator }S}\left\{ \sqrt{n}^{\abs{V(\al)} - \abs{S}} \sqrt{\frac{1-p}{p}}^{\abs{E(S)}}\right\}\right).\]

The maximizer of the above is called the \emph{sparse minimum vertex separator.}
The proof of this bound uses the trace method, which also underlies graph matrix
norm bounds in the dense case. 
We defer the proof
to \cref{sec: graph_matrix_norm_bounds}.

To get norm bounds for improper shapes, we linearize the shape and take the largest norm bound among its linearizations.

\begin{definition}
    For a linearization $\beta$ of a shape $\alpha$, $E_{phantom}(\beta)$ is the set (not multiset) of ``phantom edges'' of $\alpha$ which are not in $\beta$. 
\end{definition}

For an improper shape $\al$:
\[\norm{M_\al} \leq \widetilde{O}\left(\max_{\beta, S}\left\{
\sqrt{n}^{|V(\al)| - |S| + |I_\beta|} \left(\sqrt{\frac{1-p}{p}}\right)^{|E(\al)| - |E(\beta)| - 2|E_{phantom}(\beta) + |E(S)|}
\right\}\right) \]
where $\beta$ is a linearization of $\alpha$ and $S$ is a separator of $\beta$.

\subsection{Factoring ribbons and shapes}

Because on the fact that norm bounds depend on vertex separators, we will
need to do some combinatorics on vertex separators of shapes.
The essential ideas presented in this section have appeared in prior works such as \cite{BHKKMP16, AMP20, PR20}, but we redefine them for convenience and to set up the notation for the rest of the paper.

\begin{definition}[Composing ribbons]
    Two ribbons $R, S$ are composable if $B_R = A_S$.
    The composition $R \circ S$ is the (possibly improper) ribbon 
    $T = (V(R) \cup V(S), E(R) \sqcup E(S), A_R, B_S)$.
\end{definition}

\begin{fact}
    \label{fact:factor-ribbons}
    If $R, S$ are composable ribbons, then $M_{R \circ S} = M_R M_S$.
\end{fact}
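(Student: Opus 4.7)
The plan is to prove the identity by unfolding the definitions of $M_R$, $M_S$, and $M_{R\circ S}$ and checking that the unique nonzero entry of $M_R M_S$ coincides with the unique nonzero entry of $M_{R \circ S}$. Since $M_R$ has its single nonzero entry in row $A_R$ and column $B_R$, and $M_S$ has its single nonzero entry in row $A_S$ and column $B_S$, the product $(M_R M_S)[I,J] = \sum_K M_R[I,K]\,M_S[K,J]$ can only be nonzero when $I = A_R$, $J = B_S$, and the summation index $K$ satisfies simultaneously $K = B_R$ (from the first factor) and $K = A_S$ (from the second factor). The composability hypothesis $B_R = A_S$ is precisely what ensures these two constraints are compatible, so there is exactly one potentially nonzero term in the sum, at $K = B_R = A_S$.

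Next I would evaluate this single term: $M_R[A_R, B_R] \cdot M_S[A_S, B_S] = \chi_{E(R)}(G)\cdot \chi_{E(S)}(G)$. Using the multiplicativity of $p$-biased Fourier characters (in the form $\prod_{e \in E(R)} \chi(G_e) \cdot \prod_{e \in E(S)} \chi(G_e) = \prod_{e \in E(R) \sqcup E(S)} \chi(G_e)$, where the edge-multisets are combined via disjoint union), this product equals $\chi_{E(R)\sqcup E(S)}(G)$.

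Finally, I would compare with $M_{R\circ S}$. By the definition of composition, $R\circ S$ has $A_{R\circ S} = A_R$, $B_{R\circ S} = B_S$, and $E(R\circ S) = E(R)\sqcup E(S)$ as a multiset. Hence $M_{R\circ S}$ has exactly one nonzero entry, located at row $A_R$ and column $B_S$, with value $\chi_{E(R)\sqcup E(S)}(G)$. This matches what was computed for $M_R M_S$ both in location and in value, proving the identity. The main (and only) subtlety is the multiset convention in $E(R)\sqcup E(S)$: this is exactly what is needed to make the character product align with $\chi_{E(R\circ S)}$, and it is also the reason $R \circ S$ may be improper even when $R$ and $S$ are both proper (if some edge appears in both $E(R)$ and $E(S)$, its multiplicity doubles). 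There are no other obstacles; the proof is a routine verification.
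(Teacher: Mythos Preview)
Your proof is correct and is exactly the routine verification one expects; the paper simply states this as a fact without proof, and your unfolding of the definitions (single nonzero entry, composability giving $B_R = A_S$, multiplicativity of $\chi$ over the multiset $E(R)\sqcup E(S)$) is the intended argument.
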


\begin{definition}[Composing shapes]
    Two shapes $\alpha, \beta$ are composable if $\abs{V_\alpha} = \abs{U_\beta}$.
    Given a bijection $\varphi: V_\alpha \to U_\beta$, the composition $\alpha \circ_\varphi \beta$ is the (possibly improper) shape $\zeta$ whose
    multigraph is the result of gluing together the graphs for $\alpha, \beta$
    along $V_\alpha$ and $U_\beta$ using $\varphi$. Set $U_\zeta = U_\alpha$ and $V_\zeta = V_\beta$.
\end{definition}

If we write $\al \circ \beta$ then we will implicitly assume that $\al$ and $\beta$ are composable and the bijection $\phi$ is given.
We would like to say that the graph matrix $M_{\alpha\circ\beta}$ also factors as $M_{\al} M_{\beta}$, but this
is not quite true. There are \textit{intersection terms}.

\begin{definition}[Intersection pattern]\label{def:intersection-pattern}
    For composable shapes $\alpha_1, \alpha_2, \dots, \alpha_k$, let $\alpha = \alpha_1 \circ \alpha_2 \circ \cdots \circ \alpha_k$.
    An intersection pattern $P$
    is a partition of $V(\alpha)$ such that for all $i$ and $v, w \in V(\alpha_i)$, $v$ and $w$ are not in the same block of the partition.
    \footnote{The intersection pattern also specifies the bijections $\varphi$ for composing the shapes $\alpha_1, \dots, \al_k$.}
    We say that a vertex ``intersects'' if its block has size at least 2 and let $\Int(P)$ denote the set of intersecting vertices.
    
    Let $\calP_{\alpha_1, \alpha_2, \dots, \alpha_k}$ be the set of intersection
    patterns between $\alpha_1, \alpha_2, \dots, \alpha_k$.
\end{definition}

\begin{definition}[Intersection shape]
    For composable shapes $\alpha_1, \alpha_2, \dots, \alpha_k$ and an intersection pattern
    $P~\in~\calP_{\alpha_1, \alpha_2, \dots, \alpha_k}$, let $\alpha_P = \alpha_1 \circ \alpha_2 \circ \cdots \circ \al_k$ then
    identify all vertices in blocks of $P$, i.e. contract them into a single super vertex. Keep all edges (and hence $\alpha_P$ may be improper).
\end{definition}

Composable ribbons $R_1, \dots, R_k$ with shapes $\al_1, \dots, \al_k$ induce 
an intersection pattern ${P \in \calP_{\al_1, \dots, \al_k}}$ based on which
vertices are equal. When multiplying graph matrices, by casing on which
vertices are equal we have:

\begin{proposition}
    For composable shapes $\alpha_1, \alpha_2, \dots, \alpha_k$,
    \[M_{\alpha_1} \cdots M_{\alpha_k} =  \sum_{P \in \calP_{\al_1, \dots, \al_k}}
    M_{\alpha_P}.\]
\end{proposition}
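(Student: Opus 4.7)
The plan is to expand each $M_{\alpha_i}$ using \cref{lem:injective-graph-matrices} as a sum over injective embeddings of the shape's vertex set into $[n]$, then use the multiplicativity of ribbon matrices (\cref{fact:factor-ribbons}) to combine the products into single ribbon matrices, and finally regroup the resulting terms according to their intersection pattern.

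More concretely, I would start by writing
\[
M_{\alpha_1} \cdots M_{\alpha_k} \;=\; \sum_{\phi_1,\dots,\phi_k} M_{\phi_1(\alpha_1)} \cdots M_{\phi_k(\alpha_k)},
\]
where each $\phi_i : V(\alpha_i) \to [n]$ is injective. Because $M_R M_S$ is nonzero only when $B_R = A_S$, in which case it equals $M_{R \circ S}$, the only surviving terms in the sum are those where consecutive embeddings agree on the boundary via the given bijections $\varphi_i: V_{\alpha_i} \to U_{\alpha_{i+1}}$ used in the composition. For such a surviving tuple $(\phi_1,\dots,\phi_k)$, the product collapses to $M_{R}$ where $R = \phi_1(\alpha_1) \circ \cdots \circ \phi_k(\alpha_k)$ is a single (possibly improper) ribbon.

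Next I would observe that such a tuple of embeddings is equivalent to a single map $\Phi$ from the glued vertex set $V(\alpha_1) \cup \cdots \cup V(\alpha_k)$ (with boundary vertices identified via the $\varphi_i$) into $[n]$, with the property that $\Phi$ is injective on each $V(\alpha_i)$ but not necessarily globally. Any such $\Phi$ determines an equivalence relation on $V(\alpha_1 \circ \cdots \circ \alpha_k)$, namely, which vertices of the union map to the same element of $[n]$; by the per-piece injectivity, this equivalence relation is exactly an element $P \in \calP_{\alpha_1,\dots,\alpha_k}$ as defined in \cref{def:intersection-pattern}. Conversely, every $P \in \calP_{\alpha_1,\dots,\alpha_k}$ together with an injective embedding of the quotient $V(\alpha_P)$ into $[n]$ yields such a $\Phi$.

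Partitioning the surviving tuples by their intersection pattern gives
\[
M_{\alpha_1} \cdots M_{\alpha_k} \;=\; \sum_{P \in \calP_{\alpha_1,\dots,\alpha_k}} \;\sum_{\text{injective } \psi : V(\alpha_P) \to [n]} M_{\psi(\alpha_P)} \;=\; \sum_{P \in \calP_{\alpha_1,\dots,\alpha_k}} M_{\alpha_P},
\]
using \cref{lem:injective-graph-matrices} in the last step applied to $\alpha_P$ (noting that $\alpha_P$ may be improper, but the identity in \cref{lem:injective-graph-matrices} holds regardless of properness). The only point requiring care is the bookkeeping in the middle step: one must verify that the ribbon $\phi_1(\alpha_1)\circ\cdots\circ\phi_k(\alpha_k)$ associated to $\Phi$ really coincides with $\psi(\alpha_P)$ for the corresponding injective $\psi$, which is immediate from tracking the identifications of vertices and the edge multisets preserved by contraction. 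I expect the main (modest) obstacle to be this bookkeeping: making the bijection between tuples $(\phi_1,\dots,\phi_k)$ respecting the boundary identifications and pairs $(P,\psi)$ unambiguous, and checking it handles the automorphism factors consistently on both sides.
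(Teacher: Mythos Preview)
Your proposal is correct and is essentially a careful unpacking of the paper's one-line justification ``by casing on which vertices are equal'': expand each $M_{\alpha_i}$ over injective embeddings, use \cref{fact:factor-ribbons} to collapse the product, and partition by the induced intersection pattern. Your closing worry about automorphism factors is unnecessary here, since both sides are written directly as sums over injective embeddings (the definition of $M_\alpha$), so no $\abs{\Aut(\cdot)}$ terms enter.
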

Note that different $P$ may give the same $\alpha_P$,
and hence the total coefficient on $M_{\alpha_P}$ for a given shape $\al_P$ is more complicated.




\begin{definition}[Minimum vertex separators]
    For a shape $\al$, a vertex separator $S$ is a minimum vertex separator (MVS) if it has the smallest possible size.
    MVS $S$ is the leftmost minimum vertex separator (LMVS) if
    $\al \setminus S$ cuts all paths from $U_\al$ to any other MVS.
    The rightmost minimum vertex separator (RMVS) likewise cuts paths from $V_\al$.
\end{definition}
    The LMVS and RMVS can be easily shown to be uniquely defined~\cite{BHKKMP16}.

\begin{definition}[Left shape]
    A shape $\sigma$ is a left shape if it is proper,
    the unique minimum
    vertex separator is $V_\sigma$, there are no edges
    with both endpoints in $V_\sigma$, and every vertex is connected to $U_{\sig}$.
\end{definition}

\begin{definition}[Middle shape]
    A shape $\tau$ is a middle shape if 
    $U_\tau$ is the leftmost minimum vertex
    separator of $\tau$, and $V_\tau$ is the rightmost minimum vertex
    separator of $\tau$.
    If $\tau$ is proper, we say it is a proper middle shape.
\end{definition}

\begin{definition}[Right shape]
    $\sigma'$ is a right shape if $\sigma'^\T$ is a left shape.
\end{definition}

We also extend the definition of (L/R)MVS, left, middle, and right to ribbons.
Every proper shape admits a canonical decomposition into left, right, and middle parts.

\begin{proposition}
    \label{lem:shape-decomposition}
    Every proper shape $\alpha$ has
    a unique decomposition $\alpha = \sigma \circ \tau \circ \sigma'^\T$, where $\sigma$ is a left
    shape, $\tau$ is a middle shape, and $\sigma'^\T$ is a right shape.
\end{proposition}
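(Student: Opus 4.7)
The plan is to construct the decomposition explicitly from the leftmost and rightmost minimum vertex separators of $\alpha$, then argue uniqueness by showing that in any valid decomposition the separator between $\sigma$ and $\tau$ must coincide with the LMVS of $\alpha$ (and similarly on the right).

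First, for existence, let $L$ be the LMVS and $R$ be the RMVS of $\alpha$ (these are uniquely defined by the previous remark). Let $V_L \subseteq V(\alpha)$ be the set of vertices that can be reached from $U_\alpha$ in $\alpha \setminus L$, together with $L$ itself, and symmetrically let $V_R$ consist of $R$ together with all vertices reachable from $V_\alpha$ in $\alpha \setminus R$. Define $\sigma$ on vertex set $V_L$ with $U_\sigma = U_\alpha$, $V_\sigma = L$, and all edges of $\alpha$ with both endpoints in $V_L$; define $\sigma'^\T$ analogously with $U_{\sigma'^\T} = R$, $V_{\sigma'^\T} = V_\alpha$; and let $\tau$ be the ``middle'' shape on vertex set $V(\alpha) \setminus ((V_L \setminus L) \cup (V_R \setminus R))$ with $U_\tau = L$, $V_\tau = R$, and the remaining edges. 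A bookkeeping check shows that $\alpha = \sigma \circ \tau \circ \sigma'^\T$ as shapes (the composition glues $V_\sigma = L = U_\tau$ and $V_\tau = R = U_{\sigma'^\T}$, and the edge multisets partition those of $\alpha$ because $\alpha$ is proper and $L$, $R$ separate the three pieces).

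Next I would verify the three shape properties. For $\sigma$: the set $V_\sigma = L$ is certainly a separator between $U_\sigma$ and $V_\sigma$ in $\sigma$ (trivially, since every $U_\sigma$-to-$V_\sigma$ path must end in $V_\sigma$), so $V_\sigma$ is an MVS of $\sigma$; if some other MVS $S'$ existed in $\sigma$, one could combine $S'$ with the portion of $L$ needed on the right side of $\alpha$ to produce a separator of $\alpha$ strictly to the left of $L$, contradicting leftmost-ness of $L$. This also rules out edges inside $V_\sigma$ (an edge inside $L$ could be used to show $L$ is not leftmost). By construction every vertex of $\sigma$ is reachable from $U_\sigma$ in $\alpha \setminus L$ or lies in $L$, giving connectivity to $U_\sigma$. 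A symmetric argument handles $\sigma'^\T$. For $\tau$: I would check that $U_\tau = L$ is the LMVS of $\tau$ and $V_\tau = R$ is the RMVS, again by transporting any counterexample in $\tau$ back to $\alpha$ via the decomposition to contradict the minimality or leftmost/rightmost property of $L$ and $R$ in $\alpha$.

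For uniqueness, suppose $\alpha = \tilde\sigma \circ \tilde\tau \circ \tilde{\sigma'}^\T$ is another decomposition. Then $V_{\tilde\sigma} = U_{\tilde\tau}$ is a vertex separator between $U_\alpha$ and $V_\alpha$ in $\alpha$; since $\tilde\sigma$ has $V_{\tilde\sigma}$ as its unique MVS and $\tilde\tau$ has $U_{\tilde\tau}$ as its LMVS, one deduces $V_{\tilde\sigma}$ must be an MVS of $\alpha$ of the same size as $L$. The leftmost-ness of $L$ forces $V_{\tilde\sigma} = L$, and then the condition that every vertex of $\tilde\sigma$ is connected to $U_{\tilde\sigma} = U_\alpha$ inside $\tilde\sigma$ forces $V(\tilde\sigma) = V_L$, so $\tilde\sigma = \sigma$. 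Symmetrically $\tilde{\sigma'} = \sigma'$, and then $\tilde\tau = \tau$ is determined. The main obstacle is the last step of existence: carefully verifying that $L$ and $R$ remain the LMVS and RMVS of the middle shape $\tau$ after stripping off $\sigma$ and $\sigma'^\T$, since a new minimum separator in $\tau$ could in principle appear that did not exist in $\alpha$; this is handled by lifting any such separator back through the decomposition and using the extremality of $L, R$ in $\alpha$.
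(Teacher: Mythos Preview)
Your overall strategy matches the paper's one-line sketch (which defers the details to prior work): define $\sigma$ from the vertices reachable from $U_\alpha$ without passing through the LMVS $L$, define $\sigma'^\T$ symmetrically from the RMVS $R$, and let $\tau$ be the remainder. The uniqueness argument via ``any valid $V_{\tilde\sigma}$ must be the LMVS of $\alpha$'' is also the standard route.

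There is, however, a concrete error in your edge assignment. You put ``all edges of $\alpha$ with both endpoints in $V_L$'' into $\sigma$, and then claim that no edge can have both endpoints in $L$ because ``an edge inside $L$ could be used to show $L$ is not leftmost.'' That claim is false. Take $U_\alpha=\{a_1,a_2,a_3\}$, $V_\alpha=\{v_1,v_2\}$, two middle vertices $b_1,b_2$, and edges $a_1b_1,\ a_2b_1,\ a_2b_2,\ a_3b_2,\ b_1v_1,\ b_2v_2,\ b_1b_2$. Every minimum separator has size $2$; no $2$-subset of $U_\alpha$ separates (some $a_i$ always reaches $V_\alpha$), so the size-$2$ separators are $\{b_1,b_2\},\{b_1,v_2\},\{v_1,b_2\},\{v_1,v_2\}$, and $\{b_1,b_2\}$ is the leftmost among them. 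It contains the edge $b_1b_2$. With your construction this edge lands in $\sigma$, so $\sigma$ has an edge inside $V_\sigma=L$ and fails the left-shape definition.

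The fix is simply to assign edges with both endpoints in $L$ (and, symmetrically, edges with both endpoints in $R$) to $\tau$ rather than to $\sigma$ or $\sigma'^\T$; the middle-shape definition imposes no restriction on edges inside $U_\tau$ or $V_\tau$, so $\tau$ can absorb them. With that change the rest of your verification (unique MVS of $\sigma$, connectivity, and the lifting arguments for $\tau$ and for uniqueness) goes through as you outlined.
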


The decomposition takes $\sigma$ to be the set of vertices reachable
from $U_\al$ via paths that do not pass through the LMVS, similarly for $\sigma'$ vis-{\`a}-vis the RMVS,
and then $\tau$ is the remainder.
This decomposition respects the connected truncation:

\begin{proposition}\label{prop:middle-shape-connectivity}
    If all vertices in $\al$ have a path to $U_\al \cup V_\al$, then
    decomposing $\al = \sigma \circ \tau \circ \sigma'^\T$, all vertices
    in $\tau$ have a path to both $U_\tau$ and $V_\tau$.
\begin{proof}
    It suffices to show that there is a path to $U_\tau \cup V_\tau$.
    In this case, say there is a path to vertex $u \in U_\tau$, then $u$
    must have a path to $V_\tau$. Otherwise, $U_\tau \setminus\{u\}$ would
    be a smaller vertex separator of $\tau$ than $U_\tau$, a contradiction
    to $\tau$ being a middle shape.
    
    Let $v \in V(\tau) \subseteq V(\al)$. By assumption there is a path from
    $v$ to $u \in U_\al \cup V_\al$; without loss of generality, $u \in U_\al$.
    Since $v$ is not in $\sigma$, which was constructed by taking all vertices
    reachable from $U_\al$ without passing through $U_\tau$, the path must pass through $U_\tau$.
\end{proof}
\end{proposition}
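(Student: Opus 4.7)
The plan is to split the claim into two sub-statements: (i) every $v \in V(\tau)$ has a path in $\tau$ to $U_\tau \cup V_\tau$; (ii) every vertex in $U_\tau$ has a path in $\tau$ to $V_\tau$, and symmetrically every vertex in $V_\tau$ has a path in $\tau$ to $U_\tau$. Together these suffice: from $v$, walk in $\tau$ to one of the two boundaries, and then use (ii) to extend to the other, obtaining paths in $\tau$ to both $U_\tau$ and $V_\tau$.

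For (i), I would start from the hypothesis: pick a path $P$ in $\alpha$ from $v$ to some $u \in U_\alpha \cup V_\alpha$, and assume without loss of generality that $u \in U_\alpha$. The canonical decomposition $\alpha = \sigma \circ \tau \circ \sigma'^\T$ is built so that $V(\sigma) \cap V(\tau) = V_\sigma = U_\tau$, and the vertices of $V(\sigma) \setminus U_\tau$ lie in the connected component of $U_\alpha$ in $\alpha \setminus U_\tau$. Since $v \notin V(\sigma) \setminus U_\tau$, the path $P$ must enter $V(\sigma)$ through $U_\tau$; truncating $P$ at the first time it hits $U_\tau$ yields a path in $\tau$ from $v$ to $U_\tau$. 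The case $u \in V_\alpha$ is symmetric and produces a path to $V_\tau$.

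For (ii), I would argue by contradiction using the defining property of a middle shape that $U_\tau$ is a minimum vertex separator of $\tau$ between $U_\tau$ and $V_\tau$. Suppose some $u \in U_\tau$ has no path in $\tau$ to $V_\tau$. Then $U_\tau \setminus \{u\}$ is itself a vertex separator of $\tau$: after deleting $U_\tau \setminus \{u\}$, every element of $U_\tau$ other than $u$ is gone, while $u$ itself is disconnected from $V_\tau$ by assumption, so no $U_\tau$--$V_\tau$ path survives. This contradicts $U_\tau$ being an MVS, so every $u \in U_\tau$ must have a path in $\tau$ to $V_\tau$; the symmetric argument handles $V_\tau$.

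The main subtlety I expect is bookkeeping about which ambient graph a path lives in: the hypothesis supplies paths in $\alpha$, whereas the conclusion demands paths in $\tau$. The essential input that makes the restriction work is the identity $V(\sigma) \cap V(\tau) = U_\tau$ (and its mirror for $\sigma'$), together with the construction of $\sigma$ as the set of vertices reachable from $U_\alpha$ without crossing the LMVS. Once this topological setup is clean, (ii) is a short minimality argument and (i) is a single first-hitting-time observation.
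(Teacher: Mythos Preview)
Your proposal is correct and follows essentially the same approach as the paper: both reduce to showing (i) every vertex of $\tau$ reaches $U_\tau \cup V_\tau$ via the first-hitting argument through the LMVS/RMVS, and (ii) every vertex of $U_\tau$ reaches $V_\tau$ via the ``otherwise $U_\tau \setminus \{u\}$ would be a smaller separator'' contradiction. You are if anything more explicit than the paper about the ambient-graph bookkeeping; one small tweak is to truncate $P$ at its first hit of $U_\tau \cup V_\tau$ (not just $U_\tau$), since $P$ could in principle wander into $\sigma'^\T$ through $V_\tau$ before reaching $U_\tau$.
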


\section{PSD-ness}
\label{sec:psdness}

To prove PSD-ness, we perform an approximate PSD decomposition of the moment matrix
$\Lda$. This type of decomposition originates from the planted clique
problem~\cite{BHKKMP16}. Our usage of it will be similar at a high level,
but the accounting of graph matrices is more complicated.

\subsection{Overview of the approximate PSD decomposition}
\label{sec:informal-decomposition}
Recall that our goal is to show that the matrix $\sum_{\alpha \in \calS} \lda_\alpha M_\alpha$ is PSD.
Each shape $\alpha$ decomposes into left, middle, and right parts
$\alpha = \sigma \circ \tau \circ \sigma'^\T$.
We have a corresponding approximate decomposition of the graph matrix for $\al$,
up to intersection terms,
\[M_\alpha = M_{\sigma \circ \tau \circ \sigma'^\T} 
\approx M_\sigma M_\tau M_{\sigma'}^\T. \]

The dominant term in the PSD decomposition of $\Lda$ (call it $\calH$) collects together $\alpha$ such that 
its middle shape $\tau$ is trivial. Since the coefficients $\lambda_\alpha$
also factor, this term is automatically PSD.
\[ \sum_{\substack{\alpha : \text{ trivial}}} \lambda_\alpha M_\alpha 
 = \sum_{\sigma, \sigma'} \lambda_{\sigma\circ\sigma'^\T} M_{\sigma\circ\sigma'^\T} \approx \sum_{\sigma, \sigma'} \lambda_\sigma \lambda_{\sigma'}M_\sigma M_{\sigma'}^\T = \left(\sum_\sigma \lambda_\sigma M_\sigma\right)\left(\sum_\sigma \lambda_\sigma M_\sigma\right)^\T =: \calH.\]
 
There are two types of things to handle: $\alpha$ with nontrivial $\tau$,
and intersection terms. For $\alpha$ with a fixed middle shape $\tau$,
these should all be charged directly to $\calH$ (include $\tau^\T$ so the matrix is symmetric):
\[ \left(\sum_{\sigma} \lambda_\sigma M_\sigma \right)\lambda_\tau (M_\tau + M_{\tau}^\T) \left(\sum_{\sigma} \lambda_\sigma M_\sigma \right)^\T  \psdleq \frac{1}{c(\tau)}\left(\sum_{\sigma} \lambda_\sigma M_\sigma \right)\left(\sum_{\sigma} \lambda_\sigma M_\sigma \right)^\T\]
where we leave some space $c(\tau)$. Intuitively this is possible
because nontrivial $\tau$ have smaller coefficients $\lda_\tau$
due to vertex/edge decay, and the norm of $M_\tau$ is controlled
due to the factorization of $\alpha$ into left, middle, and right parts.
This check amounts to 
$\sum_{\tau\text{ nontrivial}}\lambda_\tau M_\tau~\psdleq~o(1)\Id$.

The intersection terms need to be handled in a recursive way so that their norms
can be kept under control:
if $\sigma, \tau, \sigma'^\T$
intersect to create a shape $\zeta$, then we need to factor
out the non-intersecting parts of $\sigma$ and $\sigma'$ from the intersecting
parts $\gamma, \gamma'$. Informally writing
\[\zeta = (\sigma - \gamma) \circ \tau_P \circ (\sigma' - \gamma')^\T,\]
where $\tau_P$ is the intersection of $\gamma, \tau, \gamma'^\T$, we now perform a further factorization
\[ M_\zeta \approx M_{\sigma - \gamma} M_{\tau_P} M_{\sigma' - \gamma'}^\T \]
which recursively creates more intersection terms.
The point is that the sum over $\sigma - \gamma, \sigma' - \gamma'$ is equivalent
to the sum over $\sigma, \sigma'$ (up to truncation error).
\begin{align*} \sum_{\sigma - \gamma} \sum_{\sigma' - \gamma'}
\lambda_{\sigma \circ \tau \circ \sigma'^\T} 
M_{\sigma - \gamma} M_{\tau_P} M_{\sigma' - \gamma'}^\T
&= \left(\sum_{\sigma - \gamma} 
\lambda_{\sigma - \gamma} 
M_{\sigma - \gamma}\right) 
\lambda_{\gamma \circ \tau \circ \gamma'^\T}M_{\tau_P} \left(\sum_{\sigma - \gamma} \lambda_{\sigma - \gamma} M_{\sigma - \gamma}^\T\right)
\\
&= \left(\sum_{\sigma} 
\lambda_{\sigma} 
M_{\sigma}\right) 
\lambda_{\gamma \circ \tau \circ \gamma'^\T}M_{\tau_P} \left(\sum_{\sigma} \lambda_{\sigma} M_{\sigma}^\T\right) + \text{truncation error}.
\end{align*}
In summary, we have the following informal decomposition of the moment matrix,
\[ \Lam = \left(\sum_{\sigma} \lambda_{\sigma} M_{\sigma}\right)
\left(\id + \sum_{\tau\text{ nontrivial}}\lambda_\tau M_\tau + \sum_{\substack{\text{intersection terms}\\\tau_P \in \calP_{\gamma, \tau, \gamma'}}} \lambda_{\gamma \circ \tau \circ \gamma'^\T} M_{\tau_P}\right)
\left(\sum_{\sigma} \lambda_{\sigma} M_{\sigma} \right)^\T + \text{truncation error}.\]
We then need to compare the intersection terms $M_{\tau_P}$ with $\id$.
The vertex decay factors are compared with $\tau$ via the ``intersection
tradeoff lemma''. For the edge decay factors, we give new charging arguments.
The number of ways to produce $\tau_P$ as an intersection pattern needs to be 
bounded combinatorially.

Finally, there is the issue of truncation.
The total norm of the truncation error should be small. 
This can be accomplished by taking a large truncation parameter.

The matrix
\[\Id + \sum_{\tau\text{ nontrivial}}\lambda_\tau M_\tau + \sum_{\substack{\text{intersection terms}\\\tau_P \in \calP_{\gamma, \tau, \gamma'}}} \lambda_{\gamma \circ \tau \circ \gamma'^\T} M_{\tau_P}\] 
actually does have a nullspace, which prevents the above strategy from
working perfectly.
This is because $\Lam$ needs to satisfy the independent set constraints ($\pE[x^S] = 0$ if $S$ has an edge).
For independent set it's easy to factor out these constraints.
Instead of $\Id$, the leading term is a diagonal projection matrix ${\pi \defeq \sum_{\tau:V(\tau) = U_\tau = V_\tau}\lam_\tau M_\tau}$
and the non-dominant middle shapes are $\tau$ with $|V(\tau)| > \frac{|U_\tau| +  |V_\tau|}{2}$.


\subsection{Informal sketch for bounding $\tau$ and $\tau_P$}
\label{sec:informal}

The most important part of the overview given in the previous section is showing that middle shapes $\tau$
and intersection terms $\tau_P$ can be charged to the identity matrix,
i.e. a type of ``graph matrix tail bound'' for middle shapes and intersection terms.
In this subsection, we describe the properties that $\tau$ and $\tau_P$ satisfy, their coefficients, and their norm bounds. Using this, we show that for each individual non-trivial $\tau$, $\lambda_{\tau}\norm{M_{\tau}} \ll 1$ and for each intersection pattern $P$, 
$\lambda_{\gamma \circ \tau \circ {\gamma'}^\T}\norm{M_{\tau_P}} \ll 1$.
We then explain how these arguments are used to prove \cref{thm:informal-logd}.
The combinatorial arguments in this section crucially rely on the connected truncation.

\subsubsection{Middle shapes}

\begin{proposition}\label{prop:informal-middle-shapes}
For each middle shape $\tau$ such that $|V(\tau)| > \frac{|U_\tau| + |V_\tau|}{2}$
and every vertex in $\tau$ is connected to $U_{\tau} \cup V_{\tau}$, $\lambda_{\tau}\norm{M_{\tau}}~\ll~1$.
\end{proposition}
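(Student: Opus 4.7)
The plan is to plug in the explicit form of $\lambda_\tau$ from the definition, together with the spectral norm bound for sparse graph matrices from \cref{sec:sparse_graph_matrices_and_conditioning}, and verify by direct calculation that the product decays. Let $v_\tau = |V(\tau)|$, $s_\tau = (|U_\tau|+|V_\tau|)/2$, and $e_\tau = |E(\tau)|$. Using $p/(1-p) \leq 2d/n$ and $k \leq n/(d^{1/2}\log n\cdot \dsos^{c_0})$, the norm bound becomes (hiding polylogarithmic and $\dsos^{O(v_\tau)}$ factors):
\[
\lambda_\tau \norm{M_\tau} ~\leq~ \widetilde{O}\!\left(\max_{\text{separator }S} d^{\,\frac{e_\tau - |E(S)| + s_\tau - v_\tau}{2}} \cdot n^{\,\frac{v_\tau - |S| + |E(S)| - e_\tau}{2}}\right).
\]

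The key first step is to exploit the middle-shape structure. Because $U_\tau$ and $V_\tau$ are both minimum vertex separators of $\tau$, every vertex separator $S$ satisfies $|S| \geq s_\tau$. Hence the $n$-exponent is bounded above by
\[
-\tfrac{1}{2}\bigl[(v_\tau - s_\tau) + (|S| - s_\tau) + (e_\tau - |E(S)|)\bigr].
\]
Since $v_\tau > s_\tau$ by hypothesis, there is at least one ``middle'' vertex, which already contributes a $\sqrt{n}^{-1}$ factor before any accounting of edges. The second step is to bound the $d$-exponent. Here I would invoke \cref{prop:middle-shape-connectivity}: every vertex in $\tau$ has a path to both $U_\tau$ and $V_\tau$, which forces $e_\tau - |E(S)| \geq \Omega(v_\tau - s_\tau)$, i.e., middle vertices bring along roughly as many non-separator edges as themselves. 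Using $d \leq n^{1/2}$ from the theorem hypothesis, $d^a \leq n^{a/2}$, so the potentially-positive $d$-exponent can always be absorbed into half of the strictly negative $n$-exponent, and the product still decays as $n^{-\Omega(v_\tau - s_\tau)/2}$ times benign combinatorial factors.

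The main obstacle is that the sparse minimum vertex separator may carry many internal edges, making the norm bound blow up by a factor of $\sqrt{n/d}^{|E(S)|}$. These edges appear in $\lambda_\tau$ with the compensating $\sqrt{d/n}^{|E(S)|}$, but the ``charging'' only works cleanly if the SMVS is not pathologically dense. To control this I would use the conditioning setup of \cref{sec:sparse_graph_matrices_and_conditioning}, which via the reserved-edge set $\res(\tau)$ restricts attention to sparse shapes in which the SMVS cannot suddenly carry a dense subgraph; this is exactly where the shape-wise conditioning argument sketched in \cref{sec:proof-overview} pays off.

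Finally, the hidden losses must be accounted for. The $|V(\tau)|^{|V(\tau)|}$ leading coefficient in the norm bound, the logarithmic slack in the $\widetilde{O}$, and any intersection combinatorics from factoring $\tau$ are all at most $\dsos^{O(\dsos)}$, and are absorbed by the slack in the $\dsos^{c_0}$ factor in the definition of $k$; this is the reason for choosing $c_0$ to be a large enough absolute constant. Putting this together yields $\lambda_\tau \norm{M_\tau} \ll 1$ uniformly over non-trivial middle shapes satisfying the connected-truncation hypothesis, as desired.
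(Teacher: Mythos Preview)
Your decomposition into a $d$-exponent and an $n$-exponent is the right starting point, but the very next step contains an algebra slip that propagates through the rest of the argument. The $n$-exponent is $\tfrac{1}{2}\bigl(v_\tau - |S| + |E(S)| - e_\tau\bigr)$; rewriting with your notation this equals $\tfrac{1}{2}\bigl((v_\tau - s_\tau) - (|S| - s_\tau) - (e_\tau - |E(S)|)\bigr)$, not $-\tfrac{1}{2}\bigl((v_\tau - s_\tau) + (|S| - s_\tau) + (e_\tau - |E(S)|)\bigr)$. The sign on $(v_\tau - s_\tau)$ is wrong, and so your claim that each middle vertex ``already contributes a $\sqrt{n}^{-1}$ factor'' is false: there is no $n^{-\Omega(v_\tau - s_\tau)}$ decay to be had here. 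In the paper's proof the decay per middle vertex is only $k\sqrt{d}/n \approx 1/(\dsos^{c_0}\log n)$, coming from the gap between $k$ and $n/\sqrt{d}$, not from a power of $n$; your use of $d \leq n^{1/2}$ to absorb the $d$-exponent into half the $n$-exponent therefore does not produce the needed strict inequality.

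The paper instead factors $\lambda_\tau\norm{M_\tau}$ cleanly into three pieces, each at most $1$: a vertex-decay factor $\bigl(\tfrac{k}{\sqrt{n}}\sqrt{\tfrac{p}{1-p}}\bigr)^{v_\tau - s_\tau}$ (this is the sole source of the $\ll 1$), a factor $\bigl(\sqrt{\tfrac{1-p}{np}}\bigr)^{|S| - s_\tau}$ with nonnegative exponent because $|S|\ge s_\tau$, and a factor $\bigl(\sqrt{\tfrac{p}{1-p}}\bigr)^{|E(\tau)\setminus E(S)| - |V(\tau)\setminus S|}$. The last factor requires the combinatorial lemma $|E(\tau)\setminus E(S)| \ge |V(\tau)\setminus S|$, proved by running a BFS from $S$ and charging each newly reached vertex to the exploring edge; this is the precise inequality you need, not $e_\tau - |E(S)| \ge \Omega(v_\tau - s_\tau)$, and \cref{prop:middle-shape-connectivity} is not the statement that delivers it. Finally, no conditioning, reserved edges, or intersection combinatorics enter this proposition at all --- those are used elsewhere --- so the second half of your proposal is addressing obstacles that do not arise here.
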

\begin{proof}
The coefficient $\lambda_{\tau}$ is 
\[
\lambda_{\tau} = \left(\frac{k}{n}\right)^{|V(\tau) | - \frac{|U_{\tau}| + |V_{\tau}|}{2}}\left(-\sqrt{\frac{p}{1-p}}\right)^{|E(\tau)|}
\]
The norm bound on $M_{\tau}$ is 
\[
\norm{M_\tau} \leq \widetilde{O}\left(n^{\frac{|V(\tau)\setminus S|}{2}}\left(\sqrt{\frac{1-p}{p}}\right)^{|E(S)|}\right)
\]
where $S$ is the sparse minimum vertex separator of $\tau$. We now have that $|\lambda_{\tau}|\norm{M_{\tau}}$ is 
\[
\widetilde{O}\left(\left(\frac{k}{\sqrt{n}}\sqrt{\frac{p}{1-p}}\right)^{|V(\tau)| - \frac{|U_\tau| + |V_\tau|}{2}}\left(\sqrt{\frac{1-p}{np}}\right)^{|S| - \frac{|U_{\tau}| + |V_{\tau}|}{2}}\left(\sqrt{\frac{p}{1-p}}\right)^{|E(\tau) \setminus E(S)| - |V(\tau) \setminus S|}\right).
\]
We claim that each of the three terms is upper bounded by 1.
This will prove the claim, since the first term provides a decay for $|V(\tau)| > \frac{|U_\tau| + |V_\tau|}{2}$.
The base of the first term is less than 1 for $k \lessapprox n/\sqrt{d}$.
The exponent of the second term is nonnegative: since $\tau$ is a middle shape,
both $U_\tau$ and $V_\tau$ are minimum vertex separators of $\tau$, and since
$S$ is a vertex separator, it must have larger size.
The exponent of the third term is nonnegative by the following lemma.
\begin{lemma}
    For a middle shape $\tau$ such that every vertex is connected to $U_\tau \cup V_\tau$, and any vertex separator $S$ of $\tau$,
    \[\abs{E(\tau) \setminus E(S)} \geq \abs{V(\tau) \setminus S}.\]
\end{lemma}
\begin{proof}
    First, we claim that every vertex in $\tau$ is connected to $S$.
    Let $v \in V(\tau)$, and by the connected truncation,
    $v$ is connected to some $u \in U_\tau$.
    Since $\tau$ is a middle shape, there must be a path from $u$ to $V_\tau$
    (else $U_\tau \setminus \{u\}$ would be a smaller vertex separator
    than $U_\tau$).
    The path necessarily passes through $S$ since $S$ is a vertex separator,
    therefore we now have a path from $v$ to $S$.

    Now, we can assign an edge of $E(\tau) \setminus E(S)$ to
    each vertex of $V(\tau) \setminus S$ to prove the claim. To do this, run a breadth-first search from $S$,
    and assign an edge to the vertex that it explores.
\end{proof}
\end{proof}

\subsubsection{Intersection terms}

Recall that intersection terms are formed by the intersection of $\sigma, \tau, \sigma'^\T$.
Then we factor out the non-intersecting parts, leaving the \emph{middle intersection} $\tau_P$, which is an intersection of some portion $\gamma$ of $\sigma$, the middle shape $\tau$, and some portion $\gamma'$ of $\sigma'$,
which we now make formal.

\begin{definition}[Middle intersection]\label{def:middle-intersection}
    Let $\gamma, \gamma'$ be left shapes and $\tau$ be a shape such
    that $\gam \circ \tau \circ \gam'^\T$ are composable.
    We say that an intersection pattern $P \in \calP_{\gam, \tau, \gam'^\T}$
    is a \emph{middle intersection} if
    $U_{\gamma}$ is a minimum vertex separator in $\gamma$ of $U_\gam$ and $V_\gam \cup \Int(P)$. Similarly, $U_{\gam'}$ is a minimum vertex separator
    in $\gam'$ of $U_{\gam'}$ and $V_{\gam'} \cup \Int(P)$.
    Finally, we also require that $P$ has at least one intersection.
    
    Let $\calP^{mid}_{\gam,\tau,\gam'}$ denote the set of middle intersections.
\end{definition}
\begin{remark}
For middle intersections we use the notation $\tau_P$ to denote the resulting 
shape, as compared to $\al_P$ which is used for an arbitrary intersection pattern.
\end{remark}
\begin{remark}\label{rmk:middle-intersections}
In fact this definition also captures recursive intersection terms which
are created from later rounds of factorization.
We say that ${P \in \calP_{\gam_{k}, \dots, \gam_1,\tau,\gam_1'^\T, \dots, \gam_{k}'^\T}}$ is a middle intersection,
denoted ${P \in \calP^{mid}_{\gam_k, \dots\gam_1,\tau,\gam_1',\dots,\gam_k'}}$,
if for all $j =0, \dots, k-1$, letting $\tau_{j}$ be the shape
of intersections so far between $\gam_j,\dots,\gam_1,\tau,\gam_1'^\T, \dots, \gam_j'^\T$, the intersection $\gam_{j+1}, \tau_{j}, \gam_{j+1}'^\T$ is a middle intersection.
\end{remark}

We need the following structural property of middle intersections.
\begin{proposition}\label{prop:intersection-connectivity}
    For a middle intersection $P \in \calP^{mid}_{\gam, \tau, \gam'}$ such that every
    vertex of $\tau$ is connected to both $U_\tau$ and $V_\tau$,
    every vertex in $\tau_P$ is connected to both $U_{\tau_P}$ and $V_{\tau_P}$.
    \begin{proof}
        By assumption,
        all vertices in $\tau$ have a path
        to both $U_\tau$ and $V_\tau$.
        Since $U_\tau = V_\gam$, and all vertices in $\gam$ have a path to $U_\gam$
        by definition of a left shape, all vertices in $\tau$ have a path to $U_\gam = U_{\tau_P}$.
        Similarly, $V_\tau = U_{\gamma'}$ is connected to $V_{\gamma'} = V_{\tau_P}$.
        Thus we have shown that all vertices in $\tau$ are connected to both
        $U_{\tau_P}$ and $V_{\tau_P}$.

        Vertices in $\gam$ have a path to $U_{\tau_P}$ by definition; we must
        show that they also have a path to $V_{\tau_P}$. A similar argument will
        hold for vertices in $\gam'.$
        Since all vertices in $\gam$ have paths to $U_\gam$, it suffices to show
        that all $u \in U_\gam$ have a path to $V_{\tau_P}$.
        There must be a path from $u$ to some $v \in V_\gam \cup \Int(P)$, 
        else this would violate the definition of a middle intersection by taking $U_\gam \setminus\{u\}$.
        The vertex $v$ is in either $\tau$ or $\gam'$, both of which are connected
        to $V_{\tau_P}$, hence we are done.
    \end{proof}
\end{proposition}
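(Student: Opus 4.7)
The plan is to check connectivity for each vertex of $\tau_P$ by tracing it back to its origin among $\gam$, $\tau$, or $\gam'$ (or a merged block thereof). The composition identifies $U_\tau = V_\gam$ and $V_\tau = V_{\gam'}$, while $U_{\tau_P} = U_\gam$ and $V_{\tau_P} = V_{\gam'^\T} = U_{\gam'}$. I would handle three cases: vertices originating in $\tau$, vertices originating in $\gam$, and (symmetrically) vertices originating in $\gam'$.

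First I would dispatch the $\tau$-case, which is the easiest. By hypothesis each such vertex reaches $U_\tau = V_\gam$ via a path in $\tau$; since $\gam$ is a left shape, every vertex of $V_\gam \subseteq V(\gam)$ is in turn connected to $U_\gam = U_{\tau_P}$, so concatenating yields the required path in $\tau_P$. The analogous argument using $V_\tau = V_{\gam'}$ and the left-shape property of $\gam'$ produces a path to $V_{\tau_P} = U_{\gam'}$.

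Next I would handle a vertex originating in $\gam$ (the $\gam'$ case being symmetric). The left-shape property of $\gam$ immediately gives a path to $U_\gam = U_{\tau_P}$, so the only real work is reaching $V_{\tau_P} = U_{\gam'}$. It suffices to show that each individual $u \in U_\gam$ has such a path. Here the middle-intersection clause is essential: if $u$ had no path to $V_\gam \cup \Int(P)$ in $\gam \setminus (U_\gam \setminus \{u\})$, then $U_\gam \setminus \{u\}$ would separate $U_\gam$ from $V_\gam \cup \Int(P)$ in $\gam$, contradicting the minimality of $U_\gam$ as a separator per \cref{def:middle-intersection}. So a path exists, ending at some $w \in V_\gam \cup \Int(P)$. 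If $w \in V_\gam = U_\tau$, the first paragraph's argument applied to $w$ (viewed as a $\tau$-vertex) finishes the job. Otherwise $w \in \Int(P)$, and in $\tau_P$ the merged vertex inherits edges from each of the vertices in its block.

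The subtle step is the $w \in \Int(P)$ case: one must be sure the merged vertex actually inherits a useful edge that escapes $\gam$. This rests on the combinatorial fact built into \cref{def:intersection-pattern}, namely that each block of $P$ contains at most one vertex from any single $\alpha_i$, so an intersecting block necessarily spans at least two of $\{\gam, \tau, \gam'\}$ and in particular contains a vertex from $\tau$ or $\gam'$. Thus in $\tau_P$ the vertex $w$ inherits at least one edge into $\tau$ or $\gam'$, reducing the remaining task to connecting a $\tau$-vertex or a $\gam'$-vertex to $V_{\tau_P}$, which we have already handled via the hypothesis on $\tau$ and the left-shape property of $\gam'$.
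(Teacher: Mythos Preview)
Your proposal is correct and follows essentially the same approach as the paper's proof: handle vertices of $\tau$ first via the left-shape connectivity of $\gam$ and $\gam'$, then handle vertices of $\gam$ (and symmetrically $\gam'$) by reducing to $u\in U_\gam$ and invoking the minimum-separator clause of \cref{def:middle-intersection} to produce a path to $V_\gam\cup\Int(P)$. Your treatment of the $w\in\Int(P)$ case is slightly more explicit than the paper's, spelling out via \cref{def:intersection-pattern} why the merged vertex must carry an edge into $\tau$ or $\gam'$; the paper compresses this into the single line ``the vertex $v$ is in either $\tau$ or $\gam'$''.
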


Now we show that middle intersections have small norm. We focus on the first level of intersection terms; the general case of middle intersections in \cref{rmk:middle-intersections} follows by induction.
\begin{proposition}\label{prop:informal-intersection-terms}
    For left shapes $\gam, \gam'$, proper middle shape $\tau$ such that every
    vertex in $\tau$ is connected to $U_\tau \cup V_\tau$, and a middle
    intersection $P \in \calP^{mid}_{\gam, \tau, \gam'}$,
    ${\abs{\lam_{\gam \circ \tau \circ \gam'^\T}}\norm{M_{\tau_P}} \ll 1}$.
\end{proposition}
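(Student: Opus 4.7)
The proof mirrors the argument for middle shapes in \cref{prop:informal-middle-shapes}, with adaptations for the intersection. Write $\al = \gam \circ \tau \circ \gam'^\T$ and $u = (|U_\al|+|V_\al|)/2$. By \cref{lem:coefficient-factor} the coefficient factors as
\[|\lam_\al| = \left(\tfrac{k}{n}\right)^{|V(\al)| - u}\left(\tfrac{p}{1-p}\right)^{|E(\al)|/2}.\]
The intersection pattern $P$ preserves the edge multiset so $|E(\tau_P)| = |E(\al)|$, but collapses at least one pair of vertices, so $|V(\tau_P)| = |V(\al)| - \Delta_V$ with $\Delta_V \geq 1$. Applying the norm bound for (possibly improper) graph matrices to $M_{\tau_P}$ with a suitable linearization $\beta$ and its SMVS $S$ gives
\[\norm{M_{\tau_P}} \leq \widetilde{O}\left(\sqrt{n}^{|V(\tau_P)| - |S| + |I_\beta|}\sqrt{(1-p)/p}^{|E(\tau_P)| - |E(\beta)| - 2|E_{phantom}(\beta)| + |E(S)|}\right).\]

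In the proper case $\tau_P$ has no multi-edges. Using $|V(\al)| = |V(\tau_P)| + \Delta_V$, direct algebra regroups the product as
\[|\lam_\al|\norm{M_{\tau_P}} \leq \widetilde{O}\left(\left(\tfrac{k}{n}\right)^{\Delta_V}\left(\tfrac{k\sqrt{p/(1-p)}}{\sqrt{n}}\right)^{|V(\tau_P)| - u}\left(\sqrt{\tfrac{1-p}{np}}\right)^{|S| - u}\left(\sqrt{\tfrac{p}{1-p}}\right)^{|E(\tau_P)| - |E(S)| - (|V(\tau_P)| - |S|)}\right).\]
The last three factors are exactly those of \cref{prop:informal-middle-shapes} applied to $\tau_P$ as though it were a middle shape, while $(k/n)^{\Delta_V}$ is the intersection bonus. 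The second factor has base $\asymp k\sqrt{d}/n = o(1)$ and nonnegative exponent since $|V(\tau_P)| \geq \max(|U_{\tau_P}|,|V_{\tau_P}|) \geq u$. The third is $\leq 1$ once we show $|S| \geq u$: the middle intersection property forces both $U_{\tau_P} = U_\al$ and $V_{\tau_P} = V_\al$ to be minimum vertex separators of $\tau_P$ (any putative smaller separator restricted to $\gam$ or $\gam'$ would violate the Menger-theoretic consequence of the middle intersection definition), so any SMVS has vertex count $\geq \max(|U_\al|,|V_\al|) \geq u$. The fourth factor is $\leq 1$ by the edge-vertex inequality $|E(\tau_P)\setminus E(S)| \geq |V(\tau_P)\setminus S|$, which follows from the BFS argument of \cref{prop:informal-middle-shapes} once we invoke \cref{prop:intersection-connectivity} (every vertex of $\tau_P$ has a path to $U_{\tau_P} \cup V_{\tau_P}$, and any such path must cross $S$). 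The leading $(k/n)^{\Delta_V}$ alone is $\ll 1$ in our regime since $\Delta_V \geq 1$, concluding the proper case.

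When the intersection creates multi-edges in $\tau_P$, the norm bound involves linearizations $\beta$, phantom edges, and isolated vertices $I_\beta$. These are absorbed because $|\lam_\al|$ retains the full edge decay $(p/(1-p))^{|E(\al)|/2}$ across the entire multiset $E(\al) = E(\tau_P)$, which strictly dominates the $\sqrt{(1-p)/p}^{|E_{phantom}(\beta)|}$ contributions in the norm bound; thus the same regrouping goes through, with phantom edges providing additional decay rather than a penalty, and the $|I_\beta|$ term accounted for by the usual $k/\sqrt{n}$ decay per isolated vertex. The main obstacle is the bookkeeping required to verify that the middle intersection property cleanly translates to the separator bound $|S| \geq u$ within $\tau_P$ itself (rather than merely within $\gam$ or $\gam'$ separately), and that phantom edges and isolated vertices slot correctly into the algebraic regrouping; these details rely on \cref{prop:intersection-connectivity} together with the structure of left and middle shapes. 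The recursive case of higher-order middle intersections from \cref{rmk:middle-intersections} then follows by induction, applying the above bound at each successive layer.
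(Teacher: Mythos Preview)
Your proof has a genuine gap in the third factor. You claim that the middle intersection property forces $U_{\tau_P}$ and $V_{\tau_P}$ to be minimum vertex separators of $\tau_P$, so that any separator $S$ of $\tau_P$ satisfies $|S| \geq u$. This is false. The middle intersection condition only guarantees that $U_\gam$ is a minimum separator \emph{within $\gam$} between $U_\gam$ and $V_\gam \cup \Int(P)$; it does not imply that $U_{\tau_P}$ is a minimum separator of $\tau_P$ itself. After intersection, vertices from different pieces may merge and create a bottleneck that is smaller than $|U_{\tau_P}|$. For a concrete counterexample, take $\gam$ with $U_\gam=\{a_1,a_2\}$, $V_\gam=\{x\}$, middle vertex $w$, edges $a_1\text{-}w$, $a_2\text{-}x$, $w\text{-}x$; take $\tau$ a path $x\text{-}z\text{-}y$; take $\gam'$ the mirror of $\gam$; and let $w$, $z$, $w'$ all collapse to a single vertex in $P$. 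One checks this is a valid middle intersection, yet in $\tau_P$ the single vertex $w=z=w'$ separates $U_{\tau_P}=\{a_1,a_2\}$ from $V_{\tau_P}$, so $|S|=1<2=u$. Your Menger argument fails because a small separator of $\tau_P$ can sit at the intersected vertices and need not restrict to a separator of $U_\gam$ from $V_\gam\cup\Int(P)$ inside $\gam$.

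The paper's proof addresses exactly this: it does not claim $|S|\geq u$ but instead invokes the \emph{intersection tradeoff lemma} (\cref{lem:intersection-tradeoff-lemma}), which shows $i_P - |I_\beta| + |S| - u \geq 0$. In your regrouping you pulled all of the $i_P$ (your $\Delta_V$) into the leading $(k/n)^{\Delta_V}$ factor and then tried to bound the remaining $(\sqrt{(1-p)/(np)})^{|S|-u}$ on its own; this is precisely where the argument breaks. The $i_P$ must stay coupled to $|S|$ to compensate for the possible deficit $u-|S|$. Relatedly, your treatment of isolated vertices in the improper case is not correct: each $v\in I_\beta$ contributes a genuine extra $\sqrt{n}$ in the norm bound with no matching $k/\sqrt{n}$ in $\lam_\al$, and the paper absorbs this via the $|I_{\tau_P,\min}|$ term in the tradeoff lemma together with the separate edge-charging \cref{lem:intersection-charging}, neither of which you invoke.
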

\begin{proof}
For an intersection pattern $P$, the coefficient is 
\[
\lambda_{\gamma \circ \tau \circ {\gamma'}^\T} = \left(\frac{k}{n}\right)^{|V(\tau_P)| + i_{P} - \frac{|U_{\tau_P}| + |V_{\tau_P}|}{2}}\left(-\sqrt{\frac{p}{1-p}}\right)^{|E(\tau_P)|}
\]
where $i_{P} \defeq \abs{V(\gamma \circ \tau \circ \gamma'^\T)} - \abs{V(\tau_P)}$ is the number of intersections in $P$.

The norm bound on $M_{\tau_P}$ is 
\[
\widetilde{O}\left(\max_{\beta,S}\left\{n^{\frac{|V(\beta)| + |I_{\beta}| - |S|}{2}}\left(\sqrt{\frac{1-p}{p}}\right)^{|E(S)| + |E(\tau_P)| - |E(\beta)| -2|E_{phantom}|}\right\}\right)
\]
where the maximization is taken over all linearizations $\beta$ of $\tau_P$ and all separators $S$ for $\beta$.

We now have that $\abs{\lam_{\gam \circ \tau \circ \gam'^\T}}\norm{M_{\tau_P}}$ is
\begin{align*}
\widetilde{O}\Bigg(\max_{\beta,S}\Bigg\{&
\left(\frac{k}{\sqrt{n}}\sqrt{\frac{p}{1-p}}\right)^{|V(\gam \circ \tau \circ \gam'^\T)| - \frac{|U_\beta| + |V_\beta|}{2}}
\left(\sqrt{\frac{1-p}{np}}\right)^{i_{P} - |I_{\beta}| + |S| - \frac{|U_{\beta}| + |V_{\beta}|}{2} }\cdot\\
&\left(\sqrt{\frac{p}{1-p}}\right)^{|E(\beta) \setminus E(S)| - |V(\beta) \setminus S| - |I_{\beta}| + 2|E_{phantom}|}\Bigg\}\Bigg).
\end{align*}
We claim that each of the three terms is upper bounded by 1, which proves
the proposition since the first term provides a decay (since an intersection is nontrivial).
The base of the first term (vertex decay) is less than 1 for $k \lessapprox n/\sqrt{d}$.
The second term has a nonnegative exponent by the \emph{intersection tradeoff lemma} from~\cite{BHKKMP16}. The version we cite is Lemma 9.32 in \cite{PR20},
with the simplification that $\tau$ is a proper middle shape, so $I_\tau = \emptyset$
and $\abs{S_{\tau,min}} = \abs{U_\tau} = \abs{V_\tau}$ (the full form is used for intersection terms with $\gam_1, \dots, \gam_k, k>1$).
\begin{lemma}[Intersection tradeoff lemma]
\label{lem:intersection-tradeoff-lemma}
    For all left shapes $\gam, \gam'$ and proper middle shapes $\tau$,
    let $P \in \calP_{\gam, \tau, \gam'^\T}$ be a middle intersection,
    then
    \[|V(\tau)| - \frac{|U_\tau| + |V_\tau|}{2} + |V(\gam)| - |U_{\gam}| + |V(\gam')| - |U_{\gam'}| \ge |V(\tau_P)| + |I_{\tau_P, min}| - |S_{\tau_P, min}|\]
    where $S_{\al, min}$ is defined to be a minimum vertex separator of $\al$ with all multi-edges deleted, and $I_{\al, min}$ is the set of isolated vertices in $\al$ with all multi-edges deleted.
\end{lemma}
The above inequality is equivalent to
\[ i_P + \frac{|U_\tau| + |V_\tau|}{2} - |U_\gam| - |U_{\gam'}| \geq \abs{I_{\tau_P, min}} - \abs{S_{\tau_P, min}}.\]
Since $|U_\tau| \leq |U_\gam| = |U_\beta|, |V_\tau| \leq |U_{\gam'}| = |V_\beta|, \abs{I_{\tau_P, min}} \geq |I_\beta|, |S| \geq \abs{S_{\tau_P, min}}$, this implies
\[ i_P - |I_\beta| + |S| - \frac{|U_\beta| + |V_\beta|}{2} \geq 0.\]
The third term has a nonnegative exponent, as the following lemma shows.

\begin{restatable}{lemma}{intersectionCharging}
\label{lem:intersection-charging}
For any linearization $\beta$ of $\tau_P$ and all separators $S$ for $\beta$, 
\[
|E(\beta)| + 2|E_{phantom}| - |E(S)| \geq |V(\tau_P)| + |I_{\beta}|- |S|
.\]
\end{restatable}

\begin{proof}
We give a proof using induction. An alternate proof that uses an explicit charging scheme is given
in \cref{app:intersection-charging}.

We start with $E(\beta)$ and add in pairs of phantom edges one by one.
The final graph will have $\abs{E(\beta)} +2\abs{E_{phantom}}$ edges.
The inductive claim is that at any time in the process,
for all connected components $C$ in the current graph, 
\begin{enumerate}
\item If $C$ is not connected to $U_{\tau_P} = U_\beta$ or $V_{\tau_P} = V_\beta$, $|E(C) \setminus E(S)| + 2|E_{phantom}(C)| \geq |C \setminus S| + |I_{\beta} \cap C| - 2$. 
For example, $C$ may consist of a single isolated vertex or a set of isolated vertices connected by phantom edges.
\item If $C$ is connected to $U_{\tau_P}$ or $V_{\tau_P}$ but not both, $|E(C) \setminus E(S)| + 2|E_{phantom}(C)| \geq |C \setminus S| + |I_{\beta} \cap C| - 1$.
\item If $C$ is connected to both $U_{\tau_P}$ and $V_{\tau_P}$, $|E(C) \setminus E(S)| + 2|E_{phantom}(C)| \geq |C \setminus S| + |I_{\beta} \cap C|$
\end{enumerate}
To see that the base case holds for $E(\beta)$ and $E_{phantom}(C) = \emptyset$, let $C$ be a component $C$ of $E(\beta)$.
If $C$ is a single isolated vertex, we are in case (1), and the inequality is a tight equality.
Otherwise, $C$ has no isolated vertices, and therefore $\abs{E(C) \setminus E(S)} \geq \abs{C\setminus S} - 1$ holds by connectivity, which is sufficient
for $C$ in either case (1) or (2).
If $C$ is in case (3), $C$ is connected to both $U_{\tau_P}$ and $V_{\tau_P}$.
Since $S$ is a separator, in this case $C$
must include at least one vertex in $S$ and the stronger inequality $\abs{E(C) \setminus E(S)} \geq \abs{C \setminus S}$ holds as required
for the base case.

Now consider adding another phantom edge $e$. If $e$ is entirely contained within
a component $C$, the right-hand side of the inequality is unchanged, therefore the inequality
for $C$ continues to hold.
If $e$ goes between two components $C_1$ and $C_2$, there are six cases
depending on which case $C_1$ and $C_2$ are in. Let $C$ denote the new joined component. We show the argument when $C_1, C_2$ are both case (1) or both case (2).
The remaining cases are similar.

\noindent \textbf{(Both are case (1))} $C$ is connected to neither $U_{\tau_P}$ nor $V_{\tau_P}$, so we must show the inequality for case (1) still holds. Adding the inequalities for $C_1$ and $C_2$,
\begin{align*}
& \abs{E(C_1) \setminus E(S)} + 2\abs{E_{phantom}(C_1)} + \abs{E(C_2) \setminus E(S)} + 2\abs{E_{phantom}(C_2)} \\
\geq & \abs{C_1 \setminus S} + \abs{I_\beta \cap C_1} + \abs{C_2 \setminus S} + \abs{I_\beta \cap C_2} - 4.
\end{align*}
Equivalently, using $2|E_{phantom}(C)| = 2|E_{phantom}(C_1)| + 2|E_{phantom}(C_2)| + 2$, 
\begin{align*}
    & \abs{E(C) \setminus E(S)} + 2\abs{E_{phantom}(C)} - 2
\geq \abs{C \setminus S} + \abs{I_\beta \cap C} - 4\\
\iff & \abs{E(C) \setminus E(S)} + 2\abs{E_{phantom}(C)}
\geq \abs{C \setminus S} + \abs{I_\beta \cap C} - 2.
\end{align*}

\noindent \textbf{(Both are case (2))} 
$C$ is connected to either one or both of $U_{\tau_P}$ and $V_{\tau_P}$; 
we show the tighter inequality required by both, case (3).
Adding the inequalities for $C_1$ and $C_2$,
\begin{align*}
& \abs{E(C_1) \setminus E(S)} + 2\abs{E_{phantom}(C_1)} + \abs{E(C_2) \setminus E(S)} + 2\abs{E_{phantom}(C_2)} \\
\geq & \abs{C_1 \setminus S} + \abs{I_\beta \cap C_1} + \abs{C_2 \setminus S} + \abs{I_\beta \cap C_2} - 2.
\end{align*}
Again using $2|E_{phantom}(C)| = 2|E_{phantom}(C_1)| + 2|E_{phantom}(C_2)| + 2$,
\begin{align*}
    & \abs{E(C) \setminus E(S)} + 2\abs{E_{phantom}(C)} - 2
\geq \abs{C \setminus S} + \abs{I_\beta \cap C} - 2\\
\iff & \abs{E(C) \setminus E(S)} + 2\abs{E_{phantom}(C)}
\geq \abs{C \setminus S} + \abs{I_\beta \cap C}.
\end{align*}

At the end of the process, the connectivity of the graph
is the same as the graph $\tau_P$ (though nonzero edge multiplicities may
be modified).
By \cref{prop:middle-shape-connectivity}, every vertex in $V(\tau_P)$ is connected to both $U_{\tau_P}$ and $V_{\tau_P}$.
Therefore only case (3) holds. Summing the inequalities proves the lemma.
\end{proof}
\end{proof}

\subsubsection{Proof of \cref{thm:informal-logd}}

The preceding lemmas are informal, but they actually show something more that justifies \cref{thm:informal-logd}.
For technical reasons, our formal proof that implements \cref{sec:informal-decomposition} will break for $d \geq n^{0.5}$,
and thus it doesn't cover \cref{thm:informal-logd},
though we show that the arguments above plus the argument of \cite{BHKKMP16}
are already sufficient to prove \cref{thm:informal-logd}.

In the proof of \cref{prop:informal-middle-shapes} the term
$\lambda_\tau \norm{M_\tau}$ was broken into three parts, each less than 1.
Using just the first two parts,
\[ \abs{\lam_\tau} \norm{M_\tau} \leq \widetilde{O}\left(\left(\frac{k}{\sqrt{n}}\sqrt{\frac{p}{1-p}}\right)^{|V(\tau)\setminus S|} \left(\frac{k}{n}\right)^{\abs{S}- \frac{|U_\tau| + |V_\tau|}{2}}\right).\]
Letting $k = \frac{n}{d^{1/2+\eps}}$,
\begin{align*} \abs{\lam_\tau} \norm{M_\tau} &\leq \widetilde{O}\left(\left(\frac{1}{d^\eps}\right)^{|V(\tau)| - |S|} \left(\frac{1}{d^{1/2+\eps}}\right)^{\abs{S}- \frac{|U_\tau| + |V_\tau|}{2}}\right)\\
&\leq \widetilde{O}\left(\left(\frac{1}{d^\eps}\right)^{|V(\tau)| - |S_\tau|} \left(\frac{1}{d^{1/2+\eps}}\right)^{\abs{S_\tau}- \frac{|U_\tau| + |V_\tau|}{2}}\right)
\end{align*}
where $S_\tau$ is the dense MVS of $\tau$.
Observe that \emph{this equals $\lam_\tau\norm{M_\tau}$ in the dense case $(p = 1/2$) for a random graph of size $d$.}
    That is, suppose we performed the pseudocalibration and norm bounds
    from~\cite{BHKKMP16} for a random graph $G \sim G_{d,1/2}$.
    Then we would get $\lambda_\tau = \left(\frac{1}{d^{1/2+\eps}}\right)^{\abs{V(\tau)} - \frac{\abs{U_\tau} + \abs{V_\tau}}{2}}$ and $\norm{M_\tau} \leq (\log d)^{O(1)}\sqrt{d}^{\abs{V(\tau)} - \abs{S_\tau}}$, and $\lam_\tau \norm{M_\tau}$ is the same as the above.
    The main point is: by the analysis from~\cite{BHKKMP16},
    we know that for shapes up to size 
    $\Omega(\log d)$, the sum of these norms is $o(1)$.
    Therefore\footnote{The log factor is $(\log n)^{O(1)}$ for our norm bound, vs $(\log d)^{O(1)}$ for $G \sim G_{d,1/2}$, but these are the same up to
    a constant for $d \geq n^{\eps}$.} our matrices sum to $o(1)$ for SoS degree $\Omega(\log d)$.
    
    The same phenomenon occurs for the intersection terms. Essentially we are neglecting the effect of the edges and considering only the vertex factors. By taking the first
    two out of three terms used in the proof of \cref{prop:informal-intersection-terms}
    we have the bound ($k = \frac{n}{d^{1/2+\eps}}$)
    \begin{align*}
    \abs{\lam_{\gam \circ \tau \circ \gam'^\T}} \norm{M_{\tau_P}}  &\leq \widetilde{O}\left(\left(\frac{1}{d^\eps}\right)^{|V(\tau_P)| +|I_\beta| - |S|} \left(\frac{1}{d^{1/2+\eps}}\right)^{|V(\gam \circ \tau \circ \gam'^\T)| - |V(\tau_P)| - |I_\beta| + |S| - \frac{|U_\beta| + |V_\beta|}{2}}\right)\\
    & \leq \widetilde{O}\left(\left(\frac{1}{d^\eps}\right)^{|V(\tau_P)| +|I_{\tau_P, min}| - |S_{\tau_P, min}|} \left(\frac{1}{d^{1/2+\eps}}\right)^{|V(\gam \circ \tau \circ \gam'^\T)| - |V(\tau_P)| - |I_{\tau_P, min}| + |S_{\tau_P, min}| - \frac{|U_{\tau_P}| + |V_{\tau_P}|}{2}}\right)
    \end{align*}
    where $S_{\tau_P, min}$ and $I_{\tau_P, min}$ are the dense MVS and isolated vertices respectively in the graph $\tau_P$ after deleting all multiedges.
    For $G \sim G_{d,1/2}$, the pseudocalibration coefficient and norm bounds
    are
    \begin{align*}
        \lam_{\gam \circ \tau \circ \gam'^\T} & = \left(\frac{1}{d^{1/2+\eps}}\right)^{\abs{V(\gam \circ \tau \circ \gam'^\T)} - \frac{|U_\gam| + |U_{\gam'}|}{2}}\\
        \norm{M_{\tau_P}} & \leq (\log d)^{O(1)} \sqrt{d}^{|V(\tau_P)| - |S_{\tau_P, min}| + |I_{\tau_P, min}|}.
    \end{align*}
Multiplying these together, one can see that $\lam_{\gam \circ \tau \circ \gam'^\T}\norm{M_{\tau_P}}$ is the same. Therefore, the analysis of \cite{BHKKMP16} also shows that the sum of all intersection
terms is negligible.

By passing to the ``dense proxy graph'' $G \sim G_{d,1/2}$, we can essentially use~\cite{BHKKMP16} to deduce a degree-$\Omega(\log d)$ lower bound in the sparse case
just using connected truncation with the sparse norm bounds.
%
The analysis of \cite{BHKKMP16} is limited to SoS degree $\Omega(\log d)$ and $d \geq n^\eps$.
In order to push
the SoS degree up and the graph degree down in the remaining sections, we need to utilize conditioning and handle the combinatorial terms more carefully.

\subsection{Decomposition in terms of ribbons}

We now work towards the proof of~\cref{thm:main}. Recall that our goal is to show PSD-ness of the matrix $\sum_{\alpha \in \calS}\lam_\al \frac{M_\alpha}{\abs{\Aut(\al)}}$.

\begin{definition}[Properly composable]
    Composable ribbons $R_1, \dots, R_k$ are properly composable if there
    are no intersections beyond the necessary ones $B_{R_i} = A_{R_{i+1}}$.
\end{definition}

\begin{definition}[Proper composition]
    Given composable ribbons $R_1, \dots, R_k$ which are not necessarily
    properly composable, with shapes $\alpha_1, \dots, \al_k$, 
    let $R_1 \odot R_2\odot \cdots\odot R_k$
    be the shape of $\al_1 \circ \cdots \circ\al_k$.
    
    That is, $R_1 \odot \cdots \odot R_k$ is the shape
    obtained by concatenating the ribbons but \textit{not} collapsing
    vertices which repeat between ribbons, as if they were properly composable.
    Compare this with $R_1 \circ \cdots \circ R_k$, in
    which repetitions collapse.
\end{definition}

\begin{definition}[$\calL$ and $\calM$]
    Let $\calL$ and $\calM$ be the set of left and middle shapes in ${\calS}$.
    Shapes in these sets will be denoted $\sigma, \gamma \in \calL$ and $\tau \in \calM$.
    We abuse notation and let $L,G \in \calL, T \in \calM$ denote ribbons of shapes
    in $\calL, \calM$ (following the convention of using Greek letters for shapes and Latin letters
    for ribbons).
\end{definition}

Formalizing the process described in \cref{sec:informal-decomposition},
we have the following lemma. It is easier to formally manipulate unsymmetrized objects, ribbons, for the PSD decomposition and switch to symmetrized objects,
shapes, only when we need to invoke norm bounds.
The details are carried out in \cref{sec:formal-decomposition}.

\begin{restatable}{lemma}{fullRibbonDecomposition}(Decomposition in terms of ribbons).
\label{lem:fullribbondecomposition}
\begin{align*}
    &\sum_{\al \in \calS} \lam_\al \frac{M_\al}{\abs{\Aut(\al)}} = \sum_{R \in {\calS}}{\lam_R M_R} = \\
    &\left(\sum_{L \in \calL}{{\lam_L}M_L}\right)\left(\sum_{j=0}^{2\dsos}{(-1)^{j}\sum_{G_j,\ldots,G_1,T,G'_1,\ldots,G'_j}{\lam_{G_j \odot \ldots \odot G_1 \odot T \odot {G'_1}^{\T} \odot \ldots \odot {G'_j}^{\T}} M_{G_j \circ \ldots \circ G_1 \circ T \circ {G'_1}^{\T} \circ \ldots \circ {G'_j}^{\T}}}}\right)\left(\sum_{L \in \calL}{{\lam_L}M_L}\right)^\T \\
    &+ {\text{truncation error}}_{\text{too many vertices}} + {\text{truncation error}}_{\text{too many edges in one part}}
\end{align*}
where 
\begin{enumerate}
    \item
    \begin{align*}
        &{\text{truncation error}}_{\text{too many vertices}} =         -\sum_{\substack{L,T,L':\\|V(L \circ T \circ L'^\T)| > D_V,\\L,T,L'^\T \text{ are properly composable}}} \lam_{L \circ T \circ L'^\T} M_{L \circ T \circ L'^\T}\\
        &+ \sum_{j=1}^{2\dsos}{(-1)^{j+1}\sum_{\substack{L,G_j,\ldots,G_1,T,G'_1,\ldots,G'_j,L': \\ 
        |V(L \odot G_j \odot \ldots \odot G_1)| > D_V \text{ or} \\ |V({G'_1}^{\T} \odot \ldots \odot {G'_j}^{\T} \odot {L'}^{\T})| > D_V}}
        {\lam_{L \odot G_j \odot \ldots \odot G_1 \odot T \odot {G'_1}^{\T} \odot \ldots \odot {G'_j}^{\T} \odot {L'}^{\T}} M_{L}M_{G_j \circ \ldots \circ G_1 \circ T \circ {G'_1}^{\T} \circ \ldots \circ {G'_j}^{\T}}M_{L'}^T}}
    \end{align*}
    \begin{align*}
        &{\text{truncation error}}_{\text{too many edges in one part}} = \sum_{\substack{L,T,L': \\
        |V(L \circ T \circ L'^\T)| \leq D_V,\\
        |E_{mid}(T)| - |V(T)| > C{\dsos} \\ L, T, {L'}^{\T} \text{ are properly composable}}}
        {\lam_{L \odot T \odot {L'}^{\T}} M_{L \circ T \circ {L'}^T}} \\
        &+\sum_{j=1}^{2\dsos}{(-1)^{j}\sum_{\substack{L,G_j,\ldots,G_1,T,G'_1,\ldots,G'_j,L': \\
        |V(L \odot G_j \odot \ldots \odot G_1)| \leq D_V \text{ and } |V({G'_1}^{\T} \odot \ldots \odot {G'_j}^{\T} \odot {L'}^{\T})| \leq D_V \\ 
        |E_{mid}(G_j)| - |V(G_j)| > C{\dsos} \text{ or } |E_{mid}(G'_j)| - |V(G'_j)| > C{\dsos} \\
        L, (G_j \circ \ldots \circ G_1 \circ T \circ {G'_1}^{\T} \circ \ldots \circ {G'_j}^{\T}), {L'}^{\T} \text{ are properly composable}}}
        {\lam_{L \odot G_j \odot \ldots \odot {G'_j}^{\T} \odot {L'}^{\T}} M_{L \circ G_j \circ \ldots \circ {G'_j}^{\T} \circ {L'}^T}}}
    \end{align*}
    where for a ribbon $R$, $E_{mid}(R)$ is the set of edges of $R$ which are not contained in $A_R$, not contained in $B_R$, and are not incident to any vertices in $A_R \cap B_R$.
    \item in all of these sums, the ribbons $L,G_j,\ldots,G_1,T,G'_1,\ldots,G'_j,L'$ satisfy the following conditions:
\begin{enumerate}
    \item $T\in \calM$, $L,L' \in \calL$, and each $G_i, G_i' \in \calL$.
    \item $L, G_j,\ldots,G_1,T,{G'_1}^{\T},\ldots,{G'_j}^{\T}, L'^\T$ are composable.
    \item The intersection pattern induced by $G_j, \dots, G_1, T, G_1', \dots, G_j'$ is a middle intersection pattern.
    
    \item $|V(T)| \leq D_V$, $|V(G_j \odot \ldots \odot G_1)| \leq D_V$, and $|V({G'_1}^{\T} \odot \ldots \odot {G'_j}^{\T})| \leq D_V$
    \item Except when noted otherwise (which only happens for ${\text{truncation error}}_{\text{too many edges in one part}}$), all of the ribbons $G_j,\ldots,G_1,T,G'_1,\ldots,G'_j$ (but not necessarily $L, L'$) satisfy the constraint that ${|E_{mid}(R)| - |V(R)| \leq C{\dsos}}$.
\end{enumerate}
\end{enumerate}
\end{restatable}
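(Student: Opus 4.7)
The plan is to derive this identity by an iterative ``peeling'' procedure that applies the graph matrix multiplication identity, combined with the canonical left-middle-right decomposition, to progressively strip off the outer factors $\sum_L \lam_L M_L$ and $\sum_L \lam_L M_L^\T$.

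First, I would rewrite the sum over shapes as a sum over ribbons via \cref{lem:injective-graph-matrices}, and then apply the canonical shape decomposition (\cref{lem:shape-decomposition}) to each ribbon in $\calS$, together with the coefficient factorization $\lam_{L \circ T \circ L'^\T} = \lam_L \lam_T \lam_{L'}$ from \cref{lem:coefficient-factor}. This converts the left-hand side into
\[
\sum_R \lam_R M_R \;=\; \sum_{\substack{L,T,L' \\ \text{properly composable}}} \lam_L \lam_T \lam_{L'}\, M_{L \circ T \circ L'^\T}.
\]

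Next, for any composable (not necessarily properly composable) triple $L, T, L'^\T$, the ribbon multiplication identity gives
\[
M_L M_T M_{L'}^\T \;=\; \sum_{P \in \calP_{L,T,L'^\T}} M_{(L \circ T \circ L'^\T)_P}.
\]
Isolating the trivial pattern rearranges $M_{L \circ T \circ L'^\T}$ as $M_L M_T M_{L'}^\T$ minus the nontrivial intersection terms. Summing and swapping the order of summation produces the ``main'' sandwich $\bigl(\sum_L \lam_L M_L\bigr)\bigl(\sum_T \lam_T M_T\bigr)\bigl(\sum_{L'} \lam_{L'} M_{L'}^\T\bigr)$, which is exactly the $j=0$ contribution in the middle factor. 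The discrepancy between ``composable'' and ``properly composable'' together with the vertex cap $|V| \le D_V$ and the edge cap $|E_{mid}| - |V| \le C\dsos$ imposed by $\calS$ produce exactly the two truncation error terms declared in the statement.

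Then, the residual nontrivial intersection terms $M_{(L \circ T \circ L'^\T)_P}$ are processed recursively. For each such $P$, extract the minimal subribbons $G_1 \subseteq L$ and $G_1' \subseteq L'$ whose vertices actually participate in the intersection with the interior. By minimality (taking $U_{G_1}$ as the minimum vertex separator in $L$ between $U_L$ and $V_L \cup \Int(P)$, and similarly for $G_1'$), each $G_i$ is a genuine left ribbon and $(G_1, T, G_1'^\T, P)$ is a middle intersection in the sense of \cref{def:middle-intersection}, while the complements $L \setminus G_1$ and $L' \setminus G_1'$ are left ribbons properly composable with the interior. Re-applying the multiplication identity to factor out $L \setminus G_1$ and $L' \setminus G_1'$ yields
\[
M_{L \setminus G_1}\, M_{(G_1 \circ T \circ G_1'^\T)_P}\, M_{(L' \setminus G_1')^\T},
\]
at the cost of a sign $(-1)$ and a further batch of deeper intersection terms. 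Iterating $j$ times yields the $j$-th layer of the middle sum with sign $(-1)^j$ and central object $G_j \circ \cdots \circ G_1 \circ T \circ G_1'^\T \circ \cdots \circ G_j'^\T$ evaluated under the iterated middle-intersection pattern of \cref{rmk:middle-intersections}. Since each peel consumes at least one intersecting vertex and each ribbon has at most $D_V \leq 2\dsos \cdot \poly(\log n)$ vertices, the recursion terminates after at most $2\dsos$ steps.

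The main obstacle is the combinatorial bookkeeping that validates this peeling. Three things must be checked: (i) the minimal intersecting subribbon $G_i$ in each layer is uniquely defined and is itself a proper left ribbon, which requires the minimum-vertex-separator characterization of \cref{def:middle-intersection}; (ii) after extraction, the pattern on $(G_i, \tau_{i-1}, G_i'^\T)$ remains a middle intersection, so that the recursion is self-consistent and lands inside $\calP^{mid}$ as required by the statement; and (iii) every residual term generated by the two applications of the multiplication identity and by the vertex/edge caps of $\calS$ is correctly classified into one of the two declared truncation buckets. Once these three items are in place, the whole decomposition is an inclusion-exclusion telescoping that assembles into exactly the stated identity.
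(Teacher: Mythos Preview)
Your proposal follows essentially the same approach as the paper's proof: both start from the ribbon reformulation and canonical $L \circ T \circ L'^\T$ decomposition, then iteratively factor out the outer $\sum_L \lam_L M_L$ layers, at each step peeling off from the non-properly-composable residuals a subribbon $G_{k+1}$ (resp.\ $G'_{k+1}$) defined via the leftmost minimum vertex separator between $A_L$ and $B_L \cup \Int(P)$, and pushing size/edge overflows into the two truncation buckets.

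One point to correct: your termination argument is off. You argue that ``each peel consumes at least one intersecting vertex and each ribbon has at most $D_V$ vertices,'' which would only bound the depth by $D_V$, not by $2\dsos$. The paper's argument is sharper and uses a different monotone quantity: since each $G_{k+1}, G'_{k+1}$ is nontrivial, the boundary sizes $|A_{G_{k+1}}| + |A_{G'_{k+1}}|$ strictly increase with $k$, and each $|A_{G_k}|$ is at most $\dsos$ (these are row/column index sets of the moment matrix), so the recursion halts within $2\dsos$ steps. Your invariant needs to track the growth of $|U_{G_k}|$, not the consumption of intersecting vertices.
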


\subsection{Factor out $\Pi$ and edges incident to $U_\al \cap V_\al$}
    
    We need to augment our graph matrices to include missing edge indicators.
    There are two reasons. First, the dominant term in the decomposition is
    a projection matrix with these indicators, rather than the identity matrix. Second,
    we need to factor out dependence on $U_\tau \cap V_\tau$. These vertices do not essentially participate in the graph matrix, since the matrix is block diagonal with a block for each assignment to $U_\al \cap V_\al$.
    (When proving things with graph matrices, it is smart to start by assuming $U_\al \cap V_\al = \emptyset$, then handle the case $U_\al \cap V_\al \neq \emptyset$).
    
    For the first issue, define the matrix $\Pi$.
    \begin{definition}
         Let $\pi \in \R^{\binom{n}{\leq\dsos}\times \binom{n}{\leq\dsos} }$ be 
         the projector to the independent set constraints. $\pi$ is a diagonal matrix with entries \[ 
        \pi[S,S] = \1{S \text{ is an independent set in }G}\]
    \end{definition}
    
    \begin{definition}
        Let $\Pi \in \R^{\binom{n}{\leq\dsos}\times \binom{n}{\leq\dsos}}$ be
        a rescaling of $\pi$ by
        \[\Pi[S, S] = \left(\frac{1}{1-p}\right)^{\binom{\abs{S}}{2}}\1{S \text{ is an independent set in }G} \]
    \end{definition}
    The rescaling satisfies $\E_{G \sim G_{n,p}}[\Pi[S, S]] = 1$.
    Recall that in \cref{lem:independent-set-indicator}, we had
    the function $1 + \sqrt{\frac{p}{1-p}}\chi_{\{e\}} = \frac{1}{1-p}1_{e \notin E(G)}$ which is $\frac{1}{1-p}$ times the indicator function for $e$ being absent from the input graph $G$.
    Since the coefficients $\lam_R$ come with $\sqrt{\frac{p}{1-p}}$ for each edge,
    we can group the edges inside $A_R, B_R$ into missing edge indicators
    for all edges inside $A_R, B_R$ -- that is, independent set indicators on $A_R, B_R$ which form the matrix $\Pi$.
    For all the sums of ribbons we consider, after grouping:
    \[ \sum_R \lam_R M_R = \Pi^{1/2}\left(\sum_{R : \atop E(A_R) = E(B_R) = \emptyset} \left(\frac{1}{1-p}\right)^{\binom{|A_R|}{2}/2 + \binom{|B_R|}{2}/2 - \binom{|A_R \cap B_R|}{2}}\lam_R M_R\right)\Pi^{1/2}.\]
    
    For the second issue, consider the function $1 + \sqrt{\frac{p}{1-p}}\chi_{\{e\}} = \frac{1}{1-p}1_{e \notin E(G)}$ again. The magnitude of this function is clearly bounded by $\frac{1}{1-p}$. However, if we try to bound the magnitude of this function term by term, we instead get a bound of $2$ because $1$ always has magnitude $1$ and if $e \in E(G)$ then $\sqrt{\frac{p}{1-p}}\chi_{\{e\}} = 1$, so the best bound we can give on the magnitude of $\sqrt{\frac{p}{1-p}}\chi_{\{e\}}$ is also $1$.

    This can become a problem if $U_\tau \cap V_\tau$ is large, in which case there are many possible subsets of edges incident to $U_\tau \cap V_\tau$ even if the rest of the shape is small. If we try to bound things term by term, we will get a factor of $2^{|U_\tau \cap V_\tau|}$ which may be too large. To handle this, our strategy will be to group terms into missing edge indicators, which gives a factor of $\left(\frac{1}{1-p}\right)^{|U_\tau \cap V_\tau|}$ per vertex instead.
This almost works, however due to the connected truncation there are some edge cases of \cref{lem:independent-set-indicator} where we do not have the term where $T$ is empty. We handle this using the following lemma.
\begin{lemma}\label{lem:quasi-indicators}
For any set of potential edges $E$,
\[
\sum_{E' \subseteq E: E' \neq \emptyset}{\left(\sqrt{\frac{p}{1-p}}\right)^{|E'|}\chi_{E'}} = \sum_{e \in E}{\left(\sqrt{\frac{p}{1-p}}\right)\chi_{\{e\}}\sum_{E' \subseteq E \setminus \{e\}}{\frac{1}{|E|\binom{|E|-1}{|E'|}}\left(\frac{1}{1-p}\right)^{|E'|}1_{\forall e' \in E', e' \notin E(G)}}}
\]
\end{lemma}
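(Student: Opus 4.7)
The plan is a direct, combinatorial identity check: rewrite both sides as sums over pairs $(e, F')$ with $e \in E$ and $F' \subseteq E \setminus \{e\}$ of the common building block $\sqrt{p/(1-p)}\,\chi_{\{e\}} \cdot (\sqrt{p/(1-p)})^{|F'|}\chi_{F'}$, and verify that the scalar coefficients match via a short binomial identity.

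First I would handle the LHS. Every non-empty $E' \subseteq E$ can be written as $E' = \{e\} \cup F'$ with $e \in E'$ and $F' = E' \setminus \{e\} \subseteq E \setminus \{e\}$, and there are exactly $|E'|$ such decompositions. Using $\chi_{E'} = \chi_{\{e\}} \chi_{F'}$, this gives
\[
\sum_{\emptyset \ne E' \subseteq E} \left(\sqrt{\tfrac{p}{1-p}}\right)^{|E'|}\chi_{E'}
= \sum_{e \in E} \sqrt{\tfrac{p}{1-p}}\,\chi_{\{e\}} \sum_{F' \subseteq E\setminus\{e\}} \frac{1}{|F'|+1}\left(\sqrt{\tfrac{p}{1-p}}\right)^{|F'|}\chi_{F'}.
\]

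Next I would process the RHS using Proposition~\ref{lem:independent-set-indicator}, which gives
\[
\left(\tfrac{1}{1-p}\right)^{|E'|} \mathbf{1}_{\forall e' \in E',\, e' \notin E(G)}
= \sum_{F' \subseteq E'} \left(\sqrt{\tfrac{p}{1-p}}\right)^{|F'|}\chi_{F'}(G).
\]
Substituting and swapping the order of summation over $F' \subseteq E' \subseteq E\setminus\{e\}$, the RHS becomes
\[
\sum_{e \in E} \sqrt{\tfrac{p}{1-p}}\,\chi_{\{e\}} \sum_{F' \subseteq E\setminus\{e\}} \left(\sqrt{\tfrac{p}{1-p}}\right)^{|F'|}\chi_{F'}(G) \cdot \Bigg(\sum_{F' \subseteq E' \subseteq E\setminus\{e\}} \frac{1}{|E|\binom{|E|-1}{|E'|}}\Bigg).
\]
So it suffices to show that the inner coefficient in parentheses equals $1/(|F'|+1)$, matching the LHS.

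Finally I would verify the binomial identity. Setting $n = |E|$ and $f = |F'|$, the number of $E' \subseteq E\setminus\{e\}$ of size $r$ containing $F'$ is $\binom{n-1-f}{r-f}$, and the standard identity $\binom{n-1-f}{r-f}/\binom{n-1}{r} = \binom{r}{f}/\binom{n-1}{f}$ combined with the hockey-stick identity $\sum_{r=f}^{n-1}\binom{r}{f} = \binom{n}{f+1}$ yields
\[
\sum_{F' \subseteq E' \subseteq E\setminus\{e\}} \frac{1}{n\binom{n-1}{|E'|}}
= \frac{1}{n\binom{n-1}{f}}\binom{n}{f+1}
= \frac{1}{f+1},
\]
as required. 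There is no real obstacle here beyond checking the binomial identity; the conceptually relevant point is that this is the mechanism by which the disconnected/empty $E'$ case in Proposition~\ref{lem:independent-set-indicator} gets absorbed, allowing the ``$E' \ne \emptyset$'' sum to still be written using missing-edge indicators at the cost of pulling out a single Fourier character $\chi_{\{e\}}$.
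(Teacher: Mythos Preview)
Your proof is correct. The approach, however, differs from the paper's. The paper fixes a total ordering on $E$ and observes that picking the \emph{minimum} element $e$ of each nonempty $E'$ gives the exact identity
\[
\sum_{\emptyset \ne E' \subseteq E}\left(\sqrt{\tfrac{p}{1-p}}\right)^{|E'|}\chi_{E'}
= \sum_{e \in E}\sqrt{\tfrac{p}{1-p}}\,\chi_{\{e\}}\prod_{e'>e}\tfrac{1}{1-p}\,1_{e'\notin E(G)},
\]
and then averages over all orderings; the coefficient $1/\bigl(|E|\binom{|E|-1}{|E'|}\bigr)$ is precisely the probability, under a uniformly random ordering, that the set of elements strictly after $e$ equals $E'$. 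Your route instead splits $E'$ uniformly over its own elements (giving the $1/(|F'|+1)$ weights), expands the missing-edge indicators via Proposition~\ref{lem:independent-set-indicator}, and matches coefficients through the hockey-stick identity. The paper's argument is shorter and explains the origin of the coefficients probabilistically; yours is a clean direct verification that avoids introducing auxiliary orderings.
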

\begin{proof}
Observe that if we give an ordering to the edges of $E$ then we can write
\[
\sum_{E' \subseteq E: E' \neq \emptyset}{\left(\sqrt{\frac{p}{1-p}}\right)^{|E'|}\chi_{E'}} = \sum_{e \in E}{\left(\sqrt{\frac{p}{1-p}}\right)\chi_{\{e\}}\left(\prod_{e' \in E: e' > e}{\left(\frac{1}{1-p}\right)1_{e' \notin E(G)}}\right)}
\]
Taking the average over all orderings of the edges of $E$ gives the result. To see this, observe that if we take a random ordering, the probability of having a term $\left(\sqrt{\frac{p}{1-p}}\right)\chi_{\{e\}}\left(\frac{1}{1-p}\right)^{|E'|}1_{\forall e' \in E', e' \notin E(G)}$ is $\frac{1}{|E|\binom{|E|-1}{|E'|}}$ as there must be exactly $|E'|$ elements after $e$ (so $e$ must be in the correct position) and these elements must be $E'$.
\end{proof}
\begin{remark}
We take the average over all orderings of the edges $E$ so that our expression will be symmetric with respect to permuting the indices in $U_\tau \cap V_\tau$.
\end{remark}
Based on this lemma, we make the following definition.
\begin{definition}[Quasi-indicator function]
Given a set of edges $E \subseteq \binom{[n]}{2}$, we define the quasi-missing edge indicator function $q_E$ to be 
\[
q_E = (1-p)^{|E|+1}\sum_{E' \subseteq E}{\frac{1}{(|E|+1)\binom{|E|}{|E'|}}\left(\frac{1}{1-p}\right)^{|E'|}1_{\forall e \in E', e \notin E(G)}}
\] 
\end{definition}
\begin{proposition}
    $\sum_{E' \subseteq E: E' \neq \emptyset}{\left(\sqrt{\frac{p}{1-p}}\right)^{|E'|}\chi_{E'}} = \left(\frac{1}{1-p}\right)^{|E|}\sum_{e \in E}{\left(\sqrt{\frac{p}{1-p}}\right)\chi_{\{e\}}q_{E \setminus \{e\}}}$
\end{proposition}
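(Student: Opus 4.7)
The plan is to recognize that the proposition is essentially just a repackaging of the preceding lemma (the ``averaging over orderings'' identity) in which the nested indicator sum is abbreviated using the quasi-indicator function $q$. So rather than invoking fresh combinatorics or probability, I would simply unpack the definition of $q_{E \setminus \{e\}}$ and check that the two sides match term-by-term.

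Concretely, first I would set $m := |E \setminus \{e\}| = |E|-1$ and expand using the definition of $q$:
$$q_{E \setminus \{e\}} \;=\; (1-p)^{m+1}\sum_{E' \subseteq E \setminus \{e\}} \frac{1}{(m+1)\binom{m}{|E'|}}\left(\frac{1}{1-p}\right)^{|E'|} \mathbf{1}_{\forall e'' \in E',\, e'' \notin E(G)}.$$
Substituting $m+1 = |E|$ and $\binom{m}{|E'|} = \binom{|E|-1}{|E'|}$ shows that this equals $(1-p)^{|E|}$ times exactly the inner sum appearing on the right-hand side of the preceding lemma. Equivalently, that inner sum is $(1-p)^{-|E|}\,q_{E \setminus \{e\}}$. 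Next I would substitute this identification back into the lemma, pull the global factor $(1-p)^{-|E|} = (1/(1-p))^{|E|}$ outside the sum over $e \in E$, and read off the stated identity.

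I do not expect any genuine obstacle: the argument is purely bookkeeping, and the only point that requires care is matching the combinatorial normalization when the index parameter shifts from $|E|$ (in the definition of $q_E$) to $|E|-1$ (in $q_{E \setminus \{e\}}$), together with tracking the powers of $(1-p)$. Once those match, the proposition follows immediately from the preceding lemma with no further input.
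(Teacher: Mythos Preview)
Your proposal is correct and matches the paper's (implicit) approach: the proposition is stated in the paper immediately after the definition of $q_E$ without a separate proof, precisely because it is just the preceding lemma with the inner sum rewritten via that definition. Your bookkeeping with $m=|E|-1$ and the factor $(1-p)^{|E|}$ is exactly the check required.
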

\begin{remark}
$q_E$ is a linear combination of terms where for each edge $e \in E$, either there is a missing edge indicator for $e$ or $e$ is not mentioned at all. Moreover, we add the factor of $(1-p)^{|E|+1}$ to $q_E$ so that the coefficients in this linear combination are non-negative and have sum at most $1$. This is the only fact that we will use about $q_E$ when we analyze the norms of the resulting graph matrices.
\end{remark}

\begin{definition}[Augmented ribbon]\label{def:augmented-ribbon}
    An augmented ribbon is a vertex set $V(R) \subseteq [n]$ with two subsets $A_R, B_R \subseteq V(R)$, as well as a multiset of edge-functions.
    In our augmentation, each edge-function is either a (single edge) Fourier character, a (single edge) missing edge indicator, or a (subset of edges) quasi-missing edge indicator.
\end{definition}
\begin{definition}[Matrix for an augmented ribbon]
    The matrix $M_R$ has rows and columns indexed by all subsets of $[n]$, with entries:
    \[M_R[I,J] = \begin{cases}
        \displaystyle\prod_{\text{edge-functions }f\text{ on }S \subseteq V(R)}f(G|_S) & I = A_R, J = B_R\\
        0 & \text{Otherwise}
    \end{cases}\]
\end{definition}
Two augmented ribbons are isomorphic if they are in the same $S_n$-orbit (i.e. they are equal after renumbering the vertices), and we define an augmented shape for each orbit.
Equivalently, augmented shapes are equivalence classes of augmented ribbons under \emph{type-preserving} multi-hypergraph isomorphism, where the type of a hyperedge is the associated function.
As before, we can specify the shape by a representative graph with edge-functions.
The graph matrix $M_\al$ for an augmented shape $\al$ is still defined as the sum of $M_R$ over
injective embeddings of $\al$ into $[n]$.

\begin{definition}[Permissible]\label{def:permissible-ribbon}
We say that an augmented ribbon $R$ is permissible if the following conditions are satisfied:
\begin{enumerate}
\item All vertices $v \in V(R)$ are reachable from $A_R \cup B_R$ using the Fourier character edges.
\item There is a missing edge indicator for all edges within $A_R$ and a missing edge indicator for all edges within $B_R$.
\item For any vertex $v \in V(R) \setminus (A_R \cap B_R)$ which is reachable from $(A_R \cup B_R) \setminus (A_R \cap B_R)$ without passing through $(A_R \cap B_R)$, there is a missing edge indicator for all edges between $(A_R \cap B_R)$ and $v$.

\item For each connected component $C$ of $R \setminus (A_R \cap B_R)$ which is disconnected from $A_R \cup B_R$, there is precisely one pair of vertices $u \in A_R \cap B_R$ and $w \in C$ such that $(u,w) \in E(R)$ and $R$ contains the quasi-missing edge indicator $q_{\left((A_R \cap B_R) \times C\right) \setminus \{u,w\}}$ for all other edges between $A_R \cap B_R$ and $C$.
\item There are no other (quasi-)missing edge indicators, and the Fourier character edges are disjoint from the (quasi-)missing edge indicators.
\end{enumerate}
A shape is permissible if any (equivalently, all) of its ribbons are permissible.
\end{definition}

To explain the conditions, the first condition is the connected truncation. The second condition follows because we factored out independent set indicators. The third condition observes that if we have a vertex $v \in W_\al$ that is connected to $(U_{\al} \cup V_{\al}) \setminus (U_{\al} \cap V_{\al})$, then we can factor out a missing edge indicator for all edges that go from $v$ to $U_{\al} \cap V_{\al}$ because such edges do not affect the connectivity properties of $\al$. The fourth condition picks out components of $W_\al$ that do affect the connectivity and handles them via the definition of quasi-missing edge indicators as explained above.

\begin{remark}
    For the remainder of \cref{sec:psdness}, ribbon means ``augmented ribbon'' and shape means ``augmented shape''. $E(\al)$ refers to the multiset of Fourier characters edges in $\al$.
    We will write ``permissible $R \in \calS$'' to mean a permissible ribbon such that the corresponding non-augmented ribbon that includes all edges involved in any edge function is in $\calS$ (and similarly for other sets of ribbons/shapes).
\end{remark}

\begin{lemma}\label{lem:permissible-has-pi}
    If $\al$ is permissible, then $\pi M_\al = M_\al \pi = M_\al$.
    \begin{proof}
        $\al$ has missing edge indicators for all edges inside $U_\al$ and $V_\al$,
        hence $M_\al$ is zero on any rows or columns that are not independent sets.
    \end{proof}
\end{lemma}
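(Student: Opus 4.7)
The plan is to use the fact that $\pi$ is a diagonal matrix indexed by subsets of $[n]$ with $\pi[S,S] = \1{S \text{ is an independent set in } G}$, so multiplication on the left (resp. right) by $\pi$ simply zeros out every row (resp. column) whose index is not an independent set in $G$. Thus it suffices to show the entry-wise identity $M_\al[I, J] = 0$ whenever $I$ is not an independent set in $G$ (and symmetrically whenever $J$ is not).

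To prove this entry-wise vanishing, I would unfold the definition of the graph matrix for an augmented shape: $M_\al = \sum_{\phi} M_{\phi(\al)}$ over injective embeddings $\phi : V(\al) \to [n]$, and a single ribbon $\phi(\al)$ contributes to the entry $M_\al[I, J]$ only when $\phi(U_\al) = I$ and $\phi(V_\al) = J$. For any such $\phi$, the corresponding value is the product of all edge-functions $f(G|_S)$ carried by the ribbon $\phi(\al)$.

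The key step is to invoke condition~(2) of Definition~\ref{def:permissible-ribbon}: a permissible ribbon $R$ carries a missing-edge indicator $1_{e \notin E(G)}$ for every pair $e \subseteq A_R$. Consequently, for any embedding $\phi$ with $\phi(U_\al) = I$, the ribbon $\phi(\al)$ contains the factor $1_{e \notin E(G)}$ for every $e \subseteq I$. If $I$ is not an independent set, at least one such $e$ lies in $E(G)$, which makes the corresponding factor equal to zero and hence kills the entire product. Summing over all such $\phi$ then gives $M_\al[I, J] = 0$, so $(\pi M_\al)[I, J] = M_\al[I, J]$ for every $I$ (the case $I$ independent being trivial). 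The identity $M_\al \pi = M_\al$ follows by the symmetric argument applied to $B_R$ and $V_\al$.

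There is no substantive obstacle here; the lemma is essentially immediate from the definition of permissibility. The only thing to be careful about is to verify that the missing-edge indicators demanded by condition~(2) use the same convention $1_{e \notin E(G)}$ (up to the overall $\frac{1}{1-p}$ normalization that already appears in Lemma~\ref{lem:independent-set-indicator} and in the definition of $\Pi$), so that evaluating on any $I$ containing an edge of $G$ genuinely produces a zero factor rather than, say, a Fourier character that could cancel against other terms in the sum over $\phi$.
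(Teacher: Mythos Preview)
Your proposal is correct and follows exactly the same reasoning as the paper's proof: condition~(2) of permissibility guarantees missing-edge indicators on all pairs inside $U_\al$ and $V_\al$, so the entry $M_\al[I,J]$ vanishes whenever $I$ or $J$ contains an edge of $G$. Your write-up simply unfolds this observation in more detail than the one-sentence proof in the paper.
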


We count the extra factors of $1/(1-p)$ with the following definition.
\begin{definition}
    For a permissible shape or ribbon $R$, let:
\[m(R) = \left(\frac{1}{1-p}\right)^{\binom{|A_R|}{2}/2 + \binom{|B_R|}{2}/2  - \binom{|A_R \cap B_R|}{2} + |V(R) \setminus (A_R \cup B_R)| \cdot |A_R \cap B_R|}.\]
\end{definition}

\begin{lemma}
    \begin{align*}
    & \sum_{R \in {\calS}}{\lam_R M_R} = \\
    &\Pi^{1/2}\left(\sum_{\text{permissible}\atop L \in \calL}{
    m(L){\lam_L}M_L}\right)\cdot\\
    &\left(\sum_{j=0}^{2\dsos}{(-1)^{j}\sum_{\text{permissible} \atop G_j,\ldots,G_1,T,G'_1,\ldots,G'_j}
    m(T)\left(\prod_{i=1}^j m(G_i)m(G_i')\right)
    {\lam_{G_j \odot \ldots \odot G_1 \odot T \odot {G'_1}^{\T} \odot \ldots \odot {G'_j}^{\T}} M_{G_j \circ \ldots \circ G_1 \circ T \circ {G'_1}^{\T} \circ \ldots \circ {G'_j}^{\T}}}}\right)\cdot\\
    &\left(\sum_{\text{permissible}\atop L \in \calL}
    m(L)
    {{\lam_L}M_L}\right)^\T \Pi^{1/2}\\
    &+ {\text{truncation error}}
\end{align*}
\end{lemma}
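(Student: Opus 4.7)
The plan is to start from the identity in \cref{lem:fullribbondecomposition} and regroup each sum over non-augmented ribbons into a sum over permissible augmented ribbons, pulling the outermost independent-set indicators out as $\Pi^{1/2}$ factors and folding the remaining (quasi-)missing-edge indicators into the weights $m(R)$. For each ribbon $R$ appearing in any of the three sums, I would partition its Fourier-character edges into four disjoint types relative to $A_R, B_R$: (a) edges contained entirely in $A_R$ or in $B_R$; (b) edges from $A_R \cap B_R$ to a vertex $v \in V(R) \setminus (A_R \cap B_R)$ which is reachable from $(A_R \cup B_R) \setminus (A_R \cap B_R)$ in the remaining edges without passing through $A_R \cap B_R$; (c) edges from $A_R \cap B_R$ into a connected component $C$ of $R \setminus (A_R \cap B_R)$ that is disconnected from $A_R \cup B_R$ in the remaining edges; and (d) the ``skeleton'' edges. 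Two non-augmented ribbons with the same skeleton collapse into a single permissible augmented ribbon after the regrouping below.

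First I would sum over type (a). Since $\lambda_R$ carries a factor of $(\sqrt{p/(1-p)})^{|E(R)|}$, summing over subsets of potential edges inside $A_R$ (resp.\ $B_R$) matches the setting of \cref{lem:independent-set-indicator} and yields an independent-set indicator on $A_R$ (resp.\ $B_R$) scaled by $(1/(1-p))^{\binom{|A_R|}{2}}$ (resp.\ $(1/(1-p))^{\binom{|B_R|}{2}}$), minus a $\binom{|A_R \cap B_R|}{2}$ double-count correction for edges inside $A_R \cap B_R$. The key accounting convention is that each shared-boundary indicator's $(1/(1-p))$ factor is split symmetrically between the objects on its two sides: for the outermost $A_L$ (resp.\ $A_{L'}$), between $\Pi^{1/2}$ and $m(L)$; and for each interior boundary $B_L = A_T$ (or $B_T = A_{L'^\T}$, or any junction in the chain $G_j \circ \cdots \circ G_1 \circ T \circ \cdots$), between the weights of the two adjacent ribbons. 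Since left ribbons have no Fourier characters inside $B_L$ by definition, the $(1/(1-p))^{\binom{|B_L|}{2}}$ factor at each interior boundary comes entirely from summing the type-(a) edges of the adjacent middle or left-like ribbon on the interior side, and is split evenly. This produces the first three terms in the exponent of $m(R)$ and the outer $\Pi^{1/2}$ factors via \cref{lem:permissible-has-pi}.

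Next I would handle types (b) and (c). Type (b) is straightforward: its edges do not affect connectivity, so \cref{lem:independent-set-indicator} once again produces missing-edge indicators. Type (c) is the delicate case: naive summation would include the all-edges-absent term, disconnecting $C$ from $A_R \cup B_R$ and violating permissibility condition (1). This is precisely the use of \cref{lem:quasi-indicators}: for each disconnected component $C$, keep exactly one Fourier character $\chi_{\{e\}}$ with $e$ between $A_R \cap B_R$ and $C$ (preserving connectivity) and bundle the remainder into the quasi-indicator $q_{(A_R \cap B_R \times C) \setminus \{e\}}$, averaged over orderings so that the output is symmetric under permutations of $A_R \cap B_R$. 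Accumulating the $(1/(1-p))$ factors from (b) and (c) yields exactly $(1/(1-p))^{|V(R) \setminus (A_R \cup B_R)| \cdot |A_R \cap B_R|}$, which is the final term in the exponent of $m(R)$. The truncation errors from \cref{lem:fullribbondecomposition} carry through the same regrouping and are folded into the new truncation error.

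The main obstacle I expect is the bookkeeping in step (c): verifying that the normalization $(1-p)^{|E|+1}$ in the definition of $q_E$, together with the retained Fourier edge's $(p/(1-p))^{1/2}$ and the averaging over orderings, combines to produce exactly the right net weight $\lambda_R \cdot m(R)$ for the resulting augmented ribbon. A secondary check, which should follow from the construction, is that all five conditions of \cref{def:permissible-ribbon} hold for the output ribbons: (1) is inherited from the skeleton, which retains all Fourier edges necessary for connectivity; (2)--(4) are produced by the groupings in (a)--(c); and (5) is immediate from the exhaustive classification of edges.
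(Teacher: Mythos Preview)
Your proposal is correct and follows essentially the same approach as the paper. The paper's own proof is extremely terse—it simply observes (citing \cref{lem:factoring-ignores-indicators}) that the factoring in \cref{lem:fullribbondecomposition} is oblivious to edges inside $A_R$, $B_R$, and edges incident to $A_R \cap B_R$, then notes that all subsets of such edges are present \emph{except} the empty set of edges between $A_R \cap B_R$ and the components of $R \setminus (A_R \cap B_R)$ disconnected from $A_R \cup B_R$ (due to the connected truncation), and finally says to group accordingly into missing-edge and quasi-missing-edge indicators. Your four-way edge classification (a)--(d), the boundary-splitting bookkeeping for the $\Pi^{1/2}$ and $m(R)$ factors, and your use of \cref{lem:quasi-indicators} for the type-(c) edges are exactly this grouping spelled out in full; the only ingredient you leave implicit is the appeal to \cref{lem:factoring-ignores-indicators} to justify that ribbons sharing a ``skeleton'' really do appear with all required edge-subset variants in the decomposition, which your phrase ``two non-augmented ribbons with the same skeleton collapse'' presupposes.
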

\begin{proof}
    The factoring process is completely independent of edges inside $A_T, B_T, A_{G_i}, B_{G_i}$ or edges incident to $A_R \cap B_R$ for any of these shapes,
    as noted in \cref{lem:factoring-ignores-indicators}.
    All subsets of edges inside $A_R$ or $B_R$ are present, but not all subsets of edges incident to $A_R \cap B_R$ are present; because of the connected truncation, we do not have the empty set of edges between $A_R \cap B_R$ and components of $R \setminus (A_R \cap B_R)$ which are disconnected from $A_R \cup B_R$.
    Now group these Fourier characters into missing edge indicators
    and quasi-missing edge indicators as explained above.
\end{proof}

We can show the factor $m(R)$ is a negligible constant:
\begin{lemma}\label{lem:bound-m}
    For any ribbon $R$ with degree at most $\dsos$,
    \[ m(R) \leq \left(\frac{1}{(1-p)^{2\dsos}}\right)^{|V(R) \setminus (A_R \cap V_R)|}.\]
    \begin{proof}
        The claim follows from the following two inequalities:
        \[ \binom{\abs{A_R}}{2}/2 + \binom{\abs{B_R}}{2}/2 - \binom{\abs{A_R \cap B_R}}{2} \leq \dsos (\abs{A_R \setminus (A_R \cap B_R)} + \abs{B_R \setminus (A_R \cap B_R)}) \]
        \[\abs{A_R \cap B_R}\abs{V(R) \setminus (A_R \cup B_R)} \leq \dsos \abs{V(R) \setminus (A_R \cap B_R)}.\]
    \end{proof}
\end{lemma}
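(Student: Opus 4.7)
The plan is to establish the two inequalities stated in the author's sketch and then combine them by factoring out the maximal exponent on $|V(R) \setminus (A_R \cap B_R)|$. Let me set $a = |A_R|$, $b = |B_R|$, $c = |A_R \cap B_R|$, so $|A_R \setminus (A_R \cap B_R)| = a-c$ and $|B_R \setminus (A_R \cap B_R)| = b-c$; by assumption $a,b \leq \dsos$.

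First I would derive the telescoping identity $\binom{a}{2} - \binom{c}{2} = \tfrac{1}{2}(a-c)(a+c-1)$, and then bound $a+c-1 \leq 2\dsos$ using $a,c \leq \dsos$. This yields $\binom{a}{2} - \binom{c}{2} \leq \dsos(a-c)$, and symmetrically $\binom{b}{2} - \binom{c}{2} \leq \dsos(b-c)$. Averaging these two inequalities produces
\[
\binom{a}{2}/2 + \binom{b}{2}/2 - \binom{c}{2} \;\leq\; \tfrac{\dsos}{2}\bigl((a-c)+(b-c)\bigr) \;\leq\; \dsos\bigl(|A_R \setminus (A_R \cap B_R)| + |B_R \setminus (A_R \cap B_R)|\bigr),
\]
establishing the first claimed inequality. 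The second inequality is immediate: $|A_R \cap B_R| \leq \dsos$ by the degree bound, and $V(R) \setminus (A_R \cup B_R) \subseteq V(R) \setminus (A_R \cap B_R)$, so $|A_R \cap B_R|\cdot|V(R) \setminus (A_R \cup B_R)| \leq \dsos\cdot|V(R) \setminus (A_R \cap B_R)|$.

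To combine, I would observe that $(a-c) + (b-c) = |A_R \cup B_R| - |A_R \cap B_R| = |(A_R \cup B_R) \setminus (A_R \cap B_R)|$, and this is bounded above by $|V(R) \setminus (A_R \cap B_R)|$. Adding the two inequalities, the total exponent in the definition of $m(R)$ is at most $\dsos\cdot|V(R) \setminus (A_R \cap B_R)| + \dsos\cdot|V(R) \setminus (A_R \cap B_R)| = 2\dsos\cdot|V(R) \setminus (A_R \cap B_R)|$. Since $1/(1-p) \geq 1$, monotonicity of the base gives
\[
m(R) \;\leq\; \left(\frac{1}{1-p}\right)^{2\dsos \cdot |V(R) \setminus (A_R \cap B_R)|} \;=\; \left(\frac{1}{(1-p)^{2\dsos}}\right)^{|V(R) \setminus (A_R \cap B_R)|},
\]
as desired. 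There is no real obstacle here; the only subtlety is being careful to bound $|(A_R \cup B_R) \setminus (A_R \cap B_R)|$ by $|V(R) \setminus (A_R \cap B_R)|$ before combining, so that every contribution to the exponent is expressed as a multiple of this single quantity.
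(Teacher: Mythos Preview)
Your proof is correct and follows essentially the same approach as the paper: you establish the same two inequalities the paper states (filling in the easy details the paper omits) and then combine them via the inclusion $(A_R \cup B_R) \setminus (A_R \cap B_R) \subseteq V(R) \setminus (A_R \cap B_R)$ to bound the total exponent by $2\dsos \cdot |V(R) \setminus (A_R \cap B_R)|$.
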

This will be handled later by the vertex decay.

\subsection{Conditioning I:  reduction to sparse shapes}
\label{sec: conditioning}




To improve norm bounds for dense shapes, we can replace them by sparse ones.
The replacement of dense ribbons is accomplished by the following lemmas.

\begin{lemma}
$1_{e \in E(G)}\chi_{e} = -\sqrt{p(1-p)} + (1-p)\chi_{e}$
\end{lemma}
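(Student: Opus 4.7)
My plan is to verify the identity pointwise as a function of $G_e \in \{0,1\}$, using the definitions $\chi(0) = \sqrt{p/(1-p)}$ and $\chi(1) = -\sqrt{(1-p)/p}$ from the section on $p$-biased Fourier analysis. Both sides are functions of a single binary variable, so checking the two cases is sufficient. In the case $G_e = 0$, the indicator $1_{e \in E(G)}$ vanishes so the left side is zero, and on the right side I would simplify $(1-p)\chi_e(0) = (1-p)\sqrt{p/(1-p)} = \sqrt{p(1-p)}$, which exactly cancels the $-\sqrt{p(1-p)}$ term. In the case $G_e = 1$, the left side equals $\chi(1) = -\sqrt{(1-p)/p}$, and I would rewrite the right side as $-\sqrt{p(1-p)} - (1-p)\sqrt{(1-p)/p}$; factoring out $-\sqrt{(1-p)/p}$ (using $\sqrt{p(1-p)} = p\sqrt{(1-p)/p}$) gives $-\sqrt{(1-p)/p}\,(p + (1-p)) = -\sqrt{(1-p)/p}$, matching the left side.

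There is essentially no obstacle — the statement is a direct algebraic identity from the definition of the $p$-biased character. As a sanity check, I would also note the conceptual interpretation: from $\chi_e(0),\chi_e(1)$ one can read off the Fourier expansion $1_{e \in E(G)} = p - \sqrt{p(1-p)}\,\chi_e$; multiplying by $\chi_e$ and expanding $\chi_e^2 = 1 + \tfrac{2p-1}{\sqrt{p(1-p)}}\chi_e$ in the same basis yields the same identity $(1-p)\chi_e - \sqrt{p(1-p)}$, confirming that the lemma is simply a rewriting of ``indicator times character'' back into the Fourier basis. This matches its intended use in the conditioning subsection, namely to replace an edge's Fourier character by a constant plus a rescaled character whenever the edge is known to be present.
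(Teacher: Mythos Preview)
Your proof is correct and is exactly the ``explicit computation'' the paper alludes to; the paper's own proof simply says ``Follows by explicit computation, as in \cref{lem:independent-set-indicator}'' without writing out the two cases, so your pointwise verification is precisely what is intended.
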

\begin{proof}
Follows by explicit computation, as in \cref{lem:independent-set-indicator}.
\end{proof}
\begin{lemma}\label{cor:conditioning-factorization}
Given a set of edges $E$, if we know that not all of the edges of $E$ are in $E(G)$ then 
\[
\chi_E = -\sum_{E' \subseteq E: E' \neq E}{\left(-\sqrt{\frac{p}{1-p}}\right)^{|E| - |E'|}\chi_{E'}}
\]
\end{lemma}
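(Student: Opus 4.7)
The plan is to express the statement directly as a consequence of the Fourier expansion of the edge-indicator function $G_e$, together with the hypothesis, which can be recast as a vanishing identity on $\chi_E$ and its proper sub-character factors.

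The first step is to write down the inverse Fourier transform of $G_e$ in the $p$-biased basis. Using $\chi(0) = \sqrt{p/(1-p)}$ and $\chi(1) = -\sqrt{(1-p)/p}$, one checks by evaluating at both values that
\[
G_e \;=\; p \,-\, \sqrt{p(1-p)}\, \chi_e,
\]
i.e.\ the same relation already implicit in the earlier lemma $1_{e \in E(G)}\chi_e = -\sqrt{p(1-p)} + (1-p)\chi_e$. The second step is to encode the hypothesis ``not all edges of $E$ lie in $E(G)$'' as the pointwise identity $\prod_{e \in E} G_e = 0$. Plugging in the formula for $G_e$ and expanding the product gives
\[
0 \;=\; \prod_{e \in E}\bigl(p - \sqrt{p(1-p)}\,\chi_e\bigr)
\;=\; \sum_{E' \subseteq E} p^{|E \setminus E'|} \bigl(-\sqrt{p(1-p)}\bigr)^{|E'|}\chi_{E'}.
\]
Dividing through by the nonzero factor $p^{|E|}$ collapses this to
\[
\sum_{E' \subseteq E} \Bigl(-\sqrt{\tfrac{1-p}{p}}\Bigr)^{|E'|}\chi_{E'} \;=\; 0.
\]

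The third step is to isolate the $E' = E$ term and move the remaining summands to the other side:
\[
\Bigl(-\sqrt{\tfrac{1-p}{p}}\Bigr)^{|E|}\chi_E \;=\; -\!\!\sum_{E' \subsetneq E}\Bigl(-\sqrt{\tfrac{1-p}{p}}\Bigr)^{|E'|}\chi_{E'}.
\]
Dividing both sides by $\bigl(-\sqrt{(1-p)/p}\bigr)^{|E|}$ turns the exponent $|E'|$ into $-(|E|-|E'|)$, and inverting the base flips $\sqrt{(1-p)/p}$ into $\sqrt{p/(1-p)}$, yielding exactly
\[
\chi_E \;=\; -\!\!\sum_{E' \subseteq E:\,E' \neq E}\Bigl(-\sqrt{\tfrac{p}{1-p}}\Bigr)^{|E|-|E'|}\chi_{E'},
\]
as claimed. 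There is no real technical obstacle here: the proof is a short algebraic manipulation, and the only subtlety is recognizing that ``not all edges of $E$ are present'' is precisely the vanishing of $\prod_{e \in E} G_e$, which, when expressed in the Fourier basis, is a linear dependence whose top-degree term is a nonzero multiple of $\chi_E$. Re-solving that linear dependence for $\chi_E$ gives the stated formula, which is the signed refinement of the earlier unsigned identity in \cref{obs:conditioning} (the sign discrepancy simply reflects whether one expands $\indset{E}(G) = \prod_e(1-G_e) = 0$ or $\prod_e G_e = 0$ in the Fourier basis).
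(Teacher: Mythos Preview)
Your proof is correct and takes essentially the same approach as the paper: both expand a product that vanishes under the hypothesis in the $p$-biased Fourier basis and solve for the top-degree term $\chi_E$. The only cosmetic difference is that the paper expands $\prod_{e \in E} (1_{e \in E(G)}\chi_e)$ using $1_{e \in E(G)}\chi_e = -\sqrt{p(1-p)} + (1-p)\chi_e$, which leaves a $(1-p)^{|E|}\chi_E$ term to collect before dividing, whereas you expand $\prod_{e \in E} G_e$ directly and avoid that extra algebraic step; the underlying idea is identical.
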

\begin{proof}
Since not all of the edges of $E$ are in $E(G)$, $(1 - 1_{E \subseteq E(G)})\chi_{E} = \chi_E$. Now observe that 
\[
1_{E \subseteq E(G)}\chi_E = \prod_{e \in E}{(1_{e \in E(G)}\chi_{e})} = \prod_{e \in E}{(-\sqrt{p(1-p)} + (1-p)\chi_{e})}
\]
Thus,
\[
\chi_E = (1 - 1_{E \subseteq E(G)})\chi_E = (1 - (1-p)^{|E|})\chi_{E} - \sum_{E' \subseteq E: E' \neq E}{(-\sqrt{p(1-p)})^{|E| - |E'|}(1-p)^{|E'|}\chi_{E'}}
\]
Solving for $\chi_{E}$ gives
\[
\chi_E = -\sum_{E' \subseteq E: E' \neq E}{\left(-\sqrt{\frac{p}{1-p}}\right)^{|E| - |E'|}\chi_{E'}}
\]
\end{proof}

\begin{corollary}\label{cor:conditioning-recursive}
Given a set of edges $E$, if we know that at most $k < |E|$ of the edges of $E$ are in $E(G)$, then
\[\chi_E = (-1)^{|E| - k}\sum_{E' \subseteq E:|E'| \le k} \binom{|E| - 1 - |E'|}{|E| - 1 - k} \left(-\sqrt{\frac{p}{1-p}}\right)^{|E| - |E'|}\chi_{E'}.\]
\end{corollary}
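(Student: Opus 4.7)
The natural route is induction on the \emph{slack} $|E|-k$, using \cref{cor:conditioning-factorization} as the engine and with \cref{cor:conditioning-factorization} itself serving as the base case. When $|E|-k=1$, i.e.\ $k=|E|-1$, the condition $|E'|\le k$ becomes $E'\subsetneq E$, every binomial coefficient in the claim equals $\binom{|E|-1-|E'|}{0}=1$, and the overall sign is $(-1)^{1}=-1$, so the formula collapses exactly to the statement of \cref{cor:conditioning-factorization}.

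For the inductive step, I plan to apply \cref{cor:conditioning-factorization} once to write
\[
\chi_E \;=\; -\sum_{F\subsetneq E}\left(-\sqrt{\tfrac{p}{1-p}}\right)^{|E|-|F|}\chi_F,
\]
and then split this sum according to whether $|F|\le k$ or $|F|>k$. The ``small'' terms $|F|\le k$ already contribute to the desired expansion (and match the $\binom{|E|-1-|E'|}{|E|-1-k}$ factor with $|E'|=|F|$ and the appropriate sign). For each ``large'' $F$ with $|F|>k$, the conditioning ``at most $k$ edges in $E(G)$'' is inherited, $k<|F|$, and $|F|-k<|E|-k$, so the inductive hypothesis expands $\chi_F$ as a combination of $\chi_{E'}$ with $|E'|\le k$. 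Collecting the coefficient of a fixed $\chi_{E'}$ (and pulling out the common factor $(-\sqrt{p/(1-p)})^{|E|-|E'|}$, which composes correctly through the chain) reduces the entire claim to a combinatorial identity of binomials.

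Concretely, setting $n=|E|-|E'|$ and $m=k-|E'|$ with $0\le m<n$, the identity to be verified is
\[
\sum_{j=m+1}^{n-1}(-1)^{j-m}\binom{n}{j}\binom{j-1}{m} \;+\; (-1)^{n-m}\binom{n-1}{m} \;=\; -1,
\]
which, after folding the $j=n$ boundary term into the sum (note $\binom{n}{n}\binom{n-1}{m}=\binom{n-1}{m}$) and extending the range down to $j=0$ (since $\binom{j-1}{m}=0$ for $j\le m$), is equivalent to $\sum_{j=1}^{n}(-1)^{j}\binom{n}{j}\binom{j-1}{m}=(-1)^{m+1}$. This is the main substantive step. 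I expect the cleanest proof via generating functions: starting from $(-x)^n=(1-(1+x))^n=\sum_{j=0}^{n}\binom{n}{j}(-1)^{j}(1+x)^{j}$, dividing both sides by $1+x$ and isolating the $j=0$ term yields
\[
\sum_{j=1}^{n}\binom{n}{j}(-1)^{j}(1+x)^{j-1} \;=\; \frac{(-x)^{n}-1}{1+x}.
\]
Extracting $[x^m]$ and using $m<n$ (so $(-x)^n/(1+x)$ contributes $0$ to $[x^m]$ while $-1/(1+x)$ contributes $(-1)^{m+1}$) gives the identity.

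The main obstacle is the binomial identity, but with the generating function trick it is short. Everything else is tracking of signs and of the powers of $-\sqrt{p/(1-p)}$ through the nested substitutions; the key simplification is that the exponents telescope ($\lambda^{|E|-|F|}\cdot \lambda^{|F|-|E'|}=\lambda^{|E|-|E'|}$ for $\lambda=-\sqrt{p/(1-p)}$), so the only nontrivial combinatorial content is the double summation over chains of subsets, which is exactly what the identity controls.
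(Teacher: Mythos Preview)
Your proposal is correct and is exactly the approach the paper intends: its entire proof is the one-line instruction ``Apply \cref{cor:conditioning-factorization} repeatedly,'' and you have carried out that recursion explicitly, verifying the resulting binomial coefficient via the generating-function identity. The only minor imprecision is your remark that the ``small'' terms $|F|\le k$ already ``match'' the target coefficient---in fact they contribute the $-1$ in your identity and combine with the inductive contributions from the ``large'' $F$ to produce the full $\binom{|E|-1-|E'|}{|E|-1-k}$---but you handle this correctly once you write down the identity.
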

\begin{proof}
    Apply \cref{cor:conditioning-factorization} repeatedly.
\end{proof}

\begin{corollary}\label{cor:conditioning-edge-bound}
Given a set of edges $E$, if we know that at most $k < |E|$ of the edges of $E$ are in $E(G)$, then
\[\chi_E = \sum_{E' \subseteq E:|E'| \le k} c_{E'}\chi_{E'}\]
where $\abs{c_{E'}} \leq (2|E|\sqrt{p})^{|E| - |E'|}$.
\end{corollary}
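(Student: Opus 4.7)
The statement is a quantitative strengthening of Corollary~\ref{cor:conditioning-recursive}, which already gives the explicit expansion
\[
\chi_E = (-1)^{|E| - k}\sum_{E' \subseteq E:|E'| \le k} \binom{|E| - 1 - |E'|}{|E| - 1 - k} \left(-\sqrt{\tfrac{p}{1-p}}\right)^{|E| - |E'|}\chi_{E'}.
\]
So the plan is simply to read off the coefficient
\[
c_{E'} = (-1)^{|E| - k}\binom{|E| - 1 - |E'|}{|E| - 1 - k} \left(-\sqrt{\tfrac{p}{1-p}}\right)^{|E| - |E'|}
\]
and bound its absolute value by $(2|E|\sqrt{p})^{|E|-|E'|}$ via two crude estimates.

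First, bound the binomial coefficient using the trivial bound $\binom{m}{\ell} \le 2^m$ with $m = |E|-1-|E'|$, giving ${\binom{|E|-1-|E'|}{|E|-1-k} \le 2^{|E|-1-|E'|} \le 2^{|E|-|E'|}}$. Second, since we are in the sparse regime we may assume $p \le 1/2$, so $\tfrac{1}{\sqrt{1-p}} \le \sqrt{2}$ and hence
\[
\left(\sqrt{\tfrac{p}{1-p}}\right)^{|E| - |E'|} \le \left(\sqrt{2p}\right)^{|E| - |E'|}.
\]
Combining, $|c_{E'}| \le (2\sqrt{2}\sqrt{p})^{|E|-|E'|} \le (2|E|\sqrt{p})^{|E|-|E'|}$ whenever $|E| \ge 2$.

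For the degenerate case $|E| = 1$, the hypothesis $k < |E|$ forces $k = 0$ and $E' = \emptyset$, and Corollary~\ref{cor:conditioning-recursive} gives $c_\emptyset = \sqrt{p/(1-p)} \le \sqrt{2p} \le 2\sqrt{p} = (2|E|\sqrt{p})^{1}$, which also satisfies the claim. There is no real obstacle here: the whole corollary is a packaging step that converts the exact expansion from Corollary~\ref{cor:conditioning-recursive} into a clean bound with $\sqrt{p}$-decay per missing edge, which is the form that downstream charging arguments will want when they bound norms after replacing a dense Fourier character $\chi_E$ by its sparse subcharacters $\chi_{E'}$.
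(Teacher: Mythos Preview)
Your proposal is correct and is exactly the intended derivation: the paper states this as an unproved corollary of Corollary~\ref{cor:conditioning-recursive}, and your bounding of the explicit coefficient via $\binom{|E|-1-|E'|}{|E|-1-k}\le 2^{|E|-|E'|}$ together with $\sqrt{p/(1-p)}\le\sqrt{2p}$ (using $p\le 1/2$) is the natural route. In fact your intermediate bound $(2\sqrt{2}\sqrt{p})^{|E|-|E'|}$ is slightly tighter than the stated $(2|E|\sqrt{p})^{|E|-|E'|}$; the extra $|E|$ in the paper's version is just slack.
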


We can do the replacement if the graph of the shape does not occur in our random
sample of $G \sim G_{n,p}$.
We formalize a bound on the density of subgraphs of $G_{n,p}$. 
In expectation the number of occurrences of a small subgraph $H$ is
essentially $n^{\abs{V(H)}}p^{\abs{E(H)}}$ and hence $H$ is unlikely to appear in a random graph sample if it has too many edges
(specifically, morally all subgraphs are sparse, 
$|E(H)| \leq |V(H)| \log_{1/p}(n)$).
To translate from expectation to concentration, we use a simple first moment calculation to bound the densest-$k$-subgraph in $G_{n, p}$.

\begin{proposition} [Sparsity of small subgraphs of $G_{n,p}$]
\label{lem:density-bound} 
For $G \sim G_{n,p}, p\leq \frac{1}{2}$, constant $\eta > 0$, with probability at least $1-O(1/n^\eta)$, every subgraph $S$ of $G$ such that $|V(S)| \leq \sqrt[3]{\frac{n}{d}}$
satisfies:
\[|E(S)| \leq 3|V(S)|\log_{1/p}(n) + 3 \eta\log_{1/p}(n).
\]
\begin{proof}
    Let $e^*(v) \defeq 3v\log_{1/p}(n) + 3\eta\log_{1/p}(n)$.
    \begin{align*}
        \Pr[\exists \text{too dense subgraph}] &\leq \sum_{v = 2}^{n}\sum_{e = e^*(v)}^{\binom{v}{2}} \E[\text{number of subgraphs with $v$ vertices, $e$ edges}]\\
        &\leq \sum_{v = 2}^{n}\sum_{e = e^*(v)}^{\binom{v}{2}} \binom{n}{v}v^{2e} p^{e}\\
        &\leq \sum_{v = 2}^{n}\sum_{e = e^*(v)}^{\binom{v}{2}} \frac{n^{v}}{v!} (p^{1/3})^{e}\\
        &= \sum_{v = 2}^{n}\sum_{e = e^*(v)}^{\binom{v}{2}} \frac{n^{v}p^{e^*(v)/3}}{v!} (p^{1/3})^{e - e^*(v)}\\
        &= \sum_{v = 2}^{n}\sum_{e = e^*(v)}^{\binom{v}{2}} \frac{1}{n^{\eta} \cdot v!} (p^{1/3})^{e - e^*(v)}\\
        &\leq O(1/{n^{\eta}}).
    \end{align*}
\end{proof}
\end{proposition}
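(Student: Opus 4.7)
The plan is to bound the probability of a ``bad'' subgraph existing via a first moment method, taking a union bound over the choice of vertex set, the edge count, and the edge pattern. Denote $e^*(v) \defeq 3v\log_{1/p}(n) + 3\eta\log_{1/p}(n)$. The goal is to show that the expected number of subgraphs $S \subseteq G$ with $v = |V(S)| \leq \sqrt[3]{n/d}$ and $|E(S)| \geq e^*(v)$ is $O(1/n^\eta)$. Since this count is a nonnegative integer random variable, Markov's inequality then yields the stated bound.

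Concretely, I would fix $v$ and $e \geq e^*(v)$, then bound the expected number of $(v,e)$-subgraphs by $\binom{n}{v} \binom{\binom{v}{2}}{e} p^e$. Using the loose estimates $\binom{n}{v} \leq n^v/v!$ and $\binom{\binom{v}{2}}{e} \leq v^{2e}$, this becomes $(n^v/v!) \cdot (v^2 p)^e$. The constraint $v \leq \sqrt[3]{n/d}$ together with $p = d/n$ gives $v^2 \leq (n/d)^{2/3} = p^{-2/3}$, hence $v^2 p \leq p^{1/3}$, and the expected count is at most $(n^v/v!) \cdot p^{e/3}$. Plugging $e = e^*(v)$ into the exponent yields $n^v \cdot p^{e^*(v)/3} = n^v \cdot n^{-v} \cdot n^{-\eta} = n^{-\eta}$, so the expected count at the threshold is $n^{-\eta}/v!$.

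The remaining step is to sum over $v$ and over $e \geq e^*(v)$. The excess factor $p^{(e - e^*(v))/3}$ gives a geometric decay in $e$ (using $p \leq 1/2$), so summing over $e$ contributes a constant multiplicative factor; the factor $1/v!$ then makes the outer sum over $v$ convergent, yielding a total of $O(1/n^\eta)$ as required. The main subtlety, and really the only design choice, lies in the definition of $e^*(v)$: the $3v\log_{1/p}(n)$ term is calibrated precisely to absorb the vertex selection factor $n^v$ after the $v^2 p \leq p^{1/3}$ reduction, while the additive $3\eta \log_{1/p}(n)$ term provides the $n^{-\eta}$ slack. The upper bound $v \leq \sqrt[3]{n/d}$ is used exactly once, in the step that converts $v^2 p$ into something strictly less than $1$; without it the bound $(v^2 p)^e$ would be useless and we would need a more delicate argument involving the structure of dense subgraphs.
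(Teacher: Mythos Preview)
Your proposal is correct and follows essentially the same argument as the paper: the same first-moment union bound, the same estimates $\binom{n}{v} \leq n^v/v!$ and $\binom{\binom{v}{2}}{e} \leq v^{2e}$, the same use of $v \leq (n/d)^{1/3}$ to get $v^2 p \leq p^{1/3}$, and the same geometric-plus-factorial summation at the end. If anything, you are slightly more explicit than the paper about where the vertex-count hypothesis enters and why the sums converge.
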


\begin{definition}[Sparse]
    A graph $G$ is $C'$-sparse if $\abs{E(G)} \leq C'\abs{V(G)}$.
    A shape $\al$ is $C'$-sparse if the underlying graph (not multigraph) is.
    We will generally say that a graph/shape is sparse if it is $C'$-sparse for some $C'$.
\end{definition}

\begin{corollary}\label{cor:2-sparse}
    For $d \leq n^{0.5}$, w.h.p. every subgraph of $G\sim G_{n,\frac{d}{n}}$ of size at most $n^{0.16}$
    is 7-sparse.
\end{corollary}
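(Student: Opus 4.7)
The plan is to invoke the subgraph density bound (the ``Sparsity of small subgraphs'' proposition) with a constant $\eta$ (say $\eta = 1$), and then convert the resulting bound of the form $|E(S)| \leq 3|V(S)|\log_{1/p}(n) + 3\log_{1/p}(n)$ into the desired linear bound $|E(S)| \leq 7|V(S)|$ by exploiting the sparsity regime $d \leq n^{0.5}$. Essentially this is a one-step reduction: the proposition already hands us the bound we want up to constants, and we just need to verify the constants work out.

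First I would check that the proposition actually applies to all subgraphs under consideration. Its hypothesis requires $|V(S)| \leq \sqrt[3]{n/d}$. Since $d \leq n^{0.5}$, we have $\sqrt[3]{n/d} \geq \sqrt[3]{n^{0.5}} = n^{1/6} \geq n^{0.16}$, so every subgraph of size at most $n^{0.16}$ is covered. Next I would bound $\log_{1/p}(n) = \log n / \log(n/d)$: again since $n/d \geq n^{0.5}$, we get $\log(n/d) \geq (\log n)/2$, hence $\log_{1/p}(n) \leq 2$. Plugging these in (with $\eta = 1$), the proposition gives $|E(S)| \leq 6|V(S)| + 6$ for every subgraph $S$ with $|V(S)| \leq n^{0.16}$, with probability $1 - O(1/n)$.

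To finish, I split on the size of $|V(S)|$. For $|V(S)| \geq 6$ the bound $|E(S)| \leq 6|V(S)| + 6 \leq 7|V(S)|$ is immediate. For $|V(S)| \leq 5$, the trivial edge bound $\binom{|V(S)|}{2} \leq \binom{5}{2} = 10 \leq 7|V(S)|$ (whenever $|V(S)| \geq 2$, and the case $|V(S)| \leq 1$ has no edges) handles the case unconditionally. Hence with high probability every subgraph of size at most $n^{0.16}$ satisfies $|E(S)| \leq 7|V(S)|$, which is precisely the statement that it is $7$-sparse.

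There is really no main obstacle here: the corollary is a direct specialization of the already-proved density bound to the parameter regime $d \leq n^{0.5}$, and the only thing to be careful about is booking the constants (the choice of $\eta$, the estimate $\log_{1/p}(n) \leq 2$, and the trivial handling of very small subgraphs where $\binom{v}{2}$ dominates $7v$).
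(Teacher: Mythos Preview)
Your proposal is correct and is exactly the intended derivation: the corollary is stated in the paper without proof, as an immediate specialization of the preceding proposition, and your verification of the constants (that $\sqrt[3]{n/d} \geq n^{1/6} > n^{0.16}$, that $\log_{1/p}(n) \leq 2$ when $d \leq n^{0.5}$, and the trivial handling of $|V(S)| \leq 5$) is precisely what is needed.
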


\begin{definition}[Forbidden subgraph]
    We call a graph $H$ forbidden if it does not occur as a subgraph of $G$.
\end{definition}

The assumptions in \cref{thm:main} ensure that every graph of size at most $D_V$ that is not sparse is forbidden per \cref{cor:2-sparse}.
This is the only class of forbidden graphs that we will need in this work.

\begin{remark}
    For $d = n^{1-\eps}$, subgraphs up to size $n^{\Omega(\eps)}$ will be $O(1/\eps)$-sparse.
    Using this bound and our techniques, \cref{thm:main} can be extended to show a $n^{\Omega(\eps)}$ SoS-degree lower bound for $d = n^{1-\eps}$.
\end{remark}
\begin{remark}
    For $d \ll n^\eps$, subgraphs of $G_{n,p}$ will actually be significantly sparser.
    For example, it is well-known that $o(\log n)$-radius neighborhoods in $G_{n,d/n}$ for constant $d$
    are trees with at most one extra cycle.
\end{remark}

\begin{definition}[Subshape/subribbon, supershape/superribbon]
    We call shape $\beta$ a sub-shape of shape $\al$ if $V(\al)=V(\beta)$, $U_\al = U_\beta, V_\al = V_\beta$, and $E(\beta)\subseteq E(\al)$.
    Furthermore, the (quasi-)missing edge indicators of $\beta$ and $\al$ must be equal.
    Supershapes, subribbons, and superribbons are defined similarly.
    We write $\al \subseteq \beta$ if $\al$ is a subshape of $\beta$.
\end{definition}

\begin{definition}
    For a sparse subribbon $R$ of a forbidden ribbon $U$, let $c(R, U)$ be the 
    coefficient on $R$ after applying conditioning \cref{cor:conditioning-recursive} to the non-reserved edges of $U$.
\end{definition}

\begin{definition}
Let $\lam'_R$ incorporate the constant factors into $\lam_R$,\footnote{
When $R$ is a left $L$ /middle $T$/intersecting $G$ ribbon, the sum over $U$ should be restricted to $U$ with the same type, and which actually appear in the decomposition. The stated sum is an upper bound.
}
\begin{align*}
\lam'_R &\defeq m(R) \displaystyle\left(\sum_{U\supseteq R} c(R,U) \lam_U\right)\lam_R,\\
\lam'_{R_1 \odot \cdots \odot R_k} &\defeq \prod_{i = 1}^k \lam'_{R_i}.
\end{align*}
\end{definition}

\begin{lemma}\label{lem:ribbon-decomposition-sparse}
    \begin{align*}
    & \sum_{R \in {\calS}}{\lam_R M_R} =\\ &\Pi^{1/2}\left(\sum_{\substack{\text{sparse}\\
    \text{permissible}\\ L \in \calL}}{
    {\lam_L'}M_L}\right)\cdot\\
    &\left(\sum_{j=0}^{2\dsos}{(-1)^{j}\sum_{\substack{\text{sparse}\\
    \text{permissible}\\ G_j,\ldots,G_1,T,G'_1,\ldots,G'_j}}
    {\lam'_{G_j \odot \ldots \odot G_1 \odot T \odot {G'_1}^{\T} \odot \ldots \odot {G'_j}^{\T}} M_{G_j \circ \ldots \circ G_1 \circ T \circ {G'_1}^{\T} \circ \ldots \circ {G'_j}^{\T}}}}\right)\cdot\\
    &\left(\sum_{\substack{\text{sparse}\\
    \text{permissible}\\ L \in \calL}}
    {{\lam'_L}M_L}\right)^\T \Pi^{1/2}\\
    &+ {\text{truncation error}}
\end{align*}
\end{lemma}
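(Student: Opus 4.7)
The plan is to start from the previous ribbon decomposition (the displayed lemma immediately before \cref{sec: conditioning}) and rewrite each ``possibly dense'' graph matrix $M_U$ as a linear combination of graph matrices $M_R$ indexed by \emph{sparse} subribbons $R \subseteq U$, then reindex the sum by $R$ and collect coefficients. The high-probability event provided by \cref{cor:2-sparse} justifies the conditioning: on this event, every subgraph of $G$ of size at most $D_V$ (and in particular the edge set of any ribbon $U$ appearing in the decomposition) is $7$-sparse, so any $U$ that is not sparse is automatically a forbidden ribbon.

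More precisely, I would proceed as follows. First, isolate the non-reserved edges of $U$, namely $E(U) \setminus \res(U)$, which are the edges on which conditioning will be applied; the reserved edges $\res(U)$ are left untouched so that the sparse minimum vertex separator of the resulting subribbon can be controlled in later sections (cf.\ \cref{sec:reserved}). If $U$ is not sparse, then it is forbidden by \cref{cor:2-sparse}, so we may apply \cref{cor:conditioning-recursive} (iteratively \cref{cor:conditioning-factorization}) to its Fourier character edges: with $k$ the maximum number of edges of $E(U)\setminus\res(U)$ that can appear in $E(G)$ consistent with sparsity, we get
\[
\chi_{E(U)\setminus\res(U)} \;=\; \sum_{E' \subseteq E(U)\setminus\res(U),\,|E'|\le k} c(E',U)\,\chi_{E'},
\]
which at the ribbon level yields $M_U = \sum_{R} c(R,U)\, M_R$, summed over sparse subribbons $R$ of $U$ agreeing with $U$ on $\res(U)$, on vertex set, on $A_U,B_U$, and on the (quasi-)missing edge indicators. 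Crucially, since the (quasi-)missing edge indicators and the reserved edges are untouched, each such $R$ inherits permissibility (\cref{def:permissible-ribbon}) from $U$, in particular the connectivity property because $\res(U)$ is chosen to preserve connectivity.

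Second, I would substitute this expansion into the previous ribbon decomposition, for each of the five ``slots'' $L, G_j, \ldots, G_1, T, G_1', \ldots, G_j', L'^\T$, and then swap the order of summation so the outer index is the sparse subribbon in each slot. For a fixed sparse permissible ribbon $R$ occupying a given slot, the total coefficient it inherits is
\[
m(R)\,\Bigl(\sum_{U \supseteq R} c(R,U)\,\lambda_U\Bigr)\,\lambda_R \cdot \lambda_R^{-1} \;=\; \lambda_R',
\]
where the $\lambda_R^{-1}$ cancels the $\lambda_R$ already present in $\lambda_R'$ (the bookkeeping is arranged so that $\lambda'_R$ absorbs both $m(R)$ and the aggregated conditioning coefficient). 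For the $j$-fold middle products this factorizes as $\prod_{i} \lambda'_{G_i}\lambda'_{G_i'} \cdot \lambda'_T$ by construction, matching the definition $\lambda'_{R_1 \odot \cdots \odot R_k} = \prod_i \lambda'_{R_i}$. Putting this together yields exactly the stated expression, with the truncation error inherited from the previous decomposition together with any additional terms corresponding to ribbons violating the density bound of \cref{cor:2-sparse}, which occur only on an event of probability $o(1)$ and can be absorbed into ``truncation error.''

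The main obstacle is the bookkeeping around reserved edges: we must verify that conditioning on $E(U)\setminus\res(U)$ (i) really does produce only \emph{sparse} subribbons (so the edges we keep from $\res(U)$, together with at most $k$ conditioned edges, stay within the sparsity budget) and (ii) preserves permissibility, in particular the connected-truncation condition and the indicator/quasi-indicator structure from \cref{def:permissible-ribbon}. This is where the careful choice of $\res(U)$ matters, and it is the only non-mechanical part; everything else is a linear rearrangement of sums. The remaining manipulations — pulling out $\Pi^{1/2}$, applying \cref{lem:permissible-has-pi}, and verifying that the coefficient bookkeeping matches the definition of $\lambda'_R$ — are then routine.
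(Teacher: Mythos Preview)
Your approach is essentially the paper's: apply conditioning (\cref{cor:conditioning-recursive}) slot by slot to the previous decomposition, reserve edges via $\res(U)$, reindex by sparse subribbons, and absorb the aggregated coefficients into $\lambda'_R$. However, you misidentify the main purpose of the reserved edges. You frame $\res(U)$ as needed to (i) keep the subribbon sparse and (ii) preserve permissibility, especially the connected-truncation condition. The paper's ``important detail'' is different: when edges are deleted from a left ribbon $L$ (resp.\ middle ribbon $T$, intersecting ribbon $G_i$), the minimum vertex separator can change drastically, and the result may no longer be a \emph{left} (resp.\ middle, middle-intersecting) ribbon at all. If that happens, the reindexed sum is no longer over $L \in \calL$, $T \in \calM$, $(G_j,\dots,G_j') \in \midint^{(j)}$, and the stated decomposition is simply false. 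The set $\res(U)$ is chosen precisely so that the subribbon stays in the same structural class; this is what \cref{sec:reserved} actually proves (Propositions~\ref{prop:middle-reservation} and \ref{lem: conditioning_left}), which you cite but for the wrong reason (``so that the sparse minimum vertex separator \ldots\ can be controlled in later sections''). Connectivity preservation, which you emphasize, is also part of the story, but the separator/left-middle-structure preservation is the substantive obstacle the paper singles out.
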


The proof follows mostly by definition of $\lam'_R$ but there is an important detail. When we condition a left $L$/middle $T$/intersection $G$ ribbon, 
the result may no longer be a left/middle/intersection ribbon.
For example, the MVS might change drastically.
It turns out that we can avoid this by ``reserving'' $O(|V(R)|)$ edges
in the ribbon and only allowing the removal of edges outside of the reserved set. 
The reserved edges guarantee that the subribbon will continue to be a left/middle/intersection ribbon, at the cost of increasing the sparsity.
We prove that $O(|V(R)|)$ edges suffice in \cref{sec:reserved}.

We can show that the conditioning negligibly affects the coefficients of the ribbons.
\begin{lemma}\label{lem:conditioning-is-negligible}
Under the conditions of \cref{thm:main}, for any ribbon $R \in \calS$,
\[ \sum_{U \supset R}\lam_U c(R, U) = o(1) \lam_R.\]
\begin{proof}
Let $U$ be a superribbon with $k$ more edges than $R$.
We have 
\begin{equation}\label{middle-conditioning:eq:1}
\lam_U = \left(\sqrt{\frac{p}{1-p}}\right)^{k}\lam_U \leq (2\sqrt{p})^k \lam_R.
\end{equation}
Using the
bound from \cref{cor:conditioning-edge-bound},
\begin{equation}\label{middle-conditioning:eq:2}
    |c(R,U)| \le (2|E(U)|\sqrt{p})^k \le (2|V(R)|^2\sqrt{p})^k.
\end{equation}
The number of ribbons $U$ that
are superribbons of $R$ with $k$ extra edges is at most 
\begin{equation}\label{middle-conditioning:eq:3}
    |V(R)|^{2k}
\end{equation}
because (see \cref{sec:reserved}) due to the connectivity of the reserved set, the vertex set of $U$ and $R$ is the same.
Combining \cref{middle-conditioning:eq:1}, \cref{middle-conditioning:eq:2}, \cref{middle-conditioning:eq:3}, 
the change in coefficient is at most
\[ \sum_{k = 1}^\infty \lam_R  (4|V(R)|^4p)^k \leq \sum_{k=1}^\infty \lam_R\left(\frac{4D_V^4d}{n}\right)^k \leq o(1)\lam_R.\]
The last inequality uses $d \leq n^{0.5}$ and $D_V \leq \widetilde{O}(\dsos) \leq \widetilde{O}(n^{0.1})$.
\end{proof}
\end{lemma}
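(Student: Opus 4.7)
The plan is to classify the superribbons $U \supset R$ by the number of extra Fourier-character edges $k = |E(U)| - |E(R)|$ and to show that the contribution from each $k \geq 1$ is geometrically small relative to $\lam_R$.

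First I would bound the ratio $\lam_U/\lam_R$ directly from the definition of $\lam_R$. Only the edge-decay factor $\left(\frac{p}{1-p}\right)^{|E(R)|/2}$ changes (the vertex-decay factor depends on $V(R)$, which I address below), so each extra edge contributes a factor of $\sqrt{p/(1-p)} \leq 2\sqrt{p}$ under $p \leq 1/2$. This gives $\lam_U \leq (2\sqrt{p})^k \lam_R$.

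Next I would bound $|c(R,U)|$. Since $c(R,U)$ is the coefficient of $\chi_{E(R) \setminus \res}$ when conditioning is applied via \cref{cor:conditioning-recursive} to the $|E(U)| - |E(R)|$ non-reserved edges of $U$ that are absent in $R$, \cref{cor:conditioning-edge-bound} immediately gives $|c(R,U)| \leq (2|E(U)|\sqrt{p})^k \leq (2|V(R)|^2\sqrt{p})^k$. Then I would count superribbons: the crucial point is that the reserved edges $\res(U)$ keep $U$ in the same class (left / middle / intersecting) as $R$, and by the connectivity of the reserved set (from \cref{sec:reserved}) we have $V(U) = V(R)$. Hence the number of superribbons $U$ with exactly $k$ extra edges is at most $\binom{\binom{|V(R)|}{2}}{k} \leq |V(R)|^{2k}$.

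Combining the three bounds, the contribution from superribbons with $k$ extra edges is at most $\lam_R \cdot (4|V(R)|^4 p)^k \leq \lam_R \cdot (4 D_V^4 d / n)^k$. Using the hypotheses $d \leq n^{0.5}$ and $D_V = \tilde{O}(\dsos) \leq \tilde{O}(n^{0.1})$, the ratio $4 D_V^4 d / n = o(1)$, so summing the geometric series over $k \geq 1$ yields $o(1)\,\lam_R$, as required.

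The main obstacle is the counting step: one must ensure $V(U) = V(R)$ so that the set of superribbons is not too large. If extra edges could introduce new vertices, the count would blow up by factors of $n^k$, swamping the $p^k$ savings. This is precisely why the reserved-edge construction of \cref{sec:reserved} is needed, and the proof would appeal to that section to justify the vertex-set identity. Once this is in hand, the remaining steps are mechanical bookkeeping.
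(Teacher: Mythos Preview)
Your proposal is correct and follows essentially the same argument as the paper: bound $\lam_U/\lam_R$ by $(2\sqrt{p})^k$ via the edge-decay factor, bound $|c(R,U)|$ by $(2|V(R)|^2\sqrt{p})^k$ via \cref{cor:conditioning-edge-bound}, count superribbons by $|V(R)|^{2k}$ using $V(U)=V(R)$ from the reserved-edge connectivity of \cref{sec:reserved}, and sum the resulting geometric series with ratio $4D_V^4 d/n = o(1)$. You also correctly flag that the $V(U)=V(R)$ step is the one nontrivial point and that it relies on the reserved-edge construction.
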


\subsection{Shifting to shapes}

For a proper middle shape $\tau$ and left shapes $\gam_j, \dots, \gam_1, \gam_1', \dots, \gam_j'$, recall the notation $\calP^{mid}_{\gam_j, \dots, \gam_1, \tau, \gam_1', \dots, \gam_j'}$
for the set of middle intersection patterns (\cref{def:middle-intersection} and \cref{rmk:middle-intersections}).
We also let $\calP^{mid}_\tau$ contain a single term, the non-intersecting singleton partition.

We now analyze
\[
\sum_{\substack{\text{sparse}\\
    \text{permissible}\\ G_j,\ldots,G_1,T,G'_1,\ldots,G'_j}}
    {\lam'_{G_j \odot \ldots \odot G_1 \odot T \odot {G'_1}^{\T} \odot \ldots \odot {G'_j}^{\T}} M_{G_j \circ \ldots \circ G_1 \circ T \circ {G'_1}^{\T} \circ \ldots \circ {G'_j}^{\T}}}
\]
We first partition this sum based on the shapes $\gam_j,\ldots,\gam_1,\tau,\gam'_1,\ldots,\gam'_j$ of $G_j,\ldots,G_1,T,G'_1,\ldots,G'_j$. We then partition this sum further based on the intersection pattern $P \in \calP^{mid}_{\gam_j, \dots, \gam_1, \tau, \gam_1', \dots, \gam_j'}$.

\begin{definition}\label{def:np}
Define $N_P(\tau_P)$ to be the number of ways to choose ribbons $G_j,\ldots,G_1,T,G'_1,\ldots,G'_j$ of shapes $\gam_j, \dots, \gam_1, \tau, \gam_1', \dots, \gam_j'$ so that they have intersection pattern $P$ and $G_j \circ \ldots \circ G_1 \circ T \circ {G'_1}^{\T} \circ \ldots \circ {G'_j}^{\T} = T_P$ for a given ribbon $T_P$ of shape $\tau_P$.

By symmetry, this is independent of the choice of $T_P$.
\end{definition}


Recall that an intersection pattern specifies both intersections and how the shapes should be glued together via bijections between the $V$ of each
shape and the $U$ of the following shape (\cref{def:intersection-pattern}).

\begin{definition}
We say that two intersection patterns are equivalent if there is an element of
\[
Aut(\gam_j) \times \ldots \times Aut(\gam_j) \times Aut(\tau) \times Aut({\gam_1'}^\T) \times \ldots \times Aut({\gam_j'}^\T)
\]
which maps one intersection pattern to the other (where the gluing maps are permuted accordingly).
\end{definition}

This gives us the following equation:
\begin{align*}
    &\sum_{\substack{\text{sparse}\\
    \text{permissible}\\ G_j,\ldots,G_1,T,G'_1,\ldots,G'_j}}
    {\lam'_{G_j \odot \ldots \odot G_1 \odot T \odot {G'_1}^{\T} \odot \ldots \odot {G'_j}^{\T}} M_{G_j \circ \ldots \circ G_1 \circ T \circ {G'_1}^{\T} \circ \ldots \circ {G'_j}^{\T}}} \\
    &= \sum_{\substack{\text{sparse}\\
    \text{permissible}\\ \gam_j,\ldots,\gam_1,\tau,\gam'_1,\ldots,\gam'_j}} 
    \sum_{\substack{\text{nonequivalent} \\ P \in \calP^{mid}_{\gam_j, \dots, \gam_1, \tau, \gam'_1, \dots, \gam'_j}}} 
    N_{P}(\tau_P)\lam'_{\gam_j \circ \cdots \circ \gam_1 \circ \tau \circ \gam_1'^\T \circ \cdots \circ \gam_j'^\T}
    \frac{M_{\tau_P}}{|\Aut(\tau_P)|}
\end{align*}

Grouping into shapes, here is the full decomposition:
\begin{lemma}[Decomposition in terms of shapes]
\label{lem:full-shape-decomposition}
    \begin{align*}
    & \sum_{R \in {\calS}}{\lam_R M_R} =\\ &\Pi^{1/2}\left(\sum_{\substack{\text{sparse}\\
    \text{permissible}\\ \sigma \in \calL}}{
    {\lam_\sigma'}\frac{M_\sigma}{|\Aut(\sigma)|}}\right)\cdot\\
    &\left(\sum_{j=0}^{2\dsos}{(-1)^{j}\sum_{\substack{\text{sparse}\\
    \text{permissible}\\ \gam_j,\ldots,\gam_1,\tau, \gam'_1,\ldots,\gam'_j }}
    \sum_{\substack{\text{nonequivalent} \\P \in \calP^{mid}_{\gam_j, \dots, \gam_1, \tau, \gam'_1, \dots, \gam'_j}}} 
    N_{P}(\tau_P)\lam'_{\gam_j \circ \cdots \circ \gam_1 \circ \tau \circ \gam_1'^\T \circ \cdots \circ \gam_j'^\T}
    \frac{M_{\tau_P}}{|\Aut(\tau_P)|}
    }\right)\cdot\\
    &\left(\sum_{\substack{\text{sparse}\\
    \text{permissible}\\ \sigma \in \calL}}{
    {\lam_\sigma'}\frac{M_\sigma}{|\Aut(\sigma)|}}\right)^\T \Pi^{1/2}\\
    &+ \Pi^{1/2}{\text{truncation error}_\text{too many vertices}}\Pi^{1/2} + \Pi^{1/2}\text{truncation error}_\text{too many edges in one part}\Pi^{1/2}
\end{align*}
where
\begin{enumerate}
    \item
    \begin{align*}
        &{\text{truncation error}}_{\text{too many vertices}} =         -\sum_{\substack{\text{sparse, permissible}\\ \sigma,\tau,\sigma':\\|V(\sigma \circ \tau \circ \sigma'^\T)| > D_V}} \lam'_{\sigma \circ \tau \circ \sigma'^\T} \frac{M_{\sigma \circ \tau \circ \sigma'^\T}}{|\Aut(\sigma \circ \tau \circ \sigma'^\T)|}\\
        &+ \sum_{j=1}^{2\dsos}{(-1)^{j+1}\sum_{\substack{\text{sparse, permissible} \\ \sigma,\gam_j,\ldots,\gam_1,\tau,\gam'_1,\ldots,\gam'_j,\sigma': \\ 
        |V(\sigma \circ \gam_j \circ \ldots \circ \gam_1)| > D_V \text{ or} \\ |V({\gam'_1}^{\T} \circ \ldots \circ {\gam'_j}^{\T} \circ {\sigma'}^{\T})| > D_V}}
        \sum_{\substack{\text{nonequiv.}\\ P \in \calP^{mid}_{\gam_j, \dots, \gam_j'}}} N_{P}(\tau_{P})
        {\lam'_{\sigma \circ \gam_j \circ \ldots \circ {\gam'_j}^{\T} \circ {\sigma'}^{\T}} \frac{M_\sigma}{|\Aut(\sigma)|} \frac{M_{\tau_{P}}}{|\Aut(\tau_P)|} \frac{M_{\sigma'}^\T}{|\Aut(\sigma')|}}}
    \end{align*}
    \begin{align*}
        &{\text{truncation error}}_{\text{too many edges in one part}} = \sum_{\substack{\text{sparse, permissible}\\\sigma,\tau,\sigma': \\
        |V(\sigma \circ \tau \circ \sigma'^\T)| \leq D_V,\\
        |E_{mid}(\tau)| - |V(\tau)| > C{\dsos}}}
        {\lam'_{\sigma \circ \tau \circ {\sigma'}^{\T}} \frac{M_{\sigma \circ \tau \circ {\sigma'}^T}}{|\Aut(\sigma \circ \tau \circ \sigma'^\T)|}} \\
        &+\sum_{j=1}^{2\dsos}{(-1)^{j}\sum_{\substack{\text{sparse, permissible} \\
        \sigma,\gam_j,\ldots,\gam_1,\tau,\gam'_1,\ldots,\gam'_j,\sigma': \\ 
        |V(\sigma \circ \gam_j \circ \ldots \circ \gam_1)| \leq D_V \text{ and } |V({\gam'_1}^{\T} \circ \ldots \circ {\gam'_j}^{\T} \circ {\sigma'}^{\T})| \leq D_V \\
        |E_{mid}(\gam_j)| - |V(\gam_j)| > C{\dsos} \text{ or } |E_{mid}(\gam'_j)| - |V(\gam'_j)| > C{\dsos}}}\sum_{\substack{\text{nonequiv.}\\P \in \calP^{mid}_{\gam_j, \dots, \gam_j'}}}
        {\lam'_{\sigma \circ \gam_j \circ \ldots \circ {\gam'_j}^{\T} \circ {\sigma'}^{\T}} \frac{M_{\sigma \circ \tau_{P} \circ \sigma'^\T}}{|\Aut(\sigma \circ \tau_P \circ \sigma'^\T)|} }}
    \end{align*}
    \item in all of these sums, the shapes $\sigma,\gam_j,\ldots,\gam_1,\tau,\gam'_1,\ldots,\gam'_j,\sigma'$ satisfy the following conditions:
\begin{enumerate}
    \item $\tau\in \calM$, $\sigma,\sigma' \in \calL$, and each $\gam_i, \gam_i' \in \calL$.
    \item $\sigma, \gam_j,\ldots,\gam_1,\tau,{\gam'_1}^{\T},\ldots,{\gam'_j}^{\T}, \sigma'^\T$ are composable.
    \item $|V(\tau)| \leq D_V$, $|V(\gam_j \circ \ldots \circ \gam_1)| \leq D_V$, and $|V({\gam'_1}^{\T} \circ \ldots \circ {\gam'_j}^{\T})| \leq D_V$
    \item Except when noted otherwise
    (which only happens for ${\text{truncation error}}_{\text{too many edges in one part}}$), all of the shapes $\gam_j,\ldots,\gam_1,\tau,\gam'_1,\ldots,\gam'_j$ (but not necessarily $\sigma ,\sigma'$) satisfy ${|E_{mid}(\al)| - |V(\al)| \leq C{\dsos}}$.
\end{enumerate}
\end{enumerate}
\end{lemma}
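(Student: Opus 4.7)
The plan is to convert the ribbon-level decomposition of Lemma~\ref{lem:ribbon-decomposition-sparse} into a shape-level decomposition term by term, by grouping ribbons within each sum according to their shape. The single identity that drives the conversion is Fact~\ref{lem:injective-graph-matrices}, which states that for any shape $\alpha$,
\[
\sum_{R \text{ of shape }\alpha} M_R \;=\; \frac{M_\alpha}{|\Aut(\alpha)|}.
\]
Since $\lambda'_R$ and $m(R)$ depend only on the shape of $R$ (by inspection of their definitions, including the superribbon sum $\sum_{U \supseteq R} c(R,U)\lambda_U$, which is shape-invariant), this grouping commutes with all the coefficients.

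The outer left factor $\sum_{L \in \calL} \lambda'_L M_L$ transforms immediately into $\sum_\sigma \lambda'_\sigma M_\sigma / |\Aut(\sigma)|$ by applying the identity above, and the right factor is handled identically. The two truncation error expressions translate in the same way, since each of their summands factors as products of ribbon sums that group cleanly into shape sums.

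The substantive step is the middle factor. For a fixed tuple of shapes $(\gamma_j, \dots, \gamma_1, \tau, \gamma'_1, \dots, \gamma'_j)$, each tuple of ribbons $(G_j, \dots, T, \dots, G'_j)$ of these shapes induces an intersection pattern $P$ and produces a composed ribbon $T_P$ of some shape $\tau_P$. Partition the ribbon sum first by $P$, then by the resulting ribbon $T_P$: by Definition~\ref{def:np}, for fixed $P$ and fixed $T_P$ the number of ribbon tuples producing $T_P$ via pattern $P$ is exactly $N_P(\tau_P)$, independent of the chosen target $T_P$. The innermost sum over $T_P$ of shape $\tau_P$ collapses via Fact~\ref{lem:injective-graph-matrices} into $M_{\tau_P}/|\Aut(\tau_P)|$. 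Two intersection patterns lying in the same orbit under the action of $\prod_i \Aut(\gamma_i) \times \Aut(\tau) \times \prod_i \Aut(\gamma'_i)$ parameterize identical families of ribbon tuples with the same counts, so summing over nonequivalent representatives records each ribbon tuple exactly once. The outer $\Pi^{1/2}$ factors can be introduced because every permissible shape has missing-edge indicators on $U_\alpha$ and $V_\alpha$, so Lemma~\ref{lem:permissible-has-pi} lets the $\Pi$ factors pass through without affecting any matrix.

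The main obstacle is the automorphism bookkeeping in the middle term: one must verify that the orbits of intersection patterns under $\prod_i \Aut(\gamma_i) \times \Aut(\tau) \times \prod_i \Aut(\gamma'_i)$ are exactly the fibers on which the ribbon-tuple count is constant at $N_P(\tau_P)$, and that no residual factor of $\prod_i |\Aut(\gamma_i)| \cdot |\Aut(\tau)| \cdot \prod_i |\Aut(\gamma'_i)|$ is needed when moving from ordered ribbon tuples to shapes. Unwinding the action, the automorphism group acts freely on the set of representative labellings of each shape, so the ``extra'' factors are already absorbed into the fact that, on the ribbon side, each ribbon tuple is counted once in the original sum of Lemma~\ref{lem:ribbon-decomposition-sparse}; no $|\Aut|$ factors appear at that level, and the only $|\Aut(\tau_P)|$ that appears in the final formula comes from the single application of Fact~\ref{lem:injective-graph-matrices} at the end. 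Once this bookkeeping is confirmed, the rest of the proof is routine symbol-pushing.
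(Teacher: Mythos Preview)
Your proposal is correct and follows the same approach as the paper: partition the ribbon sums of Lemma~\ref{lem:ribbon-decomposition-sparse} first by the tuple of component shapes, then by the (equivalence class of) intersection pattern, and collapse the inner sum over target ribbons $T_P$ of shape $\tau_P$ via Fact~\ref{lem:injective-graph-matrices}. The paper is terse here—it essentially states the key identity and the lemma without further justification—so your explicit discussion of the automorphism bookkeeping is more detailed than what the paper provides.

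One small imprecision: the $\Pi^{1/2}$ factors on the truncation error are not ``introduced'' via Lemma~\ref{lem:permissible-has-pi} (that lemma concerns $\pi$, not $\Pi$). Rather, they are already present once the edge-indicator grouping that produced the permissible formulation and the factors $m(R)$ is applied uniformly to every term of the ribbon decomposition, including the truncation error terms. Since Lemma~\ref{lem:ribbon-decomposition-sparse} writes the main term with $\Pi^{1/2}$ but leaves ``truncation error'' unspecified, you should note that the same grouping carries the $\Pi^{1/2}$ and $\lambda'$ coefficients into the truncation error as well. This is cosmetic and does not affect your argument.
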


To analyze these expressions, we need upper bounds on the number of possible intersection patterns, and on $N_{P}(\tau_P)$.
To gain slightly more control we furthermore partition the intersection patterns
based on how many intersections occur.

\begin{lemma}\label{lem:count-nonequivalent-intersection-patterns}
There are at most 
\[
(4D_{SoS})^{|V(\tau_P)| - \frac{|U_{\tau_P}| + |V_{\tau_P}|}{2} + k}(3D_V)^{k}
\]
non-equivalent intersection patterns $P \in \calP^{mid}_{\gam_j, \dots, \gam_j'}$ which have exactly $k$ intersections.
\end{lemma}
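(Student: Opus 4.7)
The plan is to bound the number of non-equivalent middle intersection patterns by constructing an explicit combinatorial encoding and then counting the number of possible encodings.

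I would first observe that an intersection pattern $P \in \calP^{mid}_{\gam_j,\ldots,\gam_1,\tau,\gam_1',\ldots,\gam_j'}$ is determined, up to equivalence (i.e.\ modulo the product of automorphism groups of the individual shapes), by the following data: for each vertex $v$ of $\tau_P$, the collection of shape positions in the sequence $\gam_j, \ldots, \gam_1, \tau, \gam_1'^{\T}, \ldots, \gam_j'^{\T}$ that $v$ comes from, together with, for each intersection block, the specific vertex within each contributing shape that was identified into the block. Once this data is fixed, the gluing bijections at each standard composition boundary $V_{\gam_i} \leftrightarrow U_{\gam_{i-1}}$, etc., are forced up to automorphism by compatibility with the data. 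So counting encodings of this form is an upper bound on the number of non-equivalent $P$'s.

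Next I would separate the contribution from shape assignment and the contribution from the intersections themselves. For shape assignment: vertices in $U_{\tau_P}$ must lie in the leftmost shape $\gam_j$, and vertices in $V_{\tau_P}$ must lie in the rightmost shape $\gam_j'^{\T}$; accounting for the possibility that boundary vertices coincide, only $|V(\tau_P)| - \frac{|U_{\tau_P}| + |V_{\tau_P}|}{2}$ vertices retain genuine freedom. For each such free vertex I would pick a primary shape position out of $2j+1 \le 4\dsos + 1$ options, contributing $(4\dsos)^{|V(\tau_P)| - (|U_{\tau_P}| + |V_{\tau_P}|)/2}$. Then for each of the $k$ intersection blocks I would specify (a) one additional participating shape position (factor at most $4\dsos$) and (b) the specific vertex within that shape which is identified into the block (factor at most $D_V$, since each shape has at most $D_V$ vertices). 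Intersections of size larger than two are absorbed by iterating (a)--(b) with a constant-factor slack, which is why one can replace $D_V$ by $3D_V$; this yields the remaining $(4\dsos)^k (3D_V)^k$. Multiplying gives the claimed bound.

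The main obstacle I anticipate is making the encoding genuinely injective up to the equivalence relation on intersection patterns. I would handle this by fixing, once and for all, a canonical total ordering of the shape positions $1, \ldots, 2j+1$ and a representative ordering of the vertices of each shape within its automorphism orbits, and always listing the contents of each intersection block in increasing order of shape position (with the primary vertex being the one in the smallest position, and the secondary data describing the remaining positions). The middle-intersection constraint (\cref{def:middle-intersection}) plays no role in the upper count since it can only restrict the set of valid $P$; any encoding that covers all of $\calP_{\gam_j,\ldots,\gam_j'}$ restricted to $k$ intersections is already sufficient. A minor bookkeeping step will verify that the $(3D_V)^k$ factor comfortably absorbs both the choice of which type of shape (left, middle, or right) the secondary participants are and the iteration needed for intersections of size $\ge 3$.
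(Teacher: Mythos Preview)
Your encoding has a genuine gap: it does not account for the gluing bijections at the internal composition boundaries $V_{\gam_i}\leftrightarrow U_{\gam_{i-1}}$ (and similarly on the right side), and your assertion that these are ``forced up to automorphism'' is not correct in general. Recall from \cref{def:intersection-pattern} that an intersection pattern \emph{includes} the choice of these bijections. Recording, for each vertex of $\tau_P$, only a ``primary shape position'' (and secondary data for intersections) does not distinguish between two different bijections $V_{\gam_i}\to U_{\gam_{i-1}}$ when several vertices share the same primary position; in particular, it does not tell you which vertex of $V_{\gam_i}$ is glued to which vertex of $U_{\gam_{i-1}}$. A priori these gluings contribute a factor of $\prod_i |V_{\gam_i}|!\prod_i |V_{\gam'_i}|!$, which can be as large as $(\dsos!)^{2j}$ and is not bounded by $(4\dsos)^{|V(\tau_P)|-\frac{|U_{\tau_P}|+|V_{\tau_P}|}{2}+k}$ without further argument.

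What makes the bound true is a structural feature of \emph{permissible} shapes that you never invoke: in a permissible $\alpha$, all but at most $|V(\alpha)\setminus(U_\alpha\cup V_\alpha)|$ of the vertices in $U_\alpha\cap V_\alpha$ are ``dormant'' (incident only to missing edge indicators) and hence freely interchangeable by $\Aut(\alpha)$. The paper's proof first writes down the na\"ive bound with the full product of gluing factorials, then divides out by $\prod |Dormant(\cdot)|!$ using the equivalence relation, and finally uses a telescoping inequality to convert the remaining $\frac{|V_{\gam_i}|!}{|Dormant(\gam_i)|!}$ factors into $\dsos^{|V(\alpha)|-\frac{|U_\alpha|+|V_\alpha|}{2}}$ per shape. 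This cancellation is the heart of the argument and is what turns the exponent into the telescoped quantity $\sum_i\bigl(|V(\alpha_i)|-\tfrac{|U_{\alpha_i}|+|V_{\alpha_i}|}{2}\bigr)=|V(\tau_P)|-\tfrac{|U_{\tau_P}|+|V_{\tau_P}|}{2}+k$. Your proposal arrives at the same exponent by an informal ``free vertex'' count, but without the dormant-vertex mechanism you cannot justify why the gluing cost fits inside that exponent.
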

\begin{proof}
    We first show a na\"{i}ve bound. To specify an intersection pattern, it is sufficient to specify how the shapes $\gam_j,\ldots,\gam_1,\tau,\gam'_1,\ldots,\gam'_j$ of $G_j,\ldots,G_1,T,G'_1,\ldots,G'_j$ glue together and which vertices intersect with each other. There are $\left(\prod_{i=1}^{j}{|V_{\gam_i}|!}\right)\left(\prod_{i=1}^{j}{|V_{\gam'_i}|!}\right)$ ways to glue the shapes together. To specify the intersections, we can give the following data.
    \begin{enumerate}
        \item For each $i$ and each vertex $v \in V(\gam_i) \setminus V_{\gam_i}$, specify whether or not $v$ intersects with a vertex in $\gam_{i-1}, \ldots, \gam_1, \tau, {\gam'_1}^T, \ldots, {\gam'_i}^T$. Similarly, for each vertex in $v \in V({\gam'_i}^T) \setminus U_{{\gam'_i}^T}$, specify whether it intersects with a vertex in $\gam_{i-1}, \ldots, \gam_1, \tau, {\gam'_1}^T, \ldots, {\gam'_{i-1}}^T$.
        
        The number of possibilities for this is at most 
        \[
        2^{\sum_{i=1}^{j}{|V(\gam_i) \setminus V_{\gam_i}| + |V(\gam'_i) \setminus V_{\gam'_i}|}}
        \]
        \item For each vertex which intersects with another vertex, specify a vertex which it intersects with. There are at most $3D_V$ choices for this and this happens $k$ times where $k$ is the number of intersections.
    \end{enumerate}
    This gives a total bound of 
    \[
    2^{\sum_{i=1}^{j}{|V(\gam_i) \setminus V_{\gam_i}| + |V(\gam'_i) \setminus V_{\gam'_i}|}}\left(\prod_{i=1}^{j}{|V_{\gam_i}|!}\right)\left(\prod_{i=1}^{j}{|V_{\gam'_i}|!}\right)(3D_V)^{k}
    \]
    To improve this bound, we observe that there may be some redundancy in the choice of gluing maps. In particular, given a permissible shape $\alpha$, let $Dormant(\alpha)$ be the set of vertices in $U_{\alpha} \cap V_{\alpha}$ which are only involved in missing edge indicators and quasi-missing edge indicators (and are thus indistinguishable). Instead of specifying how the vertices in $Dormant(\alpha)$ match up with the vertices in the previous shape and how these vertices match up with the vertices in the next shape, we can make an arbitrary choice for one of these gluings and make the appropriate choice for the other gluing. This saves us a factor of $|Dormant(\alpha)|!$
    
    Thus, we can divide our na\"{i}ve bound by 
    \[
    \left(\prod_{i=1}^{j}{|Dormant(\gam_i)|!}\right)|Dormant(\tau)|!\left(\prod_{i=1}^{j}{|Dormant(\gam'_i)|!}\right)
    \]
    which gives a bound of 
    \[
    \frac{2^{\sum_{i=1}^{j}{|V(\gam_i) \setminus V_{\gam_i}| + |V(\gam'_i) \setminus V_{\gam'_i}|}}\left(\prod_{i=1}^{j}{|V_{\gam_i}|!}\right)\left(\prod_{i=1}^{j}{|V_{\gam'_i}|!}\right)(3D_V)^{k}}{\left(\prod_{i=1}^{j}{|Dormant(\gam_i)|!}\right)|Dormant(\tau)|!\left(\prod_{i=1}^{j}{|Dormant(\gam'_i)|!}\right)}
    \]
    
    Since each connected component of $\alpha \setminus (U_{\alpha} \cap V(\alpha))$ can only have an edge to one vertex in $(U_{\alpha} \cap V(\alpha))$ and the components connected to $(U_{\alpha} \cup V(\alpha)) \setminus (U_{\alpha} \cap V(\alpha))$ have no such edges, we have that $|(U_{\alpha} \cap V_{\alpha}) \setminus Dormant(\alpha)| \leq |V(\alpha) \setminus (U_{\alpha} \cup V_{\alpha})|$. Thus, our bound is at most 
    \[
    \frac{2^{\sum_{i=1}^{j}{|V(\gam_i) \setminus V_{\gam_i}| + |V(\gam'_i) \setminus V_{\gam'_i}|}}\left(\prod_{i=1}^{j}{|V_{\gam_i}|!}\right)\left(\prod_{i=1}^{j}{|V_{\gam'_i}|!}\right)(3D_V)^{k}}
    {\left(\prod_{i=1}^{j}{\frac{|U_{\gam_i} \cap V_{\gam_i}|!}{D_{SoS}^{|V(\gam_i) \setminus (U_{\gam_i} \cup V_{\gam_i})|}}}\right)\left(\frac{|U_{\tau} \cap V_{\tau}|!}{D_{SoS}^{|V(\tau) \setminus (U_{\tau} \cup V_{\tau})|}}\right)\left(\prod_{i=1}^{j}{\frac{|U_{\gam'_i} \cap V_{\gam'_i}|!}{D_{SoS}^{|V(\gam'_i) \setminus (U_{\gam'_i} \cup V_{\gam'_i})|}}}\right)}
    \]
    To cancel out the $\left(\prod_{i=1}^{j}{|V_{\gam_i}|!}\right)\left(\prod_{i=1}^{j}{|V_{\gam'_i}|!}\right)$ factors, we use the following proposition.
\begin{proposition}
    For any composable shapes $\alpha_1,\ldots,\alpha_j$, 
    \[
    \prod_{i=1}^{j-1}{|V_{\alpha_i}|!} \leq D_{SoS}^{\sum_{i=1}^{j}{\frac{|U_{\alpha_i} \setminus V_{\alpha_i}| + |V_{\alpha_i} \setminus U_{\alpha_i}|}{2}}}\prod_{i=1}^{j}{|U_{\alpha_i} \cap V_{\alpha_i}|!}
    \]
\end{proposition}
\begin{proof}
    Observe that for all $i \in [j-1]$, 
    $|V_{\alpha_i}|! \leq D_{SoS}^{|V_{\alpha_i} \setminus U_{\alpha_i}|}|U_{\alpha_i} \cap V_{\alpha_i}|!$ so 
    \[
    \prod_{i=1}^{j-1}{|V_{\alpha_i}|!} \leq D_{SoS}^{\sum_{i=1}^{j-1}{|V_{\alpha_i} \setminus U_{\alpha_i}|}}\prod_{i=1}^{j}{|U_{\alpha_i} \cap V_{\alpha_i}|!} 
    \]
    Similarly, for all $i \in [j-1]$, 
    $|V_{\alpha_{i}}|! = |U_{\alpha_{i+1}}|! \leq D_{SoS}^{|U_{\alpha_{i+1}} \setminus V_{\alpha_{i+1}}|}|U_{\alpha_{i+1}} \cap V_{\alpha_{i+1}}|!$ so 
    \[
    \prod_{i=1}^{j-1}{|V_{\alpha_i}|!} \leq D_{SoS}^{\sum_{i=1}^{j-1}{|U_{\alpha_{i+1}} \setminus V_{\alpha_{i+1}}|}}\prod_{i=1}^{j}{|U_{\alpha_i} \cap V_{\alpha_i}|!} 
    \]
    Multiplying these two inequalities together and taking the square root gives the result.
\end{proof}
We now observe that for any shape $\alpha$, 
\[
|V(\alpha) \setminus (U_{\alpha} \cup V_{\alpha})| + \frac{|U_{\alpha} \setminus V_{\alpha}| + |V_{\alpha} \setminus U_{\alpha}|}{2} = |V(\alpha)| - \frac{|U_{\alpha}| + |V_{\alpha}|}{2}
\]
We also observe that for any shape $\alpha$, $V(\alpha) \setminus |V_{\alpha}| \leq 2\left(|V(\alpha)| - \frac{|U_{\alpha}| + |V_{\alpha}|}{2}\right)$. Putting everything together, our upper bound is 
\[
(4D_{SoS})^{\left(\sum_{i=1}^{j}{|V(\gam_i)| - \frac{|U_{\gam_i}| + |V_{\gam_i}|}{2}}\right) + \left(|V(\tau)| - \frac{|U_{\tau}| + |V_{\tau}|}{2}\right) +  \left(\sum_{i=1}^{j}{|V(\gam'_i)| - \frac{|U_{\gam'_i}| + |V_{\gam'_i}|}{2}}\right)}(3D_V)^{k}
\]
Since 
\begin{align*}
&\left(\sum_{i=1}^{j}{|V(\gam_i)| - \frac{|U_{\gam_i}| + |V_{\gam_i}|}{2}}\right) + \left(|V(\tau)| - \frac{|U_{\tau}| + |V_{\tau}|}{2}\right) +  \left(\sum_{i=1}^{j}{|V(\gam'_i)| - \frac{|U_{\gam'_i}| + |V_{\gam'_i}|}{2}}\right) \\
&= |V(\tau_P)| - \frac{|U_{\tau_P}| + |V_{\tau_P}|}{2} + k
\end{align*} 
the result follows.
\end{proof}

\begin{lemma}\label{lem:bound-np}
    For any intersection pattern $P$ which has exactly $k$ intersections, 
    \[
    N_{P}(\tau_P) \leq \frac{D_{SoS}^{2k + |V(\tau_P)| - \frac{|U_{\tau_P}| + |V_{\tau_P}|}{2}}}{|U_{\tau_P} \cap V_{\tau_P}|!}|\Aut(\tau_P)|
    \]
\end{lemma}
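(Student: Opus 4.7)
The plan is to reduce $N_P(\tau_P)$ to an abstract count and then reuse essentially the same combinatorial enumeration that was carried out in the proof of \cref{lem:count-nonequivalent-intersection-patterns}. First I would argue, via an orbit-stabilizer-style decomposition, that any tuple $(G_j, \dots, G_j')$ realizing a fixed $T_P$ of shape $\tau_P$ is determined by two independent pieces of data: (i) a ribbon-structure-preserving identification of the abstract shape $\tau_P$ with the concrete vertex set of $T_P$, and (ii) an assignment, to each vertex of the abstract $\tau_P$, of the collection of shape-positions from $\{\gam_j, \dots, \gam_1, \tau, \gam_1'^\T, \dots, \gam_j'^\T\}$ that it lives in (with multiplicity equal to its block size in $P$), together with the gluing bijections. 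The number of choices in (i) is exactly $|\Aut(\tau_P)|$ by definition of the automorphism group, which yields the $|\Aut(\tau_P)|$ prefactor in the claimed bound.

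Next I would bound the number of assignments in (ii) by $\frac{D_{SoS}^{2k + |V(\tau_P)| - (|U_{\tau_P}| + |V_{\tau_P}|)/2}}{|U_{\tau_P} \cap V_{\tau_P}|!}$. For each of the $I$ intersecting blocks of $P$, whose sizes $r_1, \dots, r_I$ satisfy $\sum_i r_i = k + I \leq 2k$, we must pick $r_i$ shape-positions from the $2j+1 \leq O(\dsos)$ options, contributing a total factor of $\dsos^{2k}$. Each remaining non-intersecting, non-boundary vertex of $\tau_P$ requires a single shape-position choice, contributing to the $\dsos^{|V(\tau_P)| - (|U_{\tau_P}| + |V_{\tau_P}|)/2}$ factor once these are combined with the gluing-map factorials $\prod_i |V_{\gam_i}|!$ via the same arithmetic identity
\[
|V(\al) \setminus (U_\al \cup V_\al)| + \tfrac{|U_\al \setminus V_\al| + |V_\al \setminus U_\al|}{2} = |V(\al)| - \tfrac{|U_\al| + |V_\al|}{2}
\]
that was used to finish the previous lemma's proof. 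The $|U_{\tau_P} \cap V_{\tau_P}|!$ denominator arises from the identical dormant-vertex symmetry: vertices in $U_{\tau_P} \cap V_{\tau_P}$ that participate only through (quasi-)missing edge indicators are indistinguishable from one another, so any permutation of them within an assignment yields the same tuple of ribbons and must be quotiented out.

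The main obstacle will be the careful accounting in the second step: verifying that the gluing-map factorials and shape-choice counts combine into exactly the claimed exponent of $\dsos$, and that the dormant-vertex division produces $|U_{\tau_P} \cap V_{\tau_P}|!$ rather than some larger or smaller factorial. This is, however, the same type of bookkeeping already performed in \cref{lem:count-nonequivalent-intersection-patterns}, which I would reuse nearly verbatim: indeed, the count of tuples realizing a fixed $T_P$ differs from the count of non-equivalent intersection patterns only in the absence of the $(3D_V)^k$ factor (since $T_P$ is fixed, the targets of each intersection are already specified and do not need to be chosen from a pool of size $\leq 3D_V$) and in the presence of the extra $|\Aut(\tau_P)|$ multiplier coming from step (i).
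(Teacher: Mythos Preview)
Your decomposition into (i) and (ii) has a conceptual gap. Since the intersection pattern $P$ is \emph{fixed} in the definition of $N_P(\tau_P)$, the data you describe in step (ii) --- which shape-positions each vertex of $\tau_P$ occupies, together with the gluing bijections --- is already completely determined by $P$. Once you fix an automorphism of $\tau_P$ in step (i), the pair $(\text{automorphism},\, P)$ uniquely recovers the ribbon tuple, so step (ii) contributes no additional freedom. Your framing would only make sense if you were counting ribbon tuples over \emph{all} patterns with $k$ intersections, in which case (ii) genuinely needs to be chosen; but then you are upper-bounding a quantity strictly larger than $N_P(\tau_P)$, and it is not clear how you would justify dividing by $|U_{\tau_P}\cap V_{\tau_P}|!$ rather than only by the (possibly much smaller) dormant-vertex factorial.

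The paper's route is more direct and skips step (ii) entirely. From the observation above one gets $N_P(\tau_P)\le |\Aut(\tau_P)|$ immediately. The factor $\dsos^{2k+\cdots}/|U_{\tau_P}\cap V_{\tau_P}|!$ does not arise from counting extra choices; it arises from \emph{dividing out} redundancy in the $|\Aut(\tau_P)|$ bound. Define $Dormant(P)\subseteq U_{\tau_P}\cap V_{\tau_P}$ to be those vertices that are not involved in any intersection and are incident only to (quasi-)missing edge indicators. Permuting $Dormant(P)$ leaves the ribbon tuple unchanged, so in fact $N_P(\tau_P)\le |\Aut(\tau_P)|/|Dormant(P)|!$. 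One then lower-bounds $|Dormant(P)|$: every $v\in U_{\tau_P}\cap V_{\tau_P}$ is either an intersection (at most $k$ such), or incident to a Fourier edge in some $\al_i$ and hence chargeable to a vertex of $V(\al_i)\setminus(U_{\al_i}\cup V_{\al_i})$ (at most $\sum_i |V(\al_i)\setminus(U_{\al_i}\cup V_{\al_i})|\le k+|V(\tau_P)|-\tfrac{|U_{\tau_P}|+|V_{\tau_P}|}{2}$ such, using permissibility), or dormant. Thus $|U_{\tau_P}\cap V_{\tau_P}|-|Dormant(P)|\le 2k+|V(\tau_P)|-\tfrac{|U_{\tau_P}|+|V_{\tau_P}|}{2}$, giving $1/|Dormant(P)|!\le \dsos^{2k+|V(\tau_P)|-(|U_{\tau_P}|+|V_{\tau_P}|)/2}/|U_{\tau_P}\cap V_{\tau_P}|!$ and the claimed bound. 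You correctly identified the dormant-vertex mechanism, but it enters as a divisor of $|\Aut(\tau_P)|$, not as a divisor inside a separate multiplicative count.
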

\begin{proof}
    Observe that if we are given a ribbon of shape $\tau_P$, an element of $Aut(\tau_P)$, the intersection pattern $P$ allows us to recover the original ribbons. This implies that $N_{P}(\tau_P) \leq |Aut(\tau_P)|$. However, this bound is not quite good enough as we need a factor of $\frac{1}{|U_{\tau_P} \cap V_{\tau_P}|!}$
    
    To obtain this factor, we observe that there may be some redundancy in the information provided by an element of $Aut(\tau_P)$.  If we let $Dormant(P)$ be the set of vertices $u$ in $U_{\tau_P} \cap V_{\tau_P}$ such that 
    \begin{enumerate}
        \item $u$ is not involved in any intersections
        \item $u$ is only incident to missing edge indicators and quasi-missing edge indicators.
    \end{enumerate}
    then the vertices in $Dormant(P)$ are indistinguishable from each other, so permuting these vertices will not change the ribbons of shapes $\gam_j, \dots, \gam_1, \tau, {\gam_1'}^T, \dots, {\gam_j'}^T$ we end up with. Thus, there are at least $|Dormant(P)|!$ elements in $Aut(\tau_P)$ which will result in the same ribbons which implies that 
    $N_{P}(\tau_P) \leq \frac{|Aut(\tau_P)|}{|Dormant(P)|!}$
    
    We now observe that for each vertex $v \in U_{\tau_P} \cap V_{\tau_P}$, at least one of the following $3$ cases must hold:
    \begin{enumerate}
        \item $v$ is involved in an intersection.
        \item $v$ is in the intersection of $U$ and $V$ for all of the shapes 
        \item $v$ is dormant and is in the intersection of $U$ and $V$ for all of the shapes 
    \end{enumerate}
    This implies that 
    \begin{align*}
        |Dormant(P)| &\geq |U_{\tau_P} \cap V_{\tau_P}| - k -     \left(\sum_{i=1}^{j}{V(\gam_i) \setminus (U_{\gam_i} \cup V_{\gam_i})}\right)\\
        &-|V(\tau) \setminus (U_{\tau} \cup V_{\tau})| - \left(\sum_{i=1}^{j}{|V(\gam'_i) \setminus (U_{\gam'_i} \cup V_{\gam'_i})|}\right)
    \end{align*}
    Since 
    \[
    \left(\sum_{i=1}^{j}{V(\gam_i) \setminus (U_{\gam_i} \cup V_{\gam_i})}\right) + |V(\tau) \setminus (U_{\tau} \cup V_{\tau})| + \left(\sum_{i=1}^{j}{|V(\gam'_i) \setminus (U_{\gam'_i} \cup V_{\gam'_i})|}\right) \leq k + |V(\tau_P)| - \frac{|U_{\tau_P}| + |V_{\tau_P}|}{2}
    \]
    we have that $\frac{1}{|Dormant(P)|!} \leq \frac{D_{SoS}^{2k + |V(\tau_P)| - \frac{|U_{\tau_P}| + |V_{\tau_P}|}{2}}}{|U_{\tau_P} \cap V_{\tau_P}|!}$ and the result follows.
\end{proof}

\subsection{Counting shapes with the tail bound function $c(\al)$}
\label{subsection: count-everything}
We have collected the unsymmetrized ribbons into symmetrized shapes, and we
must upper bound the number of possible shapes $\sigma$, $\gam_i$ and $\tau$
that are summed over.
We will do this by introducing a ``tail bound function`` $c(\alpha)$, such that $\sum_{\text{nontrivial permissible shapes } \al} 1/c(\al) \ll 1.$
The function will be of the form $c(\al) = f(\al)^{|V(\al) \setminus (U_\al \cap V_\al)|} g(|E(\al)|)$.
That is, we need a vertex decay of $f(\al)$ per vertex not in $U_\al \cap V_\al$, as well as a decay from conditioning
when the shape $\al$ has too many edges.

\begin{lemma}\label{shapecountboundone}
For all $v \in \mathbb{N}$ and all $e,k \in \mathbb{N} \cup \{0\}$, there are at most
\[
2\dsos^{k+1}{2^v}v^{2e}
\]
permissible shapes $\alpha$ with degree at most $\dsos$ such that $|V(\alpha) \setminus (U_{\alpha} \cap V_{\alpha})| = v$, there are $e$ edges in $E(\alpha)$ which are not incident to a vertex in $U_{\alpha} \cap V_{\alpha}$,
and $\alpha \setminus (U_{\alpha} \cap V_{\alpha})$ has $k$ connected components which are disconnected from $U_{\alpha} \cup V_{\alpha}$.
\end{lemma}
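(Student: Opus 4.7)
The plan is to specify a permissible shape by a sequence of canonical choices, each of which contributes one of the three factors $2\dsos$, $2^v$, $v^{2e}$, or $\dsos^k$ in the bound. The shape is then reconstructed uniquely from these choices.

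First, I would pick $a := |U_\alpha \cap V_\alpha| \in \{0, 1, \dots, \dsos\}$, which is one of at most $\dsos + 1 \le 2\dsos$ values (for $\dsos \ge 1$). After this, fix a canonical total order on the $a$ shared vertices and a canonical labeling of the $v$ non-shared vertices, say via BFS initialized from $U_\alpha \cup V_\alpha$ in the Fourier-edge graph (which is well-defined by permissibility condition~1). All subsequent data are specified with respect to this labeling, so counts are of labeled objects on a fixed vertex set rather than of isomorphism classes directly.

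Next, I would specify the type of each of the $v$ non-shared vertices, i.e.\ whether it lies in $U_\alpha \setminus V_\alpha$, $V_\alpha \setminus U_\alpha$, or $W_\alpha$. The key observation is that since $|U_\alpha|, |V_\alpha| \le \dsos$, at most $2\dsos$ vertices lie in $(U_\alpha \cup V_\alpha) \setminus (U_\alpha \cap V_\alpha)$, so the type assignment is far from arbitrary. Encoding first the Boolean ``boundary vs.\ middle'' bit at each vertex ($2^v$ options) and absorbing the additional $U$-vs-$V$ refinement of the boundary into the factors already counted (using that only $O(\dsos)$ vertices are affected and can be folded into the $2\dsos$ factor together with the canonical order) yields the $2^v$ contribution. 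The $e$ Fourier-character edges among non-shared vertices are then chosen from the $\binom{v}{2} \le v^2$ possible positions, contributing at most $v^{2e}$ (with multiplicity, if $\alpha$ is improper).

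For the $k$ components $C$ of $\alpha \setminus (U_\alpha \cap V_\alpha)$ that are disconnected from $U_\alpha \cup V_\alpha$, permissibility condition~(4) dictates that exactly one Fourier edge $(u,w)$ with $u \in U_\alpha \cap V_\alpha$ and $w \in C$ is present, and that all remaining edges between $U_\alpha \cap V_\alpha$ and $C$ form a quasi-missing edge indicator. Using the BFS labeling, take $w$ to be the canonically smallest vertex of $C$, so no choice is needed for $w$; only $u$ remains, with at most $a \le \dsos$ options, giving $\dsos^k$ total. Once all the above data are fixed, the remaining (quasi-)missing edge indicators are forced by the permissibility conditions (2)--(5), so the shape is completely determined.

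Multiplying the four counts gives $(2\dsos) \cdot 2^v \cdot v^{2e} \cdot \dsos^k = 2\dsos^{k+1}\cdot 2^v \cdot v^{2e}$, matching the claim. The main technical obstacle is obtaining $2^v$ rather than the na\"ive $3^v$ for the type classification; this relies on exploiting $|U_\alpha|, |V_\alpha| \le \dsos$ so that the finer $U$-vs-$V$ split of boundary vertices can be absorbed into the $2\dsos$ factor via the canonical labeling, rather than being paid for in full per vertex. A secondary subtlety is verifying that fixing $w \in C$ canonically removes the last apparent degree of freedom for the $k$ special edges so that they genuinely only contribute $\dsos^k$, not $\dsos^k v^k$.
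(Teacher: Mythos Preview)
Your overall strategy matches the paper's: both are encoding arguments that specify a permissible shape by a bounded list of choices. Your treatment of $a = |U_\alpha \cap V_\alpha|$, the $e$ Fourier edges among non-shared vertices ($v^{2e}$), and the $k$ special edges to $U_\alpha \cap V_\alpha$ ($\dsos^k$, with $w_C$ recovered canonically) is essentially the same as the paper's.

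The gap is in the type classification of the $v$ non-shared vertices. You correctly identify that a na\"ive three-way classification costs $3^v$, but your proposed fix---a per-vertex ``boundary vs.\ middle'' bit followed by absorbing the $U$-versus-$V$ split into the $2\dsos$ factor---does not work as stated. Even if your canonical order places $U_\alpha \setminus V_\alpha$ before $V_\alpha \setminus U_\alpha$, you must still record where the split occurs among the boundary vertices; this is $\Theta(\dsos)$ additional choices that cannot be folded into the single $2\dsos$ you already spent on $a$. Your bound would thus be off by a factor of $\Theta(\dsos)$.

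The paper's device avoids this: rather than a per-vertex type bit, it fixes an (arbitrary, not canonical) ordering that places the vertices of $U_\alpha \setminus V_\alpha$ first, then $V_\alpha \setminus U_\alpha$, then each of the $k$ special components (each beginning with its vertex $w$ adjacent to $U_\alpha \cap V_\alpha$), then the remaining vertices. A single length-$(v-1)$ bit string marking ``same group / new group'' then simultaneously encodes \emph{all} the group sizes---in particular both $|U_\alpha \setminus V_\alpha|$ and $|V_\alpha \setminus U_\alpha|$---at cost $2^{v-1}$, supplemented only by $4$ choices for which of $U_\alpha \setminus V_\alpha$ and $V_\alpha \setminus U_\alpha$ are nonempty. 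This composition-style encoding is the single trick your proposal is missing; with it, the rest of your argument goes through. A side benefit of the group ordering is that $w_C$ is by construction the first vertex of its block, so no BFS-based canonical labeling (and its attendant tie-breaking and edge-dependence subtleties) is needed at all.
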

\begin{proof}
We can choose such a shape $\alpha$ as follows:
\begin{enumerate}
\item First choose the number of vertices in $U_{\alpha} \cap V_{\alpha}$. There are at most $\dsos+1$ possibilities for this as $0 \leq |U_{\alpha} \cap V_{\alpha}| \leq \dsos$.
\item Indicate whether $U_{\alpha} \setminus V_{\alpha}$ and $V_{\alpha} \setminus U_{\alpha}$ are non-empty. There are $4$ choices for this.
\item Order the vertices in $V(\alpha) \setminus (U_{\alpha} \cap V_{\alpha})$ so that the vertices in $U_{\alpha} \setminus V_{\alpha}$ come first (if there are any), followed by the vertices in $V_{\alpha} \setminus U_{\alpha}$ (if there are any), followed by the vertices in each of the $k$ connected components of $\alpha \setminus (U_{\alpha} \cap V_{\alpha})$ which are disconnected from $U_{\alpha} \cup V_{\alpha}$ (where we put the vertex which has an edge to $U_{\alpha} \cap V_{\alpha}$ first), followed by any remaining vertices.
\item For each vertex except for the first vertex, indicate whether it is in the same group as the previous vertex or is the first vertex of the next group. There are at most $2^{v-1}$ choices for this.
\item For the first vertex $w_i$ in each connected component $W_i$ of $\alpha \setminus (U_{\alpha} \cap V_{\alpha})$ which is disconnected from $U_{\alpha} \cup V_{\alpha}$, choose which vertex in $U_{\alpha} \cap V_{\alpha}$ $w_i$ is adjacent to. There are at most $\dsos^{k}$ choices for this.
\item For each edge which is not incident to a vertex in $U_{\alpha} \cap V_{\alpha}$, choose its two endpoints. There are at most $v^{2e}$ choices for this.
\item The missing edge indicators and quasi-missing edge indicators are fixed because $\alpha$ is a permissible shape.
\end{enumerate}
\end{proof}
This bound works well when $v \leq C\cdot \dsos$. For larger shapes, we will need a different bound.
\begin{lemma}
For all $v \in \mathbb{N}$ and all $e,k,C \in \mathbb{N} \cup \{0\}$, there are at most 
\[
2(\dsos+1)^{k+1}{8^v}{4^e}((C+4){\dsos})^{(C+2)\dsos}v^{\max{\{0,e + k - v - C\dsos\}}}
\]
permissible shapes $\alpha$ with degree at most $\dsos$ such that $|V(\alpha) \setminus (U_{\alpha} \cap V_{\alpha})| = v$, there are $e$ edges in $E(\alpha)$ which are not incident to a vertex in $U_{\alpha} \cap V_{\alpha}$,
and $\alpha \setminus (U_{\alpha} \cap V_{\alpha})$ has $k$ connected components which are disconnected from $U_{\alpha} \cup V_{\alpha}$.
\end{lemma}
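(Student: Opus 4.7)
The plan is to refine the argument of \cref{shapecountboundone} by replacing its wasteful $v^{2e}$ edge factor with a sharper bookkeeping that exploits the tree structure of the underlying graph. Steps (1)--(5) of the previous proof carry over essentially unchanged and account for choosing $|U_\alpha \cap V_\alpha|$, indicating whether $U_\alpha \setminus V_\alpha$ and $V_\alpha \setminus U_\alpha$ are nonempty, ordering the $v$ vertices of $V(\alpha) \setminus (U_\alpha \cap V_\alpha)$ into groups using $2^{v-1}$ group-change indicators, and attaching each of the $k$ components disconnected from $U_\alpha \cup V_\alpha$ to a vertex in $U_\alpha \cap V_\alpha$. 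This contributes a factor of at most $2(\dsos+1)^{k+1}\cdot 2^v$, which is bounded by $2(\dsos+1)^{k+1}\cdot 8^v$ with room to spare.

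For the $e$ edges not incident to $U_\alpha \cap V_\alpha$, I would use a BFS-style canonical encoding. Order the $v$ vertices canonically (using the order already fixed in step (3)) and process edges in the induced lexicographic order on endpoint pairs. Classify each edge as either a \emph{tree edge} (discovering a previously unseen vertex) or a \emph{back edge} (both endpoints already seen). A constant-size tag per edge records this classification and a few bits of bookkeeping, which is absorbed into the $4^e$ factor. For tree edges, the newly discovered endpoint is forced by the canonical vertex order, so only the already-seen endpoint needs specification. Since the subgraph on the $v$ vertices has at least $\max(k, v - e)$ connected components, the number of tree edges is at most $v - k$ and the number of back edges is at least $e - (v - k)$ when this is positive.

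The decisive step is the amortization. I would use the budget $((C+4)\dsos)^{(C+2)\dsos}$ to freely specify the ``first'' $(C+2)\dsos$ encountered edges (of either type): each such edge is described by both of its endpoints chosen among the at most $D_V \le (C+4)\dsos$ candidate vertices, contributing at most $(C+4)\dsos$ choices per edge and fitting inside the setup factor. Any remaining edges --- of which there are at most $e - (v - k) - C\dsos$ when this quantity is positive --- each cost a single factor of $v$: the canonical ordering forces one endpoint (e.g., the endpoint with smaller index within a fixed BFS layer) to be the next one needing to be incident to an edge, leaving only the other endpoint free. This yields the $v^{\max\{0,\,e+k-v-C\dsos\}}$ factor, completing the bound.

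The main obstacle I anticipate is the combinatorial accounting of the amortization. In particular, one must verify that after spending the $((C+4)\dsos)^{(C+2)\dsos}$ budget on the first few edges, the remaining back edges can indeed be specified by only a single factor of $v$ (rather than $v^2$) without any double-counting with the earlier edge specifications. This requires pinning down a canonical interleaving of tree-edge and back-edge discoveries --- for instance, orienting each edge toward its later-discovered endpoint, ordering by discovery time, and proving that the ``forced'' endpoint is unambiguously determined by the exploration history --- and checking that the transition at $e+k-v = C\dsos$ between the two regimes of the bound goes through cleanly modulo the $4^e, 8^v$ slack.
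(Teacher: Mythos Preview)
Your proposal has the right skeleton but a genuine gap at the ``decisive step.'' You write that the first $(C+2)\dsos$ edges can be specified by picking both endpoints ``among the at most $D_V \le (C+4)\dsos$ candidate vertices.'' This is false on two counts: first, $D_V = C\dsos\log n$, not $(C+4)\dsos$; second, and more importantly, this lemma is precisely for the regime $v > C\dsos$ (see \cref{def:c-function}), so the $v$ vertices outside $U_\alpha\cap V_\alpha$ form a set that can be much larger than any $O(\dsos)$ bound. You therefore cannot get $(C+4)\dsos$ choices per endpoint just from the size of the ambient vertex set.

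The missing idea is a \emph{pre-marking} trick that the paper builds into the $8^v$ factor. In the BFS from $U_\alpha\cup V_\alpha$, when each new vertex is discovered, spend one extra bit to record whether that vertex will later be the target of one of the first $(C+2)\dsos$ back edges. This costs $2^v$ (which, together with the $2^{v-1}$ group indicators and one more per-vertex bit for queue bookkeeping, is what the $8^v$ absorbs). Now the target of each of the first $(C+2)\dsos$ back edges must lie either in $(U_\alpha\cup V_\alpha)\setminus(U_\alpha\cap V_\alpha)$ (at most $2\dsos$ vertices) or among the pre-marked vertices (at most $(C+2)\dsos$ of them), for a total of $(C+4)\dsos$ choices; the source is already pinned down by the BFS order (absorbed into $4^e$). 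This is what makes the $((C+4)\dsos)^{(C+2)\dsos}$ budget legitimate. Your proposal's canonical-ordering argument does not supply any such mechanism for restricting the target set, so the amortization does not go through as stated.
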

\begin{proof}
This can be proved in the same way as Lemma \ref{shapecountboundone} except that we handle the edges which are not incident to a vertex in $U_{\alpha} \cap V_{\alpha}$ in a different way. We order these based on a breadth first search where we start with the vertices in $U_{\alpha} \cup V_{\alpha}$ in the queue.
\begin{enumerate}
\item For each new vertex which we reach, indicate whether it will be reached again by one of the first $C\dsos$ edges which do not go to a new vertex. Also indicate whether it has edges incident to it which lead to a new vertex (in which case it should be added to the queue) or not.
\item For each edge, first indicate whether it starts from the same vertex as the previous edge or starts from the next vertex in the queue. Then indicate whether this edge reaches a new vertex or goes to a vertex which has already been reached. If it is one of the first $(C+2)\dsos$ edges which goes to a vertex which has already been reached, indicate which vertex this is. Note that there are at most $(C+4)\dsos$ choices for this as this vertex is either in $U_{\alpha} \cup V_{\alpha}$ or is one of the indicated destinations for these edges (of which there are at most $(C+2)\dsos$). If this is a later edge which goes to a vertex which has already been reached, specify this vertex (there are at most $v$ choices for this).
\end{enumerate}
We now observe that at least $v - k - 2\dsos$ edges which are not incident to a vertex in $U_{\alpha} \cap V_{\alpha}$ are needed to connect $U_{\alpha} \cup V_{\alpha}$ to the rest of $\alpha$, so there are at most $e + k - v + 2\dsos$ edges which do not go to a new vertex. After we take the first $(C+2)\dsos$ such edges, there are at most $e + k - v - C\dsos$ such edges left.
\end{proof}
\begin{definition}\label{def:c-function}
Let $C'$ be an upper bound on the sparsity of shapes in \cref{lem:full-shape-decomposition}.
Given a permissible shape $\alpha$ such that $|V(\alpha) \setminus (U_{\alpha} \cap V_{\alpha})| = v \geq 1$, there are $e$ edges in $E(\alpha)$ which are not incident to a vertex in $U_{\alpha} \cap V_{\alpha}$, and $\alpha \setminus (U_{\alpha} \cap V_{\alpha})$ has $k$ connected components which are disconnected from $U_{\alpha} \cup V_{\alpha}$, we define $c(\alpha)$ by:
\begin{enumerate}
\item If $v \leq C\dsos$ and $e \leq C'v$ then 
\[
c(\alpha) = 40(2C\dsos^{4C'+2})^{|V(\alpha)| - \frac{|U_{\alpha}| + |V_{\alpha}|}{2}}
\]
\item If $v \leq C\dsos$ and $e > C'v$ then 
\[
c(\alpha) = 40(2C\dsos^{4C'+2})^{|V(\alpha)| - \frac{|U_{\alpha}| + |V_{\alpha}|}{2}}(2{C^2}\dsos^2)^{(e - C'v)}
\]
\item If $v > C\dsos$ and $e \leq v - k + C\dsos$ then 
\[
c(\alpha) = 40(16C\dsos^{4})^{|V(\alpha)| - \frac{|U_{\alpha}| + |V_{\alpha}|}{2}}
\]
\item If $v > C\dsos$ and $e > v - k + C\dsos$ then 
\[
c(\alpha) = 40(16C\dsos^{4})^{|V(\alpha)| - \frac{|U_{\alpha}| + |V_{\alpha}|}{2}}(8D_V)^{e + k - v - C\dsos}
\]
\end{enumerate}
If $\alpha$ is a trivial shape then we define $c(\alpha) = 1$.
\end{definition}
\begin{corollary}\label{cor:sum-c-alpha}
\[\sum_{\substack{\alpha \in \calS: \\\text{non-trivial,}\\ 
\text{permissible}}}{\frac{1}{c(\alpha)}} \leq \frac{1}{10}.\]
\end{corollary}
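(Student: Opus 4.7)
The plan is to stratify the sum over non-trivial permissible $\alpha \in \calS$ by the triple $(v, e, k)$, where $v = |V(\alpha) \setminus (U_\alpha \cap V_\alpha)|$, $e$ is the number of Fourier-character edges of $\alpha$ not incident to $U_\alpha \cap V_\alpha$, and $k$ is the number of connected components of $\alpha \setminus (U_\alpha \cap V_\alpha)$ disconnected from $U_\alpha \cup V_\alpha$. The two shape-counting lemmas immediately preceding \cref{def:c-function} provide upper bounds on the number of shapes with given $(v,e,k)$: one of the form $2\dsos^{k+1} 2^v v^{2e}$ (useful for $v \leq C\dsos$) and a second one featuring a fixed factor $((C+4)\dsos)^{(C+2)\dsos}$ (useful for $v > C\dsos$). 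Two elementary algebraic facts are used throughout: the exponent $V' := |V(\alpha)| - (|U_\alpha|+|V_\alpha|)/2$ appearing in $c(\alpha)$ satisfies $v/2 \leq V' \leq v$ (via the identity $v - V' = (|U_\alpha \setminus V_\alpha|+|V_\alpha \setminus U_\alpha|)/2$), and $k \leq v$ since each of the $k$ components contains at least one vertex outside $U_\alpha \cap V_\alpha$.

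The argument then proceeds case by case, mirroring the four cases of \cref{def:c-function}. In Case 1 ($v \leq C\dsos$ and $e \leq C'v$), the polynomial edge-count $v^{2e} \leq (C\dsos)^{2C'v}$ is dominated by the denominator contribution $\dsos^{(4C'+2)V'} \geq \dsos^{(2C'+1)v}$, yielding a per-$(v,e,k)$ ratio that is geometric in $v$ once $\dsos$ exceeds a constant depending on $C, C'$. Case 2 builds on Case 1: the additional multiplicative factor $(2C^2\dsos^2)^{e - C'v}$ in $c(\alpha)$ absorbs the extra edge-count $v^{2(e-C'v)} \leq (C\dsos)^{2(e-C'v)}$, giving a decay factor of at least $1/2$ per edge beyond $C'v$. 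Case 3 ($v > C\dsos$, $e \leq v - k + C\dsos$) uses the second counting lemma, where $4^e \leq 16^v$ (since $e \leq 2v$) is absorbed by $(16C\dsos^4)^{V'}$, and the fixed factor $((C+4)\dsos)^{(C+2)\dsos}$ is swamped by the surplus in $\dsos^{4V'}$ once $V' > C\dsos/2$. Case 4 parallels Case 2: the extra $(8D_V)^{e+k-v-C\dsos}$ factor in $c(\alpha)$ absorbs the combinatorial term $v^{e+k-v-C\dsos} \leq D_V^{e+k-v-C\dsos}$.

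In each of the four cases, summing the resulting geometric series over $v \geq 1$ (with finite inner sums over $0 \leq k \leq v$ and over $e$) gives a contribution of at most $1/40$, so the four cases combined yield the desired bound of $1/10$. The main obstacle is purely bookkeeping: one must verify that the prefactor $40$ and the bases $2C\dsos^{4C'+2}$, $2C^2\dsos^2$, $16C\dsos^4$, $8D_V$ built into $c(\alpha)$ are large enough so that the geometric ratios remain strictly less than $1/2$ in every case, with particular attention to the extremal situation $V' = v/2$ and to the case-boundaries $e = C'v$ and $e = v-k+C\dsos$ where continuity across adjacent cases must be checked. These verifications require $\dsos$ to exceed a fixed absolute constant depending on $C$ and $C'$, which is ensured by the hypotheses of \cref{thm:main}.
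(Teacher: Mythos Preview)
Your plan is the right one and matches the intent of the paper, which states \cref{cor:sum-c-alpha} as a corollary of the two counting lemmas and \cref{def:c-function} without further proof; there is no alternative argument to compare against.

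One refinement you should make explicit: the separate bounds $k \le v$ and $V' \ge v/2$ that you record are individually correct but jointly too weak in Case~1. If you plug in $k=v$ and $V'=v/2$ simultaneously, the numerator factor $\dsos^{k+1}=\dsos^{v+1}$ from the first counting lemma is not fully absorbed by $\dsos^{(4C'+2)V'}=\dsos^{(2C'+1)v}$ after the edge factor $v^{2e}\le (C\dsos)^{2C'v}$ has been charged, leaving a stray factor of $\dsos$. The fix is to use the sharper joint constraint: each of the $k$ components lies entirely in $W_\alpha = V(\alpha)\setminus(U_\alpha\cup V_\alpha)$, so $k \le |W_\alpha| = 2V' - v$, equivalently $V' \ge (k+v)/2$. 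Lower-bounding $c(\alpha)$ via $V' \ge (k+v)/2$ (rather than $v/2$) gives $\dsos^{(4C'+2)(k+v)/2}$ in the denominator, and now the $\dsos$-exponent in the ratio becomes $(k+1) + 2e - (2C'+1)(k+v) \le 1 - (2C'+1)v - 2C'k$, which is at most $-2C'$ for every $v\ge 1$, yielding the required geometric decay. With this correction your four-case plan goes through exactly as described.
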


\subsection{Statement of main lemmas}

Now that we have wrangled the moment matrix into the correct form, we will show that all the error terms are small.
Recall the expression for the moment matrix from \cref{lem:full-shape-decomposition}:
    \begin{align*}
    &\Pi^{1/2}\left(\sum_{\substack{\text{sparse}\\
    \text{permissible}\\ \sigma \in \calL}}{
    {\lam_\sigma'}\frac{M_\sigma}{|\Aut(\sigma)|}}\right)\cdot\\
    &\left(\sum_{j=0}^{2\dsos}{(-1)^{j}\sum_{\substack{\text{sparse}\\
    \text{permissible}\\ \gam_j,\ldots,\gam_1,\tau, \gam'_1,\ldots,\gam'_j }}
    \sum_{\substack{\text{nonequivalent} \\P \in \calP^{mid}_{\gam_j, \dots, \gam_1, \tau, \gam'_1, \dots, \gam'_j}}} 
    N_{P}(\tau_P)\lam'_{\gam_j \circ \cdots \circ \gam_1 \circ \tau \circ \gam_1'^\T \circ \cdots \circ \gam_j'^\T}
    \frac{M_{\tau_P}}{|\Aut(\tau_P)|}
    }\right)\cdot\\
    &\left(\sum_{\substack{\text{sparse}\\
    \text{permissible}\\ \sigma \in \calL}}{
    {\lam_\sigma'}\frac{M_\sigma}{|\Aut(\sigma)|}}\right)^\T \Pi^{1/2}\\
    &+ \Pi^{1/2}\cdot {\text{truncation error}}\cdot \Pi^{1/2}
\end{align*}

The following lemmas, for which the intuition was given in \cref{sec:informal}, are sufficient to prove \cref{thm:main}.

\begin{restatable}{lemma}{middleShapes}(Non-trivial Middle Shapes)
    \label{lem:nontrivial-middle-shapes}
    For all sparse permissible $\tau \in \calM$ such that ${|V(\tau)| > \frac{|U_\tau| + |V_\tau|}{2}}$ and $|E_{mid}(\tau)| - |V(\tau)| \leq C\dsos$,
    \[\lam'_\tau\frac{\norm{M_\tau}}{|\Aut(\tau)|} \leq \frac{1}{c(\tau)}.\]
\end{restatable}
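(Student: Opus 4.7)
The plan is to multiply the explicit forms of $|\lam'_\tau|$, $\norm{M_\tau}$, and $1/|\Aut(\tau)|$ and check the resulting bound against $1/c(\tau)$ in each of the four cases of Definition~\ref{def:c-function}. By Lemma~\ref{lem:conditioning-is-negligible} I can replace $|\lam'_\tau|$ with $(1+o(1))\,m(\tau)\,|\lam_\tau|$, and Lemma~\ref{lem:bound-m} together with $p \leq 1/2$ shows that $m(\tau) \leq 2^{O(\dsos\cdot|V(\tau)|)}$, a factor small enough to be absorbed later by the vertex decay.

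First I would substitute $|\lam_\tau| = (k/n)^{v'}(\sqrt{p/(1-p)})^{|E(\tau)|}$, where $v' := |V(\tau)| - (|U_\tau|+|V_\tau|)/2$, along with the sparse graph matrix norm bound from Section~\ref{sec:sparse_graph_matrices_and_conditioning}. After rearranging (exactly as in the informal Proposition~\ref{prop:informal-middle-shapes}), the product factors as
\[
|\lam_\tau|\cdot\norm{M_\tau} \leq \widetilde{O}\!\left( A^{v'} \cdot B^{|S|-(|U_\tau|+|V_\tau|)/2} \cdot D^{|E(\tau)\setminus E(S)| - |V(\tau) \setminus S|} \right),
\]
where $S$ is the SMVS of $\tau$, $A := (k/\sqrt{n})\sqrt{p/(1-p)}$, $B := \sqrt{(1-p)/(np)}$, and $D := \sqrt{p/(1-p)}$. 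The exponent of $B$ is non-negative because both $U_\tau$ and $V_\tau$ are MVSs of the middle shape $\tau$, so $|S| \geq (|U_\tau|+|V_\tau|)/2$; the exponent of $D$ is non-negative by the charging argument in the proof of Proposition~\ref{prop:informal-middle-shapes}, which uses precisely the hypothesis that every vertex of $\tau$ is connected to $U_\tau \cup V_\tau$. Since $B, D \leq 1$ under our parameter regime, both of these factors are at most $1$ and may be kept or discarded at will.

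Second I would match the four cases of $c(\tau)$. For cases $1$ and $3$ (the edge count $e$ is within the sparse threshold), I discard $B$ and $D$ and use only $A^{v'}$; the choice $k \leq n/(\dsos^{c_0}(\log n)\sqrt{d})$ makes $A \leq \mathrm{polylog}(n)/\dsos^{c_0}$, and taking $c_0$ sufficiently large beats the base $(2C\dsos^{4C'+2})^{v'}$ or $(16C\dsos^4)^{v'}$ and absorbs the $\widetilde{O}(\cdot)$ polylogs, the $m(\tau)$ factor, and $1/|\Aut(\tau)| \leq 1$. For cases $2$ and $4$, I retain the edge-decay term: each edge of $\tau$ not sitting inside the SMVS contributes an extra factor of $D = O(\sqrt{d/n})$, which is at most $1/\dsos^{\omega(1)}$ in our regime, giving more than enough room for the $(2C^2\dsos^2)^{e-C'v}$ and $(8D_V)^{e+k-v-C\dsos}$ decays required by $c(\tau)$. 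Converting from ``extra edges in $\tau$'' to ``extra edges outside the SMVS'' uses the sparsity hypothesis together with the constraint $|E_{mid}(\tau)| - |V(\tau)| \leq C\dsos$.

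The main obstacle is case $4$: when $|V(\tau)| > C\dsos$, the bound $c(\tau)$ demands a full $(8D_V)^{e+k-v-C\dsos}$ decay with $D_V = \Theta(\dsos\log n)$ much larger than $\dsos$. Showing the edge-decay term suffices requires a careful accounting of how the SMVS $S$ can sit inside $\tau$ when $\tau$ is dense, and in particular I expect to need the refined form of the charging inequality that also tracks isolated vertices (as in the full form of Lemma~\ref{lem:intersection-charging}) in order not to lose factors when $S$ absorbs many of the extra edges. The factor $1/|\Aut(\tau)|$ provides a modest additional buffer from the dormant vertices of $U_\tau \cap V_\tau$ but is not by itself sufficient.
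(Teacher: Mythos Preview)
Your setup matches the paper: reduce $\lam'_\tau$ to $\lam_\tau$ via Lemma~\ref{lem:conditioning-is-negligible}, absorb $m(\tau)$ via Lemma~\ref{lem:bound-m}, and invoke the three-factor charging of Proposition~\ref{prop:informal-middle-shapes}. But there is a genuine gap in how you treat the norm bound, and your identification of ``case~4'' as the main obstacle is off target.

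The $\widetilde{O}$ in the informal norm bound is \emph{not} polylog: the formal bound (Corollary~\ref{cor:epsilon-norm-bound}) carries a factor
\[
(12D_V)^{\,|V(\tau)| - \frac{|U_\tau|+|V_\tau|}{2} \;+\; |S| - \frac{|L_S|+|R_S|}{2}}\,,
\]
and $D_V = C\dsos\log n$ is polynomial in $n$ in our regime. The first half of the exponent is $v'$ and is absorbed by taking $c_0$ large, as you say. The second half, $|S| - \tfrac{|L_S|+|R_S|}{2}$, is \emph{not} a priori bounded by $v'$; it can be as large as $|S|$. Your plan to ``discard $B$'' in cases~1 and~3 throws away exactly the term needed to kill this. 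The paper instead (i) observes that $L_S$ and $R_S$ are themselves vertex separators of the middle shape $\tau$, so $|L_S|\ge|U_\tau|$ and $|R_S|\ge|V_\tau|$, giving $|S|-\tfrac{|L_S|+|R_S|}{2}\le |S|-\tfrac{|U_\tau|+|V_\tau|}{2}$; (ii) keeps the factor $B^{|S|-(|U_\tau|+|V_\tau|)/2}$ with $B=1/\sqrt{d}$; and (iii) uses the standing assumption $\sqrt{d}\ge D_V$ (equivalently $\dsos\le \sqrt{d}/\log n$) so that $D_V/\sqrt{d}\le 1$ cancels the whole block. Without this step the bound simply does not close.

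Separately, the four-way case split on $c(\tau)$ is more than you need. Since $\tau$ is sparse, the paper just uses the uniform estimate $c(\tau)\le C\dsos^{\,c\,|V(\tau)\setminus(U_\tau\cap V_\tau)|}$ directly; case~2 is excluded by sparsity and the extra $(8D_V)^{e+k-v-C\dsos}$ in case~4 has exponent $O(v)$ (again by sparsity) and is absorbed by the per-vertex decay. So your ``main obstacle'' dissolves once you use sparsity to bound $c(\tau)$ globally --- the actual missing ingredient is the $L_S,R_S$ claim and the $\sqrt{d}\ge D_V$ cancellation above.
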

    
\begin{restatable}{lemma}{intersectionTerms}(Intersection Terms)
    \label{lem:intersection-terms}
    For all $j \geq 1$ and sparse permissible $\gam_j, \dots, \tau, \dots, \gam_j'$ such that for each shape $|E_{mid}(\al)| - |V(\al)| \leq C\dsos$,
    \[ \sum_{\substack{\text{nonequivalent} \\ P \in \calP^{mid}_{\gam_j,\dots,\gam_j'}}} N_{P}(\tau_P) \lam'_{\gam_j \circ \cdots \circ \gam_j'^\T} \frac{\norm{M_{\tau_P}}}{|\Aut(\tau_P)|} \leq \frac{1}{c(\tau) \prod_{i=1}^j c(\gam_i)c(\gam_i')}.\]
\end{restatable}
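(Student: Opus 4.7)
The plan is to bound the left-hand side by mimicking the three-factor split of the informal argument in \cref{prop:informal-intersection-terms}, now extended to general $j \geq 1$ and combined with careful accounting of the combinatorial factors from summing over intersection patterns. First I would fix the shapes $\gam_j, \ldots, \gam_1, \tau, \gam_1', \ldots, \gam_j'$ and group the patterns $P$ by their number of intersections $k$. For each such $P$, I apply the sparse norm bound for improper shapes to $\|M_{\tau_P}\|$, use \cref{lem:bound-np} to cancel $|\Aut(\tau_P)|$, use \cref{lem:count-nonequivalent-intersection-patterns} to count nonequivalent $P$, and replace $\lam'_{\gam_j \circ \cdots \circ \gam_j'^\T}$ by $\lam_{\gam_j \circ \cdots \circ \gam_j'^\T}$ up to an absorbable factor by \cref{lem:conditioning-is-negligible} and \cref{lem:bound-m}.

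Next, I would expand $\lam_{\gam_j\circ\cdots\circ\gam_j'^\T}\|M_{\tau_P}\|$ for each linearization $\beta$ of $\tau_P$ and each separator $S$ of $\beta$, and split it into three parts as in \cref{prop:informal-intersection-terms}: (a) a vertex decay factor $\bigl(\tfrac{k_{IS}}{\sqrt{n}}\sqrt{p/(1-p)}\bigr)^{|V(\gam_j\circ\cdots\circ\gam_j'^\T)|-(|U_\beta|+|V_\beta|)/2}$, (b) an intersection--tradeoff factor whose exponent is nonnegative by \cref{lem:intersection-tradeoff-lemma} (applied iteratively across the $j$ layers per \cref{rmk:middle-intersections}, using \cref{prop:intersection-connectivity} to maintain the connectivity hypothesis between layers), and (c) an edge--charging factor whose exponent is nonnegative by \cref{lem:intersection-charging}. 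The key arithmetic point is that $|V(\gam_j\circ\cdots\circ\gam_j'^\T)| = |V(\tau_P)|+k$ and hence the vertex--decay exponent equals $|V(\tau_P)|-(|U_{\tau_P}|+|V_{\tau_P}|)/2 + k$, which by the multiplicativity $|V(\al\circ\beta)|-(|U_{\al\circ\beta}|+|V_{\al\circ\beta}|)/2 = (|V(\al)|-(|U_\al|+|V_\al|)/2)+(|V(\beta)|-(|U_\beta|+|V_\beta|)/2)$ splits additively into a contribution from $\tau$ and one from each $\gam_i, \gam_i'$.

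The combinatorial factors contributed by \cref{lem:bound-np} and \cref{lem:count-nonequivalent-intersection-patterns} multiply to at most $(16D_{SoS}^3)^{|V(\tau_P)|-(|U_{\tau_P}|+|V_{\tau_P}|)/2}\cdot(12D_{SoS}^3 D_V)^k/|U_{\tau_P}\cap V_{\tau_P}|!$. These are absorbed into the bases $2CD_{SoS}^{4C'+2}$ (case 1) or $16CD_{SoS}^4$ (case 3) of $c(\al)$ from \cref{def:c-function}: with $k_{IS} \leq n/(\dsos^{c_0}\log n\, \sqrt{d})$ and $c_0$ taken sufficiently large (e.g.\ $c_0=20$), the vertex decay per vertex is at most $1/\dsos^{c_0-O(1)}$, which comfortably dominates the per-vertex combinatorial base. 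The extra per-intersection cost $12D_{SoS}^3 D_V$ is charged against the $k$ additional vertices in $\gam_j\circ\cdots\circ\gam_j'^\T$ as opposed to $\tau_P$; each such vertex lives in some $\gam_i$ or $\gam_i'$ and contributes a vertex decay factor to $c(\gam_i)$ or $c(\gam_i')$. For the $m(R)$ factors from \cref{lem:bound-m}, the bound $(1/(1-p))^{2\dsos \cdot |V(R)\setminus(A_R\cap B_R)|}$ is negligible compared to the vertex decay provided $d \geq (\log n)^2$, and is similarly absorbed into the bases of $c(\al)$. The $|U_{\tau_P}\cap V_{\tau_P}|!$ denominator exactly compensates for the potentially large number of edge--indicator arrangements attached to $U\cap V$ vertices, consistent with the bookkeeping for quasi--missing edge indicators.

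The main obstacle I anticipate is the distribution of the (global) bound on $\tau_P$ into a product of per--shape bounds $1/c(\tau)\cdot\prod_i 1/c(\gam_i)c(\gam_i')$, since the intersection pattern entangles the shapes. This is resolved by the multiplicativities noted above---of the vertex--decay exponent, of $|E(\tau_P)| = |E(\tau)|+\sum_i(|E(\gam_i)|+|E(\gam_i')|)$, and of the combinatorial base across vertices---together with the observation that every one of the $k$ intersection vertices lies in one of the $\gam_i$ or $\gam_i'$. A secondary obstacle is case 3 of $c(\al)$, where shapes are permitted to be slightly thick ($|E_{mid}(\al)|-|V(\al)|\leq C\dsos$); here the third factor of the split (the edge--charging term) has exponent $|E(\beta)\setminus E(S)| - |V(\beta)\setminus S|-|I_\beta|+2|E_\text{phantom}|$ which by \cref{lem:intersection-charging} is nonnegative, so no additional edge-excess penalty appears and the base $16CD_{SoS}^4$ in case~3 of $c(\al)$ suffices. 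Finally, summing over $k \in\{1,\ldots,O(\dsos)\}$ only adds a geometric factor, which is absorbed by the overall constant $40$ appearing in $c(\al)$.
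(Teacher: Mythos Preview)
Your proposal is essentially the paper's own argument: the same three-factor split from \cref{prop:informal-intersection-terms}, the same appeal to \cref{lem:intersection-tradeoff-lemma} and \cref{lem:intersection-charging}, and the same use of \cref{lem:count-nonequivalent-intersection-patterns} and \cref{lem:bound-np} to control the sum over $P$, with \cref{lem:bound-m} and \cref{lem:conditioning-is-negligible} reducing $\lam'$ to $\lam$. Two small refinements worth noting: first, the paper does not merely use nonnegativity of the tradeoff exponent in part (b) but strengthens it (via the observation $\tfrac{|L_S|+|R_S|}{2}\geq |S_{\tau_P,\min}|$) to at least $|S|-\tfrac{|L_S|+|R_S|}{2}$, which is then used with $\sqrt{d}\geq D_V$ to kill the $D_V^{|S|-(|L_S|+|R_S|)/2}$ factor in the norm bound rather than absorbing it into vertex decay; your cruder absorption also works but costs a larger $c_0$. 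Second, the $1/|U_{\tau_P}\cap V_{\tau_P}|!$ from \cref{lem:bound-np} is there to cancel the $\sqrt{|U_{\tau_P}|!\,|V_{\tau_P}|!}$ arising from the ordered-tuple norm bound (\cref{cor:epsilon-norm-bound}), not to compensate for edge-indicator arrangements.
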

\begin{restatable}{lemma}{truncationerror}(Truncation Error)
    \label{lem:truncation-error}
\[\text{truncation error} \psdleq n^{-\Omega(C\dsos)} \pi.\]
\end{restatable}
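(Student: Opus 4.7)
The plan is to bound each of the two truncation error summands separately in operator norm, and convert to the stated PSD bound using the independent-set projection $\pi$. Every shape appearing in the truncation is permissible, so by \cref{lem:permissible-has-pi} each summand $M_\al$ satisfies $\pi M_\al \pi = M_\al$; symmetrizing the full truncation error $T$ (swapping $\sigma\leftrightarrow\sigma'$ and $\gam_i\leftrightarrow\gam_i'$ yields the transpose) shows $T^\T = T$ and $\pi T \pi = T$, so $T \psdleq \|T\|\cdot \pi$. The outer $\Pi^{1/2}$ dressing contributes at most $\|\Pi\| \leq (1-p)^{-\binom{\dsos}{2}} = n^{o(1)}$ in our regime, which is absorbed. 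It therefore suffices to show $\|T\| \leq n^{-\Omega(C\dsos)}$ for each truncation type.

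For $\text{truncation error}_{\text{too many vertices}}$, my plan is to combine the per-shape bounds $\lam'_\al \|M_\al\|/|\Aut(\al)| \leq 1/c(\al)$ from \cref{lem:nontrivial-middle-shapes} and \cref{lem:intersection-terms} with a refined tail on $\sum 1/c(\al)$. In each case of \cref{def:c-function}, $c(\al) \geq B(\dsos)^{|V(\al)| - (|U_\al|+|V_\al|)/2}$ for a base $B(\dsos) \geq 2$ that strictly dominates the per-vertex shape-count base (as can be verified by comparing \cref{shapecountboundone} against \cref{def:c-function}). Summing the geometric series yields
\[
\sum_{\al:\,|V(\al)| > D_V} 1/c(\al) \;\leq\; O(1)\cdot 2^{-D_V} \;=\; n^{-\Omega(C\dsos)}
\]
using $D_V = C\dsos \log n$. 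For intersection terms where the truncation is triggered by a prefix $|V(\sigma \circ \gam_j \circ \cdots \circ \gam_1)| > D_V$ (or its right-side analogue), the coefficient $\lam'$ factors across the left/middle/right parts, so the part-by-part analysis of \cref{lem:intersection-terms} applies and the oversized prefix alone already triggers the same tail bound.

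For $\text{truncation error}_{\text{too many edges in one part}}$, the offending part $\al \in \{\tau, \gam_j, \gam_j'\}$ is sparse by the conditioning of \cref{sec: conditioning}, but satisfies $|E_{mid}(\al)| - |V(\al)| > C\dsos$. The sparse trace-method norm bound is too weak here, because $(\sqrt{(1-p)/p})^{|E(S)|}$ may exceed the $n^{(|V|-|S|)/2}$ vertex growth when $|E(S)|$ is large; my plan is to switch to the Frobenius bound $\|M_\al\| \leq \|M_\al\|_F \leq \widetilde{O}(n^{|V(\al)|/2})$ flagged in \cref{sec:proof-overview} on the offending factor, which washes out the dependence on $E(S)$. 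Combining with the coefficient yields
\[
\lam'_\al \|M_\al\|_F \;\lesssim\; \left(\tfrac{k\sqrt{d}}{n}\right)^{|V(\al)|}\left(\sqrt{\tfrac{d}{n-d}}\right)^{|E(\al)|-|V(\al)|} \;\leq\; 1\cdot (d/n)^{C\dsos/2} \;\leq\; n^{-\Omega(C\dsos)}
\]
using $k\sqrt{d}/n = 1/(\dsos^{c_0}\log n) \leq 1$, the edge threshold, and $d \leq n^{0.5}$. The non-offending factors in the left/middle/right product (including contributions from the intersection shape $\tau_P$ after gluing) are bounded by the standard sparse spectral bound of \cref{lem:intersection-terms} at $O(1)$ each, and enumeration over shapes/intersection patterns is absorbed by the decay via the counting machinery of \cref{subsection: count-everything}.

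The hardest step, I expect, is the second case: cleanly isolating the offending factor within the intersection shape $\tau_P$ so that the Frobenius bound is applied to exactly the right subfactor, while the remaining parts use the standard sparse spectral bound. This requires tracking how the $E_{mid}(\al)$ edges of the offending part appear in $E(\tau_P)$ and verifying that the coefficient decay survives the intersection. The truncation thresholds $D_V = C\dsos\log n$ and $C\dsos$ on the excess edges are tuned precisely so that, after this accounting, the net decay is $n^{-\Omega(C\dsos)}$ and dominates the polynomial-in-$\dsos$ combinatorial overhead from $N_P(\tau_P)$, automorphisms, and shape enumeration.
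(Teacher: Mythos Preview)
Your outline has the right two-case split and the right high-level tools (tail of $c(\alpha)$ for ``too many vertices''; Frobenius for ``too many edges''), but both parts miss a key step that the paper supplies.

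\textbf{Too many vertices.} The $j=0$ truncation term involves the \emph{composed} shape $\alpha=\sigma\circ\tau\circ\sigma'^\T$, which is not a middle shape, so \cref{lem:nontrivial-middle-shapes} does not apply: its proof uses that $U_\tau,V_\tau$ are MVSs to make $|S|\ge\tfrac{|U_\tau|+|V_\tau|}{2}$, which fails for $\sigma\circ\tau\circ\sigma'^\T$. The paper instead proves a looser bound (\cref{lem:large-shape}) of the form $\lam'_{\sigma\circ\tau\circ\sigma'^\T}\,\|M_{\sigma\circ\tau\circ\sigma'^\T}\|/|\Aut|\le n^{2\dsos}/(c(\sigma)c(\tau)c(\sigma'))$, and then observes that one of $\sigma,\tau,\sigma'$ has $\ge D_V/3$ vertices, so the tail bound \cref{prop:sum-c-alpha-tail} gives $2^{-D_V/3}=n^{-\Omega(C\dsos)}$, swallowing the $n^{2\dsos}$ loss. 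For $j\ge 1$ you also need a bound on the outer $M_\sigma,M_{\sigma'}$ factors, which your proposal does not provide; the paper reuses \cref{lem:large-shape} on $\sigma$ and $\sigma'$ alone. Finally, $\sigma$ or $\sigma'$ with $|E_{mid}|-|V|>C\dsos$ must be handled separately via \cref{cor:frobenius-norm-trick}.

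\textbf{Too many edges in one part.} Your plan to apply the Frobenius bound only to the ``offending factor'' and spectral bounds to the rest cannot work as stated: in these truncation terms the matrix is $M_{\sigma\circ\tau_P\circ\sigma'^\T}$ for a single composed shape, and the offending $\gam_j$ (or $\tau$) has already been \emph{intersected} into $\tau_P$---there is no separate factor to isolate. The paper's route is different and cleaner: it proves (\cref{lem:excess-edges-stay}) that if any one of $\gam_j,\dots,\tau,\dots,\gam_j'$ has $C\dsos$ excess edges then $\tau_P$ (hence $\sigma\circ\tau_P\circ\sigma'^\T$) retains $(C-2)\dsos$ excess \emph{unique} edges, via a cycle argument showing each excess edge sits on its own cycle that survives contraction. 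Then the Frobenius bound (\cref{prop:improper-frobenius-trick-first-part}) is applied to the \emph{whole} composed shape, yielding the $\sqrt{p}^{\,\Omega(C\dsos)}=n^{-\Omega(C\dsos)}$ decay as in \cref{lem:Frobeniusnormbound}. Your displayed Frobenius calculation also drops the $\sqrt{d}^{\,(|U|+|V|)/2}\le\sqrt{d}^{\,\dsos}$ loss (cf.\ \cref{cor:frobenius-norm-trick}). Lastly, because Frobenius only gives an expectation bound, the paper aggregates all such terms via \cref{lem:frobenius-trick-second-part} (Cauchy--Schwarz on the sum, then a single Markov), a step absent from your plan.
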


\begin{restatable}{lemma}{leftrightconditioning}(Sum of left shapes is well-conditioned)
    \label{lem:left-right-sep-conditioned}
    \[\left(\sum_{\substack{\text{sparse,}\\\text{permissible} \\ \sigma \in \calL}}\lam'_\sigma \frac{M_\sigma}{|\Aut(\sigma)|}\right)\left(\sum_{\substack{\text{sparse,}\\\text{permissible} \\ \sigma \in \calL}}\lam'_\sigma \frac{M_\sigma}{|\Aut(\sigma)|}\right)^\T
    \psdgeq n^{-O(\dsos)} \pi\]
\end{restatable}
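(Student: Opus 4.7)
The plan is to view $L = \sum_{\sigma \in \calL \text{ sparse, permissible}} \lam'_\sigma M_\sigma/|\Aut(\sigma)|$ as a block matrix indexed by degree. Since left shapes satisfy $|V_\sigma| \leq |U_\sigma|$ (as $V_\sigma$ is the unique MVS), $L$ is block-lower-triangular, and I will decompose $L = D + N$ where $D$ collects the diagonal blocks (shapes with $|U_\sigma| = |V_\sigma|$) and $N$ collects the strictly-lower blocks (shapes with $|U_\sigma| > |V_\sigma|$). The goal is to show $D \approx \pi$ on $\mathrm{range}(\pi)$, and then exploit the block-nilpotence of $N$ to absorb any polynomial degradation into the generous $n^{-O(\dsos)}$ factor on the right-hand side.

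The diagonal part decomposes further: the trivial left shapes and non-trivial left shapes with $U_\sigma = V_\sigma$ (the latter is forced whenever $|U_\sigma| = |V_\sigma|$, else uniqueness of the MVS is violated). For each $j \leq \dsos$ the unique trivial left shape $\sigma_j$ has $\lam_{\sigma_j} = m(\sigma_j) = 1$ and $|\Aut(\sigma_j)| = j!$, and $M_{\sigma_j}/j!$ is the diagonal matrix whose $(S,S)$-entry for $|S|=j$ is $1_{S\text{ indep}}$ (from the permissibility missing-edge indicators). Summing over $j$ and invoking Lemma~\ref{lem:conditioning-is-negligible}, the trivial contribution equals $(1 \pm o(1))\pi$. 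For non-trivial $\sigma$ with $U_\sigma = V_\sigma$, I verify the per-shape bound $\lam'_\sigma \|M_\sigma\|/|\Aut(\sigma)| \leq 1/c(\sigma)$ by the same three-term analysis as in the proof of \cref{prop:informal-middle-shapes}: (i) the factor $(k\sqrt{d}/n)^{|V(\sigma)| - |U_\sigma|} \leq 1$ by the parameters of Theorem~\ref{thm:main}; (ii) the ``separator excess'' factor is $\leq 1$ since $V_\sigma$ is already the MVS and contains no edges by the definition of a left shape; (iii) the ``extra edges'' factor is $\leq 1$ via the charging $|E(\sigma)| \geq |V(\sigma) \setminus U_\sigma|$, which follows by a BFS from $U_\sigma$ using the connected-truncation property of the left shape. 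Summing via Corollary~\ref{cor:sum-c-alpha} yields $\|D - \pi\|_{\mathrm{range}(\pi)} \leq 1/10$, so $D$ is invertible on $\mathrm{range}(\pi)$ with $\|D^{-1}\| \leq 10/9$.

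To finish, I use that $N$ is block-nilpotent of order at most $\dsos+1$: chains of strictly decreasing degrees have length at most $\dsos+1$. A loose application of the norm bounds on graph matrices of \cref{sec:sparse_graph_matrices_and_conditioning}, together with $\lam'_\sigma \leq 1$, yields $\|N\| \leq n^{O(1)}$. Since $L = D(I + D^{-1}N)$ on $\mathrm{range}(\pi)$ and $D^{-1}N$ is block-nilpotent, the Neumann series truncates:
\[
(I + D^{-1}N)^{-1} = \sum_{k=0}^{\dsos}(-D^{-1}N)^{k}, \qquad \|(I + D^{-1}N)^{-1}\| \leq (\dsos+1)\bigl(\|D^{-1}\|\,\|N\|\bigr)^{\dsos} \leq n^{O(\dsos)}.
\]
Therefore $\sigma_{\min}(L)_{\mathrm{range}(\pi)} \geq \sigma_{\min}(D)/\|(I + D^{-1}N)^{-1}\| \geq n^{-O(\dsos)}$, whence $LL^{\T} \succeq n^{-O(\dsos)}\pi$, as required. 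The main technical obstacle is the per-shape bound $\lam'_\sigma\|M_\sigma\|/|\Aut(\sigma)| \leq 1/c(\sigma)$ for non-trivial diagonal left shapes (the adaptation of Lemma~\ref{lem:nontrivial-middle-shapes}); the off-diagonal control is comparatively soft because the block-nilpotence of $N$ automatically absorbs up to $\dsos$ factors of polynomial blowup.
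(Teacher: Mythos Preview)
Your block--lower-triangular strategy is a genuinely different route from the paper's, and the architecture is sound: $L=D+N$ with $D$ close to $\pi$ on $\mathrm{range}(\pi)$, $N$ strictly block-lower, hence $D^{-1}N$ nilpotent of order $\le \dsos+1$. The diagonal analysis is fine. However, the step ``$\|N\|\le n^{O(1)}$'' is false as stated, and this is the crux. Take the star left shape $\sigma$ with $U_\sigma=\{u_1,\dots,u_m\}$, $V_\sigma=\{v\}$, edges $u_i v$. Then $\lambda_\sigma=(k/n)^{(m+1)/2}\sqrt{p}^{\,m}$ and the SMVS is $\{v\}$, giving $\|M_\sigma\|\approx n^{m/2}$ (up to the usual $D_V$ and $\sqrt{m!}$ factors). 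With $k\approx n/\sqrt d$ this single term contributes $\lambda_\sigma\|M_\sigma\|/|\Aut(\sigma)|\approx d^{(m-1)/4}/\sqrt{m!}$, which for $m=\dsos$ and $d=n^{1/2}$ is $n^{\Omega(\dsos)}$. So the naive Neumann bound yields only $n^{-O(\dsos^2)}$, too weak to beat the $n^{-\Omega(C\dsos)}$ truncation error with $C$ a fixed constant.

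Your approach is salvageable, but you must work blockwise rather than with the global $\|N\|$. The paper's own calculation (the one underpinning \cref{lem:left-right-sep-conditioned}) yields, for each sparse permissible left $\sigma$ with $|U_\sigma|=i>|V_\sigma|=j$, the bound $\lambda'_\sigma\|M_\sigma\|/|\Aut(\sigma)|\le c(\sigma)^{-1}\,\widetilde O(k^{(i-j)/2})$; summing over $\sigma$ via \cref{cor:sum-c-alpha} gives $\|N_{ij}\|\le\widetilde O(\sqrt k)^{\,i-j}$. Now the product along any strictly increasing chain $j=i_\ell<\dots<i_0=i$ telescopes: $\prod_r\|N_{i_{r-1}i_r}\|\le\widetilde O(\sqrt k)^{\,i-j}\le n^{O(\dsos)}$, independently of the chain length. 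Summing over the at most $2^{\dsos}$ chains and the $\dsos$ terms of the truncated Neumann series recovers $\|(I+D^{-1}N)^{-1}\|\le n^{O(\dsos)}$, hence $LL^\T\succeq n^{-O(\dsos)}\pi$.

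For comparison, the paper does not invert $L$ at all. It writes $LL^\T=\sum_{j}L_jL_j^\T$ with $L_j$ the sum over $\sigma$ with $|V_\sigma|=j$, introduces weights $w_j\in[0,1]$ with $w_j\approx k^{-(\,\dsos-j)}$, and charges every cross term $w_j\lambda'_\sigma\lambda'_{\sigma'}(M_\sigma M_{\sigma'}^\T+M_{\sigma'}M_\sigma^\T)$ against $w_{|U_\sigma|}(\lambda'_{\pi_{|U_\sigma|}})^2\pi_{|U_\sigma|}+w_{|U_{\sigma'}|}(\lambda'_{\pi_{|U_{\sigma'}|}})^2\pi_{|U_{\sigma'}|}$ via an elementary $2ab\le a^2+b^2$. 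The final lower bound is $\min_j w_j(\lambda'_{\pi_j})^2\ge n^{-O(\dsos)}$. This is a direct PSD argument that avoids any operator inversion; the same per-shape estimate $\lambda'_\sigma\|M_\sigma\|/(\lambda'_{\pi_i}|\Aut(\sigma)|)\le\widetilde O(k^{(i-j)/2})$ is what both approaches ultimately rest on.
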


\begin{proof}[Proof of~\cref{thm:main} assuming~\cref{lem:nontrivial-middle-shapes},~\cref{lem:intersection-terms},~\cref{lem:truncation-error}, \cref{lem:left-right-sep-conditioned}]
    For ease of exposition we omit the automorphism groups.
    In the approximate PSD decomposition, the term $j = 0$ can be broken up into
    the leading term $\pi$ and remaining nontrivial middle shapes,
    \begin{align*}
    \sum_{\substack{\text{sparse,}\\\text{permissible}\\\tau \in \calM}} \lam'_\tau M_\tau
    &= \sum_{\substack{\text{permissible}\\\tau \in \calM:\\U_\tau = V_\tau = V(\tau)}} \lam_\tau M_\tau && +  \sum_{\substack{\text{sparse,}\\\text{permissible}\\\tau \in \calM:\\|V(\tau)| > \frac{|U_\tau| + |V_\tau|}{2}}} \lam'_\tau M_\tau\\
    &= \pi && +  \sum_{\substack{\text{sparse,}\\\text{permissible}\\\tau \in \calM:\\|V(\tau)| > \frac{|U_\tau| + |V_\tau|}{2}}} \lam'_\tau M_\tau.
    \end{align*}
    Using \cref{lem:permissible-has-pi}, the middle shapes ($j= 0$) and intersection terms ($j \geq 1$) are
    \begin{align*}
    \pi\Bigg(\Id &+ \sum_{\substack{\text{sparse,}\\\text{permissible}\\\tau \in \calM:\\|V(\tau)| > \frac{|U_\tau| + |V_\tau|}{2}}} \lam'_\tau M_\tau\\
    & + \sum_{j=1}^{2\dsos}(-1)^j \sum_{\substack{\text{sparse},\\ \text{permissible}\\ \gam_j, \dots, \gam_j'}} \sum_{\substack{\text{nonequivalent}: \\ P \in \calP^{mid}_{\gam_j, \dots, \gam_j'}}}
    N_{P}(\tau_P)\lam'_{\gam_j \circ \cdots \circ \gam_j'^\T} M_{\tau_P}\Bigg).
    \end{align*}
    By \cref{lem:nontrivial-middle-shapes} and \cref{lem:intersection-terms},
    (summed with \cref{cor:sum-c-alpha})
    this is at least $\Omega(1)\pi$.
    
    Now plugging this into the PSD decomposition, we have:
    \begin{align*}
    \sum_{\alpha \in \calS}\lam_\al \frac{M_\alpha}{\abs{\Aut(\al)}} &\psdgeq \Pi^{1/2}\left(\sum_{\substack{\text{sparse}\\\text{permissible}\\\sigma \in \calL}} \lam'_\sigma M_\sigma\right)
    \Omega(1)\pi\left(\sum_{\substack{\text{sparse}\\\text{permissible}\\\sigma \in \calL}}\lam'_\sigma M_\sigma\right)^\T\Pi^{1/2} + \text{truncation error}
    \end{align*}
By \cref{lem:permissible-has-pi} again,
    \begin{align*}
    &= \Omega(1)\Pi^{1/2}\left(\sum_{\substack{\text{sparse}\\\text{permissible}\\\sigma \in \calL}} \lam'_\sigma M_\sigma\right)\left(\sum_{\substack{\text{sparse}\\\text{permissible}\\\sigma \in \calL}}\lam'_\sigma M_\sigma\right)^\T\Pi^{1/2} + \text{truncation error} 
    \end{align*}
By    \cref{lem:left-right-sep-conditioned}, \cref{lem:truncation-error}, and taking $C$ sufficiently large,
    \begin{align*}
    & \psdgeq n^{-O(\dsos)}\Pi  - n^{-\Omega(C\dsos)}\Pi \\
    & \psdgeq 0 
    \end{align*}
\end{proof}

\subsection{Conditioning II: Frobenius norm trick}

Shapes with a large number of excess edges should be handled using their Frobenius norm.
The Frobenius norm trick improves on the first moment method, \cref{lem:density-bound},
so it could be used to also handle $O(1)$-sparse subgraphs in this work. On the other hand,
the forbidden subgraph method can handle forbidden graphs beyond the first moment method, which
may lead to improvements for smaller $d$.

There are two parts to the Frobenius norm trick. First,
we compute the Frobenius norm of a graph matrix.

\begin{lemma}\label{lem:frobenius-trick-first-part}
   For all proper shapes $\al$, 
    \[\E[\tr(M_\al M_\al^\T)] \leq |\Aut(\al)| n^{|V(\al)| + |I_\al|} 
    .\]
\end{lemma}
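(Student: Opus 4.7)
The plan is to expand the trace as a sum over pairs of injective embeddings, use Fourier orthogonality to reduce to a counting problem, and then count using the automorphism structure on isolated vs.\ non-isolated vertices.

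First I would write $M_\al = \sum_\phi M_{\phi(\al)}$, where the sum is over injective $\phi : V(\al) \to [n]$, so that
\[\tr(M_\al M_\al^\T) \;=\; \sum_{\phi, \phi'} \tr\bigl(M_{\phi(\al)} M_{\phi'(\al)}^\T\bigr).\]
Since $M_{\phi(\al)}$ has a single nonzero entry, at position $(\phi(U_\al), \phi(V_\al))$ with value $\chi_{\phi(E(\al))}(G)$, each summand vanishes unless $\phi(U_\al) = \phi'(U_\al)$ and $\phi(V_\al) = \phi'(V_\al)$, in which case it equals $\chi_{\phi(E(\al))}(G)\,\chi_{\phi'(E(\al))}(G)$.

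Next I would compute the expectation using independence across edges and the $p$-biased identities $\E[\chi(G_e)] = 0$, $\E[\chi(G_e)^2] = 1$. Because $\al$ is proper, both $\phi(E(\al))$ and $\phi'(E(\al))$ are honest sets, so in the union each edge has multiplicity $1$ or $2$; the expectation is $0$ unless every multiplicity is $2$, i.e.\ $\phi(E(\al)) = \phi'(E(\al))$, in which case it is $1$. Therefore $\E[\tr(M_\al M_\al^\T)]$ is exactly the number of pairs $(\phi, \phi')$ of injective maps satisfying (i) $\phi(U_\al) = \phi'(U_\al)$, (ii) $\phi(V_\al) = \phi'(V_\al)$, and (iii) $\phi(E(\al)) = \phi'(E(\al))$.

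For the counting step, let $W = V(\al) \setminus I_\al$. Conditions (i)--(iii) force $\phi'(W) = \phi(W)$, because every vertex in $W$ either belongs to $U_\al \cup V_\al$ or is incident to an edge of $E(\al)$, all of which are pinned by $\phi$. Hence $\phi'|_W$ differs from $\phi|_W$ by an automorphism of the $W$-substructure, and the number of such automorphisms is $|\Aut(\al)|/|I_\al|!$ (because $\Aut(\al) \cong \Aut_W(\al) \times \Sym(I_\al)$, since permuting degree-$0$ vertices of $W_\al$ is always an automorphism). On the isolated vertices, $\phi'|_{I_\al}$ may be any injection into $[n] \setminus \phi(W)$, contributing at most $n^{|I_\al|}$. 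Summing over the $\leq n^{|V(\al)|}$ choices of $\phi$ yields
\[\E[\tr(M_\al M_\al^\T)] \;\leq\; n^{|V(\al)|} \cdot \frac{|\Aut(\al)|}{|I_\al|!} \cdot n^{|I_\al|} \;\leq\; |\Aut(\al)|\, n^{|V(\al)|+|I_\al|}.\]

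The argument is essentially a routine moment calculation; the only subtle point is the use of properness in the second step, which is what forces the cross-terms to vanish unless $\phi(E(\al))$ and $\phi'(E(\al))$ coincide exactly rather than match only in parity. For improper shapes one would need to handle the fact that $\E[\chi(G_e)^m]$ can be nonzero for $m \geq 3$ in the $p$-biased basis, producing the linearization/phantom-edge corrections seen in the earlier spectral norm bounds, but the current statement sidesteps this entirely.
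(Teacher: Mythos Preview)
Your proof is correct and follows essentially the same approach as the paper's (very terse) argument: expand the trace over pairs of embeddings, use orthogonality of the $p$-biased characters together with properness to reduce to counting pairs $(\phi,\phi')$ with $\phi(E(\al))=\phi'(E(\al))$, and observe that this forces $\phi'$ to agree with $\phi$ up to an automorphism on the non-isolated vertices while leaving the isolated vertices free. Your version is simply a more detailed write-up, including the explicit factorization $\Aut(\al)\cong \Aut_W(\al)\times\Sym(I_\al)$, which even yields a slightly sharper intermediate bound before you discard the $1/|I_\al|!$ factor.
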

\begin{proof}
Any term contributing to $\E \left[\tr(M_\al M_\al^\T)\right]$ is a labeling of $\al$ and $\al^\T$ s.t. every edge appears exactly twice. After choosing the labels for $\al$ in $n^{|V(\al)|}$ ways, the labeling of $\al^\T$ gives an isomorphism between $\al$ and $\al^\T$ (except for isolated vertices, which give an extra factor of $n$ each).
\end{proof}

For proper shapes $\al$, this gives a norm bound independent of the number of edges! MVS with a large number of induced edges can't hurt us.
We have $\lam_\al \norm{M_\al} \leq \left(\frac{k}{n}\right)^{|V(\al)| - \frac{|U_\al| + |V_\al|}{2}}\sqrt{p}^{|E(\al)|}\sqrt{n}^{|V(\al)|}$. 
The number of edges could be smaller than the number of vertices, but only by $\dsos$ (since everything must be connected to $U_\al \cup V_\al$). If there are at least $C\dsos$ excess edges, then the norm is small enough to sum.



The Frobenius norm is small for too-dense subgraphs of $G$. The second part of the Frobenius norm trick allows us to use these bounds with high probability instead of in expectation, which we can do using Markov's inequality once instead of using Markov's inequality on each one.
\begin{lemma}\label{lem:frobenius-trick-second-part}
For any random matrices $M_1,\ldots,M_k$,
\[
\E\left[\tr\left(\left(\sum_{i=1}^{k}{M_i}\right)\left(\sum_{i=1}^{k}{M_i}\right)^\T\right)\right] \leq \left(\sum_{i=1}^{k}{\sqrt{\E\left[\tr\left({M_i}{M_i}^\T\right)\right]}}\right)^2
\]
\end{lemma}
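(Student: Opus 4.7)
The plan is to recognize that $\|M\|_* := \sqrt{\E[\tr(MM^\T)]} = \sqrt{\E[\|M\|_F^2]}$ is a seminorm on the space of (square-integrable) random matrices, arising from the semi-inner-product $\langle M, N \rangle := \E[\tr(MN^\T)]$. Once this is established, the inequality in the lemma is just the squared triangle inequality $\|\sum_i M_i\|_* \leq \sum_i \|M_i\|_*$ for this seminorm.

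First I would verify Cauchy--Schwarz for $\langle\cdot,\cdot\rangle$, which is really two applications of Cauchy--Schwarz stacked on top of each other. Pointwise (for fixed realizations), the Frobenius inner product satisfies $\tr(MN^\T) \leq \|M\|_F \|N\|_F$. Taking expectations and applying Cauchy--Schwarz for expectations of real random variables gives
\[
\E[\tr(MN^\T)] \;\leq\; \E\bigl[\|M\|_F \cdot \|N\|_F\bigr] \;\leq\; \sqrt{\E[\|M\|_F^2]}\cdot\sqrt{\E[\|N\|_F^2]} \;=\; \|M\|_*\,\|N\|_*.
\]

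Then I would conclude by expanding the trace bilinearly:
\[
\E\!\left[\tr\!\left(\left(\sum_{i=1}^k M_i\right)\!\left(\sum_{i=1}^k M_i\right)^{\!\!\T}\right)\right] \;=\; \sum_{i,j=1}^k \E[\tr(M_i M_j^\T)] \;\leq\; \sum_{i,j=1}^k \|M_i\|_* \|M_j\|_* \;=\; \left(\sum_{i=1}^k \|M_i\|_*\right)^{\!\!2},
\]
which is exactly the claimed bound. There is no obstacle here: the only ``content'' is Cauchy--Schwarz applied twice, and the rest is bilinear expansion. The lemma should be stated with an implicit finite-second-moment assumption so that all expectations are well defined, but for the random matrices used in the paper (polynomials of bounded degree in the bounded random variables $\chi_e$) this is automatic.
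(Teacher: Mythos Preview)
Your proof is correct and essentially identical to the paper's: both expand the trace bilinearly and bound each cross term $\E[\tr(M_iM_j^\T)]$ by $\sqrt{\E[\tr(M_iM_i^\T)]}\sqrt{\E[\tr(M_jM_j^\T)]}$ via Cauchy--Schwarz. The only cosmetic difference is that the paper applies Cauchy--Schwarz once over the combined index-and-probability space, whereas you split it into a pointwise Frobenius step followed by an expectation step.
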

\begin{proof}
Observe that by Cauchy-Schwarz,
\begin{align*}
\E\left[\tr(M_i{M_j}^\T)\right] &= \E\left[\sum_{a,b}{(M_i)_{ab}(M_j)_{ab}}\right] \\
&\leq \sqrt{\E\left[\sum_{a,b}{(M_i)_{ab}^2}\right]}\sqrt{\E\left[\sum_{a,b}{(M_j)_{ab}^2}\right]} \\
&= \sqrt{\E\left[\tr\left({M_i}{M_i}^\T\right)\right]}\sqrt{\E\left[\tr\left({M_j}{M_j}^\T\right)\right]}
\end{align*}
Applying this inequality for all $i,j \in [k]$ gives the result.
\end{proof}

\subsection{Norm bounds}\label{sec:probabilistic-norm-bounds}

For the trace method, it is easier to use graph matrices with left/right sides
indexed by \emph{ordered} tuples.

\begin{theorem}
If $D_V \geq \lceil{D_{SoS}\ln(n)}\rceil$ then for all proper shapes $\alpha$ such that $|V(\alpha)| \leq 3D_V$ and all $\epsilon' > 0$, taking $M_{\alpha}$ to be the graph matrix where the rows and columns are indexed by ordered tuples rather than sets,
\[
\Pr\left(\norm{M_{\alpha}} > 10{\left(\frac{1}{\epsilon'}\right)}^{\frac{1}{2D_V}}\max_{\text{separator } S}{\left\{n^{\frac{|V(\alpha)| - |S|}{2}}\left(12D_V\right)^{|V(\alpha)| - \frac{|U_{\alpha}| + |V_{\alpha}|}{2} + |S| - \frac{|L_S| + |R_S|}{2}}\left(3\sqrt{\frac{1-p}{p}}\right)^{|E(S)|}\right\}}\right) < \epsilon'
\]
\end{theorem}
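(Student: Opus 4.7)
The plan is to apply the trace moment method: for an integer $q$ to be chosen later, estimate $\mathbb{E}\bigl[\operatorname{tr}\bigl((M_\alpha M_\alpha^\T)^{q}\bigr)\bigr]$ and deduce the tail bound on $\norm{M_\alpha}$ via Markov's inequality, $\Pr[\norm{M_\alpha} > t] \leq t^{-2q}\mathbb{E}[\operatorname{tr}((M_\alpha M_\alpha^\T)^q)]$. Choosing $q = D_V$ converts the polynomial factor from the trace count into a logarithmic one via a $2q$-th root and absorbs it into the leading constant $10$, while producing the factor $(1/\epsilon')^{1/(2D_V)}$ out front. Using the ordered-tuple indexing avoids the automorphism multiplicities present in the set-indexed version and simplifies the counting.

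Expanding the trace yields a sum over closed walks of length $2q$ on ribbons, alternately of shape $\alpha$ and $\alpha^\T$, with consecutive ribbons meeting on their boundaries ($B_{R_i} = A_{R_{i+1}}$). The expectation of each walk factorizes as $\prod_e \mathbb{E}[\chi_e^{m_e}]$ where $m_e$ is the total multiplicity of the edge $e$ across all ribbon copies in the walk. Since $\mathbb{E}[\chi_e] = 0$, only walks in which every edge appears at least twice survive, and a direct computation with $p$-biased characters yields the bound $|\mathbb{E}[\chi_e^k]| \leq (\sqrt{(1-p)/p})^{k-2}$ for $k \geq 2$. This edge penalty is precisely the mechanism driving the new $\sqrt{(1-p)/p}$ factors that distinguish the sparse from the dense regime.

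The main body of work is to group the walks into equivalence classes parameterized by how vertices across different copies of $\alpha$ are identified, then optimize over these patterns. To each such class one associates a vertex separator $S$ of $\alpha$: the vertices which must persist across the ribbon transitions and are therefore shared globally. The resulting count of walks decomposes as $n^{(|V(\alpha)|-|S|)q + |S|}$ (distinct vertices) times a combinatorial factor from the choices of the equivalence class, times the character moment penalty. The novelty in the sparse case is that an edge with both endpoints in $S$ can be shared across many copies without re-incurring a fresh $p$-biased Fourier factor from each copy — instead it pays only $(\sqrt{(1-p)/p})^{k-2}$ per extra appearance. Summing the sharing across the $2q$ copies collects the factor $(3\sqrt{(1-p)/p})^{|E(S)|}$ per walk. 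The $(12D_V)^{|V(\alpha)| - \frac{|U_\alpha|+|V_\alpha|}{2} + |S| - \frac{|L_S|+|R_S|}{2}}$ factor arises from enumerating, for each separator $S$ (with $L_S = S \cap U_\alpha$, $R_S = S \cap V_\alpha$), the number of combinatorial types of walk consistent with $S$; the exponent counts the degrees of freedom in choosing the overlap pattern of left, middle, and right portions of successive ribbons relative to $S$.

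The principal obstacle is the careful accounting in this combinatorial enumeration: we must produce a bound that holds for the maximizing separator $S$ while ensuring we do not double-count edges inside versus outside $S$. Following the block decomposition used in~\cite{AMP20} we split each walk into ``left,'' ``middle,'' and ``right'' portions relative to $S$, but here each portion must be further stratified by which of its edges have both endpoints in $S$ so that the $(\sqrt{(1-p)/p})^{|E(S)|}$ factor is extracted cleanly and the residual non-separator edges contribute only the $n^{\cdot}$ factor. A secondary technicality is that, since the trace is a polynomial of degree $O(D_V |V(\alpha)|)$ in the randomness, one must verify that the condition $D_V \geq \lceil D_{SoS}\ln n\rceil$ is enough to absorb all $q$-dependent prefactors into the leading constant $10$ after taking the $2q$-th root; this is a routine but careful bookkeeping argument once the main combinatorial bound is in place.
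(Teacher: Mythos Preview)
Your high-level plan (trace method with $q=D_V$, Markov, edge-moment bound $|\E[\chi_e^k]|\le(\sqrt{(1-p)/p})^{k-2}$) matches the paper, but there are two concrete gaps that would prevent you from reaching the stated bound.

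First, you have misidentified $L_S$ and $R_S$: they are \emph{not} $S\cap U_\alpha$ and $S\cap V_\alpha$. In the paper, $L_S$ is the set of vertices of $S$ reachable from $U_\alpha$ without passing through any other vertex of $S$, and symmetrically for $R_S$. With your definition the exponent $|S|-\tfrac{|L_S|+|R_S|}{2}$ would be wrong in general, and your interpretation of that exponent as ``degrees of freedom in choosing the overlap pattern of left, middle, and right portions'' does not survive this correction.

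Second, and more importantly, you are missing the mechanism that actually produces the exponent $|S|-\tfrac{|L_S|+|R_S|}{2}$ (and the companion factor coming from components of $\alpha\setminus S$ disconnected from $U_\alpha\cup V_\alpha$, which you do not mention at all). The paper does not simply ``stratify edges inside vs.\ outside $S$''. It builds the separator $S_j$ for the $j$th block of the walk from the constraint structure (a vertex is in $S_j$ iff it has constraint edges both to the left and to the right), and then runs the vertex-identification encoding \emph{twice}: once left-to-right and once right-to-left. Going left-to-right, vertices in $S_j\setminus L_{S_j}$ require the expensive label $2q|V(\alpha)|$ while vertices in $L_{S_j}$ are cheap; going right-to-left the roles of $L_{S_j}$ and $R_{S_j}$ swap. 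Taking the geometric mean of the two encodings is what yields $(2q)^{|S_j|-\frac{|L_{S_j}|+|R_{S_j}|}{2}}$ and, after combining with the $|V(\alpha)|$ factors and using $|V(\alpha)|,2q\le O(D_V)$, the $(12D_V)$ base with the correct exponent. Without this bidirectional-encoding-plus-geometric-mean step you will only recover the cruder exponent $|S|-|L_S\cap R_S|$ (or worse), which is not what the theorem claims.
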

\begin{proof}
By \cref{thm:tracepowercalculations},
\begin{align*}
&\E\left[\tr\left(\left(M_{\alpha}M_{\alpha}^\T\right)^q\right)\right] \leq 4^{2q\left(|V(\alpha)| - \frac{|U_{\alpha}| + |V(\alpha)|}{2}\right)}\left(\sqrt{|V(\alpha)|}\right)^{2q|V(\alpha) \setminus (U_{\alpha} \cup V_{\alpha})|}n^{q|V(\alpha)|} \cdot \\
&\left(\max_{\text{separator } S}{n^{-\frac{|S|}{2}}\left(3\sqrt{\frac{1-p}{p}}\right)^{|E(S)|}\left(\sqrt{|V(\alpha)|}\right)^{|S| - |U_{\alpha} \cap V_{\alpha}|}(2q)^{|S|-\frac{|L_S| + |R_S|}{2} + \frac{c(\alpha,S)}{2}}}\right)^{2q-2}
\end{align*}
Since $|V(\alpha)|$ and $q$ will both be $O(D_V) = O(D_{SOS}log(n))$, for convenience we combine the factors of $|V(\alpha)|$ and $q$ using the following proposition:
\begin{proposition}
\[
\frac{1}{2}|V(\alpha) \setminus (U_{\alpha} \cup V_{\alpha})| + \frac{1}{2}(|S| - |U_{\alpha} \cap V_{\alpha}|) + \frac{c(\alpha,S)}{2} \leq |V(\alpha)| - \frac{|U_{\alpha}| + |V_{\alpha}|}{2}
\]
\end{proposition}
\begin{proof}
Observe that since $c(\alpha,S)$ counts the number of connected components of $\alpha \setminus S$ which are disconnected from $U_{\alpha} \cup V_{\alpha}$, we have that $|S| - |U_{\alpha} \cap V_{\alpha}| + c(\alpha,S) \leq |V(\alpha)| - |U_{\alpha} \cap V_{\alpha}|$. We now observe that $\frac{|V(\alpha) \setminus (U_{\alpha} \cup V_{\alpha})| + |V(\alpha)| - |U_{\alpha} \cap V_{\alpha}|}{2} = |V(\alpha)| - \frac{|U_{\alpha}| + |V_{\alpha}|}{2}$
\end{proof}
Since $|V(\alpha)| \leq 3D_V$, for all $q \leq \frac{3}{2}D_V$ we have that 
\begin{align*}
&\E\left[\tr\left(\left(M_{\alpha}M_{\alpha}^\T\right)^q\right)\right] \leq \\
&\max_{\text{separator } S}{\left\{n^{|S|}\left(n^{\frac{|V(\alpha)| - |S|}{2}}\left(12D_V\right)^{|V(\alpha)| - \frac{|U_{\alpha}| + |V_{\alpha}|}{2} + |S| - \frac{|L_S| + |R_S|}{2}}\left(3\sqrt{\frac{1-p}{p}}\right)^{|E(S)|}\right)^{2q} \right\}}
\end{align*}
Using the fact that for all $q \in \mathbb{N}$ and all $\epsilon' > 0$, 
\[
\Pr\left(\norm{M_{\alpha}} > \sqrt[2q]{\frac{\E\left[\tr\left(\left(M_{\alpha}M_{\alpha}^\T\right)^q\right)\right]}{\epsilon'}}\right) < \epsilon'
\]
and taking $q = D_V$, we have that 
\[
\Pr\left(\norm{M_{\alpha}} > 10{\left(\frac{1}{\epsilon'}\right)}^{\frac{1}{2D_V}}\max_{\text{separator } S}{\left\{n^{\frac{|V(\alpha)| - |S|}{2}}\left(12D_V\right)^{|V(\alpha)| - \frac{|U_{\alpha}| + |V_{\alpha}|}{2} + |S| - \frac{|L_S| + |R_S|}{2}}\left(3\sqrt{\frac{1-p}{p}}\right)^{|E(S)|}\right\}}\right) < \epsilon'
\]
\end{proof}
\begin{remark}
We will take $\epsilon' = \frac{\epsilon}{c(\alpha)}$where $\epsilon$ is the overall probability that our SoS bound fails. However, we did not want to hard code this choice into the theorem statement.
\end{remark}

For improper shapes, we linearize them. The coefficients in the linearization are: 
\begin{proposition}\label{prop:linearization}
For a shape $\al$, 
\[M_\al = \displaystyle\sum_{\text{linearizations }\beta\text{ of }\al} \left(\sqrt{\frac{1-p}{p}} \right)^{|E(\al)| - |E(\beta)| - 2|E_{phantom}(\beta)|}M_\beta. \]
\begin{proof}
Each Fourier character can be linearized as $\chi_e^k = \E[\chi_e^k] + \E[\chi_e^{k+1}]\chi_e$.
Then use \cref{prop:powerexpectedvalue}.
\end{proof}
\end{proposition}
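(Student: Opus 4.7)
The plan is to expand $M_\alpha = \sum_R M_R$ over ribbons $R$ of shape $\alpha$ and then linearize each edge factor independently. Each ribbon contributes an entry $\chi_{E(R)}(G) = \prod_e \chi(G_e)^{\mul_\alpha(e)}$, where the product is over distinct edges $e$ in the underlying graph of $\alpha$. Because $\{1, \chi_e\}$ is an orthonormal basis for real-valued functions on $\{0,1\}$ under the $p$-biased measure (using $\E[\chi_e] = 0$ and $\E[\chi_e^2] = 1$), each factor satisfies the identity $\chi_e^k = \E[\chi_e^k] + \E[\chi_e^{k+1}]\chi_e$ suggested in the proof sketch.

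Next I will multiply out the product over distinct edges. Each expansion term corresponds to a subset $S$ of distinct edges of $\alpha$: an edge $e \in S$ retains the $\chi_e$ factor with scalar coefficient $\E[\chi^{\mul_\alpha(e)+1}]$, while an edge $e \notin S$ drops its $\chi$ and contributes $\E[\chi^{\mul_\alpha(e)}]$. Since $\E[\chi] = 0$, any edge of multiplicity $1$ in $\alpha$ must lie in $S$ for the term to be nonzero, and the surviving subsets correspond bijectively to linearizations $\beta$ of $\alpha$ with $E(\beta) = S$ and $E_{phantom}(\beta) = E(\alpha) \setminus E(\beta)$ (contributing $\beta$'s automatically having $\mul_\alpha(e) \geq 2$ for each phantom $e$). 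Grouping the $\prod_{e \in S}\chi_e$ factors into $\chi_{E(\beta)}$ and summing over ribbon embeddings yields $M_\alpha = \sum_\beta c_\beta M_\beta$ with
\[c_\beta = \prod_{e \in E(\beta)} \E[\chi^{\mul_\alpha(e)+1}] \cdot \prod_{e \in E_{phantom}(\beta)} \E[\chi^{\mul_\alpha(e)}].\]

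The final step is to identify $c_\beta$ with $\left(\sqrt{(1-p)/p}\right)^{|E(\alpha)|-|E(\beta)|-2|E_{phantom}(\beta)|}$ by appealing to \cref{prop:powerexpectedvalue}. The target exponent must distribute edge-by-edge so that each kept edge of multiplicity $k$ contributes $\left(\sqrt{(1-p)/p}\right)^{k-1}$ and each phantom edge of multiplicity $k \geq 2$ contributes $\left(\sqrt{(1-p)/p}\right)^{k-2}$; these then sum to
\[\sum_{e \in E(\beta)}(\mul_\alpha(e) - 1) + \sum_{e \in E_{phantom}(\beta)}(\mul_\alpha(e) - 2) = |E(\alpha)| - |E(\beta)| - 2|E_{phantom}(\beta)|,\]
as required. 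The main obstacle is this last coefficient identification: a direct expansion of $\E[\chi^k]$ via its recurrence $\E[\chi^{k+1}] = \E[\chi^{k-1}] + \tfrac{2p-1}{\sqrt{p(1-p)}}\E[\chi^k]$ produces contributions involving $(2p-1)/\sqrt{p(1-p)}$ alongside powers of $\sqrt{(1-p)/p}$, so the clean formula must come from a specialized identity (or dominant-term bound) packaged in \cref{prop:powerexpectedvalue}. Once that moment evaluation is in hand, the ribbon unrolling and the edge-by-edge linearization are routine, and the exponent accounting above closes the argument.
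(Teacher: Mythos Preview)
Your approach is exactly the paper's: linearize each $\chi_e^k$ as $\E[\chi_e^k] + \E[\chi_e^{k+1}]\chi_e$, expand the product over distinct edges, and identify the surviving terms with linearizations $\beta$. Your final concern is well-placed and in fact resolves itself once you look at \cref{prop:powerexpectedvalue}: that proposition only asserts $|\E[\chi^k]| \leq (\sqrt{(1-p)/p})^{k-2}$, not equality, so the coefficient $c_\beta$ you computed is merely \emph{bounded in absolute value} by the stated power. The proposition as written with ``$=$'' is a slight abuse; it should be read as a bound on the coefficient of each $M_\beta$, which is exactly how the paper invokes it downstream (e.g.\ in \cref{cor:epsilon-norm-bound} the coefficient is cited as ``at most''). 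With that reading, your edge-by-edge exponent accounting $\sum_{e\in E(\beta)}(\mul_\alpha(e)-1)+\sum_{e\in E_{phantom}(\beta)}(\mul_\alpha(e)-2)=|E(\alpha)|-|E(\beta)|-2|E_{phantom}(\beta)|$ is precisely what is needed.
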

\begin{corollary}\label{cor:epsilon-norm-bound}
If $D_V \geq \lceil{\dsos {ln(n)}}\rceil$ then for any shape $\alpha$ (including improper shapes, shapes with missing edge indicators, and shapes with quasi-missing edge indicators), taking $M_{\alpha}$ to be the graph matrix where the rows and columns are indexed by sets (i.e. the definition we use throughout the paper),
\begin{align*}
&\Pr\Bigg(\norm{M_{\alpha}} > 20{\left(\frac{2^{|E(\alpha)| - |E_{\text{no repetitions}}(\alpha)|}}{\epsilon'}\right)}^{\frac{1}{2D_V}}2^{|E(\alpha)| - |E_{\text{no repetitions}}(\alpha)|}|\sqrt{|U_{\alpha} |!|V_{\alpha} |!} \\
&\max_{\beta,S}{\left\{n^{\frac{|V(\alpha)| - |S| + |I_{\beta}|}{2}}\left(12D_V\right)^{|V(\alpha)| - \frac{|U_{\alpha}| + |V_{\alpha}|}{2} + |S| - \frac{|L_S| + |R_S|}{2}}\left(\sqrt{\frac{1-p}{p}}\right)^{|E(\alpha)| - |E(\beta)| - 2|E_{phantom}(\beta)|}\left(3\sqrt{\frac{1-p}{p}}\right)^{|E(S)|}\right\}}\Bigg) < \epsilon'
\end{align*}
where $E_{\text{no repetitions}}(\alpha)$ is the set (rather than the multi-set) of edges of $\alpha$, $\beta$ is a linearization of $\alpha$, $S$ is a separator of $\beta$, $I_{\beta}$ is the set of vertices of $\beta$ which are isolated, and $E_{phantom}(\beta)$ is the set of edges of $\alpha$ which are not in $\beta$. As a special case, when $\alpha$ is a proper shape with no isolated vertices (which may still contain missing edge indicators and quasi-missing edge indicators),
\begin{align*}
&\Pr\Bigg(\norm{M_{\alpha}} > 20{\left(\frac{1}{\epsilon'}\right)}^{\frac{1}{2D_V}}\sqrt{|U_\al|!|V_\al|!}\max_{S}{\left\{n^{\frac{|V(\alpha)| - |S|}{2}}\left(12D_V\right)^{|V(\alpha)| - \frac{|U_{\alpha}| + |V_{\alpha}|}{2} + |S| - \frac{|L_S| + |R_S|}{2}}\left(3\sqrt{\frac{1-p}{p}}\right)^{|E(S)|}\right\}}\Bigg) < \epsilon'
\end{align*}
\end{corollary}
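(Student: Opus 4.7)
The plan is to reduce \cref{cor:epsilon-norm-bound} to the preceding theorem for proper shapes by (i) linearizing improper shapes into proper ones via \cref{prop:linearization}, (ii) observing that missing edge indicators and quasi-missing edge indicators do not affect the trace-power analysis, and (iii) converting from the ordered-tuple indexing used in the preceding theorem to the set-indexed matrices we actually care about.

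Concretely, I would first apply \cref{prop:linearization} to write $M_\al = \sum_\beta c_\beta\, M_\beta$, where the sum is over proper linearizations $\beta$ and $|c_\beta| = (\sqrt{(1-p)/p})^{|E(\al)|-|E(\beta)|-2|E_{phantom}(\beta)|}$. Since each Fourier character $\chi_e^k$ with $k\geq 2$ linearizes into a two-term combination ($\chi_e^k = \E[\chi_e^k] + \E[\chi_e^{k+1}]\chi_e$), the number of linearizations is at most $2^{|E(\al)|-|E_{\text{no repetitions}}(\al)|}$. I would then apply the preceding theorem to each proper linearization $\beta$ with failure parameter $\epsilon'/2^{|E(\al)|-|E_{\text{no repetitions}}(\al)|}$, so that a union bound across all linearizations yields overall failure probability at most $\epsilon'$. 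The factor $(2^{|E(\al)|-|E_{\text{no repetitions}}(\al)|}/\epsilon')^{1/(2D_V)}$ in the corollary's bound arises precisely from this reparameterization inside the preceding theorem. Then, by the triangle inequality $\|M_\al\| \leq \sum_\beta |c_\beta|\,\|M_\beta\|$ combined with bounding by the max over $\beta$ multiplied by the number of terms, I pick up the outer factor $2^{|E(\al)|-|E_{\text{no repetitions}}(\al)|}$, and the coefficients $|c_\beta|$ contribute the $(\sqrt{(1-p)/p})^{|E(\al)|-|E(\beta)|-2|E_{phantom}(\beta)|}$ factor inside the maximum.

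The main subtlety is dealing with the missing and quasi-missing edge indicators, since the preceding theorem is stated only for proper shapes (no indicators). The key observation is that these indicators are $[0,1]$-valued functions: for a missing edge indicator this is immediate, and for a quasi-missing indicator $q_E$, by the remark after its definition it is a non-negative combination of $\{0,1\}$-valued indicator functions whose coefficients sum to at most $1$, so $|q_E|\leq 1$. Running the trace method (\cref{thm:tracepowercalculations} invoked inside the preceding theorem) on $M_\beta$ with these indicators, each monomial in $\E[\operatorname{tr}((M_\beta M_\beta^\T)^q)]$ gets multiplied by a product of $[0,1]$-valued random variables, so term-by-term domination yields the \emph{same} bound as in the indicator-free proper case. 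Thus the preceding theorem's bound applies verbatim to each linearization $M_\beta$ even when indicators are present, and no extra multiplicative loss is incurred.

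Finally, I would convert from the ordered-tuple convention (used in the preceding theorem, where rows and columns are indexed by ordered $|U_\al|$- and $|V_\al|$-tuples from $[n]$) to the set-indexed convention used throughout the paper. Writing $M_\al^{\text{set}}[I,J] = \sum_{\text{orderings of }I,J} M_\al^{\text{ord}}[\vec I, \vec J]/(|I|!\,|J|!)$ and comparing operator norms shows the norms are related by at most a factor of $\sqrt{|U_\al|!\,|V_\al|!}$ (viewing this as the norm of the symmetrization/desymmetrization map between the ordered and unordered spaces). Multiplying in this factor, together with the constant $20$ which absorbs the constant $10$ from the preceding theorem along with a factor of $2$ for the triangle-inequality slack, gives the stated bound. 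The special case in the second display follows immediately, since for a proper shape the only linearization is $\beta=\al$, $E_{phantom}(\beta)=\emptyset$, $I_\beta=\emptyset$, and $|E(\al)|-|E_{\text{no repetitions}}(\al)|=0$, collapsing the union bound and triangle-inequality factors.
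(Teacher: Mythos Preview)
Your overall structure matches the paper's: linearize via \cref{prop:linearization}, union-bound over the at most $2^{|E(\alpha)|-|E_{\text{no repetitions}}(\alpha)|}$ linearizations with failure parameter $\epsilon'/2^{|E(\alpha)|-|E_{\text{no repetitions}}(\alpha)|}$, apply the preceding theorem to each $\beta$, and convert from ordered-tuple to set indexing at the cost of $\sqrt{|U_\alpha|!\,|V_\alpha|!}$.

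Where you differ from the paper is in step (ii), handling quasi-missing edge indicators. The paper does \emph{not} argue by pointwise $[0,1]$-domination; instead it writes $q_E$ as a convex combination of products of missing-edge indicators (so $M_\beta$ becomes a convex combination of matrices $M_{\beta'}$ with only missing-edge indicators, to which \cref{thm:tracepowercalculations} applies directly), and then invokes Lemma~5.14 of \cite{AMP20} to transfer the trace-power bound across a convex combination at the cost of a factor of $2$. That factor of $2$ is precisely the origin of the constant $20 = 2\cdot 10$ --- it is not ``triangle-inequality slack'' as you suggest. Your domination argument as written is too loose: ``each monomial gets multiplied by a $[0,1]$-valued random variable'' does not immediately give $|\E[\text{monomial}\cdot I]|\le |\E[\text{monomial}]|$, since the Fourier product can be negative and is correlated with $I$. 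A rigorous version would be $|\E[(\prod_e \chi_e^{k_e})\cdot I]|\le \E[\prod_e |\chi_e|^{k_e}]=\prod_e \E[|\chi_e|^{k_e}]$ by independence over edges, followed by checking that $\E[|\chi_e|^{k}]\le (\tfrac{1-p}{p})^{(k-2)/2}$ for $k\ge 2$ and $\le 1$ for $k\le 1$, so that the encoding argument inside \cref{thm:tracepowercalculations} goes through verbatim. This can be made to work, but it is a different (and slightly more delicate) route than the paper's.

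For step (iii), the paper also argues differently: rather than a symmetrization-map norm bound, it re-runs the trace encoding for the set-indexed matrix and observes that one must additionally specify a permutation at each boundary gluing, which costs $(|U_\alpha|!\,|V_\alpha|!)^q$ over the whole trace and hence $\sqrt{|U_\alpha|!\,|V_\alpha|!}$ after the $2q$-th root. Your operator-norm argument via the map $\Psi$ (rows-of-ones from ordered tuples to sets, with $\|\Psi\|\le\sqrt{|U_\alpha|!}$) is a valid alternative giving the same factor, though your displayed formula for $M_\alpha^{\text{set}}$ in terms of $M_\alpha^{\text{ord}}$ should not have the $1/(|I|!\,|J|!)$ normalization.

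Finally, you do not explain where the $|I_\beta|$ term comes from. A linearization $\beta$ can have isolated vertices in $W_\beta$; each such vertex contributes an unconstrained sum over $[n]$, i.e.\ an extra factor of $n$ in the norm, which is exactly the $n^{|I_\beta|/2}$ (on top of the $n^{(|V(\alpha)|-|S|)/2}$ from the preceding theorem applied to $\beta$ with isolated vertices removed).
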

\begin{proof}
We make the following observations:
\begin{enumerate}
    \item There are at most $2^{|E(\alpha)| - |E_{\text{no repetitions}}(\alpha)|}$ linearizations $\beta$ of $\alpha$, so to ensure that the bound on $M_{\alpha}$ fails with probability at most $\epsilon'$, it is sufficient to ensure that the bound for each $M_{\beta}$ fails with probability at most $\frac{\epsilon'}{2^{|E(\alpha)| - |E_{\text{no repetitions}}(\alpha)|}}$.
    \item The coefficient for each linearization $\beta$ of $\alpha$ is at most $\left(\sqrt{\frac{1-p}{p}}\right)^{|E(\alpha)| - |E(\beta)| - 2|E_{phantom}(\beta)|}$ per \cref{prop:linearization}.
    \item To handle quasi-missing edge indicators, we observe that each quasi-missing edge is a convex combination of missing edge indicators, so we can express $M_{\beta}$ as a convex combination of $M_{\beta'}$ where $\beta'$ only has missing edge indicators rather than quasi-missing edge indicators. We now use Lemma 5.14 of \cite{AMP20}, which says that when we use the trace power method on a matrix $M$ which is a convex combination of matrices $M'$, we can obtain a bound on $\norm{M}$ which is at most twice the maximum over $M'$ of the bound we obtain on $\norm{M'}$.
    \item Notice in the trace power method, the length-$q$ walk on the matrix indexed by sets follows the same block structure that each $\al, \al^T$ is specified by an embedding of vertices in $V(\al)$ to $[n]$. The only change is vertices in the boundary only agree as a set instead of ordered tuples, to address this, it suffices to additionally identify a permutation between the boundaries, and this requires $|U _\al|!$ when we go from $V_{\al^T_t}$ to $U_{\al_{t+1}}$, and it requires $|V_\al|!$ when go from $V_{\al_t}$ to $U_{\al_{t}^T}$. Taking the square root gives us the desired bound.
\end{enumerate}
\end{proof}


One more proposition is needed to handle the automorphism terms that arise
when switching between ribbons/shapes/shapes indexed by ordered tuples,
   \begin{proposition}\label{prop:midshape-aut}
    For a permissible shape $\al$ of degree at most $\dsos$,
    \[ 
   \frac{\sqrt{|U_\al|!|V_\al|!}}{|\Aut(\al)|}\leq \dsos^{|V(\al) \setminus (U_\al \cap V_\al)|/2 + |V(\al) \setminus (U_\al \cup V_\al)|}
    \]
    \end{proposition}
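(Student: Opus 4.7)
The plan is to split the ratio into two pieces, one coming from vertices in the symmetric difference $U_\al \triangle V_\al$ and one from vertices in the intersection $U_\al \cap V_\al$, and bound each piece using the combinatorics of permissibility.

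First, I would bound the numerator by separating out the intersection. Using the elementary inequality $(a+b)! \leq a!\cdot(a+b)^b$ and the degree bound $|U_\al|,|V_\al|\leq\dsos$, one obtains
\[
|U_\al|! \leq |U_\al \cap V_\al|!\cdot \dsos^{|U_\al\setminus V_\al|}, \qquad |V_\al|! \leq |U_\al \cap V_\al|!\cdot \dsos^{|V_\al\setminus U_\al|},
\]
so that
\[
\sqrt{|U_\al|!|V_\al|!} \;\leq\; |U_\al\cap V_\al|! \cdot \dsos^{|U_\al\triangle V_\al|/2}.
\]

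Next I would lower bound $|\Aut(\al)|$. Let $D \subseteq U_\al \cap V_\al$ denote the set of \emph{dormant} vertices (those that are only incident to missing or quasi-missing edge indicators, as in the proof of \cref{lem:count-nonequivalent-intersection-patterns}). Such vertices are completely indistinguishable under the shape data, so the symmetric group $\Sym(D)$ embeds in $\Aut(\al)$; hence $|\Aut(\al)| \geq |D|!$. Combining with the crude bound $|U_\al\cap V_\al|! \leq |D|! \cdot \dsos^{|(U_\al\cap V_\al)\setminus D|}$ yields
\[
\frac{|U_\al\cap V_\al|!}{|\Aut(\al)|} \;\leq\; \dsos^{|(U_\al\cap V_\al)\setminus D|}.
\]

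The key structural step, already used inside the proof of \cref{lem:count-nonequivalent-intersection-patterns}, is that in a permissible shape every non-dormant vertex of $U_\al \cap V_\al$ is the unique ``anchor'' of a connected component of $\al\setminus (U_\al\cap V_\al)$ that sits in $W_\al$, which gives the bookkeeping bound
\[
|(U_\al \cap V_\al) \setminus D| \;\leq\; |V(\al)\setminus (U_\al\cup V_\al)|.
\]
Multiplying the two previous displays produces $\sqrt{|U_\al|!|V_\al|!}/|\Aut(\al)| \leq \dsos^{|U_\al\triangle V_\al|/2 + |V(\al)\setminus(U_\al\cup V_\al)|}$, and since
\[
\tfrac{1}{2}|V(\al)\setminus(U_\al\cap V_\al)| + |V(\al)\setminus(U_\al\cup V_\al)| \;=\; \tfrac{1}{2}|U_\al\triangle V_\al| + \tfrac{3}{2}|V(\al)\setminus(U_\al\cup V_\al)|,
\]
this dominates the claimed exponent, finishing the proof.

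The only nonroutine point is verifying $|\Aut(\al)|\geq |D|!$ using the precise definition of \emph{permissible}, i.e.\ that dormant vertices are genuinely interchangeable once one accounts for the symmetric treatment of the quasi-missing edge indicators (which are symmetric by the averaging in \cref{lem:quasi-indicators}). Everything else is arithmetic on factorials, and the accounting of non-dormant intersection vertices is the same one used earlier in \cref{sec:psdness}.
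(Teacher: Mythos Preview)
Your proof is correct and follows essentially the same route as the paper: split the ratio as $\frac{\sqrt{|U_\al|!|V_\al|!}}{|U_\al\cap V_\al|!}\cdot\frac{|U_\al\cap V_\al|!}{|\Aut(\al)|}$, bound the first factor by factorial arithmetic, and bound the second via $|\Aut(\al)|\geq |D|!$ for the dormant vertices $D\subseteq U_\al\cap V_\al$ together with the permissibility observation that $|(U_\al\cap V_\al)\setminus D|\leq |V(\al)\setminus(U_\al\cup V_\al)|$. Your intermediate bound $\dsos^{|U_\al\triangle V_\al|/2 + |V(\al)\setminus(U_\al\cup V_\al)|}$ is in fact slightly sharper than the stated exponent, which you then correctly absorb.
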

\begin{proof}
    Write 
    \[ 
   \frac{\sqrt{|U_\al|!|V_\al|!}}{|\Aut(\al)|}
    = \frac{\sqrt{|U_\al|!|V_\al|!}}{|U_\al \cap V_\al|!} \cdot \frac{|U_\al \cap V_\al|!}{|\Aut(\al)|}
    \]
    The first ratio is upper bounded by $\dsos^{|V(\al) \setminus (U_\al \cap V_\al)|/2}$.
    
    For the second ratio, in a permissible shape, $|\Aut(\al)| \geq (|U_\al \cap V_\al| - |V(\al) \setminus (U_\al \cup V_\al)|)!$.
    This is because in a permissible shape, the number of edges incident to $U_\al \cap V_\al$ is at most the
    number of connected components of $V(\al) \setminus (U_\al \cup V_\al)$, which is
    upper bounded by $|V(\al) \setminus (U_\al \cup V_\al)|$.
    The remaining vertices of $U \cap V$ that are not incident to an edge are completely interchangeable.
    Therefore this ratio is upper bounded by $\dsos^{|V(\al) \setminus (U_\al \cup V_\al)|}$.
\end{proof}

\subsection{Proof of main lemmas}

    \subsubsection{\cref{lem:nontrivial-middle-shapes}: nontrivial middle shapes}

\middleShapes*

\begin{proof}
    The remaining ``core'' shapes are $\tau \in \calM$ nontrivial, sparse, permissible
    with ${|E(\al)| \leq |V(\al)| + C\dsos}$.
    For these shapes, the norm bound in \cref{cor:epsilon-norm-bound}
    with $\eps' = \frac{1}{nc(\tau)}$ holds whp,
    via \cref{cor:sum-c-alpha}.
    \begin{align*}
        \lam'_\tau  \frac{\norm{M_\tau}}{|\Aut(\tau) |} \\
        \leq& m(\tau)\left(\lam_\tau + \sum_{U \supset \tau} \lam_U c(\tau, U)\right) \cdot\\
        & C \left(2^{|E(\tau)|}n c(\tau)\right)^{\frac{1}{2D_V}} 2^{|E(\tau)|} \max_{\text{separator } S}{\left\{n^{\frac{|V(\alpha)| - |S|}{2}}\left(12D_V\right)^{|V(\alpha)| - \frac{|U_{\alpha}| + |V_{\alpha}|}{2} + |S| - \frac{|L_S| + |R_S|}{2}}\left(3\sqrt{\frac{1-p}{p}}\right)^{|E(S)|}\right\}}\\ &
        \cdot \frac{\sqrt{|U_\tau|!|V_\tau|!}}{|\Aut(\tau)|}
    \end{align*}
    We have:
    \begin{align*}
        m(\tau) & \leq \left(\frac{1}{(1-p)^{2\dsos}}\right)^{|V(\tau) \setminus (U_\tau \cap V_\tau)|} 
        \leq 2^{|V(\tau) \setminus (U_\tau \cap V_\tau)|}
        && (\text{\cref{lem:bound-m}})\\
        \sum_{U \supset \tau} \lam_U c(\tau, U) &= o(1)\lam_\tau && (\text{\cref{lem:conditioning-is-negligible}})\\
        2^{|E(\al)|/2D_V}& \leq C && (\tau\text{ is sparse})\\
        n^{1/2D_V} & \leq 2\\
        c(\tau) &\leq C\dsos^{c|V(\tau) \setminus (U_\tau \cap V_\tau)|} && (\text{\cref{def:c-function}, $\tau$ is sparse})\\
        c(\tau)^{\frac{1}{2D_V}} &\leq \dsos^c && (\text{From previous line})\\
        3^{|E(S)|} &\leq C^{|V(\tau) \setminus (U_\tau \cap V_\tau)|} && (\tau\text{ is sparse})\\
        \frac{\sqrt{|U_\tau|!|V_\tau|!}}{|\Aut(\tau)|}&\leq  C\dsos^{c|V(\tau) \setminus (U_\tau \cap V_\tau)|} && (\text{\cref{prop:midshape-aut}}))\\
    \end{align*}
    Going through the charging argument for middle shapes in \cref{prop:informal-middle-shapes},
    \[ \lam_\tau n^{\frac{|V(\al)| - |S|}{2}} \left(\sqrt{\frac{1-p}{p}}\right)^{|E(S)|}\leq \left(\frac{k\sqrt{d}}{n}\right)^{|V(\tau)| - \frac{|U_\tau| + |V_\tau|}{2}} \cdot \left(\frac{1}{\sqrt{d}}\right)^{|S| - \frac{|U_\tau| + |V_\tau|}{2}}.\]
    Using $\dsos \leq \frac{d^{1/2}}{\log n}$, we have $\sqrt{d} \geq D_V$ and the last term cancels $D_V^{|S| - \frac{|L_S| + |R_S|}{2}}$
    from the norm bound after the following claim:
    \begin{claim}
        For a middle shape $\tau$ and any vertex separator $S$, $|L_S| \geq |U_\tau|$ and $|R_S| \geq |V_\tau|$.
        \begin{proof}
        $L_S$ and $R_S$ are both vertex separators of $\tau$, and since
        $\tau$ is a middle shape, $U_\tau, V_\tau$ are MVSs of $\tau$.
        \end{proof}
    \end{claim}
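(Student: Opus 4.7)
The plan is to derive the claim directly from the definition of a middle shape together with the property that $L_S$ and $R_S$ are themselves vertex separators of $\tau$. Recall that since $\tau$ is a middle shape, $U_\tau$ is the leftmost minimum vertex separator and $V_\tau$ is the rightmost minimum vertex separator of $\tau$; in particular, every vertex separator of $\tau$ has size at least $|U_\tau| = |V_\tau|$.

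The first step is to argue that $L_S$ is a vertex separator of $\tau$ between $U_\tau$ and $V_\tau$. From the trace-method setup, $L_S$ arises as a leftmost minimum vertex separator on the ``left half'' induced by $S$, i.e.\ it separates $U_\tau$ from $S$ inside the subgraph reachable from $U_\tau$ without crossing $S$. Any $U_\tau$-to-$V_\tau$ path in $\tau$ must pass through $S$ (since $S$ is a separator of $\tau$), and before reaching $S$ it must cross $L_S$. Hence $L_S$ is itself a $U_\tau$-$V_\tau$ separator of $\tau$. A symmetric argument shows that $R_S$ is a $U_\tau$-$V_\tau$ separator of $\tau$.

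Combining the two observations finishes the proof: $|L_S| \geq |U_\tau|$ since $U_\tau$ is a minimum vertex separator of $\tau$, and $|R_S| \geq |V_\tau|$ by the same reasoning applied with $V_\tau$. I expect no real obstacle here — the only point that needs a moment's thought is making precise why $L_S$ (and $R_S$), defined locally on one ``half'' of $\tau$ with respect to $S$, in fact separate $U_\tau$ from $V_\tau$ in the whole shape; and this follows immediately from the observation that every $U_\tau$-to-$V_\tau$ path must transit through $S$ and therefore through the local separator cut out before $S$.
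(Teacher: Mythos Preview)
Your proposal is correct and follows the same approach as the paper: show that $L_S$ and $R_S$ are themselves vertex separators of $\tau$, then invoke that $U_\tau,V_\tau$ are minimum vertex separators in a middle shape. One small correction: in the paper's setup $L_S$ is not a separator sitting strictly to the left of $S$ but rather the subset of $S$ consisting of those vertices reachable from $U_\tau$ without passing through any other vertex of $S$; with that definition your path argument still goes through (the first $S$-vertex on any $U_\tau$--$V_\tau$ path lies in $L_S$), so the conclusion is unaffected.
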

Putting it together, the vertex decay of $\frac{k\sqrt{d}}{n}$ on $|V(\tau)| - \frac{|U_\tau| + |V_\tau|}{2}$ needs to be $CD_V\dsos^c$ to
    overcome the combinatorial factors, as stated in \cref{thm:main}.
    \[ \lam'_\tau \frac{\norm{M_\tau}}{|\Aut(\tau)|} \leq \frac{1}{c(\tau)}.\]
\end{proof}
\subsubsection{\cref{lem:intersection-terms}: bounding intersection terms } 
\label{sec:intersection-terms}

We will need the following proposition.
\begin{proposition}\label{prop:edge-bound-in-composition}
    For any $C'$-sparse permissible composable shapes $\alpha_1,\ldots,\alpha_k$ and any intersection pattern $P$ on $\alpha_1,\ldots,\alpha_k$, letting $\tau_P$ be the resulting shape,
    \[
    |E(\tau_P)| \leq (2C'+1)\sum_{i=1}^{k}{\left(|V(\alpha_i)| - \frac{|U_{\alpha_i}| + |V_{\alpha_i}|}{2}\right)}
    \]
\end{proposition}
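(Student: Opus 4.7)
The plan is to reduce the proposition to a per-shape inequality and prove it using the structural constraints imposed by permissibility. Since composing shapes and identifying vertices according to $P$ preserves edges as a multiset, we have $|E(\tau_P)| = \sum_{i=1}^{k}|E(\alpha_i)|$. Thus it suffices to show, for every $C'$-sparse permissible shape $\alpha$,
\[
|E(\alpha)| \;\leq\; (2C'+1)\left(|V(\alpha)| - \tfrac{|U_\alpha|+|V_\alpha|}{2}\right).
\]

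To prove this per-shape bound, I will partition the Fourier-character edges of $\alpha$ according to where their endpoints lie, using the classification of edge-functions in \cref{def:permissible-ribbon}. Let $T = V(\alpha) \setminus (U_\alpha \cap V_\alpha)$ and $W = V(\alpha) \setminus (U_\alpha \cup V_\alpha)$. Condition~2 of permissibility forbids any Fourier-character edge from having both endpoints in $U_\alpha$ or both in $V_\alpha$; in particular, no Fourier-character edge has both endpoints in $U_\alpha \cap V_\alpha$. Condition~3 further forbids Fourier-character edges from $U_\alpha \cap V_\alpha$ to any vertex reachable (via the Fourier-character edges) from $U_\alpha \triangle V_\alpha$ without passing through $U_\alpha \cap V_\alpha$. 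The only remaining Fourier-character edges incident to $U_\alpha \cap V_\alpha$ are governed by condition~4: one edge per connected component $C$ of $\alpha \setminus (U_\alpha \cap V_\alpha)$ that is disconnected from $U_\alpha \cup V_\alpha$, and any such $C$ is necessarily contained in $W$. Letting $k_\alpha$ denote the number of such components, we conclude
\[
|E(\alpha)| \;\leq\; \bigl(\text{Fourier edges within }T\bigr) \;+\; k_\alpha \;\leq\; C'|T| + k_\alpha,
\]
where the last step uses that the subgraph induced on $T$ is $C'$-sparse (subgraphs of $C'$-sparse graphs are $C'$-sparse).

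Finally I convert both terms on the right-hand side to the target quantity $s_\alpha := |V(\alpha)| - (|U_\alpha|+|V_\alpha|)/2$. Using the partition $V(\alpha) = (U_\alpha \cap V_\alpha) \sqcup (U_\alpha \triangle V_\alpha) \sqcup W$ and the identity $|U_\alpha| + |V_\alpha| = 2|U_\alpha \cap V_\alpha| + |U_\alpha \triangle V_\alpha|$, a direct calculation yields
\[
s_\alpha \;=\; |W| + \tfrac{1}{2}|U_\alpha \triangle V_\alpha|, \qquad |T| \;=\; |W| + |U_\alpha \triangle V_\alpha|.
\]
Hence $|T| \leq 2 s_\alpha$, and since the $k_\alpha$ components are disjoint nonempty subsets of $W$, we also have $k_\alpha \leq |W| \leq s_\alpha$. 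Substituting gives $|E(\alpha)| \leq 2C' s_\alpha + s_\alpha = (2C'+1)s_\alpha$, and summing over $i$ completes the proof.

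There is no essential difficulty here; the only subtle point is accurately parsing conditions~3 and~4 of permissibility to verify that every Fourier-character edge incident to $U_\alpha \cap V_\alpha$ must go to a component of $T$ lying entirely inside $W$, which is what makes the bound $k_\alpha \leq |W|$ available. Once that is in hand, the arithmetic above is mechanical.
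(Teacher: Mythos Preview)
Your proof is essentially identical to the paper's: reduce to a per-shape bound via $|E(\tau_P)| = \sum_i |E(\alpha_i)|$, split the Fourier edges of each $\alpha_i$ into those incident to $U_{\alpha_i}\cap V_{\alpha_i}$ (at most $|W_{\alpha_i}|$ by permissibility) and those within $T$ (at most $C'|T|$), and finish with the arithmetic $|T|\le 2s_{\alpha_i}$, $|W_{\alpha_i}|\le s_{\alpha_i}$. One caveat: your parenthetical ``subgraphs of $C'$-sparse graphs are $C'$-sparse'' is false in general (pad a clique with isolated vertices), so it is not a valid justification for $e_T\le C'|T|$; the paper just asserts this step from sparsity, and in context the operative notion is that after conditioning the edges not incident to $U_\alpha\cap V_\alpha$ are at most $C'|V(\alpha)\setminus(U_\alpha\cap V_\alpha)|$, which is exactly the bound you need.
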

\begin{proof}
Observe that since $\alpha_i$ is permissible, 
the only edges incident to $U_{\alpha_i} \cap V_{\alpha_i}$
are one for each connected component of $V(\alpha_i) \setminus (U_{\al_i} \cup V_{\al_i})$.
Therefore, there are at most $|V(\al_i) \setminus (U_{\al_i} \cup V_{\al_i})|$ of these edges.
Since $\alpha_i$ is $C'$-sparse,
\[
|E(\alpha_i)| \leq |V(\al_i) \setminus (U_{\al_i} \cup V_{\al_i})| + C'|V(\alpha) \setminus (U_{\alpha_i} \cap V_{\alpha_i})| \leq (2C'+1)\left(|V(\alpha_i)| - \frac{|U_{\alpha_i}| + |V_{\alpha_i}|}{2}\right)
\]
Since $|E(\tau_P)| = \sum_{i=1}^{k}{|E(\alpha_i)|}$, summing this equation over all $i \in [k]$ gives the result.
\end{proof}

\intersectionTerms*

\begin{proof}
We index the intersecting shapes as $\al_i = \gam_k, \dots, \gam_1, \tau, \gam_1', \dots, \gam_k'$. For each $\al_i$,
by \cref{lem:bound-m} and \cref{lem:conditioning-is-negligible},
\[\lam'_{\al_i} \leq 2^{|V(\al_i) \setminus (U_{\al_i} \cap V_{\al_i})|}\lam_{\al_i}. \]
Let $p_\el(\al_i)$ be the number of nonequivalent intersection patterns $P \in \calP^{mid}_{\gam_j, \dots, \gam_j'}$ which have exactly $\el$ intersections.

Apply the norm bound in \cref{cor:epsilon-norm-bound} with $\eps' = \frac{1}{n^C \prod_{i=1}^{2k+1} c(\al_i) p_k(\alpha_i)}$, 
which holds whp via \cref{cor:sum-c-alpha}.

\begin{align*}
    &\sum_{\text{non equiv. P}} N_p(\tau_P) \cdot\lam_{\gam_k \circ \cdots \circ \gam_1 \circ \tau \circ \gam_1^\T \circ \cdots \circ \gam_k'^\T} \frac{\norm{M_{\tau_P}}}{|\Aut(\tau_P)|} \\
    \leq &\lam_{\gam_k \circ \cdots \circ \gam_1 \circ \tau \circ \gam_1^\T \circ \cdots \circ \gam_k'^\T} \cdot 20{\left(2^{|E(\tau_P)|}n^C\prod_{i=1}^{2k+1}c(\al_i) p_\el(\al_i)\right)}^{\frac{1}{2D_V}}2^{|E(\tau_P)| } \cdot \frac{\sqrt{|U_{\tau_P}|! |V_{\tau_P}|!}}{|\Aut(\tau_P) |} 
    \cdot \\
    & \max_{\beta,S}{\left\{n^{\frac{|V(\tau_P)| - |S| + |I_{\beta}|}{2}}\left(12D_V\right)^{|V(\tau_P)| - \frac{|U_{\tau_P}| + |V_{\tau_P}|}{2} + |S| - \frac{|L_S| + |R_S|}{2}}\left(\sqrt{\frac{1-p}{p}}\right)^{|E(\tau_P)| - |E(\beta)| - 2|E_{phantom}(\beta)|}\left(3\sqrt{\frac{1-p}{p}}\right)^{|E(S)|}\right\}.}
\end{align*}

Bounds:
\begin{align*}
    12^{|E(\tau_P)|} & \leq C^{\sum_{i = 1}^{2k+1} |V(\al_i)| - \frac{|U_{\al_i}| + |V_{\al_i}|}{2}} && (\text{\cref{prop:edge-bound-in-composition}})\\
    n^{C/2D_V} & \leq 2\\
    c(\al_i) & \leq C(\dsos^c)^{|V(\al_i) \setminus (U_{\al_i} \cap V_{\al_i})|} && (\text{\cref{def:c-function}, $\al_i$ is sparse})\\
    c(\al_i)^{1/2D_V} & \leq \dsos^c && (\text{From previous line})\\
    p_\el(\al_i) & \leq (D_{SoS}^c)^{|V(\tau_P)| - \frac{|U_{\tau_P}| + |V_{\tau_P}|}{2}}D_V^{\el} && (\text{\cref{lem:count-nonequivalent-intersection-patterns}})\\
    \sum_{\text{non equiv. P}} N_P(\tau_P) & \leq \frac{(\dsos^c)^{\sum_{i = 1}^{2k+1} |V(\al_i)| - \frac{|U_{\al_i}| + |V_{\al_i}|}{2}}}{|U_{\tau_P} \cap V_{\tau_P}|!} |\Aut(\tau_P)| && (\text{\cref{lem:bound-np}})\\
    \frac{\sqrt{|U_{\tau_P}|!|V_{\tau_P}|!}}{|\Aut(\tau_P) |} &\leq C(\dsos^c)^{|V(\tau_P) \setminus (U_{\tau_P} \cap V_{\tau_P})|} && (\text{\cref{prop:midshape-aut}})
\end{align*}

Following the charging argument for intersection terms in \cref{prop:informal-intersection-terms},
\begin{align*}
&\lam_{\gam_k \circ \cdots \circ \gam_1 \circ \tau \circ \gam_1^\T \circ \cdots \circ \gam_k'^\T} n^{\frac{|V(\tau_P)| - |S| + |I_\beta|}{2}}\left(\sqrt{\frac{1-p}{p}}\right)^{|E(S)| + |E(\al)| - |E(\beta)| - 2|E_{phantom}(\beta)|}\\
\leq &\left(\frac{k\sqrt{d}}{n}\right)^{\sum_{i=1}^{2k+1}|V(\al_i)| - \frac{|U_{\al_i}| + |V_{\al_i}|}{2}} \cdot \left(\frac{1}{\sqrt{d}}\right)^{i_P - |I_\beta| + |S| - \frac{|U_\beta| + |V_\beta|}{2}}.
\end{align*}
In \cref{prop:informal-intersection-terms} we used the intersection tradeoff lemma, \cref{lem:intersection-tradeoff-lemma}, to prove that the exponent of the second term
is nonnegative. In that proof, use the following claim to get a lower bound
of $|S| - \frac{|L_S| + |R_S|}{2}$ on the exponent.
\begin{claim}
    $\frac{|L_S| + |R_S|}{2} \geq |S_{\tau_P,min}|.$ 
\end{claim}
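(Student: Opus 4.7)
The plan is to show that each of $L_S$ and $R_S$ is individually a vertex separator of $U_{\tau_P}$ from $V_{\tau_P}$ in the simple graph underlying $\tau_P$ (the graph obtained from $\tau_P$ by deleting all multi-edges). Once this is established, the definition of $S_{\tau_P, \min}$ as the \emph{minimum} vertex separator of that simple graph immediately gives $|L_S| \geq |S_{\tau_P, \min}|$ and $|R_S| \geq |S_{\tau_P, \min}|$, so averaging yields the claim.

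To carry this out, first I would recall the conventions from the trace method underlying \cref{cor:epsilon-norm-bound} (as in \cite{AMP20}): for $S$ a vertex separator of the linearization $\beta$, the quantity $L_S$ is the leftmost minimum vertex separator between $U_\beta = U_{\tau_P}$ and $S$, while $R_S$ is the rightmost minimum vertex separator between $S$ and $V_\beta = V_{\tau_P}$. The key elementary observation is that $L_S$ is itself a vertex separator between $U_\beta$ and $V_\beta$ in $\beta$: since any path in $\beta$ from $U_\beta$ to $V_\beta$ must pass through the separator $S$, and $L_S$ blocks every path from $U_\beta$ to $S$, removing $L_S$ disconnects $U_\beta$ from $V_\beta$. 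The same argument applies to $R_S$ by symmetry.

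The main obstacle is to lift this separator property from $\beta$ to the simple graph of $\tau_P$, which in general has strictly more edges than $\beta$ (a linearization may drop edges of even multiplicity), so a separator in $\beta$ need not separate in $\tau_P$. To overcome this, I would use the flexibility in choosing $L_S, R_S$ within the trace method: the derivation of \cref{cor:epsilon-norm-bound} still goes through if we define $L_S$ and $R_S$ as minimum vertex separators in the underlying \emph{simple graph} of $\tau_P$ (between $U_{\tau_P}$ and $S$, and between $S$ and $V_{\tau_P}$, respectively) rather than in $\beta$, because imposing this stronger requirement can only make $|L_S|$ and $|R_S|$ larger and keeps the norm bound valid. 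With this choice, $S$ must itself be taken to separate $U_{\tau_P}$ from $V_{\tau_P}$ in the simple graph of $\tau_P$, and then the same "compose-separators" argument as above shows that $L_S$ and $R_S$ each separate $U_{\tau_P}$ from $V_{\tau_P}$ in the simple graph of $\tau_P$.

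Having established that $L_S$ and $R_S$ are vertex separators of $U_{\tau_P}$ from $V_{\tau_P}$ in the multi-edge-deleted $\tau_P$, the minimality of $S_{\tau_P, \min}$ gives $|L_S|, |R_S| \geq |S_{\tau_P, \min}|$, and
\[
\frac{|L_S| + |R_S|}{2} \geq \frac{|S_{\tau_P, \min}| + |S_{\tau_P, \min}|}{2} = |S_{\tau_P, \min}|,
\]
completing the proof. The only subtlety worth double-checking in the full write-up is the compatibility of the $L_S, R_S$ convention with the earlier use of the claim in \cref{lem:nontrivial-middle-shapes}, where the analogous inequalities $|L_S| \geq |U_\tau|$ and $|R_S| \geq |V_\tau|$ follow from exactly the same ``$L_S, R_S$ are $U$-to-$V$ separators'' reasoning.
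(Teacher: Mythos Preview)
Your high-level strategy matches the paper's exactly: show that $L_S$ and $R_S$ are each vertex separators of $U_{\tau_P}$ from $V_{\tau_P}$, then invoke the minimality of $S_{\tau_P,\min}$ and average. The paper's proof is a single sentence doing precisely this.

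Where you go astray is the ``main obstacle'' paragraph. Recall that $S_{\tau_P,\min}$ is the minimum vertex separator of $\tau_P$ with all multi-edges \emph{removed} (only the multiplicity-$1$ edges of $\tau_P$ survive). Every linearization $\beta$ that carries nonzero weight in the Fourier expansion retains all multiplicity-$1$ edges of $\tau_P$, since $\chi_e^1=\chi_e$ has no constant term; only genuine multi-edges may collapse to multiplicity $0$ or $1$. Hence the multi-edge-deleted $\tau_P$ is a \emph{subgraph} of $\beta$, not a supergraph, and any separator of $\beta$ is automatically a separator of it. There is nothing to lift; the paper's sentence ``$L_S,R_S$ are separators of $\beta$, hence at least $|S_{\tau_P,\min}|$'' is already complete.

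Your proposed workaround is also unsound. Enlarging $|L_S|,|R_S|$ \emph{decreases} the exponent $|S|-\tfrac{|L_S|+|R_S|}{2}$ appearing in the norm bound of \cref{cor:epsilon-norm-bound}, so you would be asserting a \emph{tighter} bound than the trace calculation delivers, not a looser one. The claim that ``imposing this stronger requirement can only make $|L_S|$ and $|R_S|$ larger and keeps the norm bound valid'' has the monotonicity backwards.
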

\begin{proof}
    Both $L_S$ and $R_S$ are vertex separators of $\beta$, hence they are
    larger than $S_{\tau_P, min}$, which is the smallest separator among all linearizations of $\tau_P$.
\end{proof}
Since $\sqrt{d} \geq D_V$, the second term cancels $D_V^{|S| - \frac{|L_S| + |R_S|}{2}}$
in the norm bound.

The remaining factors are mostly easily seen to be under control by the vertex decay.
One exception is the term $D_V^{\el}$ in $p_\el(\al_i)$, the count of nonequivalent intersection patterns.
By combining this with $D_V^{|V(\tau_P)| - \frac{|U_{\tau_P}| + |V_{\tau_P}|}{2}}$ from the norm bound we get at most $D_V^{\sum_{i=1}^{2k+1}|V(\al_i)| - \frac{|U_{\al_i}| + |U_{\al_i}|}{2}}$,
which is controlled by one factor of $D_V$ in the vertex decay.
In total, a vertex decay of $CD_V\dsos^c$ per vertex is sufficient.

Since each additional level of intersections is non-trivial,
the total number of vertex decay factors is at least $k$, which is
sufficient to handle $C\dsos^c$ per shape in addition to per vertex.
\end{proof}

\subsubsection{\cref{lem:truncation-error}: truncation error for shapes with too many vertices}

The goal of the next two subsections is to upper bound the truncation error.
\truncationerror*

In this section we upper bound the first part of the truncation error,
where shapes have many extra vertices.
The counting is similar to the previous section while we use simpler (and looser) bounds for matrix norms.
These will be handled by the fact that large shapes
provide lots of vertex decay factors.

We recall
\begin{align*}
        &{\text{truncation error}}_{\text{too many vertices}} =         -\sum_{\substack{\text{sparse, permissible}\\ \sigma,\tau,\sigma':\\|V(\sigma \circ \tau \circ \sigma'^\T)| > D_V}} \lam'_{\sigma \circ \tau \circ \sigma'^\T} \frac{M_{\sigma \circ \tau \circ \sigma'^\T}}{|\Aut(\sigma \circ \tau \circ \sigma'^\T)|}\\
        &+ \sum_{j=1}^{2\dsos}{(-1)^{j+1}\sum_{\substack{\text{sparse, permissible} \\ \sigma,\gam_j,\ldots,\gam_1,\tau,\gam'_1,\ldots,\gam'_j,\sigma': \\ 
        |V(\sigma \circ \gam_j \circ \ldots \circ \gam_1)| > D_V \text{ or} \\ |V({\gam'_1}^{\T} \circ \ldots \circ {\gam'_j}^{\T} \circ {\sigma'}^{\T})| > D_V}}
        \sum_{\substack{\text{nonequiv.}\\ P \in \calP^{mid}_{\gam_j, \dots, \gam_j'}}} N_{P}(\tau_{P})
        {\lam'_{\sigma \circ \gam_j \circ \ldots \circ {\gam'_j}^{\T} \circ {\sigma'}^{\T}} \frac{M_\sigma}{|\Aut(\sigma)|} \frac{M_{\tau_{P}}}{|\Aut(\tau_P)|} \frac{M_{\sigma'}^\T}{|\Aut(\sigma')|}}}
\end{align*}

To bound this term, we use a tail bound on $c(\alpha)$ plus the main norm bound of this section.

\begin{proposition}\label{prop:sum-c-alpha-tail}
    \[\sum_{\substack{\al \in \calS:\\ \text{non-trivial}, \\\text{permissible,} \\ |V(\al)| - \frac{|U_\al| + |V_\al|}{2} \geq k}}\frac{1}{c(\al)} \leq \frac{1}{2^{k}}.\]
\end{proposition}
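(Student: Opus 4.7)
The plan is to leverage the fact that the tail bound function $c(\al)$ in \cref{def:c-function} grows exponentially in the quantity $v(\al) := |V(\al)| - \frac{|U_\al|+|V_\al|}{2}$, with enough slack in its base to absorb a geometric factor of $2^{-v(\al)}$. The tail bound will follow by stratifying the sum on $v(\al)$ and extracting this geometric factor.

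Concretely, the first step is to observe that in each of the four cases in \cref{def:c-function}, we have $c(\al) = 40 \cdot B^{v(\al)} \cdot E(\al)$ where $B \in \{2C\dsos^{4C'+2},\,16C\dsos^{4}\}$ is a base that is a product of an absolute constant (at least $2$) times a power of $\dsos$, and $E(\al) \geq 1$ is the "extra-edges" factor. Writing $B = 2 \cdot (B/2)$, I would define the auxiliary function $c'(\al)$ identically to $c(\al)$ except with base $B/2$ in place of $B$. Then, by construction,
\[
c(\al) \;=\; 2^{v(\al)} \cdot c'(\al).
\]

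The second step is to verify that $\sum_{\al \text{ non-trivial, permissible}} \tfrac{1}{c'(\al)} \leq 1$, which is the exact analog of \cref{cor:sum-c-alpha} for the halved base. This should follow by exactly the same argument, i.e.\ applying the counting bounds of \cref{subsection: count-everything} (\cref{shapecountboundone} and its super-$C\dsos$ counterpart) case-by-case. Those counts produce combinatorial factors of the form $\dsos^{O(v(\al))}$, $D_V^{O(v(\al))}$, and $(v')^{2e}$; in all four regimes these are overwhelmed by the powers of $\dsos$ sitting inside $B$, which are untouched by halving. Halving the base only consumes a constant amount of slack, which can be recovered by keeping the leading multiplicative constant $40$ in the definition (whose role in \cref{cor:sum-c-alpha} is precisely to guarantee the $1/10$ bound with room to spare).

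The third and final step is to combine the two preceding observations:
\[
\sum_{\substack{\al \in \calS: \\ \text{non-trivial, permissible,} \\ v(\al) \geq k}} \frac{1}{c(\al)}
\;=\; \sum_{\substack{\al: \\ v(\al) \geq k}} \frac{1}{2^{v(\al)} \, c'(\al)}
\;\leq\; \frac{1}{2^{k}} \sum_{\al} \frac{1}{c'(\al)}
\;\leq\; \frac{1}{2^{k}}.
\]
The only nontrivial point in the whole argument is the second step — verifying that the counting estimates still beat the halved base. This is routine because in each of the four cases the base was chosen with at least a factor of $2$ of headroom beyond what \cref{shapecountboundone} actually demands; no new ideas are needed beyond re-tracing the proof of \cref{cor:sum-c-alpha} with one less factor of $2$ in the base.
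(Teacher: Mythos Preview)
The proposal is correct and takes essentially the same approach as the paper, which simply notes in one line that the bases in \cref{def:c-function} were chosen with a built-in slack of $2$ per power of $|V(\al)| - \frac{|U_\al|+|V_\al|}{2}$. Your proof is just a more explicit unpacking of that remark: factoring out $2^{v(\al)}$ from the base and observing that the residual $c'(\al)$ still satisfies the summability bound of \cref{cor:sum-c-alpha}.
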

\begin{proof}
    In \cref{def:c-function} we may introduce a slack of $2$
    for each power of $|V(\al)| - \frac{|U_\al| + |V_\al|}{2} .$
\end{proof}

\begin{lemma}\label{lem:large-shape}
    For all sparse permissible $\sigma, \sigma' \in \calL$, $\tau \in \calM$,
    such that $|E_{mid}(\al)| - |V(\al)| \leq C \dsos$ for each of $\sigma, \tau, \sigma'$,
    \[ \lam'_{\sigma \circ \tau \circ \sigma'^\T} \frac{\norm{M_{\sigma \circ \tau \circ \sigma'^\T}}}{|\Aut(\sigma \circ \tau \circ \sigma'^\T)|} \leq \frac{n^{2\dsos}}{c(\sigma)c(\tau)c(\sigma')}\]
\end{lemma}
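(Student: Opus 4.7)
The plan is to mirror the charging arguments used for Lemma \ref{lem:nontrivial-middle-shapes} and Lemma \ref{lem:intersection-terms}, but with slacker bounds that allow polynomial-in-$n$ factors per shape, which the $n^{2\dsos}$ budget can absorb. First I would apply the per-shape coefficient bound. Using \cref{lem:bound-m} and \cref{lem:conditioning-is-negligible} shape-by-shape,
\[\lam'_{\sigma \circ \tau \circ \sigma'^\T} \;\leq\; 2^{O(|V(\sigma \circ \tau \circ \sigma'^\T)\setminus (U_\sigma \cap V_{\sigma'^\T})|)}\, \lam_{\sigma \circ \tau \circ \sigma'^\T},\]
and by \cref{lem:coefficient-factor}, $\lam_{\sigma \circ \tau \circ \sigma'^\T} = \lam_\sigma \lam_\tau \lam_{\sigma'^\T}$, so the coefficient is controlled by the factorized coefficients of the three pieces.

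Next I would apply the norm bound of \cref{cor:epsilon-norm-bound} to $M_{\sigma \circ \tau \circ \sigma'^\T}$ with $\eps' = \frac{1}{n^{10\dsos}\, c(\sigma)c(\tau)c(\sigma')}$. Since $n^{C/2D_V} \le 2$ and the number of large shapes is polynomially bounded, the failure probabilities sum to $o(1)$ after applying \cref{prop:sum-c-alpha-tail}. The norm bound gives a maximum over linearizations $\beta$ and separators $S$ of
\[n^{(|V|-|S|+|I_\beta|)/2}\,(12D_V)^{|V|-\frac{|U|+|V|}{2}+|S|-\frac{|L_S|+|R_S|}{2}}\,\sqrt{\tfrac{1-p}{p}}^{\,|E| -|E(\beta)|-2|E_{phantom}|+|E(S)|}\]
times $\sqrt{|U|!|V|!}$ and some sparsity factors $2^{|E|}$ (which are bounded since all three pieces are sparse).

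Then I would run the same two-step charging argument as in \cref{prop:informal-middle-shapes} and \cref{prop:informal-intersection-terms}, applied to the composite shape. Because $\sigma$ is a left shape, $\sigma'^\T$ is a right shape, and $\tau$ is a middle shape, every separator $S$ of $\sigma \circ \tau \circ \sigma'^\T$ satisfies $|L_S|\ge |U_\sigma|$ and $|R_S|\ge |V_{\sigma'^\T}|$, and \cref{lem:intersection-charging} applies to the linearization $\beta$ just as in the intersection case with $i_P = 0$. This yields
\[\lam_{\sigma\circ\tau\circ\sigma'^\T}\cdot n^{(|V|-|S|+|I_\beta|)/2}\,\left(\sqrt{\tfrac{1-p}{p}}\right)^{|E|-|E(\beta)|-2|E_{phantom}|+|E(S)|} \;\leq\; \left(\tfrac{k\sqrt{d}}{n}\right)^{|V|-\frac{|U_\sigma|+|V_{\sigma'^\T}|}{2}} \left(\tfrac{1}{\sqrt{d}}\right)^{|S|-\frac{|U_\sigma|+|V_{\sigma'^\T}|}{2}},\]
where the second factor cancels the combinatorial $D_V^{|S|-(|L_S|+|R_S|)/2}$ from the norm bound using $\sqrt{d}\ge D_V$ (the hypothesis of \cref{thm:main}), and the first factor is a strong vertex decay.

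Finally, I would collect the remaining combinatorial garbage ($(12D_V)^{|V|-\frac{|U|+|V|}{2}}$, $\sqrt{|U|!|V|!}/|\Aut|$ via \cref{prop:midshape-aut}, the $2^{O(|V|)}$ factors, and $\eps'^{-1/2D_V}$) into at most $n^{O(\dsos)}$ by using that everything has degree $\le\dsos$, and absorb them using the $n^{2\dsos}$ slack on the right-hand side. The denominator $c(\sigma)c(\tau)c(\sigma')$ is accounted for by the $\eps'^{-1/2D_V}$ factor plus by what would normally have been used to conclude the tight Lemmas \ref{lem:nontrivial-middle-shapes} and \ref{lem:intersection-terms}. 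The main obstacle is bookkeeping: verifying that every power of $n$, $D_V$, $\dsos$, and $2$ that appears in the norm bound can be lumped into a single polynomial factor $n^{2\dsos}$, and confirming that the factorization $\lam_{\sigma\circ\tau\circ\sigma'^\T} = \lam_\sigma\lam_\tau\lam_{\sigma'^\T}$ correctly recovers the three per-shape tail functions $c(\sigma), c(\tau), c(\sigma')$ in the denominator.
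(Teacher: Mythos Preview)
Your high-level structure matches the paper's, but two of your intermediate claims are false for the composite shape $\al = \sigma \circ \tau \circ \sigma'^\T$, and fixing them is exactly where the $n^{2\dsos}$ budget is spent --- not on ``combinatorial garbage''.

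First, the claim ``$|L_S| \geq |U_\sigma|$'' is incorrect. The set $L_S$ is a vertex separator of $\al$, so $|L_S| \geq |\text{MVS}(\al)| = |V_\sigma| = |U_\tau|$; but since $\sigma$ is a (possibly nontrivial) left shape, $|V_\sigma|$ can be strictly smaller than $|U_\sigma|$. Hence the exponent $|S| - \tfrac{|U_\sigma|+|U_{\sigma'}|}{2}$ in your $(1/\sqrt{d})$ factor may be negative, and that factor can be as large as $\sqrt{d}^{\,\dsos}$. The paper does not try to cancel cleanly here: it just bounds $D_V^{|S|-(|L_S|+|R_S|)/2}\big/\sqrt{d}^{\,|S|-(|U_\al|+|V_\al|)/2} \leq \sqrt{d}^{\,(|U_\al|+|V_\al|)/2} \leq n^{(|U_\al|+|V_\al|)/2}$ and absorbs it.

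Second, \cref{lem:intersection-charging} does not apply to $\al$. Its proof requires every vertex to be connected to both $U$ and $V$, which holds for middle intersections by \cref{prop:intersection-connectivity}, but vertices of $\sigma$ in $\al$ are only guaranteed to reach $U_\al$, not $V_\al$. Concretely, take $\sigma$ with $U_\sigma=\{u_1,u_2,u_3\}$, $V_\sigma=\{v\}$, edges $u_1\text{--}v,\ u_2\text{--}v$, and $u_3$ isolated; compose with a trivial $\tau$ and $\sigma'=\sigma$. Then for $S=\{v\}$ one has $|E(\al)\setminus E(S)|=4<6=|V(\al)\setminus S|$, so the edge-charging inequality fails. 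The paper instead runs a BFS from $S$ to charge vertices connected to $S$, and a second BFS from $U_\al\cup V_\al$ for the rest, leaving a deficit of at most $|U_\al\cup V_\al|$ factors of $1/\sqrt{p}$, bounded by $n^{|U_\al\cup V_\al|}$.

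Once you replace these two false claims by the looser bounds above (each costing $\leq n^{\dsos}$), the remainder of your outline coincides with the paper's argument.
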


\begin{proof}
Let $\al = \sigma \circ \tau \circ \sigma'^\T$.
As in previous sections, the vertex decay is enough to replace $\lam'_{\al}$ by $\lam_{\al}$.
Since all shapes obey the edge bound, the factor of $c(\sigma)c(\tau)c(\sigma')$ (which is the same as $c(\al)$ up to $\dsos^C$ per vertex)
is handled by vertex decay as in previous sections.
Unpacking the LHS, the powers of $n, d$ and $D_V$ are:
\begin{align*}
&\left(\frac{k}{n}\right)^{V(\al)-\frac{|U_\al|+|V_\al|}{2}} \left(\sqrt{\frac{p}{1-p}}\right)^{|E(\al)|}\max_{S}D_V^{|V(S)| - \frac{|L_S| + |R_S|}{2}} \sqrt{n}^{|V(\al)|-|V(S)|}\left(\sqrt{\frac{1-p}{p}}\right)^{E(S)}
\end{align*}
We show this is bounded by $n^{|U_\al \cup V_\al| + \frac{|U_\al| + |V_\al|}{2}} \cdot \left(\frac{k\sqrt{d}}{n}\right)^{|V(\al)| - \frac{|U_\al| + |V_\al|}{2}}$ for any separator $S$.
As this is bounded by $n^{2\dsos}$ times vertex decay, this will complete the proof.

\begin{align*}
= &\left(\frac{k\sqrt{d}}{n}\right)^{V(\al)-\frac{|U_\al|+|V_\al|}{2}} \left(\sqrt{\frac{p}{1-p}}\right)^{|E(\al)| - |E(S)|}\frac{D_V^{|V(S)| - \frac{|L_S| + |R_S|}{2}}}{\sqrt{d}^{|V(S)| - \frac{|U_\al| + |V_\al|}{2}}} \left(\frac{1}{\sqrt{p}}\right)^{|V(\al)|-|V(S)|}
\end{align*}
The $\sqrt{d}$ cancels out the factor of $D_V$, except for a factor of at
most $\sqrt{d}^{\frac{|U_\al| + |V_\al|}{2}}$. This is at most $n^{\frac{|U_\al| + |V_\al|}{2}}$.

The factors of $\sqrt{p}$ follow a similar strategy as in charging middle shapes. Consider a BFS process from $S$. For each vertex connected to $S$,
we can charge it to the edge leading to it.
For vertices not connected to $S$, they must be connected to $U_\al \cup V_\al$ by the connected truncation and we can consider a BFS from $U_\al\cup V_\al$.
This cancels all but $|U_\al \cup V_\al|$ factors of $1/\sqrt{p}$.
This is at most $n^{|U_\al \cup V_\al|}$.


\end{proof}

We show how to use \cref{lem:large-shape} to bound truncation error$_{\text{too many vertices}}$.
In the first term, for example,
at least one of the three parts $\sigma, \tau, \sigma'$ must have size at least $D_V / 3$, and therefore we can upper bound
that sum
by (let $\al = \sigma \circ \tau \circ \sigma'^\T$ for shorthand)
\begin{align*}
&\sum_{\substack{\text{sparse, permissible}\\ \sigma, \tau, \sigma':\\ |V(\sigma)| \geq D_V / 3}} \lam'_{\al}\frac{\norm{M_{\al}}}{|\Aut(\al)|} + \sum_{\substack{\text{sp., per.}\\ \sigma, \tau, \sigma':\\ |V(\tau)| \geq D_V / 3}} \lam'_{\al}\frac{\norm{M_{\al}}}{|\Aut(\al)|} + \sum_{\substack{\text{sp., per.}\\ \sigma, \tau, \sigma':\\ |V(\sigma')| \geq D_V / 3}} \lam'_{\al}\frac{\norm{M_{\al}}}{|\Aut(\al)|}\\
\leq & \sum_{\substack{\text{sp., per.}\\ \sigma, \tau, \sigma':\\ |V(\sigma)| \geq D_V / 3}} \frac{n^{2\dsos}}{c(\sigma)c(\tau)c(\sigma')} + \sum_{\substack{\text{sp., per.}\\ \sigma, \tau, \sigma':\\ |V(\tau)| \geq D_V / 3}} \frac{n^{2\dsos}}{c(\sigma)c(\tau)c(\sigma')} + \sum_{\substack{\text{sp., per.}\\ \sigma, \tau, \sigma':\\ |V(\sigma')| \geq D_V / 3}} \frac{n^{ 2\dsos}}{c(\sigma)c(\tau)c(\sigma')}&& (\text{\cref{lem:large-shape}})\\
\leq & \frac{3n^{2\dsos}}{2^{D_V/3}} && (\text{\cref{prop:sum-c-alpha-tail}})\\
= & \frac{3n^{2\dsos}}{2^{\frac{1}{3}C \dsos\log n}} \leq n^{-\Omega(C\dsos)}.
\end{align*}
There is one detail that we have skipped. Some terms have $\sigma$ or $\sigma'$
with $|E_{mid}(\sigma)| - |V(\sigma)| > C\dsos$. For these terms,
the Frobenius norm trick already shows
that they are at most $\frac{1}{c(\sigma) c(\tau) c(\sigma')}.$

\begin{lemma}\label{cor:frobenius-norm-trick}
   For any permissible proper shape $\al$ such that $|E(\al)| - |V(\al)| \geq C \dsos$,
   \[ \lambda'_\al  \frac{\sqrt{\E[\tr(M_\al M_\al^\T)]}}{|\Aut(\al)|} \leq \frac{1}{c(\al)}.\]
\end{lemma}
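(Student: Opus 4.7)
The plan is to combine the Frobenius expectation bound from~\cref{lem:frobenius-trick-first-part} with the edge-decay factor $(\sqrt{p/(1-p)})^{|E(\al)|}$ in $\lam_\al$, exploiting that the Frobenius norm is \emph{edge-independent} so every edge becomes pure decay. The key point is that for a permissible shape every vertex is reachable from $U_\al\cup V_\al$ by Fourier-character edges, hence $|I_\al|=0$, and the Frobenius bound simplifies to $\sqrt{\E[\tr(M_\al M_\al^\T)]}\leq \sqrt{|\Aut(\al)|}\,n^{|V(\al)|/2}$.

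First I will replace $\lam'_\al$ by $\lam_\al$: by \cref{lem:bound-m} we have $m(\al)\leq (1-p)^{-2\dsos(|V(\al)|-|U_\al\cap V_\al|)}=O(1)^{|V(\al)|}$ under $p=d/n\leq n^{-1/2}$ and $\dsos\leq d^{1/2}/\log n$, and by \cref{lem:conditioning-is-negligible} the conditioning perturbation multiplies $\lam_\al$ by $1+o(1)$. Combining with the Frobenius bound, it suffices to show
\[
O(1)^{|V(\al)|}\,\lam_\al\,\frac{n^{|V(\al)|/2}}{\sqrt{|\Aut(\al)|}}\;\leq\;\frac{1}{c(\al)}.
\]
Write $V=|V(\al)|$, $E=|E(\al)|$, $M=\tfrac{|U_\al|+|V_\al|}{2}$. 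Expanding $\lam_\al=(k/n)^{V-M}(p/(1-p))^{E/2}$ and using $p/(1-p)\leq 2d/n$ gives
\[
\lam_\al\,n^{V/2}\;\leq\;\Bigl(\tfrac{k\sqrt{d}}{n}\Bigr)^{V-M}\cdot d^{(E-V+M)/2}\cdot n^{(V-E)/2}\cdot 2^{E/2}.
\]

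The main technical step is the three-factor inequality. Using $E-V\geq C\dsos$, the factor $n^{(V-E)/2}\leq n^{-C\dsos/2}$, which annihilates every other polynomial-in-$\dsos$ loss and also handles the $2^{E/2}$ factor since $\al$ is at most dense enough that $E\leq V^2\leq D_V^2$. Writing $s=E-V-C\dsos\geq 0$ and using $d\leq n^{1/2}$, we bound
\[
d^{(E-V+M)/2}\,n^{(V-E)/2}\;\leq\;n^{(E-V)/4}n^{M/4}n^{-(E-V)/2}\;=\;n^{-(C\dsos+s)/4}\,n^{M/4},
\]
and since $M\leq \dsos$, the choice $C\geq 2$ makes this at most $n^{-\Omega(C\dsos)}n^{-s/4}$. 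The vertex-decay prefactor $(k\sqrt{d}/n)^{V-M}$ equals $(\dsos^{c_0}\log n)^{-(V-M)}$ under the parameter choice of \cref{thm:main}, which is precisely what is needed to dominate the base of $c(\al)$ in \cref{def:c-function} (namely $(2C\dsos^{4C'+2})^{V-M}$ in cases 1--2, or $(16C\dsos^4)^{V-M}$ in cases 3--4). The extra polynomial-in-$\dsos$ factors in $c(\al)$ arising from excess edges (the $(2C^2\dsos^2)^{e-C'v}$ or $(8D_V)^{e+k-v-C\dsos}$ terms) are each easily absorbed by the per-excess-edge decay $n^{-s/4}$ since $8D_V/n^{1/4}=o(1)$ in our parameter regime.

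The main obstacle will be bookkeeping: cleanly bounding $\sqrt{|\Aut(\al)|}/|\Aut(\al)|\leq 1$ (trivial), handling the constant factor $O(1)^{V}$ from $m(\al)$ and linearization, and verifying that the case analysis of $c(\al)$ (the four cases in \cref{def:c-function}) is uniformly dominated by $n^{-\Omega(C\dsos)}\cdot n^{-\Omega(s)}$. No intersection-tradeoff lemma or clever charging of edges against the separator is needed here, since the Frobenius bound sidesteps vertex separators entirely; this is the virtue of the Frobenius trick compared with the trace method, and is precisely why it handles the dense-shape case where the trace bound would be overwhelmed by the combinatorial counting of dense shapes.
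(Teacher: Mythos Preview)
Your approach is essentially the paper's: use \cref{lem:frobenius-trick-first-part} with $I_\al=0$ (permissibility), then rewrite $\lam_\al\, n^{V/2}$ as vertex decay $(k\sqrt{d}/n)^{V-M}$ times $\sqrt{d}^{\,M}\sqrt{p}^{\,E-V}$, and let the $E-V\geq C\dsos$ excess edges swamp $\sqrt{d}^{\,M}\leq \sqrt{d}^{\,\dsos}$ and all the $\poly(\dsos)$ factors in $c(\al)$. The paper's three-line computation is a terser rendering of exactly this.

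One gap: your absorption of the $2^{E/2}$ factor does not work as written. You claim $n^{-C\dsos/2}$ handles it because $E\leq D_V^2$, but $2^{D_V^2}=2^{C^2\dsos^2\log^2 n}$ is vastly larger than $n^{C\dsos}=2^{C\dsos\log n}$. The correct fix is not to introduce this factor at all: under the parameter regime of \cref{thm:main} we have $pE\leq (d/n)\cdot D_V^2 = o(1)$ (since $d\leq n^{1/2}$ and $D_V\leq\widetilde{O}(n^{0.1})$), hence $(1-p)^{-E/2}=e^{O(pE)}=1+o(1)$ and $(p/(1-p))^{E/2}\leq (1+o(1))\,\sqrt{p}^{\,E}$ directly. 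This is what the paper does implicitly when it writes $\sqrt{p}^{|E(\al)|}$ in place of $(p/(1-p))^{|E(\al)|/2}$.
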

\begin{proof}
\begin{align*}
    \lam_\al  \frac{\sqrt{\E[\tr(M_\al M_\al^\T)]}}{|\Aut(\al)|} 
    & \leq \left(\frac{k}{n}\right)^{|V(\al)| - \frac{|U_\al| + |V_\al|}{2}} \sqrt{p}^{|E(\al)|}\sqrt{n}^{|V(\al)|}\\
    &= \left(\frac{k\sqrt{d}}{n}\right)^{|V(\al)| - \frac{|U_\al| + |V_\al|}{2}} \sqrt{d}^{\frac{|U_\al| + |V_\al|}{2}}\sqrt{p}^{|E(\al)| - |V(\al)|}\\
    &\leq \frac{1}{c(\al)}
\end{align*}
\end{proof}

The terms with $j \geq 1$ are bounded by $n^{-\Omega(C\dsos)}$ by
using \cref{lem:large-shape} once on $\sigma$ alone and once on $\sigma'$ alone,
using \cref{lem:intersection-terms} on $\tau_P$, and using the Frobenius norm trick
on $\sigma$ or $\sigma'$ with too many edges.
This completes the proof.

\subsubsection{\cref{lem:truncation-error}: truncation error for shapes with too many edges in one part}

\begin{lemma}
   $\norm{\text{truncation error}_{\text{too many edges in one part}}} \leq n^{-\Omega(C\dsos)}.$
\end{lemma}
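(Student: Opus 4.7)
My plan is to adapt the argument used for $\text{truncation error}_{\text{too many vertices}}$ in the preceding subsection, replacing the operator-norm bound \cref{lem:large-shape} on the part with excess edges by the Frobenius-norm trick \cref{cor:frobenius-norm-trick}, which is much sharper when the shape has many excess edges.

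For the first sum ($j=0$), I will factor $\norm{M_{\sigma \circ \tau \circ \sigma'^\T}} \le \norm{M_\sigma} \cdot \norm{M_\tau} \cdot \norm{M_{\sigma'}}$ (up to intersection error terms, which can be absorbed into the analysis of \cref{sec:intersection-terms}). The left and right factors I bound via the same operator-norm arguments as in \cref{lem:large-shape}, giving at most $n^{O(\dsos)}$. For the middle factor, since $|E_{mid}(\tau)| - |V(\tau)| > C\dsos$, I use the Frobenius bound $\norm{M_\tau} \leq \sqrt{\tr(M_\tau M_\tau^\T)}$. Summing over $\tau$ and applying \cref{lem:frobenius-trick-second-part} gives
\[\sqrt{\E\left[\tr\left(\left(\sum_\tau \lambda'_\tau M_\tau\right)\left(\sum_\tau \lambda'_\tau M_\tau\right)^\T\right)\right]} \leq \sum_\tau \lambda'_\tau \sqrt{\E[\tr(M_\tau M_\tau^\T)]},\]
and each term on the right is at most $|\Aut(\tau)|/c(\tau)$ by \cref{cor:frobenius-norm-trick}. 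Because $\tau$ is sparse (so the hypothesis $|E_{mid}(\tau)| - |V(\tau)| > C\dsos$ combined with $|E(\tau)| \leq C'|V(\tau)|$ forces $|V(\tau)| = \Omega(C\dsos)$) and case~4 of \cref{def:c-function} contributes an additional $(8D_V)^{|E_{mid}(\tau)| - |V(\tau)| - C\dsos}$ beyond the already-large vertex-decay factor, combining with \cref{prop:sum-c-alpha-tail} shows the sum over dense $\tau$ is $n^{-\Omega(C\dsos)}$. Markov's inequality then converts this expected-Frobenius bound into a high-probability spectral-norm bound.

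For the second sum ($j \geq 1$), the shape with excess edges is one of the outermost intersection pieces $\gam_j$ or $\gam_j'$. The strategy is identical: factor so the dense piece sits on the outside, apply the Frobenius-norm trick to it, handle the inner intersection core via \cref{lem:intersection-terms} together with the accounting in \cref{lem:count-nonequivalent-intersection-patterns} and \cref{lem:bound-np}, and treat the outer $\sigma, \sigma'$ via \cref{lem:left-right-sep-conditioned}. The edge-excess factor in $c(\gam_j)$ (or $c(\gam_j')$) provides the same $n^{-\Omega(C\dsos)}$ saving.

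The main technical subtlety will be ensuring compatibility of the Frobenius-norm trick with the factorization into left/middle/right parts and the resulting intersection error terms: when the factorization creates intersections involving the dense shape, the edge-excess property must survive. This is fine because intersections only identify vertices, so the ratio of edges to vertices does not decrease and \cref{cor:frobenius-norm-trick} continues to apply. A secondary detail is verifying that \cref{cor:frobenius-norm-trick} remains valid for permissible (augmented) shapes carrying missing and quasi-missing edge indicators; this is immediate because those indicators contribute only bounded multiplicative factors already absorbed into $m(\tau)$ inside $\lambda'_\tau$, and they do not affect the trace calculation underlying \cref{lem:frobenius-trick-first-part}.
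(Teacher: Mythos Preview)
Your proposal has the right core ingredient --- the Frobenius norm trick --- but the factoring strategy you outline creates complications that the paper avoids by a simpler route, and for the $j\geq 1$ terms your plan does not match the structure of the error.

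The paper does \emph{not} factor $M_{\sigma \circ \tau \circ \sigma'^\T}$ (or $M_{\sigma \circ \tau_P \circ \sigma'^\T}$) into pieces and apply Frobenius to one piece. Instead it applies the Frobenius bound directly to the \emph{entire} composed shape. The key step (\cref{lem:excess-edges-stay}) shows that if any single part $\gamma_i$, $\tau$, or $\gamma_i'$ has $C\dsos$ excess edges, then the full intersected shape $\tau_P$ --- and hence $\sigma \circ \tau_P \circ \sigma'^\T$ --- retains at least $(C-2)\dsos$ excess \emph{unique} edges. Then a single Frobenius bound on the full shape (in an improper-shape version, \cref{prop:improper-frobenius-trick-first-part}, since $\tau_P$ may carry multi-edges) yields a factor $\sqrt{p}^{\,|E_{\text{unique}}| - |V|} \leq n^{-\Omega(C\dsos)}$ that swamps everything else, and \cref{lem:frobenius-trick-second-part} plus one Markov inequality finish.

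Your route has two concrete problems. First, for $j \geq 1$ the dense piece $\gamma_j$ is \emph{inside} $\tau_P$, not on the outside: $\tau_P$ was formed by intersecting $\gamma_j$ with the other pieces, and there is no way to ``factor so the dense piece sits on the outside'' without undoing that intersection. Second, even for $j=0$, the intersection error terms created by your factorization are improper shapes, and \cref{cor:frobenius-norm-trick} as stated is only for proper shapes; handling them forces you to develop the improper Frobenius bound anyway, at which point applying it to the whole shape from the start is cleaner. Your observation that ``intersections only identify vertices, so the ratio of edges to vertices does not decrease'' is essentially the content of \cref{lem:excess-edges-stay} and is exactly the right idea --- it just needs to be applied at the level of the full composed shape rather than to an isolated factor.
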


Recall the quantity of our interest for this section,
 \begin{align*}
       &{\text{truncation error}}_{\text{too many edges in one part}} = \sum_{\substack{\text{sparse, permissible}\\\sigma,\tau,\sigma': \\
        |V(\sigma \circ \tau \circ \sigma'^\T)| \leq D_V,\\
        |E_{mid}(\tau)| - |V(\tau)| > C{\dsos}}}
        {\lam'_{\sigma \circ \tau \circ {\sigma'}^{\T}} \frac{M_{\sigma \circ \tau \circ {\sigma'}^T}}{|\Aut(\sigma \circ \tau \circ \sigma'^\T)|}} \\
        &+\sum_{j=1}^{2\dsos}{(-1)^{j}\sum_{\substack{\text{sparse, permissible} \\
        \sigma,\gam_j,\ldots,\gam_1,\tau,\gam'_1,\ldots,\gam'_j,\sigma': \\ 
        |V(\sigma \circ \gam_j \circ \ldots \circ \gam_1)| \leq D_V \text{ and } |V({\gam'_1}^{\T} \circ \ldots \circ {\gam'_j}^{\T} \circ {\sigma'}^{\T})| \leq D_V \\
        |E_{mid}(\gam_j)| - |V(\gam_j)| > C{\dsos} \text{ or } |E_{mid}(\gam'_j)| - |V(\gam'_j)| > C{\dsos}}}\sum_{\substack{\text{nonequiv.}\\P \in \calP^{mid}_{\gam_j, \dots, \gam_j'}}}
        {\lam'_{\sigma \circ \gam_j \circ \ldots \circ {\gam'_j}^{\T} \circ {\sigma'}^{\T}} \frac{M_{\sigma \circ \tau_{P} \circ \sigma'^\T}}{|\Aut(\sigma \circ \tau_P \circ \sigma'^\T)|} }}
    \end{align*}
    
    We will utilize the Frobenius norm trick to bound these norms.
    \begin{proposition} \label{prop:improper-frobenius-trick-first-part}
    For all shapes $\al$ (including improper shapes and shapes with (quasi-)missing edge indicators), 
    \[\E[\tr(M_\al M_\al^\T)] \leq 4^{|E(\al)|}|\Aut(\al)| 
    \max_{\beta:\text{ linearization of }\al}\left(\frac{1-p}{p}\right)^{|E(\al)| - |E(\beta)| - 2|E_{phantom}|} n^{|V(\al)| + |I_{\beta}|}.\]
\end{proposition}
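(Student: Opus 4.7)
The strategy is to reduce to the proper-shape case handled by \cref{lem:frobenius-trick-first-part} via linearization and Cauchy--Schwarz, then absorb the combinatorial blow-up into the $4^{|E(\al)|}$ factor.

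First, I would deal with (quasi-)missing edge indicators. Each quasi-missing edge indicator is, by its definition, a convex combination of products of missing-edge indicator functions, so $M_\al$ is a convex combination of $M_{\al'}$ where $\al'$ has no quasi-missing edge indicators. Using \cref{lem:frobenius-trick-second-part} applied to this convex combination (and Jensen), it suffices to prove the bound assuming $\al$ has only Fourier-character edges and honest missing edge indicators. Then, using the identity $1_{e \notin E(G)} = (1-p) + \sqrt{p(1-p)}\chi_e$ from \cref{lem:independent-set-indicator}, each missing edge indicator can be expanded as a sum of two terms (a scalar, and a scaled Fourier character); this produces at most $2^{\#\text{missing}}$ summands, each of which now has only Fourier-character edges (possibly with multi-edges).

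Next, I would apply \cref{prop:linearization} to each of these shapes to write it as a linear combination of proper linearizations $\beta$, with coefficients of magnitude $\bigl(\sqrt{(1-p)/p}\bigr)^{|E(\al)|-|E(\beta)|-2|E_{phantom}(\beta)|}$ as in the proposition statement. The total number of proper shapes $\beta$ produced (from the combined missing-edge expansion and multi-edge linearization) is bounded by $2^{|E(\al)|}$, counting a factor of $2$ for each distinct Fourier-character slot and each missing-edge slot. Now apply Cauchy--Schwarz in the form of \cref{lem:frobenius-trick-second-part}:
\[
\sqrt{\E[\tr(M_\al M_\al^\T)]} \;\leq\; \sum_\beta |c_\beta|\, \sqrt{\E[\tr(M_\beta M_\beta^\T)]}.
\]
For each proper $\beta$, \cref{lem:frobenius-trick-first-part} yields $\E[\tr(M_\beta M_\beta^\T)] \leq |\Aut(\beta)|\, n^{|V(\al)|+|I_\beta|}$, where I use $|V(\beta)| = |V(\al)|$. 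Squaring after bounding the sum by (number of terms) $\times$ max-term gives
\[
\E[\tr(M_\al M_\al^\T)] \;\leq\; 4^{|E(\al)|}\,\max_{\beta}\,\Bigl(\tfrac{1-p}{p}\Bigr)^{|E(\al)|-|E(\beta)|-2|E_{phantom}(\beta)|} |\Aut(\beta)|\, n^{|V(\al)|+|I_\beta|}.
\]

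The main obstacle is showing that the factor $|\Aut(\beta)|$ here can be replaced by $|\Aut(\al)|$ as in the stated bound. In general $|\Aut(\beta)|$ need not be $\leq |\Aut(\al)|$ since removing edge multiplicities can enlarge the automorphism group. I would handle this by either (i) slightly refining the counting of linearizations, grouping $\beta$'s into orbits under the $\Aut(\beta)/\Aut(\al)$ action so that the effective number of distinct proper shapes contributed per orbit absorbs the ratio $|\Aut(\beta)|/|\Aut(\al)|$ into the combinatorial $4^{|E(\al)|}$ factor, or (ii) bounding $|\Aut(\beta)|/|\Aut(\al)|$ directly by the number of ways edge multiplicities of $\al$ can be permuted, which is at most $\prod_e \mul_\al(e)! \leq 2^{O(|E(\al)|)}$ and can again be absorbed into the $4^{|E(\al)|}$ slack. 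This slack is why the proposition has the loose prefactor $4^{|E(\al)|}$ rather than something tighter; the arithmetic I expect is that every edge of $\al$ contributes at most a factor of $2$ from either the linearization count or the automorphism-ratio correction, giving the claimed $4^{|E(\al)|}|\Aut(\al)|$.
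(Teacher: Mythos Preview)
Your proposal takes exactly the same route as the paper: linearize via \cref{prop:linearization}, bound the number of linearizations by $2^{|E(\al)|}$, apply Cauchy--Schwarz in the form of \cref{lem:frobenius-trick-second-part}, and invoke \cref{lem:frobenius-trick-first-part} on each proper $\beta$, so that squaring produces the $4^{|E(\al)|}$ prefactor. The paper's proof is literally three lines doing just this.

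You are in fact more careful than the paper in two respects. First, you explicitly treat (quasi-)missing edge indicators by reducing to convex combinations and then to Fourier-only shapes; the paper's proof does not mention them at all. Second, you correctly flag the discrepancy between $|\Aut(\beta)|$ (which is what \cref{lem:frobenius-trick-first-part} actually yields) and $|\Aut(\al)|$ (which appears in the stated bound). The paper's proof does not address this either---it simply asserts the bound---so this is a genuine bit of sloppiness in the statement rather than something you are missing. Your proposed fixes (orbit-grouping or bounding the automorphism ratio and absorbing into the $4^{|E(\al)|}$ slack) are plausible, but note that your inequality $|\Aut(\beta)|/|\Aut(\al)| \le \prod_e \mul_\al(e)!$ is not obviously correct in general, since $\Aut(\al)$ need not be a subgroup of $\Aut(\beta)$ when phantom edges are deleted; if you want to make this rigorous you would need a slightly different argument. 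For the purposes of matching the paper, however, this level of care already exceeds what the paper provides.
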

\begin{proof}
    There are at most $2^{|E(\al)|}$ linearizations of $\al$.
    Using \cref{prop:linearization}, the coefficient on each one is at most $\left(\sqrt{\frac{1-p}{p}}\right)^{|E(\al)| - |E(\beta)| - 2|E_{phantom}|}$.
    Applying \cref{lem:frobenius-trick-first-part} on each linearization gives the bound.
\end{proof}

\begin{lemma} \label{lem:excess-edges-stay}
    If some part $\gamma_i$, $\tau$ or ${\gamma'_i}^\T$ has $C\dsos$ excess edges then $\tau_P$ has at least $(C-2)\dsos$ excess unique edges (i.e. $|E_{unique}(\tau_P)| - |V(\tau_P)| \geq (C-2)\dsos$).
\end{lemma}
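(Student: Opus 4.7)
The plan is to use a vertex–edge accounting argument. Let $\alpha_1,\ldots,\alpha_{2j+1} \in \{\gamma_j,\ldots,\gamma_1,\tau,\gamma_1'^\T,\ldots,\gamma_j'^\T\}$ denote the composing parts. Each vertex identification (from a boundary gluing of the composition or from an intersection in $P$) decreases $|V(\tau_P)|$ by one, and each edge coincidence between edges of distinct parts decreases $|E_{unique}(\tau_P)|$ by one, giving the identity
\[
|E_{unique}(\tau_P)| - |V(\tau_P)| \;=\; \sum_{m=1}^{2j+1} \bigl(|E(\alpha_m)| - |V(\alpha_m)|\bigr) + N - M,
\]
where $N$ is the total number of vertex identifications and $M$ is the total number of edge coincidences.

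By symmetry I may assume the focal part is $\tau$; since $\tau$ is a proper middle shape, $|E(\tau)| \geq |E_{mid}(\tau)| > |V(\tau)| + C\dsos$, so $\tau$ alone contributes strictly more than $C\dsos$ to the sum. For each non-focal left shape $\gamma$, the connectivity requirement in the definition of a left shape (every vertex is connected to $U_\gamma$ via Fourier character edges) gives $|E(\gamma)| \geq |V(\gamma)| - |U_\gamma|$, hence $|E(\gamma)| - |V(\gamma)| \geq -|U_\gamma|$; the boundary gluing of $\gamma$ with its left neighbor in the composition contributes exactly $|U_\gamma|$ identifications to $N$, so the combined contribution of each non-focal left shape (its term in the sum together with its left-gluing) is non-negative.

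The remaining step is to bound the edge-coincidence count $M$ against the intersection-identification portion of $N$, which I denote $i_P$. Each duplicate edge in $\tau_P$ requires both endpoints of an edge of some $\alpha_m$ to be identified, through gluings or intersections, with the two endpoints of an edge of some other part $\alpha_{m'}$. Charging each such duplicate to a responsible intersection identification in $P$, and exploiting both the middle-intersection condition (\cref{def:middle-intersection}) and the forest-like attachment of left shapes along their boundaries, one obtains a bound of the form $M \leq i_P + 2\dsos$; the extra $2\dsos$ absorbs potential coincidences at the focal part's own boundary, whose total size is $|U_\tau| + |V_\tau| \leq 2\dsos$.

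Combining everything,
\[
|E_{unique}(\tau_P)| - |V(\tau_P)| \;>\; C\dsos + 0 + i_P - (i_P + 2\dsos) \;=\; (C-2)\dsos,
\]
as required. The main obstacle is the combinatorial bound $M \leq i_P + 2\dsos$: it must be uniform over all middle intersection patterns and must carefully charge each coincidence to a specific intersection or boundary gluing without overcounting, exploiting the structural constraints on $P$ together with the fact that left shapes attach in a tree-like fashion along their $U$-boundaries.
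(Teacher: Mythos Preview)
Your accounting identity is correct, and the pairing of each non-focal $\gamma$'s deficit $-|U_\gamma|$ with its boundary gluing is a reasonable start (modulo the minor issue that the outermost $\gamma_j$ has no left neighbor, which costs an extra $|U_{\tau_P}|\leq\dsos$ you do not explicitly absorb). The real problem is the crux inequality $M \leq i_P + 2\dsos$, which you flag as the ``main obstacle'' but do not prove --- and it is in fact false. A single intersection can create many edge coincidences: take $\dsos=1$, a middle $\tau$ containing a triangle on $w_1,w_2,w_3$ together with edges $(u,w_i)$, and a left $\gamma_1$ with $V_{\gamma_1}=\{u\}$, middle vertices $x_1,x_2,x_3$ forming a triangle with edges $(x_i,u)$ and $(a_j,x_i)$ to $U_{\gamma_1}=\{a_1,a_2\}$. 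With the middle intersection $x_i\leftrightarrow w_i$ one gets $i_P=3$ but $M=6$ (three triangle coincidences plus three $(x_i,u)\sim(w_i,u)$ coincidences), so $M>i_P+2\dsos=5$. The lemma still holds in this example precisely because $\gamma_1$ itself carries many extra edges, but your argument has already discarded that surplus by lower-bounding each non-focal contribution by $0$.

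The paper avoids this trap with a short cycle argument: in the focal part one can pick a spanning structure so that each of the $C\dsos$ excess edges lies on its own cycle using only non-excess edges; vertex identifications never destroy cycles, and since intersections never merge two vertices from the same part, these excess edges remain distinct in $E_{unique}(\tau_P)$. Deleting them therefore keeps every vertex of $\tau_P$ connected to $U_{\tau_P}\cup V_{\tau_P}$, and a spanning-forest count on the remaining simple graph gives $|E_{unique}(\tau_P)|-C\dsos \geq |V(\tau_P)|-|U_{\tau_P}|-|V_{\tau_P}|\geq |V(\tau_P)|-2\dsos$. This sidesteps any need to bound $M$ against $i_P$.
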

\begin{proof}
    Observe that for the part with $C\dsos$ excess edges, we can assign a cycle to each excess edge $e$ so that the cycle contains $e$ and does not contain any other excess edge $e'$. 
    
    Even after intersections, we will still have these cycles. This implies that we can delete the $C\dsos$ excess edges and we will still have that every vertex of $\tau_P$ is connected to either $U_{\tau_P}$ or $V_{\tau_P}$. In turn, this implies that $|E(\tau_P)| - |V_{\tau_P}| \geq C\dsos - |U_{\tau_P}| - |V_{\tau_P}| \geq (C-2)\dsos$, as needed.
\end{proof}

\begin{lemma}\label{lem:Frobeniusnormbound}
For all $j\geq 0$ and sparse permissible $\sigma, \gam_j,\ldots,\gam_1,\tau,\gam'_1,\ldots,\gam'_j, \sigma'$, $ |V({\gam'_1}^{\T} \circ \ldots \circ {\gam'_j}^{\T} \circ {\sigma'}^{\T})| \leq D_V $, and $|V({\gam'_1}^{\T} \circ \ldots \circ {\gam'_j}^{\T} \circ {\sigma'}^{\T})| \leq D_V$, while $|E_{mid}(\gam_j)| - |V(\gam_j)| > C{\dsos}$ or  $|E_{mid}(\gam'_j)| - |V(\gam'_j)| > C{\dsos}$, \begin{align*}
  \E\bigg[\sum_{\text{non-equiv P}} N_P(\tau_P) {\lam'_{\sigma \circ \gam_j \circ \ldots \circ {\gam'_j}^{\T} \circ {\sigma'}^{\T}} \frac{\norm{M_{\sigma \circ \tau_{P} \circ \sigma'^\T}}_F}{|\Aut(\sigma \circ \tau_P \circ \sigma'^\T)|} }  \bigg]\leq 
  \frac{n^{-\Omega(CD_{SoS})}}{c(\sigma)c(\tau)c(\sigma')\prod_{i = 1}^j c(\gam_i)c(\gam_i')}
\end{align*}
\end{lemma}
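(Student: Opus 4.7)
The plan is to mimic the argument of \cref{lem:intersection-terms} but replace the spectral-norm bound of \cref{cor:epsilon-norm-bound} by the Frobenius-norm bound of \cref{prop:improper-frobenius-trick-first-part}, then use \cref{lem:excess-edges-stay} to extract an extra $n^{-\Omega(C\dsos)}$ decay from the excess edges which survive the intersection. First I will reduce to a per-intersection-pattern estimate. By Jensen's inequality, $\E[\|M_{\sigma \circ \tau_P \circ \sigma'^T}\|_F] \leq \sqrt{\E[\tr(M_{\sigma \circ \tau_P \circ \sigma'^T} M_{\sigma \circ \tau_P \circ \sigma'^T}^T)]}$, and applying \cref{prop:improper-frobenius-trick-first-part} to $\alpha = \sigma \circ \tau_P \circ \sigma'^T$ gives a bound of the form $2^{|E(\alpha)|}\sqrt{|\Aut(\alpha)|}\max_\beta (\sqrt{(1-p)/p})^{|E(\alpha)| - |E(\beta)| - 2|E_{phantom}|} n^{(|V(\alpha)|+|I_\beta|)/2}$.

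The coefficient $\lambda_{\sigma \circ \gam_j \circ \cdots \circ \sigma'^T}$ supplies the matching factor $(\sqrt{p/(1-p)})^{|E(\alpha)|}$; after cancellation we are left with $(\sqrt{p/(1-p)})^{|E(\beta)| + 2|E_{phantom}|}$. The key observation is that $|E(\beta)| + 2|E_{phantom}| \geq |E_{unique}(\tau_P)|$ since every unique edge of $\tau_P$ is represented either as an edge of $\beta$ or as a phantom edge. By \cref{lem:excess-edges-stay}, $|E_{unique}(\tau_P)| \geq |V(\tau_P)| + (C-2)\dsos$, so I split this factor as
\[\left(\sqrt{\tfrac{p}{1-p}}\right)^{|E(\beta)| + 2|E_{phantom}|} \;\leq\; \left(\sqrt{\tfrac{p}{1-p}}\right)^{|V(\tau_P)|}\cdot \left(\sqrt{\tfrac{p}{1-p}}\right)^{(C-2)\dsos}.\]
In the regime $d \leq n^{1/2}$ the second factor is at most $n^{-\Omega(C\dsos)}$, and the first combines with the Frobenius $n^{|V(\alpha)|/2}$ and the vertex-decay $(k/n)^{|V(\alpha)|-(|U|+|V|)/2}$ in $\lambda$ to produce the standard per-vertex charging factor $(k\sqrt{d}/n)^{|V(\alpha)|-(|U|+|V|)/2}$ from \cref{prop:informal-intersection-terms}.

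The remaining bookkeeping parallels \cref{sec:intersection-terms}: the number of non-equivalent middle-intersection patterns with $\ell$ intersections is bounded by \cref{lem:count-nonequivalent-intersection-patterns}, $N_P(\tau_P)$ is bounded by \cref{lem:bound-np}, the passage from $\lambda'$ to $\lambda$ costs only constants per vertex by \cref{lem:bound-m} and \cref{lem:conditioning-is-negligible}, and the ratio $\sqrt{|U_\alpha|!|V_\alpha|!}/|\Aut(\alpha)|$ is controlled by \cref{prop:midshape-aut}. Every such combinatorial factor is at most $\dsos^{O(1)}$ per vertex, hence absorbed by the per-vertex decay used to produce $c(\sigma)c(\tau)c(\sigma')\prod_{i=1}^j c(\gam_i)c(\gam_i')$, exactly as in \cref{lem:intersection-terms}. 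The outer left/right shapes $\sigma, \sigma'$ are incorporated by the same argument (their Frobenius contribution is trivial since they are sparse and their coefficient decay is untouched by the intersection).

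The main obstacle will be handling the $n^{|I_\beta|/2}$ factor in the Frobenius bound when $\tau_P$ has multi-edges: an isolated vertex of $\beta$ must have all of its incident $\tau_P$-edges relegated to $E_{phantom}$, which boosts $|E_{phantom}|$ and hence the $p$-decay, so the two effects should cancel to leading order. Making this precise requires a careful case analysis across linearizations, combined with the already-established inequality $|E(\beta)|+2|E_{phantom}| \geq |E_{unique}(\tau_P)|$, to ensure that the entirety of the $(C-2)\dsos$ surplus survives all of the combinatorial bookkeeping. Once this is verified, taking $C$ sufficiently large yields the stated $n^{-\Omega(CD_{SoS})}$ bound.
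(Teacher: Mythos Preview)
Your approach is essentially the same as the paper's: replace the spectral norm by \cref{prop:improper-frobenius-trick-first-part}, invoke \cref{lem:excess-edges-stay} for the $n^{-\Omega(C\dsos)}$ decay, and recycle the combinatorial bookkeeping from \cref{lem:intersection-terms}. The one place you overcomplicate things is the ``main obstacle'' of the $n^{|I_\beta|/2}$ factor: the paper dispatches this in one line by noting that every isolated vertex of $\beta$ must have at least one incident phantom edge (since all vertices of $\tau_P$ are connected to $U_{\tau_P}\cup V_{\tau_P}$), and each phantom edge touches at most two vertices, so $|I_\beta|\le 2|E_{phantom}|$ and $\sqrt{n}^{\,|I_\beta|-2|E_{phantom}|}\le 1$; no case analysis across linearizations is needed.
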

\begin{proof}
Let  $\tau_P$ be any shape that may arise from intersection of  $\gam_j,\ldots,\gam_1,\tau,\gam'_1,\ldots,\gam'_j$. By \cref{lem:excess-edges-stay}, we have $|E_{\text{unique}}(\sigma\circ \tau\circ\sigma')|-|V(\sigma\circ \tau\circ\sigma')|\geq C\dsos$ for some $C>0$. To apply the Frobenius norm trick on possibly improper shapes, we will again first linearize the shape and then apply Frobenius norm bound,
\begin{align*}
    &\E\bigg[\sum_{\text{non-equiv }P} N_P(\tau_P) {\lam'_{\sigma \circ \gam_j \circ \ldots \circ {\gam'_j}^{\T} \circ {\sigma'}^{\T}} \frac{\norm{M_{\sigma \circ \tau_{P} \circ \sigma'^\T}}_F}{|\Aut(\sigma \circ \tau_P \circ \sigma'^\T)|} } \bigg]\\ 
  &\leq  \sum_{\text{non-equiv }P} N_P(\tau_P)\lambda'_{\sigma\circ\gam_j\circ\cdots\circ\gam_j'^\T\circ \sigma'^\T} \frac{\E[\|M_{\sigma\circ\tau_P\circ\sigma'^\T}\|_F]}{|\Aut(\sigma\circ\tau_P\circ\sigma'^\T) |}
 \end{align*}
 As in previous sections, the difference between $\lam'$ and $\lam$,
 the count of non-equivalent $P$, and $N_P(\tau_P)$ are under control from vertex decay, so we drop them. Continuing,
 \begin{align*}
 &\leq \left(\frac{k\sqrt{d}}{n}\right)^{|V(\sigma\circ\gam_j\circ\cdots\circ\gam_j'^\T\circ\sigma')|-\frac{|U_\sigma|+|U_{\sigma'}|}{2}} \sqrt{p}^{|E_{\text{unique}}(\sigma\circ\tau_P\circ\sigma'^\T)|-|V(\sigma\circ\tau_P\circ\sigma'^\T)|}\\
 & \qquad \cdot\sqrt{d}^{\frac{|U_\sigma|+|U_{\sigma'}|}{2}} \sqrt{n}^{|I_{\beta}| - 2|E_{phantom}|} && (\text{\cref{prop:improper-frobenius-trick-first-part}})  \\
 &\leq \left(\frac{k\sqrt{d}}{n}\right)^{|V(\sigma\circ\gam_j\circ\cdots\circ\gam_j'^\T\circ\sigma'^\T)|-\frac{|U_\sigma|+|U_{\sigma'}|}{2}} \sqrt{p}^{|E_{\text{unique}}(\sigma\circ\tau_P\circ\sigma'^\T)|-|V(\sigma\circ\tau_P\circ\sigma'^\T)|} \sqrt{d}^{\dsos} \\
 &\leq  \frac{n^{-\Omega(CD_{SoS})}}{c(\sigma)c(\tau)c(\sigma')\prod_{i = 1}^j c(\gam_i)c(\gam_i')}
\end{align*}
where $n^{-\Omega(CD_{SoS})}$ comes from $\sqrt{p}^{|E_{\text{unique}}(\sigma\circ\tau_P\circ\sigma'^\T)|-|V(\sigma\circ\tau_P\circ\sigma'^\T))|}$. In the second-to-last inequality we observe that any isolated vertex is incident to at least one phantom edge, hence the isolated vertices are under control.
We can then restrict our attention to the norm bound factor from the non-isolated vertices in the linearization of $\tau_P$, and the last inequality follows by the decay from excess edges.
\end{proof}
Now we use \cref{lem:frobenius-trick-second-part} to bound all their Frobenius norms together.
\[\E[\norm{\text{truncation error}_{\text{too many edges in one part}}}]
\leq \sum_{\sigma,\gam_j,\dots, \gam_j', \sigma'} \frac{n^{-\Omega(C\dsos)}}{c(\sigma)c(\tau)c(\sigma')\prod_{i = 1}^j c(\gam_i)c(\gam_i')}.\]
By a single application of Markov's inequality, the bound holds.

\subsubsection{\cref{lem:left-right-sep-conditioned}: sum of left shapes is well-conditioned}
\label{sec:well-conditioned}

\leftrightconditioning* 
\begin{proof}
Observe that
\begin{align*}
&\left(\sum_{\substack{\text{sparse,}\\\text{permissible} \\ \sigma \in \calL}}\lam'_\sigma \frac{M_\sigma}{|Aut(\sigma)|}\right)\left(\sum_{\substack{\text{sparse,}\\\text{permissible} \\ \sigma' \in \calL}}\lam'_{\sigma'} \frac{M_{\sigma'}}{|Aut(\sigma')|}\right)^\T \\
= &\sum_{j=0}^{\dsos}{\left(\sum_{\substack{\text{sparse,}\\\text{permissible} \\ \sigma \in \calL: |V_{\sigma}| = j}}\lam'_\sigma \frac{M_\sigma}{|Aut(\sigma)|}\right)\left(\sum_{\substack{\text{sparse,}\\\text{permissible} \\ \sigma' \in \calL: |V_{\sigma'}| = j}}\lam'_{\sigma'} \frac{M_{\sigma'}}{|Aut(\sigma')|}\right)^\T}
\end{align*}
Thus, it is sufficient to show that 
\[
\sum_{j=0}^{\dsos}{\left(\sum_{\substack{\text{sparse,}\\\text{permissible} \\ \sigma \in \calL: |V_{\sigma}| = j}} w_j\lam'_\sigma \frac{M_\sigma}{|Aut(\sigma)|}\right)\left(\sum_{\substack{\text{sparse,}\\\text{permissible} \\ \sigma' \in \calL: |V_{\sigma'}| = j}}w_j \lam'_{\sigma'} \frac{M_{\sigma'}}{|Aut(\sigma')|}\right)^\T} \succeq n^{-O(\dsos)}\pi
\]
for some weights $w_0,\ldots,w_{\dsos} \in [0,1]$.
To show this, we will show that if we choose the weights $w_0,\ldots,w_{\dsos} \in [0,1]$ carefully then all of the terms where $\sigma$ is non-trivial or $\sigma'$ is non-trivial can be charged to $\pi$.
\begin{definition}
Define ${\pi}_j$ to be the matrix such that $M_{AB} = 1$ if $A = B$ and $A$ is an independent set of size $j$ in $G$ and $M_{AB} = 0$ otherwise.
\end{definition}
\begin{proposition}
$\pi = \sum_{j=0}^{\dsos}{\pi_j}$
\end{proposition}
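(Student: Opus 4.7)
The plan is to verify the identity entrywise, which is essentially immediate from the definitions, since both sides are diagonal matrices with 0/1 entries indexed by subsets $S \subseteq [n]$ with $|S| \leq \dsos$. First I would observe that $\pi_j$ is by definition supported only on the diagonal, so $\sum_{j=0}^{\dsos} \pi_j$ is diagonal; and $\pi$ is diagonal by its defining formula $\pi[S,S] = \1{S \text{ is an independent set in } G}$. So it suffices to compare diagonal entries.

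For a fixed $S$ with $|S| \leq \dsos$, each $\pi_j$ contributes $\pi_j[S,S] = \1{S \text{ is an independent set in } G \text{ and } |S| = j}$. Summing over $j \in \{0,1,\ldots,\dsos\}$, we get a nonzero contribution only from the single index $j = |S|$ (which lies in the admissible range precisely because $|S| \leq \dsos$), and in that case the contribution equals $\1{S \text{ is an independent set in } G}$. If $S$ is an independent set, the sum is $1$; otherwise it is $0$. This matches $\pi[S,S]$ in both cases, establishing the identity.

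There is no real obstacle here; the only thing to be mildly careful about is the range of the sum matching the index set of the matrix, i.e. that every $S$ indexing the matrix satisfies $|S| \leq \dsos$ so that exactly one $\pi_j$ can be responsible for each independent set $S$. Given the domain stated in the definitions of $\pi$ and $\pi_j$, this is automatic.
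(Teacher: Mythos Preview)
Your proof is correct and is essentially the only natural argument here: the paper states this proposition without proof, as it is immediate from the definitions of $\pi$ and $\pi_j$ by exactly the entrywise comparison you give.
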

\begin{definition}
Define $\lambda'_{\pi_j} \approx \left(\frac{k}{n}\right)^{\frac{j}{2}}$ to be the coefficient $\lambda'_{\sigma}$ of the shape $\sigma$ where $V(\sigma) = U_{\sigma} = V_{\sigma}$, $|U_{\sigma}| = j$, and there is a missing edge indicator for every possible edge in $V(\sigma) = U_{\sigma} = V_{\sigma}$.
\end{definition}
We will choose our weights so that the following condition is satisfied. For all $C'$-sparse $\sigma$ and $\sigma'$ such that $\sigma$ is non-trivial or $\sigma'$ is non-trivial, letting $i = |U_{\sigma}|$ and letting $i' = |U_{\sigma'}|$,
\[
-{w_j}\lam'_{\sigma}\lam'_{\sigma'} \frac{{M_\sigma}M_{\sigma'}^{\T} + {M_{\sigma'}}M_{\sigma}^{\T}}{|Aut(\sigma)| \cdot |Aut(\sigma')|} \preceq \frac{1}{c(\sigma)c(\sigma')\dsos}\left(w_i{\lambda'}_{\pi_i}^2{\pi_i} + w_{i'}{\lambda'}_{\pi_{i'}}^2{\pi_{i'}}\right)
\]
To see why this condition is sufficient, observe that since $\sum_{\alpha}{\frac{1}{c(\alpha)}} \leq \frac{1}{4}$, 
\begin{enumerate}
    \item For all non-trivial $\sigma$ such that $|V_{\sigma}| = i$ and  $|V_{\sigma}| = j$,
    \[
    \sum_{\sigma': |V_{\sigma'}| = j}{\frac{1}{c(\sigma)c(\sigma')\dsos}w_i{\lambda'}_{\pi_i}^2{\pi_i}} \preceq \frac{2}{c(\sigma)\dsos}w_i{\lambda'}_{\pi_i}^2{\pi_i}
    \] 
    Summing this over all non-trivial $\sigma$ such that $|V_{\sigma}| = i$ and $|V_{\sigma}| = j$, we have that 
    \[
    \sum_{\sigma: |U_{\sigma}| = i, |V_{\sigma}| = j, \sigma \text{ is non-trivial}}{\sum_{\sigma': |V_{\sigma'}| = j }{\frac{1}{c(\sigma)c(\sigma')\dsos}w_i{\lambda'}_{\pi_i}^2{\pi_i}}} \preceq \frac{1}{2\dsos}w_i{\lambda'}_{\pi_i}^2{\pi_i}
    \]
    \item When $\sigma$ is trivial, 
    $\sum_{\sigma': |V_{\sigma'}| = j, \sigma' \text{ is non-trivial}}{\frac{1}{c(\sigma')\dsos}w_j{\lambda'}_{\pi_j}^2{\pi_j}} \preceq \frac{1}{4\dsos}w_j{\lambda'}_{\pi_j}^2{\pi_j}$
\end{enumerate}
Thus, we have that 
\[
\sum_{j=0}^{\dsos}{\sum_{\sigma,\sigma': |V_{\sigma}| = |V_{\sigma'}| = j, \sigma \text{ or } \sigma' \text{ is non-trivial}}{\frac{1}{c(\sigma)c(\sigma')\dsos}w_{|U_{\sigma}|}{\lambda'}_{\pi_{|U_{\sigma}|}}^2{\pi_{|U_{\sigma}|}}}} \preceq \sum_{i=0}^{\dsos}{\frac{\dsos+1}{2\dsos}w_i{\lambda'}_{\pi_i}^2{\pi_i}}
\]
The $w_{i'}{\lambda'}_{\pi_{i'}}^2{\pi_{i'}}$ terms can be bounded in the same way. This implies that
\[
\sum_{j=0}^{\dsos}{\left(\sum_{\substack{\text{sparse,}\\\text{permissible} \\ \sigma \in \calL: |V_{\sigma}| = j}}\lam'_\sigma \frac{M_\sigma}{|Aut(\sigma)|}\right)\left(\sum_{\substack{\text{sparse,}\\\text{permissible} \\ \sigma' \in \calL: |V_{\sigma'}| = j}}\lam'_{\sigma'} \frac{M_{\sigma'}}{|Aut(\sigma')|}\right)^\T} \succeq \sum_{i=0}^{\dsos}{\frac{1}{4}w_i{\lambda'}_{\pi_i}^2{\pi_i}} \succeq \frac{1}{4}\left(\min_{i \in [0,\dsos]}{\{w_i{\lambda'}_{\pi_i}^2\}}\right)\pi
\]
We now choose the weights so that for all $\sigma$ and $\sigma'$ such that $\sigma$ is non-trivial or $\sigma'$ is non-trivial, letting $i = |U_{\sigma}|$ and letting $i' = |U_{\sigma'}|$,
\[
-{w_j}\lam'_{\sigma}\lam'_{\sigma'} \frac{{M_\sigma}M_{\sigma'}^{\T} + {M_{\sigma'}}M_{\sigma}^{\T}}{|Aut(\sigma)| \cdot |Aut(\sigma')|} \preceq \frac{1}{c(\sigma)c(\sigma')\dsos}\left(w_i{\lambda'}_{\pi_i}^2{\pi_i} + w_{i'}{\lambda'}_{\pi_{i'}}^2{\pi_{i'}}\right)
\]
Observe that for all $a,b > 0$
\[
\left(a\frac{M_\sigma}{|Aut(\sigma)|} - b\frac{M_{\sigma'}}{|Aut(\sigma')|}\right)\left(a\frac{M_{\sigma}^T}{|Aut(\sigma)|} - b\frac{M_{\sigma'}^T}{|Aut(\sigma')|}\right) \succeq 0
\]
and 
\[
a^2\frac{M_{\sigma}M_{\sigma}^T}{|Aut(\sigma)|^2} + b^2\frac{M_{\sigma'}M_{\sigma'}^T}{|Aut(\sigma)|^2} \preceq a^2\norm{\frac{M_{\sigma}}{|Aut(\sigma)|}}^2{\pi_i} + b^2\norm{\frac{M_{\sigma'}}{|Aut(\sigma')|}}^2{\pi_{i'}}
\]
Choosing $a = \frac{\sqrt{\frac{w_i}{c(\sigma)c(\sigma')\dsos}}{\lambda'}_{\pi_i}|Aut(\sigma)|}{\norm{M_{\sigma}}}$ and $b = \frac{\sqrt{\frac{w_{i'}}{c(\sigma)c(\sigma')\dsos}}{\lambda'}_{\pi_{i'}}|Aut(\sigma')|}{\norm{M_{\sigma'}}}$, we require that 
\[
ab = \frac{\sqrt{{w_i}w_{i'}}{\lambda'}_{\pi_i}{\lambda'}_{\pi_{i'}}|Aut(\sigma)| \cdot |Aut(\sigma')|}{c(\sigma)c(\sigma')\dsos\norm{M_{\sigma}}\norm{M_{\sigma'}}} \geq {w_j}\lam'_{\sigma}\lam'_{\sigma'}
\]
Rearranging, we need to choose the weights so that 
\[
w_j \leq \frac{\sqrt{{w_i}w_{i'}}{\lambda'}_{\pi_i}{\lambda'}_{\pi_{i'}}|Aut(\sigma)| \cdot |Aut(\sigma')|}{c(\sigma)c(\sigma')\dsos\lam'_{\sigma}\lam'_{\sigma'}\norm{M_{\sigma}}\norm{M_{\sigma'}}}
\]
We have that 
$\frac{\lambda'_{\sigma}\norm{M_{\sigma}}}{\lambda'_{\pi_i}|Aut(\sigma)|}$ is $\tilde{O}\left(\frac{k^3}{n}\right)^{\frac{|U_{\sigma}| - |V_{\sigma}|}{2}}$.
To see this, we make the following observations:
\begin{enumerate}
    \item $\lam'_{\sigma} \approx \left(\frac{k}{n}\right)^{|V(\sigma)| - \frac{|V_{\sigma}|}{2}}\left(\sqrt{\frac{p}{1-p}}\right)^{|E(\sigma)|}$
    \item ${\lambda'}_{\pi_i} \approx \left(\frac{k}{n}\right)^{\frac{|U_{\sigma}|}{2}}$
    \item $\frac{\norm{M_{\sigma}}}{|Aut(\sigma)|}$ is $\tilde{O}\left(n^{\frac{|V(\sigma) \setminus S|}{2}}\left(\sqrt{\frac{1-p}{p}}\right)^{|E(S)|}\right)$ where $S$ is the sparse minimum vertex separator of $\sigma$.
    \item $|E(\sigma) \setminus E(S)| \geq |V(\sigma)| - |S| + |V_{\sigma}| - |U_{\sigma}|$ as there are at most $|U_{\sigma}| - |V_{\sigma}|$ connected components of $\sigma$ which do not contain a vertex in $S$.
    \item Since $\sigma$ is a left shape and $S$ is a vertex separator, $|S| \geq |V_{\sigma}|$.
    \item Since $\sqrt{\frac{np}{1-p}} > 1$ and $\frac{k}{n}\sqrt{\frac{np}{1-p}} \leq 1$, 
    \begin{align*}
        &\left(\frac{k}{n}\right)^{|V(\sigma)| - \frac{|U_{\sigma}| + |V_{\sigma}| }{2}}n^{\frac{|V(\sigma) \setminus S|}{2}}\left(\sqrt{\frac{p}{1-p}}\right)^{|V(\sigma)| - |S| + |V_{\sigma}| - |U_{\sigma}|} \\
        &\leq \left(\sqrt{\frac{np}{1-p}}\right)^{|V_{\sigma}| - |S|}\left(\frac{k}{n}\sqrt{\frac{np}{1-p}}\right)^{|V(\sigma)| - |U_{\sigma}|}
        \left(\frac{k}{n}\right)^{\frac{|U_{\sigma}| - |V_{\sigma}|}{2}}n^{\frac{|U_{\sigma}| - |V_{\sigma}|}{2}} \\
        &\leq k^{\frac{|U_{\sigma}| - |V_{\sigma}|}{2}}
    \end{align*}
    Thus, if we take $w_j = \tilde{O}\left(k^{D_{SoS} - j}\right)$ then this equation will be satisfied for all $C'$-sparse $\sigma$ and $\sigma'$. Note that as long as $E_{mid}(\sigma) \leq V(\sigma) + CD_{SoS}$, the factors of $\frac{k}{n}\sqrt{\frac{np}{1-p}}$ are enough to handle $c(\sigma)$. 
    By analyzing their Frobenius norm, it can be shown that with high probability, the terms where $E_{mid}(\sigma) > V(\sigma) + CD_{SoS}$ or $E_{mid}(\sigma') > V(\sigma') + CD_{SoS}$ have norm $n^{-\Omega(CD_{SoS})}$ and are thus negligible.
\end{enumerate}
\end{proof}




 \anote{
 Here is the rough calculation. When $\sigma = \sigma'$ is this worst case shape, we'll want that $ab = w_i\left(\frac{k}{n}\right)^{2j-i}$ as the coefficient of $M_{\sigma}M_{\sigma}^T$ is $w_i\left(\frac{k}{n}\right)^{i + 2(j-i)} = w_i\left(\frac{k}{n}\right)^{2j-i}$. We'll want that ${a^2}n^{j-i} << w_j\left(\frac{k}{n}\right)^{j}$ and ${b^2}n^{j-i} << w_j\left(\frac{k}{n}\right)^{j}$ as $\norm{M_{\sigma}}^2 \approx n^{j-i}$ and the coefficient
 for $Id_{j}$ is $w_j\left(\frac{k}{n}\right)^{j}$.

 For this to work, we need that 
 \[
 (ab)^{2}n^{2(j-i)} = w_i^2\left(\frac{k}{n}\right)^{4j-2i}n^{2(j-i)} << w_j^{2}\left(\frac{k}{n}\right)^{2j}
 \]
 which implies that $w_i << \left(\frac{1}{k}\right)^{j-i}{w_j}$
 }

\section{Open Problems}
\label{sec:open-problems}
Several other problems on sparse graphs are conjectured to be hard for SoS 
and it is our hope that the techniques here can help prove that these problems are hard for SoS.
These problems include MaxCut, $k$-Coloring, and Densest-$k$-Subgraph.
For MaxCut in particular, since there are no constraints other than booleanity of
the variables it may be possible to truncate away dense shapes, which we could not
do here due to the presence of independent set indicator functions.

Another direction for further research is to handle random graphs which are not \Erdos-R{\'e}nyi.
Since the techniques here depend on graph matrix norms,
one would hope that they generalize to distributions such as $d$-regular graphs for which low-degree
polynomials are still concentrated. However, in the non-iid setting, it is not clear what the analogue of graph
matrices should be used due to the lack of a Fourier basis that is friendly to work with.

The polynomial constraint ``$\sum_{v \in V} x_v = k$'' is not satisfied exactly by our
pseudoexpectation operator. It's possible that techniques from \cite{Pang21}
can be used to fix this.

The parameters in this paper can likely be improved. One direction is to remove the final factor of $\log n$ from our bound. This would allow us to prove an SoS lower bound for the ``ultrasparse regime'' $d = O(1)$ rather than $d \geq \log^2 n$. This setting is interesting as there is a nontrivial algorithm that finds an independent set of half optimal size~\cite{GS17,RV17}. Furthermore, this algorithm is local in a sense that we don't define here. It would be extremely interesting if this algorithm could be converted into a rounding algorithm for constant-degree SoS.

Another direction is to improve the dependence on $\dsos$. While our bound has a $\frac{1}{poly(\dsos)}$ dependence on $\dsos$, we conjecture that the dependence should actually be $(1-p)^{O(\dsos)}$. If so, this would provide strong evidence for the prevailing wisdom in parameterized complexity and proof complexity that a maximum independent set of size $k$ requires $n^{\Omega(k)}$ time to find/certify (corresponding to SoS degree $\Omega(k)$).

\bibliographystyle{alphaurl}
\bibliography{macros,madhur}

\newcommand{\etalchar}[1]{$^{#1}$}
\begin{thebibliography}{KMOW17}

\bibitem[ABB19]{ABB19}
Enric~Boix Adser\`a, Matthew Brennan, and Guy Bresler.
\newblock The average-case complexity of counting cliques in
  {E}rd{ő}s-{R}\'{e}nyi hypergraphs.
\newblock In {\em 2019 {IEEE} 60th {A}nnual {S}ymposium on {F}oundations of
  {C}omputer {S}cience}, pages 1256--1280. IEEE Comput. Soc. Press, Los
  Alamitos, CA, [2019] \copyright 2019.

\bibitem[AMP20]{AMP20}
Kwangjun Ahn, Dhruv Medarametla, and Aaron Potechin.
\newblock Graph matrices: Norm bounds and applications.
\newblock abs/1604.03423, 2020.
\newblock URL: \url{https://arxiv.org/abs/1604.03423}, \href
  {http://arxiv.org/abs/1604.03423} {\path{arXiv:1604.03423}}.

\bibitem[BBK{\etalchar{+}}20]{BBKMW20}
Afonso~S. Bandeira, Jess Banks, Dmitriy Kunisky, Cristopher Moore, and
  Alexander~S. Wein.
\newblock Spectral planting and hardness of refuting cuts, colorability, and
  communities in random graphs.
\newblock {\em arXiv preprint arXiv:2008.12237}, 2020.

\bibitem[BHK{\etalchar{+}}16]{BHKKMP16}
B.~{Barak}, S.~B. {Hopkins}, J.~{Kelner}, P.~{Kothari}, A.~{Moitra}, and
  A.~{Potechin}.
\newblock A nearly tight sum-of-squares lower bound for the planted clique
  problem.
\newblock In {\em Proceedings of the 57th IEEE Symposium on Foundations of
  Computer Science}, pages 428--437, 2016.

\bibitem[BLM15]{BLM}
Charles Bordenave, Marc Lelarge, and Laurent Massoulié.
\newblock Non-backtracking spectrum of random graphs: Community detection and
  non-regular ramanujan graphs.
\newblock In {\em 2015 IEEE 56th Annual Symposium on Foundations of Computer
  Science}, pages 1347--1357, 2015.
\newblock \href {https://doi.org/10.1109/FOCS.2015.86}
  {\path{doi:10.1109/FOCS.2015.86}}.

\bibitem[Bor19]{Bor19}
Charles Bordenave.
\newblock A new proof of {F}riedman's second eigenvalue theorem and its
  extension to random lifts.
\newblock Technical Report 1502.04482v4, arXiv, 2019.
\newblock To appear in Annales scientifiques de l'\'{E}cole normale
  sup\'{e}rieure.

\bibitem[BRS11]{BRS11}
Boaz Barak, Prasad Raghavendra, and David Steurer.
\newblock Rounding semidefinite programming hierarchies via global correlation.
\newblock In {\em Proceedings of the 52nd IEEE Symposium on Foundations of
  Computer Science}, pages 472--481, 2011.

\bibitem[BS16]{BS16:sos-course}
Boaz Barak and David Steurer.
\newblock Proofs, beliefs, and algorithms through the lens of sum-of-squares.
\newblock {\em Course notes: http://www. sumofsquares. org/public/index. html},
  2016.

\bibitem[CG02]{CG02}
Fan Chung and Ronald Graham.
\newblock Sparse quasi-random graphs.
\newblock volume~22, pages 217--244. 2002.
\newblock Special issue: Paul Erd\H{o}s and his mathematics.
\newblock \href {https://doi.org/10.1007/s004930200010}
  {\path{doi:10.1007/s004930200010}}.

\bibitem[CM18]{DBLP:journals/corr/abs-1804-07842}
Eden Chlamt{\'{a}}c and Pasin Manurangsi.
\newblock Sherali-adams integrality gaps matching the log-density threshold.
\newblock {\em CoRR}, abs/1804.07842, 2018.
\newblock URL: \url{http://arxiv.org/abs/1804.07842}, \href
  {http://arxiv.org/abs/1804.07842} {\path{arXiv:1804.07842}}.

\bibitem[CO05]{CO05}
Amin Coja-Oghlan.
\newblock The {L}ov{\'a}sz number of random graphs.
\newblock {\em Combinatorics, Probability and Computing}, 14(4):439--465, 2005.

\bibitem[COE15]{COE15}
Amin Coja-Oghlan and Charilaos Efthymiou.
\newblock On independent sets in random graphs.
\newblock {\em Random Structures \& Algorithms}, 47(3):436--486, 2015.

\bibitem[DM11]{DM11}
Varsha Dani and Cristopher Moore.
\newblock Independent sets in random graphs from the weighted second moment
  method.
\newblock In {\em Approximation, randomization, and combinatorial
  optimization}, volume 6845 of {\em Lecture Notes in Comput. Sci.}, pages
  472--482. Springer, Heidelberg, 2011.
\newblock URL: \url{https://doi.org/10.1007/978-3-642-22935-0_40}, \href
  {https://doi.org/10.1007/978-3-642-22935-0\_40}
  {\path{doi:10.1007/978-3-642-22935-0\_40}}.

\bibitem[DMO{\etalchar{+}}19]{deshpande2019threshold}
Yash Deshpande, Andrea Montanari, Ryan O'Donnell, Tselil Schramm, and
  Subhabrata Sen.
\newblock The threshold for sdp-refutation of random regular nae-3sat.
\newblock In {\em Proceedings of the Thirtieth Annual ACM-SIAM Symposium on
  Discrete Algorithms}, pages 2305--2321. SIAM, 2019.

\bibitem[DSS16]{DSS16}
Jian Ding, Allan Sly, and Nike Sun.
\newblock Maximum independent sets on random regular graphs.
\newblock {\em Acta Math.}, 217(2):263--340, 2016.
\newblock \href {https://doi.org/10.1007/s11511-017-0145-9}
  {\path{doi:10.1007/s11511-017-0145-9}}.

\bibitem[FKP19]{FKP19:survey}
Noah Fleming, Pravesh Kothari, and Toniann Pitassi.
\newblock Semialgebraic proofs and efficient algorithm design.
\newblock {\em Foundations and TrendsÂ® in Theoretical Computer Science},
  14(1-2):1--221, 2019.
\newblock URL: \url{http://dx.doi.org/10.1561/0400000086}, \href
  {https://doi.org/10.1561/0400000086} {\path{doi:10.1561/0400000086}}.

\bibitem[FM17]{FM17}
Zhou Fan and Andrea Montanari.
\newblock How well do local algorithms solve semidefinite programs?
\newblock In {\em Proceedings of the 49th Annual ACM SIGACT Symposium on Theory
  of Computing}, pages 604--614. ACM, 2017.

\bibitem[FV18]{StatMechBook}
S.~Friedli and Y.~Velenik.
\newblock {\em Statistical mechanics of lattice systems}.
\newblock Cambridge University Press, Cambridge, 2018.

\bibitem[GJJ{\etalchar{+}}20]{GJJPR20}
Mrinalkanti Ghosh, Fernando~Granha Jeronimo, Chris Jones, Aaron Potechin, and
  Goutham Rajendran.
\newblock Sum-of-squares lower bounds for {S}herrington-{K}irkpatrick via
  planted affine planes.
\newblock In {\em 2020 {IEEE} 61st {A}nnual {S}ymposium on {F}oundations of
  {C}omputer {S}cience---{FOCS} 2020}, pages 954--965. IEEE Computer Soc., Los
  Alamitos, CA, [2020] \copyright 2020.

\bibitem[GM75]{GM75}
G.~R. Grimmett and C.~J.~H. McDiarmid.
\newblock On colouring random graphs.
\newblock {\em Math. Proc. Cambridge Philos. Soc.}, 77:313--324, 1975.
\newblock \href {https://doi.org/10.1017/S0305004100051124}
  {\path{doi:10.1017/S0305004100051124}}.

\bibitem[GS11]{GS11}
Venkatesan Guruswami and Ali~Kemal Sinop.
\newblock Lasserre hierarchy, higher eigenvalues, and approximation schemes for
  graph partitioning and quadratic integer programming with psd objectives.
\newblock In {\em FOCS}, pages 482--491, 2011.

\bibitem[GS17]{GS17}
David Gamarnik and Madhu Sudan.
\newblock Limits of local algorithms over sparse random graphs.
\newblock {\em Ann. Probab.}, 45(4):2353--2376, 2017.
\newblock \href {https://doi.org/10.1214/16-AOP1114}
  {\path{doi:10.1214/16-AOP1114}}.

\bibitem[HKP{\etalchar{+}}17]{hop17}
Samuel~B Hopkins, Pravesh~K Kothari, Aaron Potechin, Prasad Raghavendra, Tselil
  Schramm, and David Steurer.
\newblock The power of sum-of-squares for detecting hidden structures.
\newblock In {\em Proceedings of the 58th IEEE Symposium on Foundations of
  Computer Science}, pages 720--731. IEEE, 2017.

\bibitem[Hop18]{hop18}
Samuel Brink~Klevit Hopkins.
\newblock Statistical inference and the sum of squares method.
\newblock 2018.

\bibitem[HSS15]{HSS15}
Samuel~B Hopkins, Jonathan Shi, and David Steurer.
\newblock Tensor principal component analysis via sum-of-squares proofs.
\newblock In {\em Conference on Learning Theory}, pages 956--1006, 2015.

\bibitem[KMOW17]{KMOW17}
Pravesh Kothari, Ryuhei Mori, Ryan O'Donnell, and David Witmer.
\newblock Sum of squares lower bounds for refuting any {CSP}.
\newblock In {\em Proceedings of the 49th ACM Symposium on Theory of
  Computing}, 2017.

\bibitem[Kun20]{Kunisky20}
Dmitriy Kunisky.
\newblock Positivity-preserving extensions of sum-of-squares pseudomoments over
  the hypercube.
\newblock {\em arXiv preprint arXiv:2009.07269}, 2020.

\bibitem[MP16]{medarametla2016bounds}
Dhruv Medarametla and Aaron Potechin.
\newblock Bounds on the norms of uniform low degree graph matrices.
\newblock In {\em Approximation, Randomization, and Combinatorial Optimization.
  Algorithms and Techniques (APPROX/RANDOM 2016)}. Schloss
  Dagstuhl-Leibniz-Zentrum fuer Informatik, 2016.

\bibitem[MRX20]{MRX20}
Sidhanth Mohanty, Prasad Raghavendra, and Jeff Xu.
\newblock Lifting sum-of-squares lower bounds: degree-2 to degree-4.
\newblock In {\em Proceedings of the 52nd ACM Symposium on Theory of
  Computing}, pages 840--853, 2020.

\bibitem[MSS16]{MSS16}
Tengyu Ma, Jonathan Shi, and David Steurer.
\newblock Polynomial-time tensor decompositions with sum-of-squares.
\newblock In {\em Proceedings of the 57th IEEE Symposium on Foundations of
  Computer Science}, pages 438--446. IEEE, 2016.

\bibitem[O'D14]{ODonnell14}
Ryan O'Donnell.
\newblock {\em Analysis of {B}oolean functions}.
\newblock Cambridge University Press, New York, 2014.
\newblock \href {https://doi.org/10.1017/CBO9781139814782}
  {\path{doi:10.1017/CBO9781139814782}}.

\bibitem[Pan21]{Pang21}
Shuo Pang.
\newblock S{OS} lower bound for exact planted clique.
\newblock In {\em 36th {C}omputational {C}omplexity {C}onference}, volume 200
  of {\em LIPIcs. Leibniz Int. Proc. Inform.}, pages Art. 26, 63. Schloss
  Dagstuhl. Leibniz-Zent. Inform., Wadern, 2021.

\bibitem[PR20]{PR20}
Aaron Potechin and Goutham Rajendran.
\newblock Machinery for proving sum-of-squares lower bounds on certification
  problems.
\newblock abs/2011.04253, 2020.
\newblock URL: \url{https://arxiv.org/abs/2011.04253}, \href
  {http://arxiv.org/abs/2011.04253} {\path{arXiv:2011.04253}}.

\bibitem[RT20]{RT20}
Goutham Rajendran and Madhur Tulsiani.
\newblock Nonlinear concentration via matrix efron-stein.
\newblock Manuscript, 2020.

\bibitem[RV17]{RV17}
Mustazee Rahman and B\'{a}lint Vir\'{a}g.
\newblock Local algorithms for independent sets are half-optimal.
\newblock {\em Ann. Probab.}, 45(3):1543--1577, 2017.
\newblock \href {https://doi.org/10.1214/16-AOP1094}
  {\path{doi:10.1214/16-AOP1094}}.

\bibitem[RW17]{RW17:sos}
Prasad Raghavendra and Benjamin Weitz.
\newblock On the bit complexity of sum-of-squares proofs.
\newblock In {\em Proceedings of the 44th International Colloquium on Automata,
  Languages and Programming}. Schloss Dagstuhl-Leibniz-Zentrum fuer Informatik,
  2017.

\bibitem[Wei20]{wein2020optimal}
Alexander~S Wein.
\newblock Optimal low-degree hardness of maximum independent set.
\newblock {\em arXiv preprint arXiv:2010.06563}, 2020.

\bibitem[Wor95]{W95:indset}
Nicholas~C. Wormald.
\newblock Differential equations for random processes and random graphs.
\newblock {\em Ann. Appl. Probab.}, 5(4):1217--1235, 1995.
\newblock URL:
  \url{http://links.jstor.org/sici?sici=1050-5164(199511)5:4<1217:DEFRPA>2.0.CO;2-A&origin=MSN}.

\end{thebibliography}

\appendix

\section{Trace Power Calculations}\label{sec: graph_matrix_norm_bounds}
Let $\Omega$ be the distribution which is $\sqrt{\frac{1-p}{p}}$ with probability $p$ and $-\sqrt{\frac{p}{1-p}}$ with probability $1-p$.
\begin{proposition}\label{prop:powerexpectedvalue}
For all $k \in \mathbb{N}$ and all $p \leq \frac{1}{2}$, $\left|\E_{\Omega}[x^k]\right| \leq \left(\sqrt{\frac{1-p}{p}}\right)^{k-2}$ and $\left|\E_{\Omega}\left[{x^k}1_{x = -\sqrt{\frac{p}{1-p}}}\right]\right| \leq \left(\sqrt{\frac{1-p}{p}}\right)^{k-2}$.
\end{proposition}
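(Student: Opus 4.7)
The plan is to unfold both expectations directly in terms of the two atoms of $\Omega$ and then use the identity $\sqrt{(1-p)/p}\cdot\sqrt{p/(1-p)} = 1$ to rewrite each piece as a scalar multiple of $\sqrt{(1-p)/p}^{\,k-2}$. Write $a = \sqrt{(1-p)/p}$ and $b = \sqrt{p/(1-p)}$, so $ab = 1$, $pa^2 = 1-p$, and $(1-p)b^2 = p$, and note $a \geq 1 \geq b$ because $p \leq \tfrac{1}{2}$.

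For the truncated moment, $\E_\Omega[x^k 1_{x = -b}] = (1-p)(-b)^k$, so $|\E_\Omega[x^k 1_{x=-b}]| = (1-p) b^k = p\cdot b^{k-2}$, and since $b \leq a$ and $p \leq 1$, this is at most $a^{k-2}$ for $k \geq 1$; the borderline cases $k=1,2$ are routine direct checks from the identities above. For the full moment, $\E_\Omega[x^k] = p a^k + (-1)^k(1-p) b^k$, and by the triangle inequality this is bounded by $p a^k + (1-p)b^k = (1-p)a^{k-2} + p b^{k-2}$; bounding $b^{k-2} \leq a^{k-2}$ for $k \geq 2$ gives exactly $a^{k-2}$, and $k=1$ is trivial since the signed sum cancels to $0$.

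I do not expect any real obstacle here — it is a two-atom discrete computation and every inequality reduces to $b \leq 1 \leq a$ together with $pa^2 + (1-p)b^2 = 1$. The only thing to be careful about is the convention that $\mathbb{N}$ starts at $1$ (the statement genuinely fails for $k=0$ when $p$ is very small, since then $\E[x^0]=1$ exceeds $a^{-2} = p/(1-p)$), and one should verify the small cases $k = 1, 2$ explicitly rather than relying on the generic bound. No inductive or recursive structure is needed.
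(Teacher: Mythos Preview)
Your proposal is correct and follows essentially the same approach as the paper: both arguments unfold the two-atom expectation, use the identities $pa^2 = 1-p$ and $(1-p)b^2 = p$ to rewrite $pa^k + (1-p)(-b)^k$ as $(1-p)a^{k-2} + p(-b)^{k-2}$, and then bound using $b \le a$ (you apply the triangle inequality first, whereas the paper keeps track of the sign and shows the expression is in $[0,a^{k-2}]$, but this is a cosmetic difference). Your remark that $k=0$ fails and that $k=1$ needs a separate check matches the paper's treatment.
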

\begin{proof}
For the first part, observe that for $k = 1$, $E_{\Omega}[x^k] = 0$ and for $k \geq 2$,
\[
E_{\Omega}[x^k] = p\left(\sqrt{\frac{1-p}{p}}\right)^{k} + (1-p)\left(-\sqrt{\frac{p}{1-p}}\right)^{k} = 
(1-p)\left(\sqrt{\frac{1-p}{p}}\right)^{k-2} + p\left(-\sqrt{\frac{p}{1-p}}\right)^{k-2}
\]
and since $\sqrt{\frac{p}{1-p}} \leq \sqrt{\frac{1-p}{p}}$ and $1-p \geq p$, $0 \leq (1-p)\left(\sqrt{\frac{1-p}{p}}\right)^{k-2} + p\left(-\sqrt{\frac{p}{1-p}}\right)^{k-2} \leq \left(\sqrt{\frac{1-p}{p}}\right)^{k-2}$

For the second part, observe that for all $k \geq 1$
\[
\left|\E_{\Omega}\left[{x^k}1_{x = -\sqrt{\frac{p}{1-p}}}\right]\right| = (1-p)\left(\sqrt{\frac{p}{1-p}}\right)^{k} \leq \left(\sqrt{\frac{1-p}{p}}\right)^{k-2}
\]
\end{proof}




Before we embark upon the trace method calculation, it is necessary to note that throughout our PSD analysis, our graphical matrix for shape has rows/columns indexed by sets (as it is a sum of ribbons which have rows/columns indexed by set); however, for the trace method calculation, it is easier to work with graph matrices indexed by ordered tuples and then identify the change in trace method encoding that allows us to shift to set-indexed matrices.
\begin{definition}[Ordered-ribbon]
Given a ribbon $R$, let $\mathcal{O}$ be an ordering for vertices in $U_R\cup V_R$, we can define graphical matrix for an ordered-ribbon to have rows and columns indexed by subset of $[n]$  with a single non-zero entry\[ 
M^{(\text{ordered})}_{R,\mathcal{O} }[(I,\mathcal{O_I}), (J,\mathcal{O_J}) ] = \chi_{E(R)}(G)
\]
for $(I,\mathcal{O_I})=(A_R, \mathcal{O_{A_R}}), (J,\mathcal{O_J})=(B_R, \mathcal{O_{B_R}})$.
\end{definition}
It is immediate from the above definition that for $S,T\subseteq [n]$, ribbon can be obtained from ordered-ribbons by summing over ordering of the labeled vertices in $U_R\cup V_R$, \[ 
M_R[S,T] = \sum_{\mathcal{O}:\text{ordering for vertices in $U_R\cup V_R$} } M^{(\text{ordered})}_{R,\mathcal{O}} [(S,\mathcal{O_S}), (T,\mathcal{O_T}) ]
\]
\begin{definition}[Graphical matrix with (unsymmetrized) ordered tuples]
We analogously define graphical matrix indexed by ordered tuples by summing over ribbons with ordered tuples, for a shape $\al$,\[ 
M^{\text{(ordered)}}_\al = \sum_{\substack{R:\text{ribbon of shape }\al,\\ \mathcal{O}:\text{ordering}}} M^{\text{(ordered)}}_{(R,\mathcal{O}) }
\]
\end{definition}
\begin{definition}[Graphical matrix with sym-ordered tuples]
\[
M_\al^{(sym-ordered)}[(I,\mathcal{O_I}), (J, \mathcal{O_J})] = \sum_{\mathcal{O}:\text{ordering for $U_\al\cup V_\al$}} M^{\text{(ordered)}}_{R,\mathcal{O}}
\]
\end{definition}
\begin{remark}
The graphical matrix for shape indexed by sets is a principle submatrix of the the graphical matrix with sym-ordered tuples.
\end{remark}

We now proceed onto the analysis for graph matrices indexed by (unsymmetrized) ordered tuples.
\begin{definition}
Given a shape $\alpha$ and a separator $S$ between $U_{\alpha}$ and $V_{\alpha}$, we make the following definitions:
\begin{enumerate}
\item We define $E(S)$ to be the set of edges with both endpoints in $S$.
\item We define $L_S$ to be the set of vertices in $S$ which are reachable from $U_{\alpha}$ without passing through any other vertices in $S$.
\item We define $R_S$ to be the set of vertices in $S$ which are reachable from $V_{\alpha}$ without passing through any other vertices in $S$.
\item Define $c(S)$ to be the number of connected components of $\alpha \setminus S$ which are not reachable from $U_{\alpha}$ or $V_{\alpha}$.
\end{enumerate}
\end{definition}
\begin{theorem}\label{thm:tracepowercalculations}
For any proper shape $\alpha$ (including shapes with missing edge indicators) and any $q \in \mathbb{N}$, letting $M_{\alpha}$ be the graph matrix where the rows and columns are indexed by ordered tuples rather than sets,
\begin{align*}
&\E\left[\tr\left(\left(M_{\alpha}M_{\alpha}^\T\right)^q\right)\right] \leq 4^{2q\left(|V(\alpha)| - \frac{|U_{\alpha}| + |V(\alpha)|}{2}\right)}\left(\sqrt{|V(\alpha)|}\right)^{2q|V(\alpha) \setminus (U_{\alpha} \cup V_{\alpha})|}n^{q|V(\alpha)|} \cdot \\
&\left(\max_{\text{separator } S}{n^{-\frac{|S|}{2}}\left(3\sqrt{\frac{1-p}{p}}\right)^{|E(S)|}\left(\sqrt{|V(\alpha)|}\right)^{|S| - |U_{\alpha} \cap V_{\alpha}|}(2q)^{|S|-\frac{|L_S| + |R_S|}{2} + \frac{c(\alpha,S)}{2}}}\right)^{2q-2}
\end{align*}
\end{theorem}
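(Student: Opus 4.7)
My plan is to follow the standard trace-method framework for graph matrices from~\cite{BHKKMP16, AMP20}, modified to incorporate the weaker moment bounds for $p$-biased Fourier characters that are recorded in Proposition~A.1. First, I would expand $\tr((M_\al M_\al^\T)^q)$ by linearity as a sum over ordered tuples of ribbons $R_1, R_2, \ldots, R_{2q}$, where odd-indexed ribbons have shape $\al$ and even-indexed ones have shape $\al^\T$, subject to the cyclic composability constraint $B_{R_i} = A_{R_{i+1}}$. Each such tuple may be encoded as a labeling into $[n]$ of the $2q$-fold ``concatenation'' graph obtained by gluing $2q$ copies of $\al$ along their boundaries. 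Taking the expectation over $G \sim G_{n,p}$ factors across edges of $G$; since $\E[\chi(0)] = 0$, any term in which some edge of $G$ is covered an odd number of times vanishes, so I restrict attention to labelings in which every used edge of $G$ has multiplicity at least $2$.

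Second, I would organize the non-vanishing labelings through a canonical encoding: process the $2q$ blocks in order, and within each block run a BFS/DFS that reveals vertices of the current copy of $\al$ in a fixed order. At each step, record whether the current vertex/edge is ``new'' (not previously seen in the walk) or ``revisited,'' and in the latter case specify which prior vertex/edge it coincides with. New vertices contribute a factor of $n$ each, giving the baseline $n^{q|V(\al)|}$; revisited vertices cost at most $2q \cdot |V(\al)|$ combinatorial choices each, generating the $(\sqrt{|V(\al)|})^{2q|V(\al)\setminus(U_\al\cup V_\al)|}$ and constant-power combinatorial factors in the stated bound. Edges of multiplicity exactly $2$ contribute $1$ to the expectation, while edges of multiplicity $k \geq 2$ contribute at most $(\sqrt{(1-p)/p})^{k-2}$ by Proposition~A.1; this is the only place where the sparsity of $G$ enters nontrivially.

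Third, I would invoke the standard structural observation that for the expectation to be non-vanishing, at every boundary between consecutive blocks the already-labeled ``boundary vertex set'' must contain a vertex separator $S$ of $\al$ between $U_\al$ and $V_\al$, since otherwise connectivity of $\al$ would force so many revisits that the vertex count is already controlled by a smaller separator. This lets me parametrize the sum by a separator $S$ and then, for each such $S$, count the contributions by (i) the $2q-2$ ``interior'' boundary crossings, each of which forces $|S|$ already-labeled vertices to reappear and thereby contributes an $n^{-|S|/2}$ discount relative to the naive $n^{q|V(\al)|}$; (ii) the combinatorial freedom in the matching, bounded by $(2q)^{|S|-(|L_S|+|R_S|)/2 + c(\al,S)/2}$, where the correction accounts for separator vertices that are forced by the BFS order starting from $U_\al$ or $V_\al$ and for isolated components of $\al \setminus S$; (iii) the edge-moment contributions on $E(S)$, where each edge inside the separator is traversed $2q-1$ times across the blocks touching it and therefore contributes $(\sqrt{(1-p)/p})^{2q-3}$ to the expectation, yielding the factor $(3\sqrt{(1-p)/p})^{|E(S)|(2q-2)}$ after absorbing the $O(1)$ per-edge slack into the constant $3$.

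The main technical obstacle is the last step of this combinatorial accounting: namely, tightly charging the separator-induced savings while simultaneously tracking the multiplicity profile of edges in $E(S)$ across all $2q$ blocks, so that the exponent $2q-2$ is correctly obtained (rather than $2q-1$ or similar) and the combinatorial coefficients $|L_S|, |R_S|, c(\al,S)$ appear exactly as in the statement. Once this bookkeeping is done block by block and the resulting expression is written as a telescoping product, the overall bound follows by maximizing over separators $S$ and factoring out the initial and terminal blocks (which accounts for the exponent being $2q-2$ rather than $2q$). The final step is the routine observation that the per-entry bound extends from the ordered-tuple-indexed matrix to the set-indexed one (as a principal submatrix), which is already noted just before the theorem statement.
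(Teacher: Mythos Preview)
Your high-level framework matches the paper's: expand the trace as a sum over $2q$ composable ribbons, kill terms with singleton edges, and encode the surviving labelings block by block with a per-block separator governing the vertex and edge accounting. However, two of your ingredients differ from what the paper actually does, and one of them is a genuine gap.

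The genuine gap is how you obtain the exponent $|S| - \frac{|L_S|+|R_S|}{2}$ on the factor $2q$. You attribute this to ``separator vertices that are forced by the BFS order starting from $U_\alpha$ or $V_\alpha$,'' but a single BFS from one side can only credit $L_S$ (or only $R_S$), not their average. The paper's mechanism is different: after fixing the constraint pattern (which vertices have a constraint edge to the left and which to the right, producing the $4^{2q(|V(\alpha)| - (|U_\alpha|+|V_\alpha|)/2)}$ factor up front), it runs the vertex-identification scheme \emph{twice}, once left-to-right and once right-to-left, obtaining two bounds that differ precisely in whether $L_S$ or $R_S$ gets the cheap treatment. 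Taking the geometric mean of the two bounds is what yields $\frac{|L_S|+|R_S|}{2}$ (and likewise the $\frac{c(\alpha,S)}{2}$ for components disconnected from both sides). Without this bidirectional averaging trick, your encoding would give the weaker exponent $|S| - \min(|L_S|,|R_S|)$, which does not match the statement.

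A smaller point: the constant $3$ on $\sqrt{(1-p)/p}$ is not ``absorbed $O(1)$ slack'' but comes from recording, for each edge in $E(S_j)$, whether that copy is its first appearance, its last appearance, or a middle appearance; and the edge-moment contribution $\prod_j (\sqrt{(1-p)/p})^{|E(S_j)|}$ arises because an edge is a middle appearance only when both endpoints lie in that block's separator, not because the edge is ``traversed $2q-1$ times.'' Finally, note that the paper constructs a (possibly different) separator $S_j$ for each of the $2q-2$ interior blocks from the constraint pattern, and only at the end bounds each block's contribution by the maximum over separators; your proposal parametrizes by a single global $S$, which is fine for the final bound but obscures where the per-block separator actually comes from.
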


\begin{proof}
We use the idea from Appendix B of \cite{AMP20}. Given a constraint graph on $V(\alpha,2q)$, for each vertex in $V(\alpha,2q)$ which is not a copy of a vertex in $U_{\alpha} \cap V_{\alpha}$, we note whether it has a constraint edge to the left and whether it has a constraint edge to the right. Note that there are at most $4^{2q\left(|V(\alpha)| - \frac{|U_{\alpha}| + |V(\alpha)|}{2}\right)}$ possibilities for this (note that vertices in $(U_{\alpha} \cup V_{\alpha}) \setminus (U_{\alpha} \cap V_{\alpha})$ only have $q$ copies in $V(\alpha,2q)$). For each copy of $\alpha$, we then construct a separator $S$ as follows:
\begin{enumerate}
\item We include any vertex which has both a constraint edge to the left and a constraint edge to the right in $S$.
\item We include any vertex in $U_{\alpha}$ which has a constraint edge to the right in $S$.
\item We include any vertex in $V_{\alpha}$ which has a constraint edge to the left in $S$.
\item We include all vertices in $U_{\alpha} \cap V_{\alpha}$.
\end{enumerate}
We choose separators for the copies of $\alpha^{T}$ in the same way. The theorem follows from the following lemma.
\begin{lemma}
Given any separators $S_3, S_5, \ldots, S_{2q-1}$ for the copies of $\alpha$ (except the first one) and any separators $S_2,S_4, \ldots,S_{2q-2}$ for the copies of $\alpha^{\T}$, there are at most 
\[
\left(\sqrt{|V(\alpha)|}\right)^{2q|V(\alpha) \setminus (U_{\alpha} \cup V_{\alpha})|}n^{q|V(\alpha)|}\left(\prod_{j=2}^{2q-1}{n^{-\frac{|S_j|}{2}}3^{|E(S_j)|}\left(\sqrt{|V(\alpha)|}\right)^{|S_j| - |U_{\alpha} \cap V_{\alpha}|}(2q)^{|S_j|-\frac{|L_{S_j}| + |R_{S_j}|}{2} + \frac{c(S_j)}{2}}}\right)
\]
terms in $\E\left[\tr\left(\left(M_{\alpha}M_{\alpha}^\T\right)^q\right)\right]$ which have these separators and have nonzero expected value and for each of these terms the expected value is at most $\prod_{j=2}^{2q-1}{\left(\sqrt{\frac{1-p}{p}}\right)^{|E(S_j)|}}$.
\end{lemma}
\begin{proof}
To see that the expected value is at most $\prod_{j=2}^{2q-1}{\left(\sqrt{\frac{1-p}{p}}\right)^{|E(S_j)|}}$, observe that by Proposition \ref{prop:powerexpectedvalue}, the expected value of a term is at most 
\[
\prod_{e}\left(\sqrt{\frac{1-p}{p}}\right)^{(\text{multiplicity of } e) - 2}
\]
and this holds even if $\alpha$ contains missing edge indicators. We can view this as giving each edge which is making a middle appearance (i.e. it has appeared before and will appear again) a factor of $\left(\sqrt{\frac{1-p}{p}}\right)$ and giving edges which are appearing for the first or last time a factor of $1$ (if an edge appears exactly once, it would instead get a factor of $\sqrt{\frac{p}{1-p}}$, but this only helps us). Since edges must be appearing for the first or last time unless both endpoints are in the separator, the result follows. 

To count the number of terms, for each copy of $\alpha$, we can go through the copies of $\alpha$ and $\alpha^{\T}$ and identify their vertices one by one. We give each vertex in $U_{\alpha} \cap V_{\alpha}$ a label in $[n]$ and we only need to do this once. For the other vertices:
\begin{enumerate}
\item If the vertex $v$ has no constraint edge to the left and is thus appearing for the first time, we give $v$ a label in $[n]$.
\item If the vertex $v$ is adjacent to a vertex $u$ which has been identified and has no constraint edge to the right, since $u$ is appearing for the last time, there are $deg(u) \leq |V(\alpha)|$ edges incident to $u$ which have not appeared for the last time before this copy of $\alpha$ or $\alpha^{\T}$. To identify $v$, it is sufficient to identify which of these edges is $(u,v)$.
\item Otherwise, we can pay a factor of $2q|V(\alpha)|$ to identify the vertex as it has appeared before.
\end{enumerate}
For each edge in $E(S)$, we note whether the edge is appearing for the first time, appearing for the last time, or is making a middle appearance. For edges which are not in $E(S)$, one of its endpoints will not have a constraint edge to the left or a constraint edge to the right and this determines whether this is the first or last time the edge appears.

We now consider the factors required by each vertex in the $\frac{j+1}{2}$-th copy of $\alpha$ (which is not a copy of a vertex in $U_{\alpha} \cap V_{\alpha}$) for $j = 3,5,\ldots,2q-1$. Similar reasoning applies to copies of $\alpha^{T}$.
\begin{enumerate}
\item Vertices in $U_{\alpha}$ are already identified.
\item For vertices which are not in $U_{\alpha}$ but are reachable from $U_{\alpha}$ without passing through $S_j$ (including vertices in $L_{S_j}$), when we want to identify these vertices, they will have a neighbor which is already identified and is appearing for the last time. Thus, each of these vertices requires a factor of at most $|V(\alpha)|$.
\item For vertices which are reachable from $V_{\alpha}$ without passing through $S_j$ (and which are not in $S_j$ themselves), these vertices have no constraint edge to the left, so this is the first time these vertices appear. Thus, each of these vertices requires a factor of $n$.
\item For vertices in $S_j \setminus L_{S_j}$, these vertices have appeared before but are not adjacent to a vertex which has been identified and has no constraint edge to the right, so we pay a factor of $2q|V(\alpha)|$ for these vertices
\item For a component of $V(\alpha) \setminus S_j$ which is not reachable from $U_{\alpha}$ or $V_{\alpha}$, all of the vertices in this component will either have no constraint edges to the left or no constraint edges to the right. If the vertices in this component have no constraint edges to the left, they are appearing for the first time and require a factor of $n$.  If the vertices in this component have no constraint edges to the right, we can pay a factor of $2q|V(\alpha)|$ for the first vertex in this component and a factor of $|V(\alpha)|$ for all of the other vertices in this component.

\end{enumerate}
We can also try identifying vertices by going from the right to the left. Following similar logic, this gives us the following factors.
\begin{enumerate}
\item Vertices in $V_{\alpha}$ are already identified.
\item For vertices which are not in $V_{\alpha}$ but are reachable from $V_{\alpha}$ without passing through $S_j$ (including vertices in $R_{S_j}$), we need a factor of at most $|V(\alpha)|$.
\item For vertices which are reachable from $U_{\alpha}$ without passing through $S_j$ (and which are not in $S_j$ themselves), we need a factor of $n$.
\item For vertices in $S_j \setminus R_{S_j}$, we need a factor of $2q|V(\alpha)|$.
\item For a component of $V(\alpha) \setminus S_j$ which is not reachable from $U_{\alpha}$ or $V_{\alpha}$, all of the vertices in this component will either have no constraint edges to the left or no constraint edges to the right. If the vertices in this component have no constraint edges to the right, they are appearing for the first time (when going from the right to the left) and require a factor of $n$.  If the vertices in this component have no constraint edges to the left, we can pay a factor of $2q|V(\alpha)|$ for the first vertex in this component and a factor of $|V(\alpha)|$ for all of the other vertices in this component.

\end{enumerate}
We can choose whichever direction gives a smaller bounds. Since we don't know which direction this will be, we take the square root of the product of the two bounds. We now consider how many factors of $n$, $|V(\alpha)|$, and $2q$ this gives us. For $S_2,\ldots,S_{2q-1}$, 
\begin{enumerate}
\item Factors of $n$: Each vertex in $S_j$ does not give us a factor of $n$. Each vertex outside of $S_j$ gives a factor of $n$ for exactly one of the two directions.
\item Factors of $|V(\alpha)|$: Vertices which are copies of vertices in $U_{\alpha} \cap V_{\alpha}$ do not need any factors of $|V(\alpha)|$. For the other vertices, we have the following cases. For each vertex $v \in S_j$ ,
\begin{enumerate}
    \item If $v \notin U_{\alpha} \cup V_{\alpha}$ then $v$ requires a factor of $|V(\alpha)|$ in both directions.
    \item If $v \in U_{\alpha} \setminus V_{\alpha}$ or $v \in V_{\alpha} \setminus U_{\alpha}$ then $v$ requires a factor of $|V(\alpha)|$ in one of the two directions.
\end{enumerate}
For each vertex $v \notin S_j$, 
\begin{enumerate}
    \item If $v \notin U_{\alpha} \cup V_{\alpha}$ then $v$ requires a factor of $|V(\alpha)|$ in one of the two  directions.
    \item If $v \in U_{\alpha} \setminus V_{\alpha}$ or $v \in V_{\alpha} \setminus U_{\alpha}$ then $v$ does not require a factor of $|V(\alpha)|$ as in one direction it will already be specified and in the other direction it will be a new vertex and will thus require a factor of $n$ rather than $|V(\alpha)|$.
\end{enumerate}
\item Factors of $2q$: Vertices in $S_j \setminus (L_{S_j} \cup R_{S_j})$ require a factor of $q$ in both directions. Vertices in $L_{S_j} \setminus R_{S_j}$ and $L_{S_j} \setminus R_{S_j}$ require a factor of $q$ for exactly one of the two directions. Vertices in $L_{S_j} \cap R_{S_j}$ do not require any factors of $q$. Each component of $V(\alpha) \setminus S_j$ which is not reachable from $U_{\alpha}$ or $V_{\alpha}$ requires a factor of $q$ for one of its vertices for exactly one of the two directions.

\end{enumerate}

For the first copy of $\alpha$, we need a factor of $n$ for each vertex as it is the first time all of the vertices appear (this also accounts for the vertices in $U_{\alpha} \cap V_{\alpha}$). For the final copy of $\alpha^T$, the vertices which are copies of vertices in $U_{\alpha} \cup V_{\alpha}$ have already been identified and we need a factor of at most $|V(\alpha)|$ to specify each vertex which is not a copy of a vertex in $U_{\alpha} \cup V_{\alpha}$ as each such vertex will be adjacent to a vertex which has been identified and has no constraint edge to the right.

Putting everything together gives the result.
\end{proof}
\end{proof}
\section{Reserving edges to preserve left/middle shapes}\label{sec:reserved}

Given a shape $\al$, and a subset of edges $\res(\al)$ that we are to reserve, we will only apply conditioning on edges $E(\al)\setminus \res(\al)$.

\begin{proposition}\label{prop:middle-reservation}
    Given any middle shape $\tau$ such that all vertices have a path to $U_\tau \cup V_\tau$, there is a set of edges $\res(\tau)$ of size at most $\abs{V(\tau)\setminus (U_\tau\cap V_\tau)}$ such that every vertex
    has a path to $U_\tau \cup V_\tau$ in $\res(\tau)$, and removing any subset of the edges $E(\tau)\setminus \res(\tau)$ does not change the 
    size of the minimum vertex separator of the shape.
\begin{proof}
    Select $\abs{U_\tau}$ vertex-disjoint paths between $U_\tau$ and $V_\tau$.
    This is enough to ensure that the minimum vertex separator will not decrease.
    To ensure that every vertex has a path to $U_\tau \cup V_\tau$, 
    also add in edges connecting to these paths until all vertices are connected.
\end{proof}
\end{proposition}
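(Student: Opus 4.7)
The plan is to construct $\res(\tau)$ in two stages: first, reserve a set of edges witnessing a Menger certificate for the minimum vertex separator, and then greedily extend to reach the remaining vertices while staying within the budget.

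For the first stage, observe that since $\tau$ is a middle shape, $U_\tau$ is itself a minimum vertex separator between $U_\tau$ and $V_\tau$, so the min-cut has size $|U_\tau| = |V_\tau|$. By the vertex version of Menger's theorem, there exist $|U_\tau|$ internally vertex-disjoint paths from $U_\tau$ to $V_\tau$, where each vertex in $U_\tau \cap V_\tau$ contributes a trivial length-$0$ path. Reserve the edges on all these paths; call this set $E_P$ and let $V_P^\ast$ be the vertex set of the nontrivial paths (so $V_P^\ast$ contains $U_\tau \setminus V_\tau$ and $V_\tau \setminus U_\tau$ but is disjoint from $U_\tau \cap V_\tau$). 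Since each nontrivial path on $\ell$ vertices uses $\ell - 1$ edges and the paths are vertex-disjoint, a direct count gives $|E_P| = |V_P^\ast| - |U_\tau \setminus V_\tau|$.

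For the second stage, every vertex in $V_P^\ast \cup (U_\tau \cap V_\tau)$ is already connected to $U_\tau \cup V_\tau$ using edges in $E_P$. For each remaining vertex $v \in V(\tau) \setminus (V_P^\ast \cup (U_\tau \cap V_\tau))$, grow a BFS spanning forest rooted at $U_\tau \cup V_\tau$ inside the original graph $\tau$; such paths exist by the hypothesis that every vertex of $\tau$ is connected to $U_\tau \cup V_\tau$. Each newly reached vertex contributes exactly one tree edge, so this stage adds at most $|V(\tau)| - |V_P^\ast| - |U_\tau \cap V_\tau| = |V(\tau) \setminus (U_\tau \cap V_\tau)| - |V_P^\ast|$ edges. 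Combining the two stages,
\[
|\res(\tau)| \;\le\; (|V_P^\ast| - |U_\tau \setminus V_\tau|) + (|V(\tau) \setminus (U_\tau \cap V_\tau)| - |V_P^\ast|) \;=\; |V(\tau) \setminus (U_\tau \cap V_\tau)| - |U_\tau \setminus V_\tau|,
\]
which is at most $|V(\tau) \setminus (U_\tau \cap V_\tau)|$ as required, and every vertex has a path to $U_\tau \cup V_\tau$ through reserved edges by construction.

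Finally, to verify that deleting any subset of $E(\tau) \setminus \res(\tau)$ preserves the MVS size: the resulting subgraph still contains the $|U_\tau|$ vertex-disjoint $U_\tau$-to-$V_\tau$ paths we reserved, so by Menger's theorem the minimum vertex cut between $U_\tau$ and $V_\tau$ is at least $|U_\tau|$; and $U_\tau$ itself is still a vertex separator, giving an upper bound of $|U_\tau|$. Hence the MVS stays at exactly $|U_\tau|$. The argument is largely a bookkeeping exercise once one invokes Menger; the only mild subtlety is the treatment of $U_\tau \cap V_\tau$, where some Menger paths degenerate to isolated vertices and contribute no edges, which is precisely what makes the counting tight.
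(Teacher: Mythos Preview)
Your proof is correct and follows essentially the same approach as the paper's: reserve $|U_\tau|$ vertex-disjoint Menger paths to pin down the MVS, then extend by a spanning forest to reach the remaining vertices. The paper's proof is a two-sentence sketch; you have spelled out the edge count explicitly (and in fact obtain the slightly sharper bound $|V(\tau)\setminus(U_\tau\cap V_\tau)| - |U_\tau\setminus V_\tau|$). One minor wording slip: in the second stage your forest should be rooted at the already-reached set $V_P^\ast \cup (U_\tau\cap V_\tau)$ rather than at $U_\tau\cup V_\tau$, which is what your count actually reflects.
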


\begin{proposition}
For a middle intersection ${\gam, \tau, \gam'}$ we can reserve at most $|V(G) \setminus (U_\gam \cap V_\gam)|$ edges in $\gam$ so that the reserved ribbons are still a middle intersection.
\end{proposition}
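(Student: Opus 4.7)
The plan is to mirror the proof of \cref{prop:middle-reservation}, replacing the role of $V_\tau$ by the set $V_\gam \cup \Int(P)$, since the middle intersection condition in \cref{def:middle-intersection} demands that $U_\gam$ be a minimum vertex separator in $\gam$ between $U_\gam$ and $V_\gam \cup \Int(P)$. First I would invoke Menger's theorem on $\gam$ augmented with a super-source attached to $U_\gam$ and a super-sink attached to $V_\gam \cup \Int(P)$: since $U_\gam$ is itself a minimum separator, there exist $|U_\gam|$ internally vertex-disjoint paths from $U_\gam$ to $V_\gam \cup \Int(P)$, one emanating from each vertex of $U_\gam$, and I would reserve every edge appearing on these paths.

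Next, because $\gam$ is required to be a left shape, every vertex of $\gam$ must remain reachable from $U_\gam$ through reserved edges after conditioning. To achieve this I would grow a breadth-first spanning forest from the set of vertices already reached by the Menger paths, adding exactly one reserved edge per newly discovered vertex until every vertex of $V(\gam)$ lies in the reserved subgraph. Orienting each reserved edge away from $U_\gam$ along the Menger paths, and away from the previously reserved side along the BFS step, assigns each reserved edge a distinct head in $V(\gam) \setminus U_\gam$. Using $U_\gam \cap V_\gam \subseteq U_\gam$, the total count is therefore at most $|V(\gam) \setminus U_\gam| \leq |V(\gam) \setminus (U_\gam \cap V_\gam)|$, matching the budget in the statement.

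It then remains to check that the middle-intersection property survives the removal of any subset $E^- \subseteq E(\gam) \setminus \res(\gam)$ of non-reserved edges. Since edge deletion cannot increase connectivity, $U_\gam$ is still a separator of $U_\gam$ from $V_\gam \cup \Int(P)$; the $|U_\gam|$ Menger paths are preserved because their edges are reserved, so by the converse direction of Menger's theorem $U_\gam$ remains a \emph{minimum} such separator. The reserved BFS forest keeps every vertex connected to $U_\gam$, so $\gam$ remains a valid left shape. The shapes $\tau$ and $\gam'$ and hence $\Int(P)$ are unaffected, so $P$ is still an intersection pattern. Applying the same construction symmetrically to $\gam'$ handles the $V_{\gam'}$-side condition in \cref{def:middle-intersection}, giving the full middle-intersection property.

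The main obstacle will be bookkeeping in the counting step: one must ensure that a BFS edge never claims as its head a vertex already saturated by a Menger path endpoint, so that the orientation stays injective on $V(\gam) \setminus U_\gam$. This is resolved by running the BFS only over vertices not yet covered by the Menger paths, and by noting that endpoints lying in $U_\gam \cap V_\gam$ are path starts rather than heads, so they never consume a unit of the budget. No other subtlety arises, since $\Int(P)$ enters the argument only as a designated target set for the Menger paths and is otherwise treated identically to $V_\tau$ in the proof of \cref{prop:middle-reservation}.
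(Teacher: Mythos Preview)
Your approach is essentially the paper's: the paper's entire proof is the single line ``Select $|U_\gam|$ vertex-disjoint paths between $U_\gam$ and $\Int(\gam) \cup V_\gam$,'' and your Menger step is exactly this. Your added BFS step and injective head-counting are correct and make explicit what the paper leaves tacit (compare the paper's proof of \cref{prop:middle-reservation}, which does include the connectivity step).

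One overclaim to retract: the sentence ``so $\gam$ remains a valid left shape'' is not justified---connectivity to $U_\gam$ does not by itself force $V_\gam$ to remain the \emph{unique} MVS, and indeed the paper's separate \cref{lem: conditioning_left} spends $2|V(\gam)|$ edges to secure that stronger property. The paper's proof of the present proposition only targets the $U_\gam$--separator condition of \cref{def:middle-intersection}, which your Menger paths already deliver, so simply drop the left-shape assertion and your argument goes through.
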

\begin{proof}
    Select $|U_\gam|$ vertex-disjoint paths between $U_\gam$ and $\Int(\gam) \cup V_\gam$.
\end{proof}

    \begin{proposition}\label{lem: conditioning_left}
For a left shape $\gamma$, there is a set $\res(\gam) \subseteq E(\gamma)$ of edges such that $|\res(\gam)| \leq 2|V(\gamma)|$ and reserving $\res(\gam)$ is sufficient to guarantee $\gamma$ remains a left shape under removal of any subset of edges in $E(\gamma)\setminus \res(\gam)$.
\end{proposition}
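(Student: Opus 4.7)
The plan is to construct $\res(\gamma)$ as a union of two sub-structures, mirroring the approach in \cref{prop:middle-reservation}. The first part certifies that the minimum vertex separator $V_\gamma$ stays at the correct size, and the second part certifies that every vertex stays connected to $U_\gamma$. Concretely, by Menger's theorem applied to $\gamma$ (where $V_\gamma$ is an MVS of size $|V_\gamma|$), there exist $|V_\gamma|$ internally vertex-disjoint paths from $U_\gamma$ to $V_\gamma$ ending at distinct vertices of $V_\gamma$. I would reserve every edge on these paths, consuming at most $|V(\gamma)|$ edges by vertex-disjointness. Then I would reserve a BFS tree rooted at $V_\gamma$, traversing edges backwards, which spans all of $V(\gamma)$ and uses at most $|V(\gamma)| - |V_\gamma|$ additional edges, for a total of at most $2|V(\gamma)|$.

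Before this works, I must verify that every vertex of $\gamma$ can actually reach $V_\gamma$ in $\gamma$, so that the backwards BFS tree is a true spanning tree. This follows from the defining axioms of a left shape: every vertex is connected to $U_\gamma$, and if some vertex $u$ were in a connected component of $\gamma$ disjoint from $V_\gamma$, then $u$ would be ``free'' in the sense that we could exchange $u$ for any $v \in V_\gamma$ in the separator, producing a second MVS of the same size and contradicting uniqueness of $V_\gamma$ as the MVS.

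Next I would check that after removing any subset of edges outside $\res(\gamma)$, the defining conditions of a left shape are preserved. Properness and the absence of edges inside $V_\gamma$ carry over trivially under edge deletion. Connectivity of every vertex to $U_\gamma$ follows by concatenating a path in the reserved backwards BFS tree from that vertex to $V_\gamma$ with one of the reserved vertex-disjoint paths from $V_\gamma$ to $U_\gamma$. The MVS size remains at least $|V_\gamma|$ because the $|V_\gamma|$ reserved vertex-disjoint paths are still intact and any separator must hit each one, while $V_\gamma$ still trivially separates, giving MVS size exactly $|V_\gamma|$.

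The hard part, which I expect to be the main obstacle, is proving that $V_\gamma$ remains the \emph{unique} MVS in the reduced shape, since deleting edges can in principle create new MVSs that did not exist in $\gamma$. To handle this I would choose the $|V_\gamma|$ vertex-disjoint paths as the ``rightmost'' maximum flow from $U_\gamma$ to $V_\gamma$ so that each path meets $V_\gamma$ only at its terminal endpoint, and I would arrange the backwards BFS tree so that the unique tree-path from each $u \in U_\gamma$ lands in $V_\gamma$ only at its terminal step. A careful interleaving argument, leveraging the max-flow/min-cut structure and the original uniqueness of $V_\gamma$ as the MVS in $\gamma$, then forces any alternative separator $S$ of size $|V_\gamma|$ in the reduced shape to coincide with $V_\gamma$. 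Nailing down this combinatorial case analysis is where most of the work of the proof sits.
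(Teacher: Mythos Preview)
Your approach has a genuine gap: reserving $|V_\gamma|$ vertex-disjoint $U_\gamma$--$V_\gamma$ paths together with a spanning tree rooted at $V_\gamma$ does \emph{not} suffice to keep $V_\gamma$ the \emph{unique} MVS, even with your ``rightmost flow'' and ``careful BFS'' provisos. Concretely, take $U_\gamma=\{u_1,u_2,u_3\}$, $V_\gamma=\{v_1,v_2\}$, and edges $u_1v_1,\,u_2v_2,\,u_3v_1,\,u_3v_2$; one checks $\{v_1,v_2\}$ is the unique MVS, so this is a left shape. Any choice of two vertex-disjoint paths uses two of the four edges, and a BFS tree from $V_\gamma$ contributes three edges; since the two path edges necessarily lie in the tree, your reserved set has exactly three edges and omits one, say $u_3v_2$. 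After deleting it, $\{v_1,u_2\}$ is a size-$2$ separator distinct from $V_\gamma$, so uniqueness fails. The ``interleaving argument'' you gesture at cannot save this construction, because the reserved subgraph simply lacks the connectivity needed to rule out the new cut.

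The paper's proof avoids this by first proving a flow characterization of the left-shape property: $\gamma$ is a left shape iff for every $v\in V_\gamma$, after duplicating $v$ there are $|V_\gamma|+1$ vertex-disjoint $U_\gamma$--$(V_\gamma\cup\{v_{\text{dup}}\})$ paths. This equivalence encodes uniqueness directly. One then fixes a max flow of value $|V_\gamma|$ in the standard vertex-capacitated network and runs a single BFS/DFS from the source in the \emph{residual} graph; since each $v_{in}$ is reachable there (else augmenting to $v_{\text{dup}}$ would be impossible), the explored residual edges, together with the $|V_\gamma|$ flow paths, certify the $|V_\gamma|+1$ disjoint paths for every $v$ simultaneously. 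In your counterexample this residual exploration picks up both $u_3v_1$ and $u_3v_2$, reserving all four edges. So the missing idea is not a cleverer ordering of paths or tree, but the residual-graph witness for the stronger ``duplicated-vertex'' connectivity condition.
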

\begin{proof}
Observe that if $\gamma$ is a left part, for each vertex $v \in V_{\gamma}$, if we add another copy of $v$ (duplicating all of the edges incident to $v$) to $V_{\gamma}$ (and $U_{\gamma}$ if $v \in U_{\gamma} \cap V_{\gamma}$), then there are $|V_{\gamma}| + 1$ vertex disjoint paths 
from $U_{\gamma}$ to $V_{\gamma} \cup \{v_{duplicated}\}$ where $v_{duplicated}$ is the copy of $v$. Moreover, this condition is sufficient to guarantee that $\gamma$ is a left part.

To see this, assume that there is a separator $S$ of size $|V_{\gamma}|$ between $U_{\gamma}$ and $V_{\gamma} \cup \{v_{duplicated}\}$. Note that $S \cap V(\gamma)$ is a separator between $U_{\gamma}$ and $V_{\gamma}$ and $V_{\gamma}$ is the only such separator of size at most $|V_{\gamma}|$, so we must have that $S = V_{\gamma}$. However, this is impossible as there is a path from $U_{\gamma}$ to $v_{duplicated}$ which does not pass through $S$.

To see the moreover part, assume there is a separator $S$  between $U_{\gamma}$ and $V_{\gamma}$ of size at most $|V_{\gamma}|$ which is not equal to $V_{\gamma}$. Since $S = V_{\gamma}$, there is a vertex $v \in V(\gamma) \setminus S$. This condition says that there are $|V_{\gamma}| + 1$ paths from $U_{\gamma}$ to $V_{\gamma}$ which are vertex disjoint except that two of the paths end at $v$. At least one of these paths does not pass through $S$, so $S$ is not a separator between $U_{\gamma}$ and $V_{\gamma}$. Thus, there is no such separator $S$ and $\gamma$ is a left part.

Now consider the following graph $G$ which is designed so that there is an integer-valued flow of value $k$ from $s$ to $t$ in $G$ corresponding to $k$ vertex disjoint paths from $U_{\gamma}$ to $V_{\gamma}$.
\begin{enumerate}
\item For each vertex $u \in V(\gamma)$, create two copies $u_{in}$ and $u_{out}$ of $u$. Add a directed edge with capacity $1$ from $u_{in}$ to $u_{out}$.
\item Add directed edges with capacity $1$ from $s$ to $u_{in}$ for each $u \in U_{\gamma}$. Similarly, add directed edges with capacity $1$ from $v_{out}$ to $t$ for each $v \in V_{\gamma}$.
\item For each edge $(u,v) \in E(\gamma)$, create a directed edge with capacity $1$ from $u_{out}$ to $v_{in}$ and a directed edge with capacity $1$ from $v_{out}$ to $u_{in}$.
\end{enumerate}
Choose $|V_{\gamma}|$ vertex disjoint paths from $U_{\gamma}$ to $V_{\gamma}$ and take the corresponding paths in $G$. Now observe that for each $v \in V_{\gamma}$, if we increased the capacity of the edges from $v_{in}$ to $v_{out}$ and from $v_{out}$ to $t$ to $2$, this would increase the max flow value to $|V_{\gamma}| + 1$. This implies that if we run an iteration of Ford-Fulkerson on the residual graph, we must be able to reach $v_{in}$ for each $v \in V_{\gamma}$. If we use DFS or BFS to explore the residual graph, we need at most $|V(\alpha)|$ of the non $u_{in}$ to $u_{out}$ edges to reach all of these destinations.

Adding these edges to the $|V_{\gamma}|$ vertex disjoint paths from $U_{\gamma}$ to $V_{\gamma}$ gives a set of edges of size at most $2|V(\gamma)|$ which can be reserved to guarantee that $\gamma$ remains a left part.
\end{proof}

\section{Second proof of \cref{lem:intersection-charging}}
\label{app:intersection-charging}

\intersectionCharging*

We prove this by exhibiting an explicit charging argument.

\begin{proof}
    Let $\tau_P^{phant}$ be the multigraph formed from $\beta$ plus two edges
    for each phantom edge in $E_{phantom}$. We need to assign each vertex of $V\left(\tau_P^{phant}\right) \setminus V(S)$ an edge of $E\left(\tau_P^{phant}\right) \setminus E(S)$,
    and we need to assign isolated vertices in $I_\beta$ two edges.
    Note that the connectivity of $\tau_P^{phant}$ is exactly the same as $\tau_P$,
    just the nonzero edge multiplicities are modified.
    
    For vertices that are in the same connected component of $\tau_P^{phant}$ as a vertex in $S$, 
    consider running a breadth-first search from $S$, and assign each edge
    to the vertex it explores.
    Vertices in $I_\beta$ must be explored via a double edge, in order for them
    to become isolated during linearization, so assign both.
    
    For vertices that are not in the same component of $\tau_P^{phant}$ as $S$,
    the edge assignment is more complicated. Let $C$ be a component.
    Using \cref{prop:intersection-connectivity}, component $C$ must intersect both $U_{\beta}$ and
    $V_{\beta}$.
    For the isolated vertices in $I_\beta \cap C$, order them by their distance from $U_{\beta} \cup V_{\beta}$.
    Charge them to the two edges along the shortest path to $U_\beta \cup V_\beta$.
    For the remaining non-isolated vertices, we claim that there is at least one additional double edge in $C$ that has not yet been charged.
    As noted just above using \cref{prop:intersection-connectivity}, there is a path $P$ between $U_\beta$ and $V_\beta$ in this component.
    However, since $C$ does not intersect $S$, which is by assumption a separator in $\beta$, the path cannot be entirely in $\beta$ i.e. it must contain at least
    one double edge.
    If $P$ does not pass through an isolated vertex, this double edge evidently has
    not yet been charged. 
    If $P$ passes through an isolated vertex,
    because all edges incident to isolated vertices are double edges,
    there must be more double edges than isolated vertices in $P$.
    Since each isolated vertex only charges one incident double edge, 
    they can't all be charged.
    In either case, $P$ contains an uncharged double edge.

    Now contract the edges in $\tau_P^{phant}$ that were charged for isolated vertices, and
    order the non-isolated vertices by their distance from this double edge.
    The double edge can be used to charge its two endpoints (which are
    not isolated), and the other vertices can be charged using the next
    edge in the shortest path to the double edge. This completes the charging, and the proof of
    the lemma.
\end{proof}
\section{Formalizing the PSD Decomposition}
\label{sec:formal-decomposition}

Here we prove \cref{lem:fullribbondecomposition} (which we restate here for convenience) by formally going through the approximate PSD decomposition described in \cref{sec:informal-decomposition}.

\fullRibbonDecomposition*

\begin{proof}
We abuse notation and write the sum over ribbons in $\calS$ as
\[ \sum_{\alpha \in \calS}\lam_\al \frac{M_\alpha}{\abs{\Aut(\al)}} = \sum_{R \in \calS}\lam_R M_R.\]
(The automorphism group disappears as it was only there to ensure each
ribbon is represented once anyway.)




From~\cref{lem:shape-decomposition}, every ribbon $R \in {\calS}$ decomposes into
$R = L \circ T \circ L'^\T$ where $L,L'~\in~\Lribbon$
and $T~\in~\Mribbon$.
We would like that
\[ \sum_{R \in {\calS}}\lam_R M_R = \sum_{\substack{L,L' \in \Lribbon,T \in \Mribbon:\\\abs{V(L \circ T \circ L'^\T)} \leq D_V}}\lam_{L\circ T \circ L'^\T}M_{L \circ T \circ L'^\T}.\]
However, this is not quite correct, as the ribbons on the right hand side may intersect. 

\begin{proposition}
\[ \sum_{R \in {\calS}}\lam_R M_R = \sum_{\substack{L,L' \in \Lribbon,T \in \Mribbon:\\\abs{V(L \circ T \circ L'^\T)} \leq D_V,\\ L,T,L'^\T\text{ are properly composable}}}\lam_{L\circ T \circ L'^\T}M_{L \circ T \circ L'^\T}.\]
\end{proposition}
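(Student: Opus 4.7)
The plan is to set up a bijection between the indexing sets on both sides: ribbons $R \in \calS$ on the left, and triples $(L, T, L')$ with $L, L' \in \Lribbon$, $T \in \Mribbon$, the triple $L, T, L'^\T$ properly composable, and $|V(L \circ T \circ L'^\T)| \leq D_V$ on the right. Under this bijection we will have $R = L \circ T \circ L'^\T$, so both $\lam_R$ and $M_R$ match the corresponding quantities on the right and the two sums agree term by term.

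For the forward direction, given $R \in \calS$, I would apply the canonical shape decomposition of Proposition~\ref{lem:shape-decomposition} to the underlying shape of $R$ and inherit the vertex labels of $R$ to obtain a ribbon decomposition $R = L \circ T \circ L'^\T$. Concretely, $V(L)$ is the set of vertices of $R$ reachable from $A_R$ without passing through the LMVS of $R$, $V(L'^\T)$ is the analogous set defined from $B_R$ and the RMVS, and $V(T)$ collects the remaining vertices together with the LMVS and RMVS. By construction $L \cap T$ consists of exactly the LMVS and $T \cap L'^\T$ of exactly the RMVS, while $L \cap L'^\T = \emptyset$; hence $L, T, L'^\T$ are properly composable. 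The connected-truncation property of $R$ transfers to $T$ by Proposition~\ref{prop:middle-shape-connectivity} and is automatic for $L$ and $L'$ from the definition of a left shape, while the degree and vertex-count bounds of $\calS$ pass directly to each piece. So $L, L' \in \Lribbon$, $T \in \Mribbon$, and the triple lies in the index set of the right-hand side.

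For the reverse direction, given such a triple $(L, T, L')$, set $R := L \circ T \circ L'^\T$. Proper composability guarantees that $R$ is itself a proper ribbon with $A_R = A_L$ and $B_R = B_{L'^\T}$. The connected-truncation property of $R$ follows because every vertex in $L$ is connected to $A_L = A_R$, every vertex in $L'$ to $A_{L'} = B_R$, and every vertex in $T$ to $U_T \cup V_T$, which under proper composition sit inside $V(L)$ and $V(L'^\T)$; the size constraints are assumed. Thus $R \in \calS$.

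The two maps are mutually inverse by the uniqueness half of Proposition~\ref{lem:shape-decomposition}: decomposing the recomposed ribbon recovers the original triple, and recomposing the decomposed ribbon recovers the original $R$. The main subtlety to be careful about is the proper-composability constraint, which is precisely what prevents two different triples from composing to the same ribbon via spurious vertex coincidences; without this restriction on the right-hand side, the map from triples to ribbons would fail to be injective and would overcount. Once the bijection is in hand, the identity $\lam_R = \lam_{L \circ T \circ L'^\T}$ and the equality $M_R = M_{L \circ T \circ L'^\T}$ are immediate from the definitions, so summing over either index set gives the same matrix.
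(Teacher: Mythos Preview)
Your proof is correct and follows exactly the approach the paper sets up (the paper states this proposition without a separate proof, relying implicitly on the canonical decomposition of Proposition~\ref{lem:shape-decomposition}). One small inaccuracy: the claim ``$L \cap L'^\T = \emptyset$'' is not quite right, since vertices in $\text{LMVS} \cap \text{RMVS}$ lie in all three of $L, T, L'^\T$; however, such overlaps are among the ``necessary'' intersections $B_{R_i} = A_{R_{i+1}}$, so your conclusion that $L, T, L'^\T$ are properly composable still holds, and the rest of the argument is unaffected.
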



To start, we uncorrelate the sizes of $L, T, L'$. Add to the moment matrix
\[ \sum_{\substack{L,L' \in \Lribbon,T \in \Mribbon:\\\abs{V(L \circ T \circ L'^\T)} > D_V,\\ L,T,L'^\T\text{ are properly composable}}}\lam_{L\circ T \circ L'^\T}M_{L \circ T \circ L'^\T}.\]
This term is added to $\text{truncation error}_\text{too many vertices}$.

We define matrices for intersection terms $I_k$, factored terms $F_k$, and truncation error $T_k$ at level $k$
recursively via the following process. We will maintain that the moment matrix satisfies for all $k$,
\[ \sum_{R \in {\calS}} \lam_R M_R = \left(\sum_{L \in \Lribbon} \lam_L M_L\right)\left(\sum_{i=1}^k (-1)^{i+1}F_i\right)\left(\sum_{L \in\calL}\lam_L M_L\right)^\T + (-1)^{k}I_k + \sum_{i=1}^k{(-1)^{i}T_i}.\]
In order to make this equation hold, given $I_k$ we will choose $F_{k+1}$, $T_{k+1}$, and $I_{k+1}$ so that 
\[
I_k = \left(\sum_{L \in \Lribbon} \lam_L M_L\right)F_{k+1}\left(\sum_{L \in\calL}\lam_L M_L\right)^\T - T_{k+1} -I_{k+1}
\]

At the start of the $(k+1)$th iteration, we will have that $I_k$ is equal to a sum over middle ribbons $T$, ``middle intersecting ribbons'' $(G_k , \ldots, G_1,{G'_1}^{\T}, \ldots, {G'_k}^{\T}) \in \midint^{(k)}_{T}$, and non-intersecting ribbons $L,L'$. Furthermore, the ribbons $G_k, \dots,T,\dots, G_k'$ will satisfy $|E_{mid}(R)| - |V(R)| \leq C\dsos$, in which case we say that ``edge bounds hold''.
Any time that a violation of this bound appears, we will throw the term
into the truncation error for too many edges in one part.
For example, initially we throw away $T$ with too many edges.
Initially,
\begin{align*}
I_0 &\defeq \sum_{\substack{L,L' \in \Lribbon,T \in \Mribbon:\\\abs{V(L)} \leq D_V, \abs{V(T)} \leq D_V, \abs{V({L'}^{\T})} \leq D_V,\\ L,T,L'^\T\text{ are properly composable},\\
\text{edge bounds hold}}}\lam_{L \odot T \odot L'^\T}M_{L \circ T \circ L'^\T}\\
T_0 &\defeq \sum_{\substack{L,L' \in \Lribbon,T \in \Mribbon:\\\abs{V(L)} \leq D_V, \abs{V(T)} \leq D_V, \abs{V({L'}^{\T})} \leq D_V,\\ 
|E_{mid}(T)| - |V(T)| > C \dsos,\\ 
L,T,L'^\T\text{ are properly composable}}}\lam_{L \odot T \odot L'^\T}M_{L \circ T \circ L'^\T}\\
F_0 &\defeq 0
\end{align*}

\begin{definition}
    For a middle ribbon $T$, let $\midint^{(k)}_T$ be the collection of tuples of ribbons
    $G_k, \dots, G_1, G_1', \dots, G_k'$
    such that the ribbons are composable and the induced intersection pattern of ${G_k \circ \cdots \circ G_1 \circ T \circ G_1'^\T \circ \cdots \circ G_k'^\T}$ is a middle intersection pattern.
\end{definition}
The inductive hypothesis for $I_k$ is, 
\[ I_k = \sum_{\substack{T \in \calM, (G_k ,\dots, {G'_k}) \in \midint^{(k)}_{T}, L,L' \in \calL:\\
\abs{V(L\odot G_k \odot \ldots \odot G_1)} \leq D_V, \abs{V(T)} \leq D_V, \abs{V({G'_1}^{\T} \odot \ldots \odot {G'_k}^{\T} \odot {L'}^{\T})} \leq D_V, \\
L, G_k \circ \ldots \circ G_1 \circ T \circ {G'_1}^{\T} \circ \ldots \circ {G'_k}^{\T}, L'^\T\text{ are properly composable},\\
\text{edge bounds satisfied}}}
\lam_{L \odot G_k \odot \ldots \odot G_1 \odot T \odot {G'_1}^{\T} \odot \ldots \odot {G'_k}^{\T} \odot {L'}^{\T}} M_{L \circ G_k \circ \ldots \circ G_1 \circ T \circ {G'_1}^{\T} \circ \ldots \circ {G'_k}^{\T} \circ {L'}^T}.\]
We can approximate $I_k$ by
\[
\left(\sum_{L \in \Lribbon} \lam_L M_L\right)F_{k+1}\left(\sum_{L \in\calL}\lam_L M_L\right)^\T
\]
where 
\[
F_{k+1} = \sum_{\substack{T \in \calM, (G_k ,\dots, {G'_k}) \in \midint^{(k)}_{T}:\\
\abs{V(G_k \odot \ldots \odot G_1)} \leq D_V, \abs{V(T)} \leq D_V, \abs{V({G'_1}^{\T} \odot \ldots \odot {G'_k}^{\T})} \leq D_V,\\
\text{edge bounds satisfied}}}
\lam_{G_k \odot \ldots \odot G_1 \odot T \odot {G'_1}^{\T} \odot \ldots \odot {G'_k}^{\T}} M_{G_k \circ \ldots \circ G_1 \circ T \circ {G'_1}^{\T} \circ \ldots \circ {G'_k}^{\T}}
\]
In order to make it so that $I_k = \left(\sum_{L \in \Lribbon} \lam_L M_L\right)F_{k+1}\left(\sum_{L \in\calL}\lam_L M_L\right)^\T - T_{k+1} -I_{k+1}$, we need to handle the following issues.
\begin{enumerate}
    \item $I_k$ only contains terms where $\abs{V(L\odot G_k \odot \ldots \odot G_1)} \leq D_V$ and ${\abs{V({G'_1}^{\T} \odot \ldots \odot {G'_k}^{\T} \odot {L'}^{\T})} \leq D_V}$, whereas some larger terms appear in the approximation. To handle this, we add 
    \[
    \sum_{\substack{T \in \calM, (G_k ,\dots,  {G'_k}) \in \midint^{(k)}_{T}, L,L' \in \calL:\\
    \abs{V(G_k \odot \ldots \odot G_1)} \leq D_V, \abs{V(T)} \leq D_V, \abs{V({G'_1}^{\T} \odot \ldots \odot {G'_k}^{\T})} \leq D_V, \\
        \abs{V(L\odot G_k \odot \ldots \odot G_1)} > D_V \text{ or } \abs{V({G'_1}^{\T} \odot \ldots \odot {G'_k}^{\T} \odot {L'}^{\T})} > D_V,\\
\text{edge bounds satisfied}}}
    \lam_{L \odot G_k \odot \ldots \odot G_1 \odot T \odot {G'_1}^{\T} \odot \ldots \odot {G'_k}^{\T} \odot {L'}^{\T}} {M_L}M_{G_k \circ \ldots \circ G_1 \circ T \circ {G'_1}^{\T} \circ \ldots \circ {G'_k}^{\T}}{M_{L'}}^{\T}
    \]
    to $T_{k+1}$. This is the source of the truncation error terms with too many vertices.
    \item $I_k$ only contains terms where $L, G_k \circ \ldots \circ G_1 \circ T \circ {G'_1}^{\T} \circ \ldots \circ {G'_k}^{\T}, L'^\T$ are properly composable. The remaining terms are $L, G_k \circ \ldots \circ G_1 \circ T \circ {G'_1}^{\T} \circ \ldots \circ {G'_k}^{\T}, L'^\T$ such that they are not properly composable, and these will mostly be put into $I_{k+1}$.
    \[
    \sum_{\substack{T \in \calM, (G_k ,\dots,  {G'_k}) \in \midint^{(k)}_{T}, L,L' \in \calL:\\
    \abs{V(L \odot G_k \odot \ldots \odot G_1)} \leq D_V, \abs{V(T)} \leq D_V, \abs{V({G'_1}^{\T} \odot \ldots \odot {G'_k}^{\T} \odot {L'}^\T)} \leq D_V, \\
        L, G_k \circ \cdots G_1 \circ T \circ G_1'^\T \circ \cdots G_k'^\T, L'^\T\text{ are not properly composable},\\
\text{edge bounds satisfied}}}
    \lam_{L \odot G_k \odot \ldots \odot {G'_k}^{\T} \odot {L'}^{\T}} {M_L}M_{G_k \circ \cdots \circ {G'_k}^{\T}}{M_{L'}}^{\T}
    \]
    
    These terms do not yet match our inductive hypothesis for $I_{k+1}$;
    for each $L, G_k \circ \ldots \circ G_1 \circ T \circ {G'_1}^{\T} \circ \ldots \circ {G'_k}^{\T}, L'^\T$ which are not properly composable, we must separate out the intersecting portions $G_{k+1}, G_{k+1}'$. To do this, 
    decompose $L$ as $L_2 \circ G_{k+1}$ where $B_{L_2} = A_{G_{k+1}}$ is the leftmost minimum vertex separator between $A_L$ and $B_{L} \cup \{\text{intersected vertices}\}$. We decompose $L'$ as $L'_2 \circ G'_{k+1}$ in a similar way.
    This is exactly the definition that $G_{k+1}, \dots, G_1,T,G_1',\dots,G_{k+1}$ are a middle intersection, \cref{def:middle-intersection}.
    That is, $(G_{k+1}, \dots, G_{k+1}') \in \midint^{(k+1)}.$
    
    \begin{claim}
    $L_2$ and $L_2'$ are left ribbons.
    \end{claim}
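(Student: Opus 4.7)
The plan is to show that $L_2$ (and by the symmetric argument $L_2'$) satisfies the three defining properties of a left ribbon from \cref{def:ribbon} and the analog of \cref{def:shape} for left shapes: (a) every vertex of $L_2$ is connected to $A_{L_2}$, (b) $B_{L_2}$ is the unique minimum vertex separator between $A_{L_2}$ and $B_{L_2}$ in $L_2$, and (c) there are no edges inside $B_{L_2}$. These mirror the three requirements for a left shape, specialized to ribbons.

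First I would unpack the construction of $L_2$ and $G_{k+1}$. Let $I \subseteq V(L)$ be the set of vertices in $L$ which are identified with vertices in the middle block $G_k \circ \cdots \circ G_1 \circ T \circ G_1'^\T \circ \cdots \circ G_k'^\T$ by the intersection pattern (so $I$ includes $B_L$ trivially). Define $B_{L_2} \defeq A_{G_{k+1}}$ to be the leftmost minimum vertex separator in $L$ between $A_L$ and $B_L \cup I$ (this exists and is unique by the standard LMVS argument, mirroring the proof of \cref{lem:shape-decomposition}). Take $L_2$ to be the subribbon of $L$ consisting of all vertices reachable from $A_L$ without passing through $B_{L_2}$ (with $B_{L_2}$ itself included as the right boundary), and $G_{k+1}$ to be everything else, with the edges inside $B_{L_2}$ assigned entirely to $G_{k+1}$. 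Property (a) is immediate from this construction, since every vertex of $L_2$ is by definition reachable from $A_{L_2}=A_L$, and property (c) holds for free because we placed all edges inside $B_{L_2}$ into $G_{k+1}$.

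For property (b), the minimality is immediate: if $L_2$ admitted a separator $S$ strictly smaller than $B_{L_2}$ between $A_L$ and $B_{L_2}$, then $S$ would also separate $A_L$ from $B_L \cup I$ in $L$, because every path in $L$ from $A_L$ to $B_L \cup I$ must cross $B_{L_2}$ (by choice of $B_{L_2}$), and the portion of such a path from $A_L$ to its first vertex of $B_{L_2}$ lies entirely in $L_2$ and must therefore cross $S$. This would contradict the minimality of $B_{L_2}$. Uniqueness follows by the same leftmost argument: any other MVS $S'$ in $L_2$ would lift to an equally sized separator in $L$ between $A_L$ and $B_L \cup I$ lying weakly to the left of $B_{L_2}$, contradicting that $B_{L_2}$ was chosen as the leftmost. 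The symmetric argument applied to $L'$ (with $A_{L'}$, $B_{L'}$, and the intersection set with the middle block on the other side) shows that $L_2'$ is a left ribbon.

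The main technical subtlety, and the step I would write out most carefully, is verifying the lifting of a hypothetical smaller/alternative separator in $L_2$ back to $L$; one must check that such a separator, together with $B_{L_2} \setminus L_2$'s relevant vertices, still separates $A_L$ from $B_L \cup I$ in the full ribbon $L$ despite the edges in $G_{k+1}$ and the edges inside $B_{L_2}$ that were moved there. This reduces to the observation that any $A_L$-to-$(B_L \cup I)$ path in $L$ decomposes at its first crossing of $B_{L_2}$ into an initial segment contained in $L_2$ (up to and including that crossing vertex) followed by a tail in $G_{k+1}$; the initial segment alone must be blocked, which is what the candidate separator in $L_2$ is asked to do. With this the claim is established.
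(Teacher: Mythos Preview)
Your proof is correct and follows the same approach as the paper's, which just asserts in two sentences that $B_{L_2}$ being the unique MVS of $L_2$ holds ``definitionally'' from the leftmost-MVS construction and that connectivity to $A_{L_2}$ is inherited from $L$. You have simply unpacked these assertions with the explicit lifting argument, and additionally verified the no-edges-in-$B_{L_2}$ condition that the paper leaves implicit.
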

    \begin{proof}
        Definitionally, $B_{L_2}$ is the unique minimum vertex separator of $L_2$.
        Furthermore, reachability of all vertices in $L_2$ is inherited from $L$.
    \end{proof}
    We record a lemma that will be needed later:
    \begin{lemma}\label{lem:factoring-ignores-indicators}
        Factoring $L = L_2 \circ G_{k+1}$ is oblivious to edges inside $A_{L}, B_L$, edges incident to $A_L \cap B_L$, or edges not in $L$.
        \begin{proof}
            Edges inside $A_L$ or $B_L$ do not affect the connectivity between $A_L, B_L$; the same is true for edges incident to $A_L \cap B_L$ since
            all vertices of $A_L \cap B_L$ must be taken in any separator of $A_L$ and $B_L$.
            The last claim is clear because the factoring depends only on which vertices of $L$ intersected, as well as the structure of $L$. 
        \end{proof}
    \end{lemma}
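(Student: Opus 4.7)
The plan is to verify each of the four obliviousness claims separately by unpacking what determines the decomposition $L = L_2 \circ G_{k+1}$. By construction, this factorization is determined entirely by the set $B_{L_2} = A_{G_{k+1}}$, which is the leftmost minimum vertex separator between $A_L$ and $B_L \cup \Int$, where $\Int$ denotes the set of intersected vertices carried over from the higher levels. Thus it suffices to argue that each of the listed classes of edges fails to influence this LMVS computation.

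For edges not in $L$, this is immediate, since the LMVS is defined purely in terms of paths within $L$. For edges with both endpoints in $A_L$ (respectively $B_L$), the observation is that a vertex separator between $A_L$ and $B_L \cup \Int$ need only cut paths that actually go from the source side to the target side, and edges contained entirely within $A_L$ (respectively $B_L$) traverse only one side. Consequently such edges neither create nor destroy paths from $A_L$ to $B_L \cup \Int$, and the collection of minimum separators (and hence the leftmost one) is unchanged whether or not these edges are present.

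The edges incident to $A_L \cap B_L$ require one additional observation: every vertex in $A_L \cap B_L$ lies in both the source $A_L$ and the target $B_L \cup \Int$, hence must appear in every vertex separator of the pair. This means that all such vertices are already in $B_{L_2}$ no matter what, so any path through a vertex of $A_L \cap B_L$ is automatically cut. Adding or removing edges incident to $A_L \cap B_L$ therefore cannot alter which other vertices need to be taken into the separator, and the LMVS is unchanged. I expect no real obstacle here; the argument is a short structural check, and the only mildly nontrivial piece is the last case, where one must explicitly use that vertices of $A_L \cap B_L$ always belong to every separator.
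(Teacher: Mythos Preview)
Your proposal is correct and follows essentially the same approach as the paper: you identify that the factoring is governed by the leftmost minimum vertex separator between $A_L$ and $B_L \cup \Int$, and then argue case by case that edges inside $A_L$, inside $B_L$, incident to $A_L \cap B_L$, or outside $L$ cannot affect that separator, with the key observation for the $A_L \cap B_L$ case being that those vertices lie in every separator. The paper's proof is terser but uses the same reasoning.
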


    Observe that now instead of summing over $L, \midint^{(k)}, L'$, we may sum over $L_2, \midint^{(k+1)}, L_2'$. 
    That is, if we fix $T$ and the ribbons $G_k, \dots, G_{k}'$, then
$L$ determines the pair $L_2, G_{k+1}$,
and vice versa.

        
        We take $I_{k+1}$ to be the subset of $G_{k+1}$ that satisfy the edge bound ${|E_{mid}(G_{k+1})| - |V(G_{k+1})| \leq C\dsos}$. ($L_2$ and $L_2'$ are written as $L,L'$ here so as to more clearly match the inductive hypothesis.)
        \[
            \sum_{\substack{T \in \calM, (G_{k+1}, \ldots, {G'_{k+1}}) \in \midint^{(k+1)}_{T}, L,L' \in \calL:\\
            \abs{V(L\odot G_{k+1} \odot \ldots \odot G_1)} \leq D_V, \abs{V(T)} \leq D_V, \abs{V({G'_1}^{\T} \odot \ldots \odot {G'_{k+1}}^{\T} \odot {L'}^{\T})} \leq D_V, \\
            L, G_{k+1} \circ \ldots \circ G_1 \circ T \circ {G'_1}^{\T} \circ \ldots \circ {G'_{k+1}}^{\T}, L'^\T\text{ are properly composable},\\
            \text{edge bounds satisfied}}}
            \lam_{L \odot G_{k+1} \odot \ldots \odot {G'_{k+1}}^{\T} \odot {L'}^{\T}} M_{L \circ G_{k+1} \circ \ldots \circ {G'_{k+1}}^{\T} \circ {L'}^T}
        \]
    \item For technical reasons, we want to stop this process if $|E_{mid}(G_{k+1})| - |V(G_{k+1})| > C{\dsos}$ or $|E_{mid}(G'_{k+1})| - |V(G'_{k+1})| > C{\dsos}$. Thus we take such terms $G_{k+1}$ and ${G'_{k+1}}^{\T}$ and add them to $T_{k+1}$ instead of to $I_{k+1}$.
    \[
    \sum_{\substack{T \in \calM, (G_{k+1}, \ldots, {G'_{k+1}}) \in \midint^{(k+1)}_{T}, L,L' \in \calL:\\
    \abs{V(L\odot G_{k} \odot \ldots \odot G_1)} \leq D_V, |V(T)| \leq D_V, \abs{V({G'_1}^{\T} \odot \ldots \odot {G'_{k}}^{\T} \odot {L'}^{\T})} \leq D_V, \\
    |E_{mid}(G_{k+1})| - |V(G_{k+1})| > C{\dsos} \text{ or }|E_{mid}(G'_{k+1})| - |V(G'_{k+1})| > C{\dsos}, \\
    L, G_{k+1} \circ \ldots \circ G_1 \circ T \circ {G'_1}^{\T} \circ \ldots \circ {G'_{k+1}}^{\T}, {L'}^\T\text{ are properly composable},\\
    \text{edge bounds satisfied up to }k}}
    \lam_{L \odot G_{k+1} \odot \cdots \odot {G'_{k+1}}^{\T} \odot {L'}^{\T}} M_{L \circ G_{k+1} \circ \cdots \circ {G'_{k+1}}^{\T} \circ {L'}^T}
    \]
    This is the source of the truncation error terms with too many edges.
\end{enumerate}
Iteratively applying this procedure gives the result.
\begin{proposition}
    The recursion terminates within $2\dsos$ steps and $I_{2\dsos} = 0$.
\end{proposition}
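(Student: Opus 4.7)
The plan is to exhibit a bounded monovariant on $I_k$ that strictly decreases at each non-trivial iteration. A clean candidate is the total left-ribbon vertex count $\phi_k = |V(L)| + |V(L')|$: the identity $|V(L)| = |V(L_2)| + |V(G_{k+1})| - |A_{G_{k+1}}|$ combined with the fact that a non-trivial $G_{k+1}$ (one with at least one vertex outside its left boundary $U_{G_{k+1}} = A_{G_{k+1}}$) forces $|V(L_2)| < |V(L)|$ gives $\phi_{k+1} \leq \phi_k - 1$ whenever either $G_{k+1}$ or $G_{k+1}'$ is non-trivial. Hence the recursion must terminate, and if $I_k \neq 0$ at step $k$ then at least one of the factorings at step $k+1$ must be non-trivial (otherwise every term in $I_k$ would collapse into a trivial composition and would already have been routed into $F_j$ or $T_j$).

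To obtain the sharper bound of $2\dsos$ iterations stated in the proposition, I would use the \emph{a priori} bounds $|V_L|, |V_{L'}| \leq \dsos$ enforced by the degree-$\dsos$ moment matrix: the left ribbons $L, L'$ attach to the row/column indices of the moment matrix on one side, forcing $|V_L|, |V_{L'}| \leq \dsos$ at every level of the recursion. At each step the new boundary is $V_{L_2} = U_{G_{k+1}}$, the LMVS in $L$ between $A_L$ and $B_L \cup \Int(P)$. I would build a combinatorial injection from non-trivial factoring steps into the original boundary vertices in $V_L \cup V_{L'}$: at each step, the LMVS shift produces a boundary vertex that is ``consumed'' (absorbed into the interior of the growing $G$-chain) and cannot be re-used in subsequent LMVSs. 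Since the combined budget $|V_L| + |V_{L'}|$ is at most $2\dsos$, the chain of non-trivial steps has length at most $2\dsos$.

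Once the recursion halts at some step $k \leq 2\dsos$, every term in the original sum for $I_0$ has been routed either into one of the factored contributions $F_1, \dots, F_k$ or into a truncation error $T_1, \dots, T_k$; no unresolved middle-intersection pattern remains. In particular $I_{2\dsos} = 0$, which is the claim.

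The main obstacle will be the sharp combinatorial count leading to $2\dsos$ rather than the weaker $2D_V$ that drops out of $\phi_k$ alone. Since Menger's theorem only yields the weak bound $|V_{L_2}| \geq |V_L|$ (adding intersection vertices to the target set cannot decrease the MVS size), the boundary sequence $|V_L|$ is in fact weakly \emph{non-decreasing}, not non-increasing. The injection must therefore be indirect: rather than tracking $|V_L|$ directly, one must follow each LMVS vertex back to a distinct boundary vertex in the previous layer that is no longer available as an LMVS candidate in $L_2$ (because it has been swallowed into $V(G_{k+1})$). Setting up this injection cleanly---using the leftmost property of the LMVS and the fact that $\Int(P) \subseteq V(L)$ does not add source-side mass---is the delicate piece of the argument.
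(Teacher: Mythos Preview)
You have essentially all the right pieces but miss the one-line synthesis that the paper uses. You correctly observe that the boundary size $|V_L| = |B_L| = |A_{G_k}|$ is weakly non-decreasing (adding intersection vertices to the target set cannot shrink the MVS), and you correctly note the upper bound $|A_{G_k}| \leq \dsos$ (since $A_L$, the fixed row index, is always a trivial separator of $L$). What you fail to see is that the non-decrease is \emph{strict} on whichever side carries a non-trivial intersection: if $v \in V(L) \setminus B_L$ is an intersection vertex, then because $L$ is a left ribbon constructed via the LMVS, $v$ is reachable from $A_L$ without passing through $B_L$, so $B_L$ is no longer a separator of $A_L$ from $B_L \cup \{v\}$, forcing $|A_{G_{k+1}}| > |A_{G_k}|$. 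Hence the integer quantity $|A_{G_k}| + |A_{G_k'}|$ strictly increases by at least one at every step and is bounded above by $2\dsos$, so the recursion halts within $2\dsos$ steps. This is exactly the paper's argument.

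Your first monovariant $\phi_k = |V(L)| + |V(L')|$ does work to show termination, but as you note only yields $2D_V$. Your proposed ``indirect injection'' is unnecessary: once you upgrade the weak inequality $|V_{L_2}| \geq |V_L|$ to a strict one on the intersecting side, the non-decreasing bounded sequence argument is immediate and there is nothing delicate to set up. The sentence ``Menger's theorem only yields the weak bound'' is where you go astray; Menger plus the reachability property of vertices in a left shape (every vertex of $V(L) \setminus B_L$ lies on the $A_L$-side of the unique MVS $B_L$) gives strict increase, and that is the whole proof.
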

\begin{proof}
Each additional intersection $G_{k+1},G_{k+1}'$ is nontrivial and increases $\abs{A_{G_{k+1}}} + \abs{A_{G_{k+1}'}}$.
Since $\abs{A_{G_{k}}}$ is upper bounded by $\dsos$,
the recursion ends within $2\dsos$ steps.
\end{proof}
\end{proof}

\end{document}